\documentclass[11pt]{amsart}

\usepackage[text={450pt,660pt},centering]{geometry}

\usepackage{color}
\usepackage{esint,amssymb}
\usepackage{graphicx}
\usepackage{MnSymbol}
\usepackage{amsmath, mathtools}
\usepackage[colorlinks=true, pdfstartview=FitV, linkcolor=blue, citecolor=blue, urlcolor=blue,pagebackref=false]{hyperref}
\usepackage{microtype}

\usepackage{bm}
\usepackage{scalerel} 
\usepackage{dsfont}
\usepackage{mathrsfs}
\usepackage[font={footnotesize}]{caption}
\usepackage{stmaryrd}
\usepackage{tikz}
\usepackage{xcolor}
\usepackage{contour}
\usetikzlibrary{calc,patterns}
\usepackage{upgreek}

\usepackage{enumitem}

\definecolor{darkgreen}{rgb}{0,0.5,0}
\definecolor{darkblue}{rgb}{0,0,0.7}
\definecolor{darkred}{rgb}{0.9,0.1,0.1}

\makeatletter
\newtheorem*{rep@theorem}{\rep@title}
\newcommand{\newreptheorem}[2]{%
\newenvironment{rep#1}[1]{%
 \def\rep@title{#2 \ref{##1}}%
 \begin{rep@theorem}}%
 {\end{rep@theorem}}}
\makeatother

\newtheorem{theorem}{Theorem}
\newreptheorem{theorem}{Theorem}
\newreptheorem{proposition}{Proposition}
\newreptheorem{lemma}{Lemma}
\newtheorem{proposition}{Proposition}
\newtheorem{lemma}[proposition]{Lemma}
\newtheorem{corollary}[proposition]{Corollary}

\theoremstyle{remark}

\theoremstyle{definition}
\newtheorem{definition}[proposition]{Definition}
\newtheorem{remark}[proposition]{Remark}
\newtheorem{conjecture}[proposition]{Conjecture}

\numberwithin{equation}{section}
\numberwithin{proposition}{section}

\newcommand{\Z}{\mathbb{Z}}
\newcommand{\N}{\mathbb{N}}
\newcommand{\R}{\mathbb{R}}

\newcommand{\E}{\mathbb{E}}
\renewcommand{\P}{\mathbb{P}}
\newcommand{\T}{\mathcal{T}}

\newcommand{\Zd}{\mathbb{Z}^d}

\newcommand{\inte}[1]{%
 {\kern0pt#1}^{\mathrm{o}}%
}

\newcommand{\ep}{\varepsilon}


\newcommand{\per}{\mathrm{per}}

\renewcommand{\subset}{\subseteq}

\DeclareMathOperator{\dist}{dist}

\DeclareMathOperator{\var}{var}

\DeclareMathOperator{\FE}{FE}

\DeclareMathOperator{\Leb}{Leb}

\DeclareMathOperator{\fluc}{Fluc}

\renewcommand{\bar}{\overline}
\renewcommand{\tilde}{\widetilde}

\newcommand{\indc}{1}

\newcommand{\Q}{\mathcal{Q}}
\renewcommand{\S}{\mathcal{S}}

\pgfdeclarepatternformonly{south east lines}{\pgfqpoint{-0pt}{-0pt}}{\pgfqpoint{3pt}{3pt}}{\pgfqpoint{3pt}{3pt}}{
    \pgfsetlinewidth{0.4pt}
    \pgfpathmoveto{\pgfqpoint{0pt}{3pt}}
    \pgfpathlineto{\pgfqpoint{3pt}{0pt}}
    \pgfpathmoveto{\pgfqpoint{.2pt}{-.2pt}}
    \pgfpathlineto{\pgfqpoint{-.2pt}{.2pt}}
    \pgfpathmoveto{\pgfqpoint{3.2pt}{2.8pt}}
    \pgfpathlineto{\pgfqpoint{2.8pt}{3.2pt}}
    \pgfusepath{stroke}}

\begin{document}

\title{Quantitative disorder effects in low-dimensional spin systems}

\author[P. Dario]{Paul Dario}
\address[P. Dario]{CNRS and Laboratoire d’Analyse et de Mathématiques Appliquées (LAMA), Universit\'e Paris-Est Cr\'eteil, Cr\'eteil, France}
\email{paul.dario@u-pec.fr}

\author[M. Harel]{Matan Harel}
\address[M. Harel]{Northeastern University, 360 Huntington Avenue, Boston, MA 02115}
\email{m.harel@northeastern.edu}

\author[R. Peled]{Ron Peled}
\address[R. Peled]{\newline School of Mathematical Sciences, Tel Aviv University, Ramat Aviv, Tel Aviv 69978, Israel. \newline
School of Mathematics, Institute for Advanced Study and Department of Mathematics,
Princeton University, New Jersey, United States}
\email{peledron@tauex.tau.ac.il}

\maketitle


\begin{abstract}
   The Imry--Ma phenomenon, predicted in 1975 by Imry and Ma and rigorously established in 1989 by Aizenman and Wehr, states that first-order phase transitions of low-dimensional spin systems are `rounded' by the addition of a quenched random field coupled to the quantity undergoing the transition. The phenomenon applies to a wide class of spin systems in dimensions $d\le 2$ and to spin systems possessing a continuous symmetry in dimensions $d\le 4$.

   This work provides quantitative estimates for the Imry--Ma phenomenon: in a cubic domain of side length $L$, we study the effect of the boundary conditions on the spatial and thermal average of the quantity coupled to the random field. We show that the boundary effect diminishes at least as fast as an inverse power of $\log\log L$ in general two-dimensional spin systems. For systems possessing a continuous symmetry, we show that the boundary effect diminishes at least as fast as an inverse power of $L$ in two and three dimensions and at least as fast as an inverse power of $\log\log L$ in four dimensions. Finally, we establish a partial uniqueness results for translation-covariant Gibbs states, and prove that, for almost every realization of the random field, all such states must agree on the thermally-averaged value of the quantity coupled to the random field.
   
   Specific models of interest for the obtained results include the random-field $q$-state Potts model, the Edwards-Anderson spin glass model, and the random-field spin $O(n)$ models.

\end{abstract}

\section{Introduction}\label{sec:introduction}

The large-scale properties of equilibrium statistical physics systems with quenched (frozen-in) disorder can differ significantly from those of the corresponding non-disordered systems~\cite{Bo06,stein2013spin,T03B,T03B2}. The present understanding of such phenomena is still lacking, in both the physical and mathematical literature, but some cases are better understood. This work focuses on the so-called Imry--Ma phenomenon, by which first-order phase transitions of low-dimensional spin systems are `rounded' by the addition of a quenched random field to the quantity undergoing the transition. Imry and Ma~\cite{IM75} studied the Ising and spin $O(n)$ models and predicted that, in low dimensions, the addition of a random independent magnetic field causes the systems to lose their characteristic low-temperature ordered states, even when the strength of the added field is arbitrarily weak. Specifically, they predicted this effect for the random-field Ising model at all temperatures (including zero temperature) in dimensions $d \leq 2$, and for the random-field spin $O(n)$-model, with $n \geq 2$, at all temperatures in dimensions $d \leq 4$. Their predictions were confirmed, and greatly extended, in the seminal work of Aizenman and Wehr~\cite{AW1989, AW89}, who rigorously established the rounding phenomenon for a general class of spin systems in dimensions $d \leq 2$ and for spin systems with suitable continuous symmetry in dimensions $d \leq 4$.

Let us informally describe the Aizenman--Wehr result. Consider a spin system on $\Zd$, with a formal translation-invariant Hamiltonian $H$. Let $\eta=(\eta_v)_{v\in\Zd}$ be independent and identically distributed random variables (or random vectors). Construct a  disordered spin system by modifying the Hamiltonian $H$ to
\begin{equation*}
    H^{\eta, \lambda}(\sigma) := H(\sigma) - \lambda \sum_{v\in\Zd} \eta_v\cdot f(\T_{-v}(\sigma))
\end{equation*}
where $\T$ denotes the translation operator defined by $\T_v(\sigma)_u=\sigma_{u-v}$, where $f$ is the observable to which the random field $\eta$ is coupled, and where $\lambda$ is the disorder strength. Aizenman and Wehr prove that, under suitable assumptions on $H$, $\eta$ and $f$, in dimensions $d\le 2$, at all temperatures and positive disorder strengths, the limit
\begin{equation}\label{eq:AW limit}
    \lim_{L \to \infty} \frac{1}{\left| \Lambda_L \right|} \sum_{v \in \Lambda_L} \left\langle f (\mathcal{T}_v (\sigma)) \right\rangle_{\mu}
\end{equation}
exists with probability one (in terms of $\eta$), and takes the same value for all infinite-volume Gibbs measures $\mu$ of the disordered system. The notation $\left\langle\cdot\right\rangle_\mu$ denotes the thermal average over the Gibbs measure $\mu$; it should be noted that, while the set of Gibbs measures of the disordered system depends on $\eta$, Aizenman and Wehr further prove that the common limit~\eqref{eq:AW limit} does not depend on $\eta$. An analogous statement is proved to hold in dimensions $d\le 4$ when the spin system and the added disorder satisfy suitable hypotheses of continuous symmetry. Additional information on the Aizenman--Wehr setup and result, and their comparison with those of the present work, is provided in Section~\ref{sectioncommentsandopenproblems}.

The Aizenman--Wehr theorem does not provide \emph{quantitative} estimates on the rate of convergence in~\eqref{eq:AW limit} and it is the goal of this work to obtain such a quantified result. Specifically, the quantity that we study is
\begin{equation}\label{eq:quantity to be studied}
    \sup_{\tau_1, \tau_2} \left| \frac{1}{|\Lambda_L|} \sum_{v \in \Lambda_L} \left( \left\langle f (\mathcal{T}_v (\sigma)) \right\rangle_{\Lambda_L}^{\tau_1} - \left\langle f (\mathcal{T}_v (\sigma)) \right\rangle_{\Lambda_L}^{\tau_2} \right) \right|
\end{equation}
where $\left\langle\cdot\right\rangle_\Lambda^\tau$ is the thermal average under the finite-volume Gibbs measure (of the disordered system) in the volume $\Lambda\subset\Zd$ with boundary conditions $\tau$, where $\Lambda_L := \left\{ -L,  \ldots, L \right\}^d\subset\Zd$ and $\left| \Lambda_L \right|$ is its cardinality, and where the supremum is taken over all choices of boundary conditions. We are thus measuring the largest discrepancy in the value of the spatially and thermally averaged observable, which may arise when changing the boundary conditions in finite volume. This quantity may be thought of as a particular kind of correlation decay rate. We emphasize that the boundary conditions $\tau_1,\tau_2$ appearing in~\eqref{eq:quantity to be studied} are allowed to depend on $L$ and on the specific realization of the disorder $\eta$.

Our main results are that the following holds with high probability over $\eta$: (1) For a general class of two-dimensional spin systems the quantity~\eqref{eq:quantity to be studied} decays at least as fast as an inverse power of $\log\log L$, (2) For a general class of spin systems with continuous symmetry, the quantity~\eqref{eq:quantity to be studied} decays at least as fast as an inverse power of $L$ in dimensions $d=2,3$ and at least as fast as an inverse power of $\log\log L$ in dimension $d=4$. These results are shown to hold at all temperatures and positive disorder strengths.

Quantitative estimates of the type obtained here have recently been developed for the two-dimensional random-field Ising model~\cite{AHP20, AP19, C18, ding20192exponential}, where it was shown that correlations decay at an exponential rate (in the nearest-neighbor case) at all temperatures and positive disorder strength; see Section~\ref{relatedresults} for more details. However, to our knowledge, this is the only spin system for which quantitative estimates have previously been developed, and their derivation appears to crucially rely on monotonicity properties of the Ising model which are absent for general spin systems. Specific examples of interest for the results obtained here include the two-dimensional random-field $q$-state Potts (with $q\ge 3$) and Edwards-Anderson spin glass models, and the $d$-dimensional random-field spin $O(n)$ models (with $n\ge 2$) in dimensions $d\le 4$.

We further point out that, while the assumptions placed by our results on the spin system, disorder distribution and noised observable are different, and in many ways more restrictive than those imposed by Aizenman--Wehr, an advantage of our methods is that they apply also to systems which are not invariant under translations (see also Section~\ref{sectioncommentsandopenproblems}).

In spin systems satisfying suitable monotonicity properties, such as the random-field Ising model, the Aizenman--Wehr result allows to conclude that the disordered system possesses a unique Gibbs measure at all temperatures. Such a conclusion is false for general spin systems, but we provide a related conjecture pertaining to the possible general behavior (Section~\ref{sec.conj.uniq}). The conjecture extends the well-known belief that the two-dimensional Edwards-Anderson spin glass model possesses a unique ground-state pair (see, e.g.,~\cite{ADN10,NS00, NS96Spa, NS01} and the references therein).

The rest of the paper is structured as follows: Section~\ref{sec:2d systems} presents the setup and results for general two-dimensional spin systems. The setup and results for spin systems with continuous symmetry are presented in Section~\ref{sec:continuous symmetry results}. Section~\ref{relatedresults} is devoted to an overview of related results. An outline of our proofs is given in Section~\ref{Secstratproof}. Section~\ref{section2not} collects preliminary properties of the spin systems that we study. The proofs of the results pertaining to general two-dimensional spin systems are presented in Section~\ref{sectionDSS} while the proofs pertaining to spin systems with continuous symmetry are presented in Section~\ref{SectionCSS}. Section~\ref{sectioncommentsandopenproblems} discusses additional points and highlights several open problems.

\section{Two-dimensional disordered spin systems}\label{sec:2d systems}
In this section we describe the quantitative decay rates that are obtained for a general class of two-dimensional disordered spin systems. We first describe the spin systems to which the results apply (Section~\ref{sec:general setup}), then proceed to a list of specific examples which clarify and add interest to the general definitions (Section~\ref{sec:examples}), and finally describe the results themselves (Section~\ref{sec:results}).

\subsection{The general setup} \label{sec:general setup}
We work on the standard integer lattice $\Zd$, in which we denote the standard orthonormal basis by $e_1 , \ldots, e_d$; we write $u \sim v$ if two vertices $u , v$ are nearest neighbors. We let $\|\cdot\|_\infty$ denote the $\ell_\infty$ distance on $\Zd$ and, for $v\in\Zd$ and an integer $R\ge 0$, set $B(v,R):=\{w\in\Zd\,\colon\,\|w-v\|_\infty\le R\}$ to be the ball of radius $R$ around $v$ in the $\ell_\infty$ distance. For integer $R\ge 0$, denote the external vertex boundary to distance $R$ of a set $\Lambda \subseteq \Zd$ by $\partial^R \Lambda := \left\{ v \in \Zd \setminus \Lambda \,:\, B(v,R)\cap\Lambda\neq\emptyset \right\} $ and set $\Lambda^{+R} := \Lambda \cup \partial^R \Lambda$. We also abbreviate $\partial\Lambda := \partial^1\Lambda$ and $\Lambda^+:=\Lambda^{+1}$.

Our general results (which do not rely on continuous symmetry) apply to disordered spin systems defined via~\eqref{eq:13001511} and~\eqref{eq:Gibbs measure general setup} below and built as follows.

\noindent {\bf The base system:} A non-disordered spin system is constructed from the following elements.
\begin{enumerate}[leftmargin=*]
    \item \emph{State space and configuration space:} let $(\S, \mathcal{A}, \kappa)$ be a probability space. Configurations of the spin system are functions $\sigma:\Zd\to\S$. Their restriction to a subset $\Lambda\subset\Zd$ is denoted $\sigma_{\Lambda}$.
    \item \emph{Hamiltonian:} for each finite $\Lambda \subseteq \Zd$, let $H_{\Lambda} : \S^{\Zd} \to \R$ be a bounded measurable map. We assume that this family of Hamiltonians $(H_\Lambda)_{\Lambda \subseteq \Zd}$ satisfies the following two properties:
        \begin{enumerate}[leftmargin=*]
            \item \emph{Consistency:} for each finite $\Lambda \subseteq \Zd$, inverse temperature $\beta>0$, and boundary condition $\tau:\Zd\setminus\Lambda\to\S$, define the finite-volume Gibbs measure
\begin{equation*}
    \mu_{\beta , \Lambda, \tau} \left( d\sigma \right) :=  \frac 1{Z_{\beta , \Lambda, \tau}} \exp \left( - \beta  H_{\Lambda}(\sigma) \right) \prod_{v \in \Lambda} \kappa(d \sigma_v)\prod_{v\in\Zd\setminus\Lambda}\delta_{\tau_v} \left( d\sigma_v \right),
\end{equation*}
    where $Z_{\beta , \Lambda, \tau}$, called the partition function, normalizes $\mu_{\beta, \Lambda, \tau}$ to be a probability measure, and $\delta_{\tau_v}$ is the Dirac delta measure at $\tau_v$. We denote by $\left\langle \cdot \right\rangle_{\beta,\Lambda}^{\tau}$ the expectation operator with respect to $\mu_{\beta , \Lambda, \tau}$. We omit $\beta$ from the notation when it is clear from context.

    We assume that the family of Hamiltonians $(H_\Lambda)_{\Lambda \subseteq \Zd}$ satisfies the following consistency relation (finite-volume Gibbs property): for each pair of finite subsets $\Lambda' \subseteq \Lambda$, each inverse temperature $\beta > 0$, each boundary condition $\tau_0:\Zd \setminus \Lambda \to \S$, and each bounded measurable $g : \S^{\Zd} \to \R$,
    \begin{equation} \label{eq:17391211}
        \left\langle  \left\langle g(\sigma) \right\rangle_{\beta,\Lambda'}^{\tau} \right\rangle_{\beta,\Lambda}^{\tau_0} =  \left\langle g(\sigma) \right\rangle_{\beta,\Lambda}^{\tau_0},
    \end{equation}
    where the spin $\sigma$ in the left-hand side is distributed according to the measure $\mu_{\beta , \Lambda', \tau}$, and the boundary condition $\tau$ is random and is distributed according to the restriction of $\mu_{\beta , \Lambda, \tau_0}$ to the set $\Zd \setminus \Lambda'$.  The identity~\eqref{eq:17391211} is known as the Dobrushin, Lanford and
Ruelle (DLR) equation for Gibbs measures (see~\cite[Chapters 1 and 2]{georgii2011gibbs}). 
    
    If $\Lambda = \{-L,\dots, L\}^d$, for some $L$, we can also define a periodic measure $\mu_{\beta , \Lambda, \mathrm{Per}} \left( d\sigma \right)$. Let $\mathrm{Per}_{\Lambda}$ be the set of configurations $\sigma$ which are $(2L+1)$ periodic --- i.e. $\sigma_u = \sigma_{u +v}$ for every $u \in \Lambda$ and $v \in (2L+1) \cdot \Zd$. There is a natural bijection between the spaces $\mathcal{S}^{\Lambda}$ and $\mathrm{Per}_{\Lambda}$ obtained by extending periodically functions defined on $\Lambda$ to $\Zd$. We then let $\kappa_{\mathrm{Per}, \Lambda}$ be the pushforward of the measure $\prod_{v \in \Lambda} \kappa(d\sigma_v)$ by the extension mapping, and define 
    \begin{equation*}
    \mu_{\beta , \Lambda, \mathrm{Per}} \left( d\sigma \right) :=  \frac 1{Z_{\beta , \Lambda, \mathrm{Per}}} \exp \left( - \beta  H_{\Lambda}(\sigma) \right) \kappa_{\mathrm{Per},  \Lambda}(d \sigma) 
\end{equation*}
on the set of periodic configurations $\mathrm{Per}_{\Lambda}$. The periodic measure must also satisfy the same consistency conditions described above. Below, whenever we take suprema over possible boundary conditions, we include the periodic measure. 
    \item \emph{Bounded boundary effect:} We assume that there exists a constant $C_H \geq 0$ such that for each finite $\Lambda \subseteq \Zd$,
    \begin{equation} \label{def.cteCH}
       \left| H_{\Lambda } (\sigma) - H_{\Lambda } (\sigma')\right| \leq C_H \left| \partial \Lambda\right|\quad\text{for $\sigma , \sigma':\Zd\to\S$ satisfying $\sigma_\Lambda = \sigma'_\Lambda$}.
    \end{equation}
    \end{enumerate}
\end{enumerate}
\noindent {\bf The disordered system:} To form the disordered system from the base system, we add a term of the form $\sum_{v \in\Lambda} \eta_v\cdot f_v(\sigma)$ to the Hamiltonian, where $(\eta_v)$ is a family of $m$-dimensional random vectors (the quenched disorder) and $(f_v)$ is a family of $m$-dimensional functions of the configuration such that $f_v(\sigma)$ depends only on the restriction of $\sigma$ to a neighborhood of $v$ of some fixed radius $R$.
\begin{enumerate}[leftmargin=*]
    \item \emph{Disorder:} we let $\eta=\left( \eta_v \right)_{v \in \Zd}$ be a collection of independent standard $m$-dimensional Gaussian vectors; we use the symbols $\P$ and $\E$ to refer to the law and the expectation operator with respect to $\eta$.
    \item \emph{Noised observables:} fix integers $m \geq 1$ and $R\ge 0$. For each $v \in \Zd$, we let $f_v : \S^{\Zd} \to \R^m$ be a measurable function satisfying
    \begin{align*}
        &\text{Boundedness:}&&\left| f_v(\sigma)\right| \leq 1\text{ for $\sigma\in\S^{\Zd}$},\\
        &\text{Finite range:}&&f_v(\sigma) = f_v(\sigma')\text{ when $\sigma,\sigma'\in\S^{\Zd}$ satisfy $\sigma_{B(v,R)}=\sigma'_{B(v,R)}$}.
    \end{align*}
    \item \emph{Disordered Hamiltonian:}
given a finite $\Lambda \subseteq \Zd$, a fixed realization of $\eta:\Zd\to\R$ and a disorder strength $\lambda > 0$, define the disordered Hamiltonian $H_\Lambda^{\eta,\lambda}:\S^{\Zd}\to\R$ by
\begin{equation} \label{eq:13001511}
    H^{\eta ,\lambda}_{\Lambda} \left( \sigma \right) := H_{\Lambda} \left( \sigma \right) - \lambda \sum_{v \in \Lambda} \eta_{v}\cdot f_v \left( \sigma \right)
\end{equation}
(where the dot product denotes the Euclidean scalar product on $\R^m$) and, for an inverse temperature $\beta >0$ and boundary condition $\tau \in \S^{\Zd \setminus \Lambda}$, define the finite-volume Gibbs measure
\begin{equation}\label{eq:Gibbs measure general setup}
    \mu_{\beta , \Lambda, \tau}^{\eta , \lambda} \left( d\sigma \right) :=  \frac 1{Z_{\beta , \Lambda, \tau}^{\eta, \lambda}} \exp \left( - \beta  H^{\eta, \lambda}_{\Lambda}(\sigma) \right) \prod_{v \in \Lambda} \kappa(d \sigma_v)\prod_{v\in\Zd\setminus\Lambda}\delta_{\tau_v} \left( d\sigma_v \right),
\end{equation}
where $Z_{\beta , \Lambda, \tau}^{\eta, \lambda}$ is the normalization constant which makes the measure $ \mu_{\beta, \Lambda, \tau}^{\eta, \lambda}$ a probability measure. We denote by $\left\langle \cdot \right\rangle_{\beta,\Lambda}^{\tau, \eta, \lambda}$ the expectation with respect to the measure $\mu_{\beta , \Lambda, \tau}^{\eta, \lambda}$ and refer to it as \emph{the thermal expectation}. When $\beta, \eta$ and $\lambda$ are clear from the context, we will omit them from the notation.

We note that the consistency relation~\eqref{eq:17391211} implies the same identity for the disordered system: Given a pair of finite subsets $\Lambda'\subset\Lambda$, $\eta:\Zd\to\R$, $\lambda > 0$, $\beta>0$, $\tau \in \S^{\Zd \setminus \Lambda}$ and any bounded measurable $g : \S^{\Zd} \to \R$,
\begin{equation} \label{eq:14361411}
    \left\langle  \left\langle g(\sigma) \right\rangle_{\Lambda'}^{\tau} \right\rangle_{\Lambda}^{\tau_0} =  \left\langle g(\sigma) \right\rangle_{\Lambda}^{\tau_0},
\end{equation}
where $\sigma$ in the left-hand side is distributed as $\mu_{\beta , \Lambda', \tau}^{\eta, \lambda}$, and $\tau$ is random and distributed according to the restriction to $\Zd \setminus \Lambda'$ of a random spin configuration distributed as $\mu_{\beta , \Lambda, \tau_0}^{\eta, \lambda}$.
\end{enumerate}

We point out that neither the base system nor the noised observables $(f_v)$ are required to be periodic with respect to translations. Still, it is very natural to work in a \emph{translation-invariant setup}, by which we mean that, for all $v\in\Zd$ and configurations $\sigma$, we have (i) $H_\Lambda(\sigma) = H_{\Lambda+v}(\T_v(\sigma))$ for all finite $\Lambda$, where $\T_v$ is the translation by $v$ operation: $\T_v(\sigma)_u = \sigma_{u-v}$, and (ii) $f_v(\sigma) = f_{\mathbf 0}(\T_{-v}(\sigma))$ (where $\mathbf 0$ is the zero vector in $\Zd$). Indeed all of the examples presented in the next section are of this type; however, we stress that such invariance is not required for our results (see Section~\ref{sec:results}). The additional flexibility of the general definitions allows for inhomogeneities in the base system and further allows to vary the noised observables $(f_v)$ periodically along a sublattice of $\Zd$ (this may make sense, e.g., in antiferromagnetic systems, where the ordered state of the base system is not invariant to all translations), or even to choose $(f_v)$ arbitrarily. 

\subsection{Examples}\label{sec:examples}
We now describe several classical examples of disordered systems which fit within the general class of systems discussed in the previous section. To the best of our knowledge, our general results are already new when specialized to these examples except for the example of the random-field ferromagnetic Ising model where stronger results are known (see Section~\ref{relatedresults}).
\begin{itemize}[leftmargin=*]
    \item \emph{Random-field Ising model:} the state space is $\S := \{-1 , 1\}$ equipped with the counting measure and the Hamiltonian of the base system (the ferromagnetic Ising model) is
    \begin{equation*}
         H_{\Lambda}(\sigma):= -\sum_{\substack{ \{u, v\}\cap \Lambda\neq\emptyset\\ u\sim v}} \sigma_u \sigma_v - \sum_{v \in \Lambda} h \sigma_v,
    \end{equation*}
    for a fixed $h \in \R$.
    The noised observable is $f_v(\sigma) := \sigma_v$ so that $m=1$, $R=0$ and the disordered Hamiltonian is
    \begin{equation*}
         H^{\eta , \lambda }_{\Lambda} \left( \sigma \right) := -\sum_{\substack{ \{u, v\}\cap \Lambda\neq\emptyset\\ u\sim v}} \sigma_u \sigma_v - \sum_{v \in \Lambda} \left( \lambda \eta_{v} + h \right) \sigma_v
    \end{equation*}
    corresponding to the addition of a quenched random field to the Ising model. One similarly forms the random-field \emph{antiferromagnetic} Ising model by removing the minus sign in front of the term $\sum \sigma_u \sigma_v$.
    
    \item \emph{Random-field $q$-state Potts model:} let $q\ge 3$ be an integer. The state space is $\S := \{1, \ldots, q\}$ equipped with the counting measure and the Hamiltonian of the base system (the ferromagnetic $q$-state Potts model) is
    \begin{equation*}
     H_\Lambda(\sigma):=-\sum_{\substack{ \{u, v\}\cap \Lambda\neq\emptyset\\ u\sim v}} \indc_{\left\{ \sigma_u = \sigma_v \right\}} - \sum_{v \in \Lambda} \sum_{k = 1}^q h_k \indc_{\{ \sigma_v = k\}},
    \end{equation*}
    for a fixed $h := (h_1 , \ldots, h_q) \in \R^q$.
    The noised observable is $f_v(\sigma) := \left( \indc_{\{\sigma_v = 1\}}, \ldots, \indc_{\{\sigma_v = q\}} \right)$ so that $m=q$, $R=0$ and the disordered Hamiltonian is
    \begin{equation*}
         H^{\eta , \lambda, h}_{\Lambda} \left( \sigma \right) := -\sum_{\substack{ \{u, v\}\cap \Lambda\neq\emptyset\\ u\sim v}} \indc_{\{ \sigma_u = \sigma_v \}} - \sum_{v \in \Lambda} \sum_{k = 1}^q \left( \lambda \eta_{v, k} + h_k \right) \indc_{\{ \sigma_v = k\}}
    \end{equation*}
    where we write $\eta_v = \left( \eta_{v,1} , \ldots , \eta_{v,q} \right) \in \R^q$ and $h = \left( h_1 , \ldots, h_q \right) \in \R^q$. The effect of the disorder is to add an energetic bonus or penalty to each spin state at each vertex according to quenched random vectors of length $q$ which are assigned independently to the vertices.
    \item \emph{Edwards-Anderson spin glass}: the state space is $\S := \{-1 , 1\}$ equipped with the counting measure and the Hamiltonian of the base system (a uniform system) is
    \begin{equation*}
        H_\Lambda(\sigma):=0.
    \end{equation*}
    The noised observable is $f_v(\sigma):=\left(\sigma_v \sigma_{v+e_i} \right)_{i \in \{ 1 , \ldots, d \}}$ so that $m=d$, $R=1$ and we have, in effect, a noise term $\eta$ assigned to every edge of $\Zd$. The disordered Hamiltonian is (up to a term which does not depend on the spins in $\Lambda$)
    \begin{equation*}
         H^{\eta , \lambda ,h}_{\Lambda} \left( \sigma \right) := -\sum_{\substack{ \{u, v\}\cap \Lambda\neq\emptyset\\ u\sim v}} (\lambda\eta_{u,v} + h) \sigma_u \sigma_v
    \end{equation*}
    corresponding to adding a random energetic bonus/penalty to satisfied edges (edges with the same spin state assigned to both endpoints) independently between the different edges.
    \item \emph{Random-field spin $O(n)$-model:} let $n\ge 2$ be an integer. The state space is $\S := \mathbb{S}^{n-1} \subseteq \R^n$ equipped with the uniform measure and the Hamiltonian of the base system (the spin $O(n)$ model) is
    \begin{equation*}
        H_\Lambda(\sigma):=-\sum_{\substack{ \{u, v\}\cap \Lambda\neq\emptyset\\ u\sim v}} \sigma_u \cdot \sigma_v.
    \end{equation*}
    The noised observable is $f_v(\sigma) = \sigma_v\in\R^n$ so that $R=0$, $m = n$ and the disordered Hamiltonian is
    \begin{equation*}
         H^{\eta ,\lambda, h}_{\Lambda} \left( \sigma \right) := -\sum_{\substack{ \{u, v\}\cap \Lambda\neq\emptyset\\ u\sim v}} \sigma_u \cdot \sigma_v - \sum_{v \in \Lambda} \left( \lambda \eta_{v} + h \right) \cdot \sigma_v
    \end{equation*}
    corresponding to the addition of a quenched random field in a random direction, independently at each vertex, to the spin $O(n)$-model.

    We point out that while the results of the next section apply to the random-field spin $O(n)$-model, stronger results are obtained for it in Section~\ref{sec:continuous symmetry results} by relying on its continuous symmetry (at $h=0$).

\end{itemize}

\subsection{Results}\label{sec:results}

\subsubsection{Main result}
Throughout the section, we fix a disordered spin system of the type defined in Section~\ref{sec:general setup}, described by a state space $(\S, \mathcal{A}, \kappa)$, a family of Hamiltonians for the base system $(H_\Lambda)_{\Lambda \subseteq \Zd}$ satisfying the bounded boundary effect assumption with constant $C_H$ and a family of observables $(f_v)$ with a range of $R$, taking values in $\R^m$. The disorder $(\eta_v)$ is a collection of independent standard $m$-dimensional Gaussian vectors. The results of this section apply to two-dimensional spin systems so we further fix $d=2$.

In the first theorem, we present estimates on the effect of the boundary condition on the thermal expectation of the spatially-averaged noised observables in a finite box, in the presence of the disorder. For the second part of the theorem, recall that the notion of translation-invariant setup was defined in Section~\ref{sec:general setup}.

Here and later, we let $\Lambda_L := \left\{ -L,  \ldots, L \right\}^d\subset\Zd$, let $\left| \Lambda_L \right|$ be its cardinality, and denote by $\left| \cdot \right|$ the Euclidean norm on $\R^m$.
\begin{theorem} \label{thm1prop2.31708main}
Let $\beta > 0$ be the inverse temperature and $\lambda>0$ be the disorder strength. There exist constants $C,c>0$ depending only on $\lambda$, $C_H$, $m$ and $R$ such that, for each integer $L\ge 3$,
\begin{equation} \label{eq:11191006}
    \P \left( \sup_{\tau_1, \tau_2 \in \S^{\Z^2 \setminus \Lambda_L}} \left| \frac{1}{|\Lambda_L|} \sum_{v \in \Lambda_L} \left( \left\langle f_v \left( \sigma \right) \right\rangle_{\Lambda_L}^{\tau_1} - \left\langle f_v \left( \sigma \right) \right\rangle_{\Lambda_L}^{\tau_2} \right) \right|  > \frac{C}{\sqrt[4]{\ln \ln L}}  \right) \le\exp \left( - c L \right)
\end{equation}
and moreover, in a translation-invariant setup,
\begin{equation}\label{eq:fluctuations around limiting value}
    \P \left( \sup_{\tau\in \S^{\Z^2 \setminus \Lambda_L}} \left| \alpha - \frac{1}{|\Lambda_L|} \sum_{v \in \Lambda_L} \left\langle f_{\mathbf 0} \left( \mathcal{T}_{-v} \sigma \right) \right\rangle_{\Lambda_L}^{\tau} \right|  > \frac{C}{\sqrt[4]{\ln \ln L}}  \right) \leq \exp \left( - c L \right)
\end{equation}
where $\alpha\in\R^m$ depends only on the spin system considered, on the inverse temperature $\beta$ and on the disorder strength $\lambda$ (but does not depend on the disorder $\eta$ and on $L$).
\end{theorem}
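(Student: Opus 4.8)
The plan is to follow the Aizenman--Wehr strategy in a quantitative, finite-volume form, the key input being that a change of boundary conditions perturbs the free energy of $\Lambda_L$ by $O(|\partial \Lambda_L|) = O(L)$, while the derivative of the free energy in the direction of the random field has fluctuations of order $\sqrt{|\Lambda_L|}=L$ as well --- and these two scales being comparable in $d=2$ is exactly what forces the boundary effect to vanish. Concretely, fix $\beta,\lambda$ and write $F_L(\tau) := -\frac1\beta \ln Z^{\eta,\lambda}_{\beta,\Lambda_L,\tau}$ for the (quenched) free energy. By the bounded boundary effect assumption~\eqref{def.cteCH} together with the finite range of the $(f_v)$, one has $|F_L(\tau_1)-F_L(\tau_2)| \le C_1 L$ deterministically, uniformly in $\eta$ and in $\tau_1,\tau_2$. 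On the other hand, $\partial_{t} \big|_{t=0}\big(-\tfrac1\beta \ln Z^{\eta + t w,\lambda}_{\beta,\Lambda_L,\tau}\big) = -\lambda \sum_{v\in\Lambda_L} w_v\cdot \langle f_v(\sigma)\rangle^{\tau}_{\Lambda_L}$ for any deterministic direction $w$; so the spatially averaged observable appearing in~\eqref{eq:11191006} is, up to the factor $-\lambda |\Lambda_L|^{-1}$, a directional derivative of the free energy with respect to the disorder. The first step is therefore to set up this correspondence carefully, choosing for each pair $\tau_1,\tau_2$ the direction $w = w(\tau_1,\tau_2,\eta)$ that makes the difference of averaged observables equal to its own absolute value.

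The second step is the heart of the matter: controlling the fluctuations of $F_L$ as a function of $\eta$. Since $\eta$ is Gaussian and $F_L(\tau)$ is a Lipschitz function of $\eta$ with Lipschitz constant $O(\lambda\sqrt{|\Lambda_L|}) = O(\lambda L)$ (again from boundedness and finite range of $f_v$ and the $L^2$--$L^\infty$ comparison), Gaussian concentration gives $\P(|F_L(\tau) - \E F_L(\tau)| > t L) \le 2\exp(-c t^2)$. The subtlety is that $\tau$ is allowed to depend on $\eta$, so a naive union bound over the uncountable set $\S^{\Zd\setminus\Lambda_L}$ fails. This is where one must exploit that $F_L(\tau)$ depends on $\tau$ only through the boundary spins adjacent to $\Lambda_L$, and --- more importantly --- use a \emph{two-scale (martingale/concentric boxes) argument} in the style of Aizenman--Wehr: compare $\Lambda_L$ with a slightly larger box $\Lambda_{L'}$, $L' = L + $ (something like $\sqrt L$ or a fixed fraction of $L$), integrate out the boundary layer, and note that for the averaged observable in the \emph{larger} box the boundary dependence has been washed out, so that only the disorder in the annulus --- which has variance controlled by $|\partial^{?}\Lambda_L|$ --- contributes to the discrepancy. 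Iterating this comparison over a dyadic (or more slowly growing) sequence of scales, and using that each step costs a factor controlled by the ratio of annulus-to-volume, produces the $(\ln\ln L)^{-1/4}$ rate: the double logarithm and the fourth root are the signature of needing $\sim \ln L$ scales and paying a square root at a $\chi^2$/Gaussian estimate twice. I would organize this as: (i) a deterministic free-energy difference bound; (ii) a Gaussian concentration estimate at a single scale with the $\exp(-cL)$ probability coming from taking $t$ of order $1$; (iii) a multiscale chaining over the boundary conditions that reduces the $\eta$-dependent supremum over $\tau$ to finitely many ($\exp(o(L))$ many) ``effective'' boundary conditions per scale, so the union bound survives; (iv) summation of the per-scale errors.

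The third step derives the translation-invariant statement~\eqref{eq:fluctuations around limiting value} from~\eqref{eq:11191006}. In the translation-invariant setup the map $L \mapsto \E\big[\frac{1}{|\Lambda_L|}\sum_{v\in\Lambda_L}\langle f_{\mathbf 0}(\T_{-v}\sigma)\rangle^{\tau}_{\Lambda_L}\big]$, for any fixed sequence of boundary conditions, can be shown to be Cauchy: the difference between scales $L$ and $2L$ is bounded, via the concentration and free-energy arguments above, by the boundary effect at scale $L$, which is summable enough (after the multiscale estimate) to define a limit $\alpha\in\R^m$; that $\alpha$ is independent of $\eta$ follows because its fluctuations are killed by~\eqref{eq:11191006} and an ergodicity/$0$--$1$ argument, and its independence of $L$ and of the boundary conditions is built into the construction. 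Combining the convergence of the mean to $\alpha$ with the concentration of the random average around its mean yields~\eqref{eq:fluctuations around limiting value} with the same rate and the same $\exp(-cL)$ probability bound.

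I expect the main obstacle to be step (iii): making rigorous the reduction from the \emph{$\eta$-measurable} supremum over all boundary conditions to a union bound that Gaussian concentration can absorb. One cannot simply net the space $\S^{\Zd\setminus\Lambda_L}$ (it is not even assumed compact), so the argument must genuinely use the structure of the Hamiltonian --- finite range, bounded boundary effect --- to argue that only the finitely-valued (or effectively low-complexity) data near the boundary matters, and to propagate this through the multiscale iteration without the constants blowing up faster than $\exp(cL)$ can tolerate. The precise bookkeeping of how the per-scale error of order (annulus/volume)$^{1/2}$ compounds over $\sim\ln L$ scales to give $(\ln\ln L)^{-1/4}$, rather than some other slowly-varying function, is the other delicate point.
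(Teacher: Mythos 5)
There is a genuine gap, and it sits exactly where you flag the ``main obstacle'': your plan has no mechanism for converting closeness of free energies into closeness of the \emph{observables}, and the chaining/netting over boundary conditions you hope will fix this is both undeveloped and, in the paper, unnecessary. The paper's key input is a deterministic stability lemma for convex functions: for each fixed realization of the orthogonal part of the field, the map $\hat\eta_{\Lambda_L}\mapsto -\FE^{\tau}_{\Lambda_L}(\hat\eta_{\Lambda_L},\eta^{\perp}_{\Lambda_L})$ is convex and $\lambda$-Lipschitz, its derivative in $\hat\eta_{\Lambda_L}$ is exactly the spatially averaged observable, and \emph{all} boundary conditions give functions within $r\sim C/L$ of a single reference $\FE^{\tau_0}$. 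The lemma (Proposition~\ref{Propdeltastab}) says that outside a set of $\hat\eta_{\Lambda_L}$-values of Lebesgue measure $C\lambda r/\delta^2$, every convex $\lambda$-Lipschitz function within $r$ of the reference has derivative within $\delta$ of the reference derivative --- so the supremum over the uncountable, $\eta$-dependent set of boundary conditions is handled in one stroke, with no union bound, no net, no complexity count. Since $\hat\eta_{\Lambda_L}$ is Gaussian with standard deviation $\sim 1/L$, comparable to $r$ in $d=2$, this yields only a \emph{constant-probability} (scale-independent) good event, $\P(\fluc_{\Lambda_L}<\delta+C/L)>\exp(-C/\delta^4)$, which is where the exponent $4$ --- and hence the $(\ln\ln L)^{-1/4}$ rate, via $\exp(-C/\delta^4)\approx 1/\ln L$ --- actually comes from. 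Your heuristic for the rate (``a square root at a Gaussian estimate twice'') does not correspond to an argument.

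The second missing piece is the amplification. Because the single-scale statement holds only with probability $\exp(-C/\delta^4)$, the paper runs a Mandelbrot-percolation covering of $\Lambda_L$ by good boxes over $\sim\ln L$ scales, using disjoint annuli to get independence across scales and the domain subadditivity of $\fluc$ (Proposition~\ref{propdommarkov}) to aggregate, reaching probability $1-\exp(-c\sqrt{\ln L})$; it then splits $\Lambda_L$ into $\sim L^{3/2}$ disjoint boxes of side $L^{1/4}$ and applies Hoeffding to the independent bounded variables $\fluc_{\tilde\Lambda_i}$ to reach $1-\exp(-cL)$. In your sketch the multiscale step is assigned a different (and unworkable) role --- taming the supremum over $\tau$ --- and the $\exp(-cL)$ is attributed to Gaussian concentration of $F_L(\tau)$ around its mean with $t\sim1$, which bounds free-energy fluctuations but says nothing about $\sup_{\tau_1,\tau_2}$ of the observable difference. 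Your step for the translation-invariant statement (subadditivity in expectation to define $\alpha$, then concentration) is in the right spirit and close to the paper's, but it cannot be carried out until the main estimate is established by the convexity-plus-amplification route above.
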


We mention that, for any fixed $L$, any $v \in \Lambda_L$, and any boundary condition $\tau$, the function $\eta \mapsto \left\langle f_v \left( \sigma \right) \right\rangle_{\Lambda_L}^{\tau}$ is Lipschitz continuous, uniformly in $v$ and $\tau$. Thus, the quantities inside the probabilities in the previous theorems and the following ones are indeed measurable.

We remark that, while we only prove the result at positive temperature, the same argument yields the following zero-temperature version of the theorem: for a given side length $L$ and a realization of the disorder $\eta$, define an $L$-ground configuration as a configuration $\sigma:\Z^2 \to\S$ satisfying
\begin{equation*}
    H_{\Lambda_L}^{\eta,\lambda}(\sigma) = \inf_{\substack{\sigma':\Z^2 \to\S \\ \sigma'\equiv\sigma \, \mathrm{in}\,\Z^2 \setminus\Lambda_L }} H_{\Lambda_L}^{\eta,\lambda}(\sigma'),
\end{equation*}
and let $G_L^\eta$ be the set of $L$-ground configurations. Then we have
\begin{equation}
    \P \left( \sup_{\sigma_1, \sigma_2 \in G_L^\eta} \left| \frac{1}{|\Lambda_L|} \sum_{v \in \Lambda_L} \left( f_v \left( \sigma_1 \right) -  f_v \left( \sigma_2 \right) \right) \right|  > \frac{C}{\sqrt[4]{\ln \ln L}}  \right) \le\exp \left( - c L \right)
\end{equation}
where, if $G_L^\eta$ is empty, then the supremum is assumed to take the value negative infinity. Additionally, if we assume that the state space is compact and both the base Hamiltonian and noised observables are continuous, we can show that, for all $\eta$ and all $L$, the set $G_L^\eta$ of $L$-ground configurations is non-empty. It is straightforward that this is the case in all the example systems of Section~\ref{sec:examples}.

In the translation-invariant setup, the value $\alpha$ is explicit for some of the models described in Section~\ref{sec:examples}: for the $q$-state Potts model with external field $h = 0$, one has $\alpha = \left( 1/q , \ldots , 1/q\right)$. In the case of the Edwards-Anderson spin glass, a gauge symmetry (obtained by flipping the spins on the even sublattice and changing the sign of $\eta$ on all edges) implies that the density of satisfied edges is about $1/2$, i.e.,
\begin{equation*}
    \lim_{L \to \infty} \frac{1}{\left| E \left(\Lambda_L\right) \right|} \sum_{\substack{x , y \in \Lambda_L \\ x \sim y}}  \left\langle \indc_{\{\sigma_x = \sigma_y \}} \right\rangle_{\Lambda_L}^{\tau_L}   \underset{L \to \infty}{\longrightarrow} \frac 12 \hspace{5mm}\P\mbox{-almost-surely}
    \end{equation*}
for any collection of random boundary condition $\eta \mapsto \tau_L(\eta) \in \S^{\Z^2 \setminus \Lambda_L}$ and $L \geq 1$, where $\left| E \left(\Lambda_L\right) \right|$ denotes the number of edges in the box $\Lambda_L$. In both these examples, our results are novel.

We finally remark that, while our result is stated and proved in dimension $d = 2$, it is possible to adapt the argument to treat the case of the one-dimensional disordered spin systems. In this setting, the system is subcritical and power law decays can be obtained on the expectation of the supremum over pairs of boundary conditions of the difference of the thermally and spatially averaged noised observables. We record below (and without proof) the results one obtains by adapting the techniques developed in the proof of Theorem~\ref{thm1prop2.31708main}:
\begin{equation*}
    \E \left[ \sup_{\tau_1, \tau_2 \in \S^{\Z \setminus \Lambda_L}} \left| \frac{1}{|\Lambda_L|} \sum_{v \in \Lambda_L} \left( \left\langle f_v \left( \sigma \right) \right\rangle_{\Lambda_L}^{\tau_1} - \left\langle f_v \left( \sigma \right) \right\rangle_{\Lambda_L}^{\tau_2} \right) \right| \right] \leq \frac{C}{L^{1/4}},
\end{equation*}
and, in the translation-invariant setup,
\begin{equation*}
    \E \left[ \sup_{\tau\in \S^{\Z \setminus \Lambda_L}} \left| \alpha - \frac{1}{|\Lambda_L|} \sum_{v \in \Lambda_L} \left\langle f_{\mathbf 0} \left( \mathcal{T}_{-v} \sigma \right) \right\rangle_{\Lambda_L}^{\tau} \right| \right] \leq \frac{C}{L^{1/6}}.
\end{equation*}

\subsubsection{Uniqueness conjecture and additional results} \label{sec.conj.uniq}
Theorem~\ref{thm1prop2.31708main} states that the spatial average over $\Lambda_L$ of the thermal expectations $\left\langle f_v \left( \sigma \right) \right\rangle_{\Lambda_L}^{\tau}$ does not depend strongly on the boundary condition $\tau$. It is natural to ask for a more detailed result: to what extent can the value of $\left\langle f_v \left( \sigma \right) \right\rangle_{\Lambda_L}^{\tau}$ at a given vertex $v$ be affected by $\tau$? Can the values at specific vertices $v$ not too close to the boundary of $\Lambda_L$, or even at most $v\in\Lambda_L$, be significantly altered by changing $\tau$ (keeping in mind that the overall spatial average does not depend strongly on $\tau$ in the sense of Theorem~\ref{thm1prop2.31708main})? We conjecture that such a phenomenon cannot occur. We first state this as a conjecture, and then explain additional motivation as well as additional rigorous support (we remind that throughout the section we have fixed a disordered spin system of the type defined in Section~\ref{sec:general setup} and we work in dimension $d=2$)

\begin{conjecture}\label{conj:uniqueness}
Let $\beta > 0$ be the inverse temperature and $\lambda>0$ be the disorder strength. Then, $\P$-almost-surely,
\begin{equation}\label{eq:averaged uniqueness conjecture}
    \lim_{L\to\infty}\sup_{\tau_1, \tau_2 \in \S^{\Z^2 \setminus \Lambda_L}} \frac{1}{|\Lambda_L|} \sum_{v \in \Lambda_L} \left|\left\langle f_v \left( \sigma \right) \right\rangle_{\Lambda_L}^{\tau_1} - \left\langle f_v \left( \sigma \right) \right\rangle_{\Lambda_L}^{\tau_2} \right| = 0
\end{equation}
and moreover, in a translation-invariant setup, $\P$-almost-surely,
\begin{equation}\label{eq:pointwise uniqueness conjecture}
    \lim_{L\to\infty}\sup_{\tau_1, \tau_2 \in \S^{\Z^2 \setminus \Lambda_L}} \left|\left\langle f_{\mathbf 0} \left( \sigma \right) \right\rangle_{\Lambda_L}^{\tau_1} - \left\langle f_{\mathbf 0} \left( \sigma \right) \right\rangle_{\Lambda_L}^{\tau_2} \right| = 0.
\end{equation}
\end{conjecture}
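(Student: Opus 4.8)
A complete proof of Conjecture~\ref{conj:uniqueness} is beyond our current methods; we describe here the line of attack that we find most natural, together with the obstruction it runs into. The conjecture upgrades Theorem~\ref{thm1prop2.31708main} from control of the \emph{signed} spatial average appearing in~\eqref{eq:quantity to be studied} to control of the spatial average of \emph{absolute values} in~\eqref{eq:averaged uniqueness conjecture}, and, in a translation-invariant setup, all the way to the \emph{pointwise} discrepancy $\big|\langle f_{\mathbf 0}(\sigma)\rangle_{\Lambda_L}^{\tau_1}-\langle f_{\mathbf 0}(\sigma)\rangle_{\Lambda_L}^{\tau_2}\big|$ in~\eqref{eq:pointwise uniqueness conjecture}. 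What Theorem~\ref{thm1prop2.31708main} leaves open is the possibility that the single-vertex discrepancies $\langle f_v(\sigma)\rangle_{\Lambda_L}^{\tau_1}-\langle f_v(\sigma)\rangle_{\Lambda_L}^{\tau_2}$ are individually of order one yet nearly cancel in the spatial mean; the conjecture predicts that, with high probability over $\eta$, there is essentially no such discrepancy to cancel.

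The plan is to derive both displays from a single \emph{local indistinguishability} estimate for the finite-volume measures: for every $\varepsilon>0$ there should be an $L_0$ so that, on a $\P$-event of probability tending to one, for all $L\ge L_0$ and all $\tau_1,\tau_2\in\S^{\Zd\setminus\Lambda_L}$ one can construct a coupling $\Pi$ of $\mu_{\beta,\Lambda_L,\tau_1}^{\eta,\lambda}$ and $\mu_{\beta,\Lambda_L,\tau_2}^{\eta,\lambda}$ with
\[
    \frac{1}{|\Lambda_L|}\sum_{v\in\Lambda_L}\Pi\big(\sigma^1_{B(v,R)}\neq\sigma^2_{B(v,R)}\big)\le\varepsilon,
\]
and, in a translation-invariant setup, with the additional property $\Pi\big(\sigma^1_{B(\mathbf 0,R)}\neq\sigma^2_{B(\mathbf 0,R)}\big)\le\varepsilon$. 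Since $|f_v|\le1$ and $f_v(\sigma)$ depends on $\sigma$ only through $\sigma_{B(v,R)}$, the first bound yields $\frac{1}{|\Lambda_L|}\sum_{v\in\Lambda_L}\big|\langle f_v(\sigma)\rangle_{\Lambda_L}^{\tau_1}-\langle f_v(\sigma)\rangle_{\Lambda_L}^{\tau_2}\big|\le2\varepsilon$ and hence~\eqref{eq:averaged uniqueness conjecture}, while the second yields~\eqref{eq:pointwise uniqueness conjecture} in the same way. A candidate construction of $\Pi$ is a disorder-dependent multiscale coupling: condition the two measures on a common sequence of dyadic annuli around the region of interest, at each scale invoking Theorem~\ref{thm1prop2.31708main} \emph{applied at that scale} to bound the spatially averaged disagreement handed down to the next scale, and using the Gibbs/consistency relation~\eqref{eq:14361411} together with the bounded-boundary-effect bound~\eqref{def.cteCH} to couple as well as possible inside each mesoscopic cell. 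For~\eqref{eq:averaged uniqueness conjecture} one would then have to drive the typical per-cell disagreement probability below $\varepsilon$; for~\eqref{eq:pointwise uniqueness conjecture} one would additionally need quantitative decay of the disagreement probability at the \emph{central} cell, so that the errors accumulated along the coupling do not swamp the behaviour at a fixed vertex.

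The main obstacle is precisely the control of the coupling at a single vertex. The mechanism underlying Theorem~\ref{thm1prop2.31708main} --- weighing the free-energy advantage of two candidate states against the $\sqrt{|\Lambda|}$-size fluctuations of $\sum_{v\in\Lambda}\eta_v\cdot f_v(\sigma)$ --- is intrinsically a statement about spatial averages and is, by construction, insensitive to the configuration at any given vertex; it therefore cannot on its own close the multiscale induction at the central cell. For the random-field Ising model this obstruction is removed by monotonicity: the FKG inequality supplies a canonical monotone coupling of the extremal boundary conditions, and the disorder-percolation and Russo-type arguments of~\cite{AP19,ding2019exponential,ding20192exponential,AHP20} promote the averaged bound to exponential pointwise decay. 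No comparable tool is available for the general spin systems considered here. At a more conceptual level, a sufficient condition for~\eqref{eq:pointwise uniqueness conjecture} is that the metastate of the disordered system (in the sense of Aizenman--Wehr and Newman--Stein) be $\P$-almost-surely supported on a single infinite-volume Gibbs measure, or at least on Gibbs measures indistinguishable by $f_{\mathbf 0}$; this is the direct analogue, in our framework, of the long-standing belief that the two-dimensional Edwards--Anderson model possesses a unique ground-state pair~\cite{NS96Spa,NS00,NS01,ADN10}, and it seems equally far from reach with present techniques. For these reasons we state the result only as a conjecture; the rigorous support we can provide, under additional structural hypotheses, is given in the remainder of this subsection.
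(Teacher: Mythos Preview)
You correctly recognize that this is a conjecture, not a proved statement; the paper offers no proof either, only remarks and partial rigorous support. Your proposal is therefore appropriate in spirit, and your identification of the core obstacle --- that the free-energy/fluctuation mechanism behind Theorem~\ref{thm1prop2.31708main} controls only spatial averages and is blind to single-vertex behaviour --- is exactly right.

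That said, your line of attack differs from the paper's own discussion. You propose a multiscale coupling of the two Gibbs measures, using Theorem~\ref{thm1prop2.31708main} at each scale to bound disagreement probabilities, and you identify the obstruction as the lack of pointwise control at the central cell. The paper does not pursue a coupling approach at all. Instead, after noting (via Birkhoff's ergodic theorem) that in the translation-invariant setup~\eqref{eq:pointwise uniqueness conjecture} implies~\eqref{eq:averaged uniqueness conjecture}, and that the averaged statement follows immediately from Theorem~\ref{thm1prop2.31708main} in monotone systems, the paper's rigorous partial progress is Theorem~\ref{prop:1109}: control of \emph{weighted} spatial averages $\frac{1}{|\Lambda_L|}\sum_v w(v)\cdot\big(\langle f_v\rangle^{\tau_1}-\langle f_v\rangle^{\tau_2}\big)$ for arbitrary deterministic weights $w$. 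The key observation linking this to~\eqref{eq:averaged uniqueness conjecture} is Parseval's identity~\eqref{15000301}: if the rate in Theorem~\ref{prop:1109} could be improved from $(\ln\ln L)^{-1/8}$ to anything faster than $L^{-1}$, summing over the $|\Lambda_L|$ Fourier modes would directly yield~\eqref{eq:averaged uniqueness conjecture}. This reframes the obstruction not as a coupling problem but as a quantitative-rate problem for weighted averages.

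Both viewpoints are legitimate, and both run into the same wall: the present thermodynamic/convexity techniques give only doubly-logarithmic rates, far short of what either your coupling scheme or the paper's Parseval route would need. Your coupling heuristic is perhaps closer to how one might eventually prove the pointwise statement~\eqref{eq:pointwise uniqueness conjecture}, while the paper's weighted-average route is more directly tied to the methods actually developed here and gives the concrete partial result of Corollary~\ref{prop:1109cor}.
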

We make several remarks about the conjecture:

In the translation-invariant setup, the pointwise statement~\eqref{eq:pointwise uniqueness conjecture} implies the averaged statement~\eqref{eq:averaged uniqueness conjecture}. This can be argued, for instance, by applying Birkhoff's ergodic theorem to the functions $$g_\ell(\eta):=\sup_{\tau_1, \tau_2 \in \S^{\Z^2 \setminus \Lambda_\ell}} \left|\left\langle f_{\mathbf 0} \left( \sigma \right) \right\rangle_{\Lambda_\ell}^{\tau_1} - \left\langle f_{\mathbf 0} \left( \sigma \right) \right\rangle_{\Lambda_\ell}^{\tau_2} \right|.$$

We also note that the pointwise statement~\eqref{eq:pointwise uniqueness conjecture} may be reformulated as a statement on the ($\eta$-dependent) set of Gibbs measures of the disordered system. Indeed, let us assume that, in addition to having a translation-invariant setup, the state space $\S$ is Polish and compact, with $\mathcal{A}$ the Borel sigma algebra, and the based Hamiltonian and noised observables $f_v$ are continuous (with respect to the product topology). These assumptions allow to extract a subsequential limiting Gibbs state $\mu$ out of a sequence of finite-volume Gibbs states $\mu_k$ on increasing domains so that $\left\langle f_{\mathbf 0} \left( \sigma \right) \right\rangle_{\mu_k}\to\left\langle f_{\mathbf 0} \left( \sigma \right) \right\rangle_\mu$ (where $\left\langle\cdot\right\rangle_\mu$ is the expectation under $\mu$). Then, the statement~\eqref{eq:pointwise uniqueness conjecture} is equivalent to the claim that, at any $\beta, \lambda>0$, it holds $\P$-almost surely that $\left\langle f_{\mathbf 0} \left( \sigma \right) \right\rangle_\mu$ takes the same value for all Gibbs measures $\mu$ of the disordered system.

In monotonic systems such that, for each $i \in \{1 , \ldots, d  \}$ and $L \in \mathbb{N}$, there exist boundary conditions $\tau_{\text{min}},\tau_{\text{max}}$ such that $\left\langle f_{v,i} \left( \sigma \right) \right\rangle_{\Lambda_L}^{\tau_{\text{min}}}\le \left\langle f_{v,i} \left( \sigma \right) \right\rangle_{\Lambda_L}^{\tau}\le \left\langle f_{v,i} \left( \sigma \right) \right\rangle_{\Lambda_L}^{\tau_{\text{max}}}$ for all $\tau$ and $v$ (such as the plus and minus boundary conditions in the random-field ferromagnetic Ising model), the averaged statement~\eqref{eq:averaged uniqueness conjecture} follows immediately from Theorem~\ref{thm1prop2.31708main}. If we additionally assume that we are in the translation-invariant setup described above, the pointwise bound~\eqref{eq:pointwise uniqueness conjecture} also follows.

Another statement which would be of interest to prove, and which is implied by the averaged statement~\eqref{eq:averaged uniqueness conjecture}, is the following: $\P$-almost-surely,
\begin{equation}\label{eq:weaker averaged uniqueness conjecture}
    \lim_{\ell\to\infty}\lim_{L\to\infty}\sup_{\tau_1, \tau_2 \in \S^{\Z^2 \setminus \Lambda_L}} \frac{1}{|\Lambda_\ell|} \sum_{v \in \Lambda_\ell} \left|\left\langle f_v \left( \sigma \right) \right\rangle_{\Lambda_L}^{\tau_1} - \left\langle f_v \left( \sigma \right) \right\rangle_{\Lambda_L}^{\tau_2} \right| = 0.
\end{equation}

We also expect the following zero-temperature version of Conjecture~\ref{conj:uniqueness} to hold: in both~\eqref{eq:averaged uniqueness conjecture} and~\eqref{eq:pointwise uniqueness conjecture}, replace the supremum over $\tau_1,\tau_2\in\S^{\Zd \setminus \Lambda_L}$ by the supremum over $\sigma_1,\sigma_2\in G_L^\eta$, where $G_L^\eta$ is the set of $L$-ground configurations and replace $\left\langle f_v \left( \sigma \right) \right\rangle_{\Lambda_L}^{\tau_i}$ by $f_v(\sigma_i)$. The zero-temperature pointwise statement would imply that, $\P$-almost-surely, all ground configurations $\sigma$ (i.e., all $\sigma\in\cap_L G_L^\eta$) agree on $f_v(\sigma)$ for all $v\in\Zd$. In the specific case of the Edwards-Anderson spin glass model this is the same as the well-known belief that the two-dimensional model has a unique ground-state pair.

\medskip

One way in which typical configurations of the system can fulfill Theorem~\ref{thm1prop2.31708main} but avoid the uniqueness statement in~\eqref{eq:averaged uniqueness conjecture} is if the configurations are differently ordered on the two bipartite classes of $\Z^2$ (by the two bipartite classes, we mean the vertices with even sum of coordinates and the vertices with odd sum of coordinates). Such a situation is familiar from the (non-disordered) antiferromagnetic Ising model at low temperature, in which typical configurations have a chessboard-like pattern (and the boundary conditions can decide which of the two chessboard patterns will emerge). Our next result shows, as a special case, that such behavior cannot arise for the disordered systems considered herein, by showing that the boundary conditions cannot significantly influence the average value of $\left\langle f_v \left( \sigma \right) \right\rangle$ on any \emph{deterministic} set of positive density.

\begin{theorem} \label{prop:1109}
Let $\beta > 0$ be the inverse temperature and $\lambda>0$ be the disorder strength. There exist constants $C,c>0$ depending only on $\lambda$, $C_H$, $m$ and $R$ such that for each integer $L\ge 3$ and for each weight function $w:\Lambda_L\to[-1,1]^m$,
\begin{equation} \label{eq:13562610}
\P \left( \sup_{\tau_1, \tau_2 \in \S^{\Z^2 \setminus \Lambda_L}}  \left|\frac{1}{|\Lambda_L|}\sum_{v \in \Lambda_L} w(v) \cdot \left(\left\langle f_v \left( \sigma \right) \right\rangle_{\Lambda_L}^{\tau_1} - \left\langle f_v \left( \sigma \right) \right\rangle_{\Lambda_L}^{\tau_2} \right) \right|  \leq \frac{C}{\sqrt[8]{\ln \ln L}}  \right) \geq 1 -  \exp \left( - c L \right).
\end{equation}
\end{theorem}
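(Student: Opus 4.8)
The plan is to derive Theorem~\ref{prop:1109} from Theorem~\ref{thm1prop2.31708main} by a reduction that absorbs the weight function $w$ into a modification of the noised observables. The key observation is that the quantity controlled in Theorem~\ref{prop:1109} is a spatially-averaged linear functional of the differences $\langle f_v(\sigma)\rangle_{\Lambda_L}^{\tau_1}-\langle f_v(\sigma)\rangle_{\Lambda_L}^{\tau_2}$, and such linear functionals are exactly of the form treated in~\eqref{eq:11191006} once we allow the observables $f_v$ to vary with $v$ (which, crucially, is permitted in the general setup of Section~\ref{sec:general setup}). Concretely, given $w:\Lambda_L\to[-1,1]^m$, I would like to run the argument behind Theorem~\ref{thm1prop2.31708main} with the scalar ($m=1$) observables $g_v(\sigma):=w(v)\cdot f_v(\sigma)$, which satisfy $|g_v|\le \sqrt m$ and have range $R$; after rescaling by $\sqrt m$ these fit the hypotheses. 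The subtlety is that $w$ depends on $L$, whereas the constants in Theorem~\ref{thm1prop2.31708main} must be uniform in $L$; this is fine because the proof of Theorem~\ref{thm1prop2.31708main} is itself carried out at a fixed scale $L$ with constants depending only on $\lambda,C_H,m,R$, so re-running it with the $L$-dependent data $(g_v)$ produces the same constants.

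The main obstacle is that one cannot literally invoke Theorem~\ref{thm1prop2.31708main} as a black box: its statement fixes the observables $(f_v)$ in advance and bundles the disorder $(\eta_v)$ to them, whereas here the effective observables $g_v=w(v)\cdot f_v$ couple only to the one-dimensional projection $w(v)\cdot\eta_v$ of the disorder, which is a standard Gaussian but carries less information than $\eta_v$ itself. So the right move is not to change the model but to note that the proof of Theorem~\ref{thm1prop2.31708main} proceeds by estimating, for each candidate pair of boundary conditions, the free-energy differences and their sensitivity to the disorder, and that the only place the observables enter is through the martingale / concentration bound on $\sum_{v}\eta_v\cdot(\langle f_v\rangle^{\tau_1}-\langle f_v\rangle^{\tau_2})$ together with a deterministic a priori bound. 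Replacing $f_v$ by $w(v)\cdot f_v$ everywhere, the disorder contribution becomes $\sum_v (w(v)\cdot\eta_v)\,(\text{difference of the new thermal averages})$, and since the bounded-boundary-effect constant $C_H$ and the range $R$ are untouched, every estimate in that proof goes through verbatim. The loss from $\sqrt[4]{\ln\ln L}$ to $\sqrt[8]{\ln\ln L}$ in the conclusion is the price of this scalar reduction: one essentially applies the quartic-root bound to the observable $g_v/\sqrt m$ and then, because the bound in Theorem~\ref{thm1prop2.31708main} controls a Euclidean norm of an $\R^m$-valued average while here we want a signed scalar average, a Cauchy--Schwarz step (or an optimization over the direction of $w$) costs another square root, yielding the eighth root.

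Thus the steps, in order, would be: (i) fix $L$ and an arbitrary weight $w:\Lambda_L\to[-1,1]^m$; (ii) define the scalar observables $g_v:=w(v)\cdot f_v$, verify $|g_v|\le\sqrt m$ and that they have range $R$, and normalize to $\tilde g_v:=g_v/\sqrt m$; (iii) observe that $(\tilde g_v)$ together with the base system $(H_\Lambda)$ and disorder $(\eta_v)$ forms a disordered spin system satisfying all the hypotheses of Section~\ref{sec:general setup} with the same $C_H$, $R$, and with $m'=1$ — here one must be slightly careful that the disorder term $\sum_v\eta_v\cdot\tilde g_v(\sigma)=\sum_v(w(v)\cdot\eta_v)\,\tilde g_v(\sigma)/|w(v)|\cdot(\dots)$ is still a sum over $v\in\Lambda$ of an i.i.d.\ Gaussian times a bounded local function, which it is after the harmless rescaling; (iv) apply the conclusion~\eqref{eq:11191006} of Theorem~\ref{thm1prop2.31708main} (equivalently, re-run its proof) to the system $(\tilde g_v)$ to bound $\frac{1}{|\Lambda_L|}|\sum_v \langle\tilde g_v\rangle^{\tau_1}-\langle\tilde g_v\rangle^{\tau_2}|$ with high probability; (v) unwind the normalization and the definition of $g_v$ to recover the statement for $\frac1{|\Lambda_L|}\sum_v w(v)\cdot(\langle f_v\rangle^{\tau_1}-\langle f_v\rangle^{\tau_2})$, tracking how the exponent on $\ln\ln L$ degrades to $1/8$ and how the constants $C,c$ depend only on $\lambda,C_H,m,R$; (vi) note the supremum over $\tau_1,\tau_2$ is already inside the probability in~\eqref{eq:11191006}, so no union bound over boundary conditions is needed beyond what Theorem~\ref{thm1prop2.31708main} already supplies. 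The one genuinely delicate point to get right is confirming that the $L$-dependence of $w$ does not leak into the constants — which it does not, precisely because the proof of Theorem~\ref{thm1prop2.31708main} is a finite-$L$ argument with $L$-free constants.
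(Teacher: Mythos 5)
Your instinct to re-run the argument of Theorem~\ref{thm1prop2.31708main} with the weighted observables $w(v)\cdot f_v$ points in the right direction --- that is broadly what the paper does --- but as written the proposal has two genuine gaps. First, the reduction in steps (ii)--(iv) is not available: the thermal averages appearing in \eqref{eq:13562610} are taken under the \emph{original} Gibbs measure, whose Hamiltonian couples the full $m$-dimensional disorder $\eta_v$ to $f_v$; a ``new'' disordered system with scalar observables $\tilde g_v$ and scalar disorder has a different Gibbs measure, so \eqref{eq:11191006} applied to that system says nothing about the quantity in \eqref{eq:13562610}. You acknowledge this, but your fallback --- that the observables enter the proof of Theorem~\ref{thm1prop2.31708main} only through ``a martingale/concentration bound on $\sum_v\eta_v\cdot(\langle f_v\rangle^{\tau_1}-\langle f_v\rangle^{\tau_2})$,'' so one may substitute $w(v)\cdot f_v$ verbatim --- misdescribes that proof. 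The observables enter through the identity \eqref{eq:02111210}, which expresses the spatially averaged thermal expectation as the derivative of the free energy with respect to the averaged field $\hat\eta_\Lambda$; to access the \emph{weighted} average one must differentiate along the weighted direction in field space, i.e.\ use the decomposition $\eta_\Lambda=(\hat\eta_{w,\Lambda},\eta^{\perp}_{w,\Lambda})$ and the identity \eqref{eq:17510211}, whose associated Gaussian variable has variance depending on $\sum_v w_i(v)^2$ rather than only on $|\Lambda|$. This is the essential modification in the paper's proof, and your outline never identifies it; relatedly, the ``harmless rescaling'' by $1/|w(v)|$ in step (iii) divides by a quantity that may vanish.

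Second, the genuinely delicate point is not the scalar-versus-vector issue you invoke (that costs only constants), but the fact that $w$ may have small $\ell^2$ mass on a box or annulus, in which case the weighted-field argument degenerates. The paper handles this by a dichotomy built into the Mandelbrot percolation step: a box is declared good either when its weighted fluctuation is at most $\delta$, or when $\frac1{|\Lambda|}\sum_{v\in\Lambda}|w(v)|^2$ is small, in which case Cauchy--Schwarz bounds the weighted fluctuation deterministically by $\sqrt{2}\bigl(\frac1{|\Lambda|}\sum_{v\in\Lambda}|w(v)|^2\bigr)^{1/2}$. Balancing these two regimes is what produces the $(\ln\ln L)^{-1/8}$ rate; your explanation of the exponent loss (a Cauchy--Schwarz step to pass from a Euclidean-norm bound to a signed scalar average, or an optimization over the direction of $w$) is not the operative mechanism and would not by itself justify \eqref{eq:13562610}. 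Once the weighted derivative identity \eqref{eq:17510211}, the $w$-dependent variance in Corollary~\ref{coroGaussmea}, and the small-mass dichotomy are put in, the remainder of your outline (Mandelbrot percolation, domain subadditivity, and the final concentration over disjoint sub-boxes, with constants uniform in the $L$-dependent weight $w$) does go through as in the paper.
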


A second motivation for Theorem~\ref{prop:1109} comes from Parseval's identity, which, in our setting, reads
\begin{equation} \label{15000301}
    \frac{1}{|\Lambda_L|} \sum_{v \in \Lambda_L} \left|\left\langle f_{v} \left( \sigma \right) \right\rangle_{\Lambda_L}^{\tau_1} - \left\langle f_v \left( \sigma \right) \right\rangle_{\Lambda_L}^{\tau_2} \right|^2 = \sum_{ \mathbf{k} \in \Lambda_L} \left| \frac{1}{\left| \Lambda_L \right|}\sum_{v \in \Lambda_L} e^{\frac{2\pi i\,\mathbf{k} \cdot v}{2L+1}} \left( \left\langle f_{v} \left( \sigma \right) \right\rangle_{\Lambda_L}^{\tau_1} - \left\langle f_{v} \left( \sigma \right) \right\rangle_{\Lambda_L}^{\tau_2} \right) \right|^2.
\end{equation}

Consequently, if one can upgrade the statement of Theorem~\ref{prop:1109} and obtain a rate of convergence faster than $1/L$ (instead of the $(\log\log L)^{-1/8}$ term), then it would imply that the right-hand side of~\eqref{15000301} is small, which would then yield~\eqref{eq:averaged uniqueness conjecture}.

As a corollary of Theorem~\ref{prop:1109}, we obtain a quantitative estimate in the spirit of the averaged statement~\eqref{eq:averaged uniqueness conjecture}. The obtained result is weaker than~\eqref{eq:averaged uniqueness conjecture} as it involves the expected value (in the random field) of~$\left\langle f_v \left(  \sigma \right) \right\rangle_{\Lambda_L}^{\tau(\eta)}$.

\begin{corollary} \label{prop:1109cor}
Let $\beta > 0$ be the inverse temperature and $\lambda>0$ be the disorder strength. There exist constants $C,c>0$ depending only on $\lambda$, $C_H$, $m$ and $R$ such that, for each integer $L\ge 3$ and each random (i.e., measurable) pair of boundary conditions $\eta \mapsto \tau_1(\eta), \tau_2(\eta) \in \S^{\Z^2 \setminus \Lambda_L}$,
\begin{equation} \label{eq:15432}
    \frac{1}{\left| \Lambda_L \right|} \sum_{v \in \Lambda_L} \left| \E \left[\left\langle f_v \left(  \sigma \right) \right\rangle_{\Lambda_L}^{\tau_1(\eta)}- \left\langle f_v \left( \sigma \right) \right\rangle_{\Lambda_L}^{\tau_2(\eta)} \right]\right| \leq \frac{C}{\sqrt[8]{\ln \ln L}}.
\end{equation}
 \end{corollary}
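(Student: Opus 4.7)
The plan is to deduce Corollary~\ref{prop:1109cor} from Theorem~\ref{prop:1109} by a simple duality argument: rewrite the $\ell^1$ norm of the vector $(\E[\langle f_v\rangle_{\Lambda_L}^{\tau_1(\eta)}-\langle f_v\rangle_{\Lambda_L}^{\tau_2(\eta)}])_{v\in\Lambda_L}$ as a weighted sum with a deterministic $[-1,1]^m$-valued weight, then move the expectation inside so that Theorem~\ref{prop:1109} applies.

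Concretely, for each $v\in\Lambda_L$, set
\begin{equation*}
a_v := \E\!\left[\langle f_v(\sigma)\rangle_{\Lambda_L}^{\tau_1(\eta)}-\langle f_v(\sigma)\rangle_{\Lambda_L}^{\tau_2(\eta)}\right]\in\R^m
\end{equation*}
and define the deterministic weight function $w:\Lambda_L\to[-1,1]^m$ by $w(v):=a_v/|a_v|$ if $a_v\neq 0$ and $w(v):=0$ otherwise. Then $|w(v)|\le 1$ and $w(v)\cdot a_v=|a_v|$, so
\begin{equation*}
\frac{1}{|\Lambda_L|}\sum_{v\in\Lambda_L}|a_v|
 \;=\;\frac{1}{|\Lambda_L|}\sum_{v\in\Lambda_L}w(v)\cdot a_v
 \;=\;\E\!\left[\frac{1}{|\Lambda_L|}\sum_{v\in\Lambda_L}w(v)\cdot\bigl(\langle f_v\rangle_{\Lambda_L}^{\tau_1(\eta)}-\langle f_v\rangle_{\Lambda_L}^{\tau_2(\eta)}\bigr)\right],
\end{equation*}
where the last equality uses linearity and the fact that $w$ is non-random and bounded.

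Now apply Theorem~\ref{prop:1109} with this (deterministic) weight function $w$. For every $\eta$, the quantity inside the expectation is dominated by the supremum over $\tau_1,\tau_2\in\S^{\Zd\setminus\Lambda_L}$ appearing in~\eqref{eq:13562610}. Therefore, with $\P$-probability at least $1-\exp(-cL)$, the integrand is at most $C/\sqrt[8]{\ln\ln L}$ in absolute value. On the complementary event, one uses the trivial deterministic bound $|w(v)\cdot(\langle f_v\rangle^{\tau_1}-\langle f_v\rangle^{\tau_2})|\le 2\sqrt{m}$ (since $|f_v|\le 1$ and $|w(v)|\le 1$). Splitting the expectation accordingly gives
\begin{equation*}
\frac{1}{|\Lambda_L|}\sum_{v\in\Lambda_L}|a_v|\;\le\;\frac{C}{\sqrt[8]{\ln\ln L}}\;+\;2\sqrt{m}\,\exp(-cL),
\end{equation*}
which is bounded by $C'/\sqrt[8]{\ln\ln L}$ for $L\ge 3$ after absorbing the exponentially small term into the constant. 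There is no real obstacle here: once the duality rewriting is in place and one observes that $w$ is deterministic (so Theorem~\ref{prop:1109} directly applies and can be combined with Fubini), the argument is immediate. The only point to be careful about is to perform the duality pointwise in $v$, yielding a single weight function $w$ to which the theorem is applied, rather than trying to extract bounds vertex by vertex.
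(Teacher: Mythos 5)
Your proposal is correct and follows essentially the same route as the paper: both proofs dualize the $\ell^1$-type sum by a deterministic weight $w$ built from the ($\eta$-averaged) discrepancy and then invoke Theorem~\ref{prop:1109}, absorbing the exponentially small bad event via the trivial bound $\fluc_{\Lambda_L,w}\le 2$. The only difference is cosmetic: the paper takes $w(v)=\pm 1$ (a sign), while you take $w(v)=a_v/|a_v|$, which in fact handles the vector-valued case $m>1$ with the Euclidean norm slightly more cleanly.
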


If the weights $w(v)$ were allowed to depend on $\eta$, Theorem~\ref{prop:1109} would imply Conjecture~\ref{conj:uniqueness}. A partial result in this direction is formulated in Proposition~\ref{prop:7.8august}, which strengthens Theorem~\ref{prop:1109} to allow $w(v)$ to have a restricted dependence on the disorder $\eta$.

As a second consequence of Theorem~\ref{prop:1109}, we obtain a partial uniqueness result related to Conjecture~\ref{conj:uniqueness}. As mentioned above, in a translation-invariant setup, when the state space $\S$ is Polish and compact, when $\mathcal{A}$ is the corresponding Borel sigma algebra, and when the Hamiltonian and noised observables are continuous, Conjecture~\ref{conj:uniqueness} is equivalent to the claim that the expectation $\left\langle f_{\mathbf{0}}(\sigma)\right\rangle_{\mu}$ takes the same value for every infinite volume Gibbs measures and for almost every realization of the disorder. In Theorem~\ref{theoremuniqueness} below, we establish this result (in the setup of general spin systems introduced in Section~\ref{sec:general setup}) in the specific case of \emph{translation-covariant} Gibbs states defined in the following paragraph.

Let us fix a disorder strength $\lambda>0$ and an inverse temperature $\beta > 0$. For any realization of the random field $\left( \eta_v \right)_{v \in \Zd}$, an \emph{infinite-volume} Gibbs measure with disorder $\eta$ is a probability distribution $\mu$ on the space of configurations $\sigma : \Zd \to \R$ satisfying the consistency relations: for any finite subset $\Lambda \subseteq \Zd$ and any bounded measurable $g : \mathcal{S}^{\Zd} \to \R$, 
    \begin{equation} \label{eq:14042507}
        \left\langle  \left\langle g(\sigma) \right\rangle_{\beta,\Lambda}^{\tau} \right\rangle_{\mu} =  \left\langle g(\sigma) \right\rangle_{\mu},
    \end{equation}
where $\left\langle \cdot \right\rangle_{\mu}$ denotes the expectation with respect to the measure $\mu$, the configuration $\sigma$ in the left-hand side is distributed as $\mu_{\beta , \Lambda, \tau}^{\eta, \lambda}$, and $\tau$ is random and distributed according to the restriction to $\Zd \setminus \Lambda$ of a random spin configuration distributed as $\mu$. We denote by $
\mathcal{P}^\eta(\mathcal{S}^{\Zd})$ the set of infinite-volume Gibbs measures with disorder $\eta$.

A map $\eta \mapsto \mu^\eta \in \mathcal{P}^\eta(\mathcal{S}^{\Zd})$ is measurable if, for any bounded measurable function $g : \S^{\Zd} \to \R$, the map $\eta \mapsto \left\langle g(\sigma) \right\rangle_{\mu^\eta}$ is measurable (as a real-valued function).
 
A \emph{translation-covariant} Gibbs state is a measurable map $\mu : \eta \mapsto \mu^\eta \in \mathcal{P}^\eta(\mathcal{S}^{\Zd})$ satisfying the following property:
\begin{equation*}
    \forall v \in \Zd,~(\mathcal{T}_v)_* \mu^\eta = \mu^{\mathcal{T}_v \eta} \hspace{3mm} \P-\mbox{almost-surely},
\end{equation*}
where $(\mathcal{T}_v)_* \mu^\eta$ denotes the pushforward of the measure $\mu^\eta$ by the translation operator $\mathcal{T}_v$ and $\left( \mathcal{T}_v \eta \right)_u := \eta_{u - v}.$
This definitions generalizes the notion of \emph{translation-invariant} Gibbs states to the setup of disordered systems.

\begin{theorem}[Uniqueness for translation-covariant Gibbs states] \label{theoremuniqueness}
    In the translation-invariant setup, for any inverse temperature $\beta >0$, any disorder strength $\lambda >0$, and any pair $\mu_1 , \mu_2$ of infinite-volume translation-covariant Gibbs measures, one has the identity
    \begin{equation*}
        \left\langle f_{\mathbf{0}}(\sigma) \right\rangle_{\mu_1^\eta} = \left\langle f_{\mathbf{0}}(\sigma) \right\rangle_{\mu_2^\eta} \hspace{3mm} \P-\mbox{almost-surely.}
    \end{equation*}
\end{theorem}
We note that we can produce a translation-covariant Gibbs state via compactness arguments (see Appendix~I of Aizenman--Wehr~\cite{AW89}). However, even in the zero temperature case, it is difficult to obtain good control on the support of this measure for fixed $\eta$. An interesting question is to construct an {\em extremal} translation-covariant Gibbs state. Partial progress has been made in this direction: Cotar--Jahnel--Kulske~\cite{CJK18} have shown the existence and measurability of a translation-invariant, extremal decomposition of translation-covariant random Gibbs states. 

An example where such a construction could have far-reaching consequences is the zero-temperature Edwards--Anderson spin glass model. If one could show that there exists a translation-covariant Gibbs state supported on pairs of configurations that differ by a global sign change, then Theorem~\ref{theoremuniqueness} implies that {\em all translation-covariant} Gibbs states are supported on such ground-state pairs.

As before, versions of Theorem~\ref{prop:1109}, Corollary~\ref{prop:1109cor} and Theorem~\ref{theoremuniqueness} hold at zero temperature. In Theorem~\ref{prop:1109}, the supremum over boundary conditions should be replaced by a supremum over $\sigma_1,\sigma_2\in G_L^\eta$ and $\left\langle f_{v} \left( \sigma \right) \right\rangle_{\Lambda_L}^{\tau_i}$ should be replaced by $f_v(\sigma_i)$. In Corollary~\ref{prop:1109cor}, the random boundary conditions should be replaced by random $\sigma_1,\sigma_2\in G_L^\eta$ and $\left\langle f_v \left(  \sigma \right) \right\rangle_{\Lambda_L}^{\tau_i(\eta)}$ should be replaced by $f_v(\sigma_i)$. In Theorem~\ref{theoremuniqueness}, the translation-covariant Gibbs states $\mu_1 , \mu_2$ should be replaced by translation-covariant ground states $\sigma_1 , \sigma_2$, that is, measurable maps $\sigma_i : \eta \to \sigma_i(\eta) \in \cap_L G_L^\eta$ satisfying $\sigma_i(\mathcal{T}_v \eta) = \mathcal{T}_v \sigma_i(\eta)$ for any $v \in \Zd$, and $\left\langle f_{\mathbf{0}} \left( \sigma \right) \right\rangle_{\mu^\eta_i}$ should be replaced by $f_{\mathbf{0}}(\sigma_i)$.

\section{Spin systems with continuous symmetry}\label{sec:continuous symmetry results}

In this section, we present quantitative results for spin systems with continuous symmetry. The class of systems that we study are versions of the random-field spin $O(n)$ model and are described similarly to Section~\ref{sec:general setup} with the following additional assumptions:
\begin{enumerate}[leftmargin=*]
    \item \emph{State space:} We let $\S$ be the sphere $\mathbb{S}^{n-1}$ for some integer $n \geq 1$, equipped with its Borel sigma algebra and the uniform measure $\kappa$.
    \item \emph{Base Hamiltonian:} we assume that, for any finite subset $\Lambda \subseteq \Zd$, the Hamiltonian $H_{\Lambda}$ takes the form
    \begin{equation} \label{eq12592444}
    H_{\Lambda }(\sigma) = \sum_{\substack{ \{u, v\}\cap \Lambda\neq\emptyset \\ u\sim v}} \Psi \left( \sigma_u , \sigma_v \right),
    \end{equation}
    where the map $\Psi : \mathbb{S}^{n - 1} \times \mathbb{S}^{n - 1} \to \R$ is twice continuously differentiable and rotationally invariant: for any $R \in O(n)$ and any $\sigma_1 , \sigma_2 \in \mathbb{S}^{n-1}$, $\Psi \left(R \sigma_1 , R \sigma_2 \right) = \Psi \left(\sigma_1 ,  \sigma_2 \right)$.
    \item \emph{Disorder:} We let $\eta=\left( \eta_v \right)_{v \in \Zd}$ be a collection of independent standard $n$-dimensional Gaussian vectors, and denote by $\P$ and $\E$  the law and the expectation operator with respect to $\eta$.
    \item \emph{The noised observable:} we assume that $m = n$, and that $f_v\left( \sigma \right) = \sigma_v$. In particular the observables have range $R = 0$.
    \item \emph{Disordered Hamiltonian:} it will turn out to be useful to incorporate a deterministic external magnetic field $h\in\R^n$ in the disordered Hamiltonian. Thus, given additionally a finite $\Lambda \subseteq \Zd$, a fixed realization of $\eta:\Zd\to\R$ and a disorder strength $\lambda > 0$, we define
    \begin{equation} \label{eq125924}
     H^{\eta , \lambda,  h }_{\Lambda} \left( \sigma \right) := H_{\Lambda} \left( \sigma \right) - \sum_{v \in \Lambda} \left( \lambda \eta_{v} + h \right) \cdot \sigma_v.
\end{equation}
As we work with nearest-neighbor systems, it suffices to specify the boundary condition on the external boundary $\partial \Lambda$ of the domain $\Lambda$ and we will do so in the sequel (instead of specifying the boundary condition on $\Zd \setminus \Lambda$). Given an inverse temperature $\beta >0$ and a boundary condition $\tau \in \S^{\partial \Lambda}$, we define the finite-volume Gibbs measure
\begin{equation}\label{eq:Gibbs measure general setup continuous}
    \mu_{\beta , \Lambda, \tau}^{\eta , \lambda, h} \left( d\sigma \right) :=  \frac 1{Z_{\beta , \Lambda, \tau}^{\eta, \lambda, h}} \exp \left( - \beta  H^{\eta, \lambda, h}_{\Lambda}(\sigma) \right) \prod_{v \in \Lambda} \kappa(d \sigma_v)\prod_{v\in\Zd\setminus\Lambda}\delta_{\tau_v} \left( d\sigma_v \right),
\end{equation}
where $Z_{\beta , \Lambda, \tau}^{\eta, \lambda,h}$ is the normalization constant. We denote by $\left\langle \cdot \right\rangle_{\Lambda}^{\tau,h}$ the expectation with respect to the measure $\mu_{\beta , \Lambda, \tau}^{\eta, \lambda,h}$ (omitting the parameters $\beta, \eta$ and $\lambda$ in the notation as was done in Section~\ref{sec:2d systems}).

We remark that the setting is sufficiently flexible to include periodic boundary conditions (i.e., taking $\Lambda$ to be the discrete torus), as well as free boundary conditions. These are referred to by the notations per and free.
\end{enumerate}

The following results show that, in dimensions $1\le d\le 3$, the spatially and thermally averaged magnetization is close to zero when the external field $h$ is close to zero, uniformly in the boundary condition. The results depend on $L$ through power laws. The first statement applies to deterministic boundary conditions, and has a better exponent in the power law when compared with the second statement, which is uniform in the (possibly random) boundary conditions (see Theorem~\ref{prop:subcritical2} for more details on the case that $h$ is large).

We use the notation $a \vee b  := \max (a , b)$ and $a \wedge b := \min \left(a , b \right)$ for $a , b \in \R$.

\begin{theorem} \label{prop:subcritical}
Let $n\ge 2$, $d\in\{1,2,3\}$ and $L \geq 1$. Let $\beta > 0$ be the inverse temperature, $\lambda>0$ be the disorder strength and $h \in \R^n$ be the deterministic external field. There exists a constant $C>0$ depending only on $n$, $\Psi$ and $\lambda$ such that, then for each boundary condition $\tau \in \S^{\partial \Lambda_{2L}}$ (allowing also free and periodic boundary conditions) and each $h$ satisfying $|h| \le L^{-2}$,
\begin{equation} \label{eq:TV0904Final}
   \left| \E \left[\frac{1}{\left| \Lambda_L \right|}  \sum_{v \in  \Lambda_L } \left\langle \sigma_v \right\rangle^{\tau , h}_{\Lambda_{2L}} \right] \right|  \leq C L^{-\frac{1}{2}(4-d)}.
\end{equation}
Moreover, for any $h$ satisfying $|h| \le L^{-1}$,
\begin{equation} \label{eq:TV090404final}
   \E \left[ \sup_{\tau \in \S^{\partial \Lambda_L}} \left|\frac{1}{\left| \Lambda_L \right|}  \sum_{v \in  \Lambda_L} \left\langle \sigma_v \right\rangle^{\tau , h}_{\Lambda_L} \right| \right]   \leq C L^{-\frac{4-d}{2(8-d)}}.
\end{equation}
\end{theorem}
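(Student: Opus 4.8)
The plan is to exploit the continuous $O(n)$ symmetry of the base Hamiltonian via an Aizenman--Wehr-type fluctuation--dissipation argument, comparing the free energy of the disordered system under a small rotation of the boundary condition. Fix the deterministic field $h$ with $|h|\le L^{-2}$ and write $F_{\Lambda}^{\tau,h}(\eta):=-\beta^{-1}\ln Z_{\beta,\Lambda,\tau}^{\eta,\lambda,h}$ for the (random) finite-volume free energy. For a rotation $R\in O(n)$ close to the identity, applying $R$ simultaneously to the spins in $\Lambda$, to the boundary condition $\tau$, and to the external field $h$ leaves $H_\Lambda$ invariant by~\eqref{eq12592444}, so the only change in $H_\Lambda^{\eta,\lambda,h}$ comes from the disorder term $-\lambda\sum_{v\in\Lambda}\eta_v\cdot\sigma_v$: rotating the spins by $R$ is equivalent to rotating the disorder $\eta_v\mapsto R^{-1}\eta_v$. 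This gives the exact identity $F_\Lambda^{\tau,h}(\eta)=F_\Lambda^{R\tau,Rh}(R\eta)$. The derivative of $F$ in $h$ produces exactly the averaged magnetization $\E[\sum_{v\in\Lambda}\langle\sigma_v\rangle^{\tau,h}_\Lambda]$, while the derivative in $\eta$ is controlled in $L^2$ by the Gaussian concentration of $F$ (each $\eta_v$ enters with a bounded Lipschitz constant $\lambda$, and $\langle\sigma_v\rangle$ is bounded), yielding $\var(F_\Lambda^{\tau,h})\le C\lambda^2|\Lambda|$.

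The first estimate~\eqref{eq:TV0904Final} is then obtained as follows. One considers the two-step domain $\Lambda_{2L}\supset\Lambda_L$ and exploits the freedom in the boundary condition: one integrates the rotation-covariance identity over a family of rotations, or equivalently differentiates in $h$ and uses the consistency relation~\eqref{eq:14361411} to localize to $\Lambda_L$. The key mechanism is that the magnetization $M_\Lambda:=\sum_{v\in\Lambda}\langle\sigma_v\rangle^{\tau,h}_\Lambda$ satisfies, by convexity of $F$ in $h$ together with the covariance identity, a bound of the form $|\E[M_{\Lambda_L}]|\lesssim |\Lambda_L|\cdot(\text{typical fluctuation of }F\text{ per unit field})/(\text{length scale})$, and the spin-wave (Gaussian free field) heuristic in dimension $d$ gives that a boundary rotation costs energy of order $L^{d-2}$ to implement, so that the relevant ratio is $\sqrt{|\Lambda_L|}\cdot L^{(d-2)/2}\big/|\Lambda_L|=L^{-(4-d)/2}$ after normalizing by $|\Lambda_L|$. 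Concretely one chooses the angular scale of the comparison rotation to optimize between the energy cost $\sim L^{d}\cdot(\text{angle})^2\cdot L^{-2}$ of a smoothly interpolated rotation in the annulus $\Lambda_{2L}\setminus\Lambda_L$ and the linear gain $\sim(\text{angle})\cdot\E[M_{\Lambda_L}]$ coming from the field term, and the Gaussian fluctuation bound closes the argument. This is where the restriction $d\le 3$ enters: for $d\ge 4$ the exponent $-(4-d)/2$ is non-negative and the bound is vacuous.

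The second estimate~\eqref{eq:TV090404final} is the uniform-in-$\tau$ version, which is why it has the worse exponent $-(4-d)/(8-d)$. Here one cannot fix $\tau$ in advance: $\tau$ may depend on $\eta$ and may be chosen adversarially. The strategy is a chaining/coarse-graining over boundary conditions: one covers the sphere $\mathbb{S}^{n-1}$ (for each boundary vertex) by a net, but since $|\partial\Lambda_L|$ is large a naive union bound is too costly, so instead one runs the rotation argument at an intermediate scale $\ell<L$ and optimizes the choice of $\ell$. Roughly, one bounds $\sup_\tau|M_{\Lambda_L}|/|\Lambda_L|$ by (a) the contribution from scales below $\ell$, controlled by the deterministic bound $\|f_v\|_\infty\le 1$ giving $O((\ell/L)^{?})$-type losses, plus (b) the contribution from the coarse scale, controlled by the fluctuation argument with an entropy factor $\log(\text{size of net})\lesssim \ell^{d-1}$; balancing $\ell$ against $L$ produces the exponent $(4-d)/(8-d)$. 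The main obstacle is precisely this uniform control: making the adversarial choice of boundary condition compatible with the Gaussian concentration of the free energy, which requires either a careful net argument on $\S^{\partial\Lambda_\ell}$ together with Lipschitz continuity of $\tau\mapsto\langle\sigma_v\rangle^\tau_\Lambda$ (stated in the remark after Theorem~\ref{thm1prop2.31708main}), or an a priori reduction showing that the worst boundary condition can be taken in a low-complexity class. Getting the energy-cost estimate for interpolated rotations sharp in all of $d=1,2,3$ simultaneously, and matching it against the entropy of the boundary net, is the delicate quantitative heart of the proof.
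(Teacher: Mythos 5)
Your sketch has the right flavor for~\eqref{eq:TV0904Final} (spin-wave cost $\sim \ell^{d-2}\theta^2$ balanced against random-field fluctuations), but it stops exactly where the proof has to start: the gain from rotating the spins against the random field has \emph{zero expectation}, so you cannot literally "optimize an angle against a linear gain $\theta\,\E[M_{\Lambda_L}]$" and then say "the Gaussian fluctuation bound closes the argument". The paper's mechanism is different and is the missing content. One uses a \emph{spatially varying} rotation equal to $\pi$ on $\Lambda_\ell$ (with $r_\pi(e_i)=-e_i$) and $0$ outside $2\Lambda_\ell$, averaged with its inverse so the linear term cancels, to prove (Proposition~\ref{propMermW}) that the \emph{conditional} expected free-energy difference between $\eta$ and the field with its $i$-th component sign-flipped inside $\Lambda_\ell$ is at most $C\ell^{d-2}/L^{d}+C\ell^{d}|h|/L^{d}$; the derivative of this antisymmetrized difference with respect to the block-averaged field $\hat\eta_{\Lambda_\ell,i}$ is (up to $\lambda$) the sum of the conditional magnetizations at $\pm\hat\eta_{\Lambda_\ell,i}$, and an integration-by-parts lemma against the Gaussian law of $\hat\eta_{\Lambda_\ell,i}$ (variance $\ell^{-d}$, Lemma~\ref{labelMLEmma}) converts the uniform bound on the difference into $|\E[|\Lambda_\ell|^{-1}\sum_v\langle\sigma_{v,i}\rangle]|\le C\ell^{d/2-2}$. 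None of this derivative/conditioning structure appears in your proposal; moreover your dimensional analysis is off as written ($\sqrt{|\Lambda_L|}\,L^{(d-2)/2}/|\Lambda_L|=L^{-1}$, not $L^{-(4-d)/2}$; the correct balance is spin-wave cost $L^{d-2}$ against field fluctuation $L^{d/2}$), and the ingredients you do invoke (global rotation covariance $F^{\tau,h}(\eta)=F^{R\tau,Rh}(R\eta)$ and $\var(F)\le C\lambda^2|\Lambda|$) do not by themselves produce any quantitative rate.

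For~\eqref{eq:TV090404final} the proposed net/chaining over boundary conditions is not carried out and is unlikely to work: a net on $\S^{\partial\Lambda_\ell}$ has cardinality of order $e^{c\ell^{d-1}}$, the spatially averaged magnetization has no concentration (as a function of $\eta$) at a scale capable of absorbing that entropy, and the quantity you would need to center around is precisely the unknown being bounded. The paper needs no entropy argument at all: by~\eqref{eq:12030211} the free energy is boundary-condition independent up to $C/L$, so everything is compared to periodic boundary conditions (where translation invariance and the first estimate apply); then $\sup_\tau$ of the averaged magnetization is realized as the $h_i$-derivative of the concave-in-$h$ expected free energy with the maximizing random boundary condition, which by convexity is bounded by a finite difference over an increment $(|h|\vee L^{-2})^{2/(8-d)}$; the increments of the free energy are controlled by integrating the periodic magnetization bound in $h$ plus the $C/L$ boundary terms, and optimizing the increment size yields exactly the exponent $(4-d)/(2(8-d))$ in $|h|\vee L^{-2}$, i.e.\ $L^{-(4-d)/(8-d)}$. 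So the loss in the exponent comes from this derivative-versus-finite-difference step, not from an entropy/concentration trade-off, and your route to the second estimate has a genuine gap.
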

The next result presents our quantitative estimates in dimension $d=4$.
\begin{theorem} \label{thm.thm2}
Let $n\ge 2$ and $d = 4$. Let $\beta > 0$ be the inverse temperature, $\lambda>0$ be the disorder strength and $h \in \R^n$ with $|h|\le 1$ be the deterministic external field. There exists a constant $C>0$ depending only on $n$, $\Psi$ and $\lambda$ such that for each integer $L\ge 3$,
\begin{equation} \label{eq:11090401}
     \E \left[ \sup_{\tau \in  \S^{\partial \Lambda_L}}  \left|\frac{1}{\left|\Lambda_L \right|} \sum_{v \in \Lambda_L}\left\langle  \sigma_v \right\rangle^{\tau , h}_{\Lambda_L} \right| \right] \leq \frac{C}{\sqrt{\ln \ln \left( |h|^{-1} \wedge L  \right)}}.
\end{equation}
\end{theorem}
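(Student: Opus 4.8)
The plan is to follow the Aizenman--Wehr strategy in its quantitative form, adapted to the continuous-symmetry setting in $d=4$. The core object is the free energy per unit volume as a function of the external field $h$, more precisely the quantity $F_L(h,\eta) := -\frac{1}{\beta|\Lambda_L|}\ln Z^{\eta,\lambda,h}_{\beta,\Lambda_L,\tau}$; its gradient in $h$ is exactly the spatially- and thermally-averaged magnetization $\frac{1}{|\Lambda_L|}\sum_{v\in\Lambda_L}\langle\sigma_v\rangle^{\tau,h}_{\Lambda_L}$ that we wish to control. The two competing inputs are: (i) a \emph{concentration} estimate, showing that $F_L(h,\eta)$ is close to its $\eta$-average with Gaussian tails on scale $|\Lambda_L|^{-1/2}$ (this comes from the Gaussian concentration inequality applied to $\eta\mapsto F_L$, whose Lipschitz constant in $\eta$ is $O(\lambda |\Lambda_L|^{-1/2})$ since each $f_v$ is bounded by $1$), and (ii) a \emph{fluctuation lower bound} coming from the continuous symmetry: using the rotational invariance of $\Psi$ and the spin-wave / Mermin--Wagner type energy estimate, a rotation of the boundary condition by a small angle costs energy only $O(L^{d-2})=O(L^2)$ in $d=4$, so the response of the system to tilting the field must be small. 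The interplay of these, via a convexity (Griffiths-type) argument in $h$ applied to the random convex function $h\mapsto F_L(h,\eta)$, forces the magnetization to be small — but in $d=4$ the spin-wave bound is only logarithmically effective, which is the source of the $\ln\ln$ rate.

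First I would set up the convexity machinery: $h\mapsto F_L(h,\eta)$ is concave (it is $-\frac1\beta$ times a log-partition function, linear perturbation), with gradient the averaged magnetization. For a unit vector $\hat h$ and a small scale $t>0$, the one-sided derivatives of $t\mapsto \E[F_L(t\hat h,\eta)]$ at $0^\pm$ sandwich $\E[\hat h\cdot \frac{1}{|\Lambda_L|}\sum_v\langle\sigma_v\rangle]$ at $h=0$; by a standard argument one reduces bounding the magnetization at a given $|h|$ to bounding an \emph{increment} of the averaged free energy over an interval of field-values of length comparable to $|h|$. Next I would insert the Aizenman--Wehr observation: comparing $F_L$ with boundary condition $\tau$ and with a \emph{rotated} boundary condition $R\tau$ (rotating the field simultaneously), rotational invariance of the bulk interaction $\Psi$ means the two differ only through a boundary layer, and one can interpolate the rotation across an annulus to make the energy difference at most $CL^{d-2}$ by the quadratic (spin-wave) cost of a slowly-varying rotation. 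Averaging this over rotations and combining with the concentration bound (i) yields that $\E[F_L]$ as a function of $h$ cannot be "too curved" on the relevant scale; quantitatively, one balances the error term $L^{d-2}/|\Lambda_L| \sim L^{-2}$ from the spin-wave cost against the concentration error $|\Lambda_L|^{-1/2}\sim L^{-2}$ — both are $L^{-2}$ in $d=4$, which is exactly the borderline case, and iterating the comparison over a multiscale (dyadic) cascade of boxes is what converts a polynomially-trivial bound into the $1/\sqrt{\ln\ln}$ decay, with the truncation $|h|^{-1}\wedge L$ arising because the field $h$ sets the scale beyond which the tilting argument no longer helps (the field pins the spins on scales $\gtrsim |h|^{-1/2}$, loosely).

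Concretely, the steps in order: (1) record concavity of $h\mapsto F_L$ and identify its gradient with the averaged magnetization; (2) prove the Gaussian concentration estimate for $F_L$ around $\E F_L$ with the exponential tail $\exp(-cL^d t^2)$, giving the $\exp(-cL)$-type probabilities after choosing $t$ appropriately; (3) prove the spin-wave / rotation-comparison lemma bounding $|F_L^{R\tau, Rh} - F_L^{\tau,h}|$ by $CL^{d-2}/|\Lambda_L|$ up to a disorder-dependent term handled by step (2); (4) combine (1)--(3) in a single-scale estimate bounding $\E[\sup_\tau|\text{magnetization}|]$ at scale $L$ by something like $C(L^{-2} \cdot L^{2})^{1/2}=C$, i.e.\ trivial at one scale; (5) run the multiscale iteration over boxes of sizes $L, L^{1/2}, \ldots$ (or dyadic annuli), at each step gaining a fixed additive amount in an inverse-square quantity, so that after $\sim \ln\ln L$ steps the bound improves to $C/\sqrt{\ln\ln L}$; (6) track the role of $h$: the field localizes the relevant length scale to $\min(|h|^{-1/2},L)$-ish, replacing $L$ by $|h|^{-1}\wedge L$ inside the $\ln\ln$. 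The main obstacle I anticipate is step (5): making the multiscale iteration rigorous — in particular propagating the boundary-condition supremum (which must be taken outside the expectation, and for \emph{random} $\tau$) through successive coarse-grainings without the errors accumulating, and correctly matching the spin-wave cost against the concentration fluctuations at each scale so that exactly a $\ln\ln$ (and not a $\ln$ or a constant) emerges. A secondary difficulty is step (3) in the rough (non-smooth) setting: the spin-wave estimate needs the twice-differentiability of $\Psi$ to get the quadratic cost, and one must handle the random field's contribution to the rotated energy, which is where the Gaussian structure of $\eta$ and the concentration bound are used again.
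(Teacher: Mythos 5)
You have the right thermodynamic skeleton: concavity of $h\mapsto\FE^{\tau,h}_{\Lambda_L}$, the magnetization as its gradient, a Mermin--Wagner/spin-wave comparison costing $C\ell^{d-2}$ so that in $d=4$ the cost per site $\sim L^{-2}$ exactly matches the fluctuation of the block-averaged field, and a final finite-difference-in-$h$ step (at displacement $\sqrt{|h|\vee L^{-1}}$, together with the $C/L$ boundary estimate~\eqref{eq:12030211}) which is indeed how the paper converts a fixed-boundary-condition (periodic) bound into the supremum over $\tau$ — so your anticipated difficulty of ``propagating the sup through the coarse-graining'' does not actually arise; the sup enters only through free-energy comparisons. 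Note also that the paper's rotation argument is not a comparison of $\tau$ with $R\tau$: Proposition~\ref{propMermW} compares the free energy for $\eta$ with that for the field sign-flipped inside a sub-box $\Lambda$, conditionally on $\eta$ in $\Lambda$ and outside $2\Lambda$, and the $\eta\to-\eta$ symmetry then identifies the derivative of this conditional difference in the block average $\hat\eta_{\Lambda,i}$ with (twice) the conditional magnetization.

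The genuine gap is in your steps (4)--(5), i.e. precisely where the $1/\sqrt{\ln\ln}$ must come from, and the ingredient you propose there (Gaussian concentration of $F_L$) is not the mechanism and cannot replace it. The paper's route is: (a) a variational lemma (Lemma~\ref{LemmavarprinCSS2}) saying that any $g\ge-1$ whose integral over every interval is bounded by $1$ satisfies $\int \indc_{\{g\le\delta\}}e^{-t^2/2}\,dt\ge e^{-C/\delta^2}$, applied to the $\hat\eta_{\Lambda,i}$-derivative of the conditional free-energy difference, whose increments are controlled by $C\ell^{2}/L^{4}$ from Proposition~\ref{propMermW}; (b) the crucial monotonicity observation that the conditional magnetization is the derivative of a convex function of $\hat\eta_{\Lambda,i}$, hence increasing, which upgrades ``small on a set of Gaussian measure $e^{-C/\delta^2}$'' to ``small deterministically on the event $\{\hat\eta_{\Lambda,i}\le \ell^{-2}t_\delta\}$'' — this defines the good boxes purely in terms of the block-averaged field, so that goodness at different scales is (essentially) independent across disjoint annuli; (c) a Mandelbrot-percolation covering with $k=\lfloor\exp((\ln L)^{1/4})\rfloor$ and $l_{\max}\simeq(\ln L)^{3/4}$ scales, where a point is uncovered with probability $\le(1-ce^{-C/\delta^2})^{l_{\max}}$, and the $\ln\ln$ rate emerges from balancing the per-scale success probability $e^{-C/\delta^2}$ against the $\mathrm{poly}(\ln L)$ number of scales, forcing $\delta\sim 1/\sqrt{\ln\ln L}$. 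Your proposal offers none of this: the single-scale bound you describe is trivial ($O(1)$), the claim that each scale ``gains a fixed additive amount in an inverse-square quantity'' over $\sim\ln\ln L$ steps is asserted without any device that produces such a gain (and the relevant number of scales is $\sim(\ln L)^{3/4}$, not $\ln\ln L$), and without the quantile/monotonicity step there is no event of the disorder on which the magnetization is provably small that can be iterated independently across scales. As written, the argument does not close.
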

We again remark that, in the second part of Theorem~\ref{prop:subcritical} and Theorem~\ref{thm.thm2}, free and periodic boundary conditions are allowed as one of the options in the supremum.

We mention that versions of Theorem~\ref{prop:subcritical} and~\ref{thm.thm2} hold also at zero temperature. To state the results, we introduce the following notations. Given $L \in \N$ and a boundary condition $\tau \in \S^{\partial \Lambda_L}$, we define the zero-temperature free energy by
\begin{equation*}
    \FE^{\tau , \lambda, h}_{\Lambda, T=0} (\eta):= \inf_{\substack{\sigma': \Lambda_L^+ \to \S \\ \sigma' \equiv \tau \, \mathrm{on} \, \partial \Lambda_L}} H ^{\eta , \lambda,  h }_{\Lambda} \left( \sigma' \right).
\end{equation*}
We then define the set $G_L^{\tau,\eta}$ of $L$-ground configurations with boundary condition $\tau$ as the set of configurations $\sigma : \Lambda_L^+ \to \S$ satisfying $\sigma \equiv \tau$ on $\partial \Lambda_L$ and
\begin{equation*}
    H^{\eta , \lambda,  h }_{\Lambda} \left( \sigma \right) = \FE^{\tau , \lambda, h}_{\Lambda, T=0} (\eta).
\end{equation*}
The continuity of the disordered Hamiltonian and the compactness of the spin space ensure that the set $G_L^{\tau, \eta}$ is non-empty for all realizations of the random field $\eta$. Additionally, we remark that the cardinality of the set $G_L^{\tau, \eta}$ is almost surely equal to $1$. Indeed, this result is a consequence of the following observation: since the map $\eta \mapsto \FE^{\tau , \lambda, h}_{\Lambda, T=0} (\eta)$ is concave, it is differentiable almost everywhere, and, on the corresponding set of full measure, the $L$-ground configuration with boundary condition $\tau$ is uniquely characterised as $-\frac{1}{\lambda}$ times the $\eta$-gradient of $\FE^{\tau , \lambda, h}_{\Lambda, T=0}$.

The zero-temperature version of the results is then obtained by replacing the mapping $\eta \mapsto \left\langle \sigma_v \right\rangle^{\tau , h}_{\Lambda_{2L}}$ in~\eqref{eq:TV0904Final} by the map $\eta \mapsto \sigma_v$ with $\sigma \in G_{2L}^{\tau, \eta}$. In~\eqref{eq:TV090404final} and~\eqref{eq:11090401}, the supremum over boundary conditions should be replaced by a supremum over all $L$-ground configurations $\sigma \in \cup_{\tau \in \S^{\partial \Lambda_L}} G_L^{\tau, \eta}$ and $\left\langle \sigma_v \right\rangle^{\tau , h}_{\Lambda_{L}}$ should be replaced by $\sigma_v$.

Lastly, we mention that a discussion of models with continuous symmetries of higher order appears in Section~\ref{sectioncommentsandopenproblems}.

\section{Background}\label{relatedresults}

This section presents a brief overview of related results.

\emph{The random-field Ising model:} The question of the quantification of the Imry-Ma phenomenon was previously addressed for the two-dimensional random-field Ising model. Chatterjee~\cite{C18} obtained an upper bound at the rate $(\ln \ln L)^{-1/2}$ on the effect of boundary conditions on the magnetization at the center of a box of side length~$L$. Aizenman and the third author~\cite{AP19} studied the same quantity for finite-range random-field Ising models and obtained an algebraic upper bound of the form $L^{-\gamma}$. Exponential decay was established first at high temperature or strong disorder; results in this direction include the ones of Fr\"{o}hlich and Imbrie~\cite{FI84}, Berreti~\cite{B85}, Von Dreifus, Klein and Perez~\cite{DKP95}, and Camia, Jiang and Newman~\cite{CJN18}. Exponential decay at all disorder strengths was established for the nearest-neighbor model by Ding and Xia~\cite{ding20192exponential} at zero temperature, and then extended to all temperatures by Ding and Xia~\cite{ding2019exponential} and by Aizenman, the second and third authors~\cite{AHP20}. Other important aspects of the random-field Ising model which have been the subject of recent developments include the correlation length of the model, which has been successfully identified by Ding and Wirth~\cite{ding2023correlation}, and the absence of replica symmetry breaking which was established by Chatterjee~\cite{chatterjee2015absence}. We further refer to the recent work of Chatterjee~\cite{chatterjee2023features} which investigates the presence, or the absence, of various properties of the mean-field spin glass model in the random-field Ising model (such as replica symmetry breaking, non-self averaging, ultrametricity, and the existence of many pure states).

Bricmont and Kupiainen~\cite{BK88H} introduced a hierarchical approximation to the random-field Ising model and proved that it exhibits spontaneous magnetization in three dimensions and no magnetization in two dimensions. Imbrie~\cite{I85} proved that the three-dimensional ground state of the Ising model in a weak magnetic field exhibits long range order, and this was extended to the low-temperature regime by Bricmont and Kupiainen~\cite{BK87L, BK88} using a rigorous renormalization group argument. Recently, Ding and Zhuang~\cite{DingZhuang} obtained a new and simple proof of the result of~\cite{BK87L, BK88}, extended the result to the $q$-states Potts model, and obtained a lower bound for the correlation length of the two-dimensional random-field Ising model at low temperature matching the one of~\cite{ding2023correlation}. These results were further extended by Ding, Liu, Xia~\cite{ding2022long} who proved that the model exhibits long range ordering in three and higher dimensions for any inverse temperature $\beta > \beta_c$ (where $\beta_c$ is the critical inverse temperature without disorder) as long as the strength of the disorder is sufficiently weak, and by Affonso, Bissacot and Maia~\cite{affonso2023phase} to the long-range Ising model.

Recently in~\cite{BS102}, Bowditch and Sun constructed a continuum version of the two-dimensional random-field Ising model by considering the scaling limit of the suitably renormalized discrete random-field Ising model, and proved that the law of the limiting magnetization field is singular with respect to the one of the continuum pure two-dimensional Ising model constructed and studied by Camia, Garban and Newman~\cite{CGN161, CGN162}.

Additional background on the random-field Ising model can be found in~\cite[Chapter 7]{Bo06} and~\cite{N98T}.

\smallskip
\emph{Random-field induced order in the $XY$-model:} Crawford~\cite{C11,crawford2013random,crawford2014random} studied the $XY$-model in the presence of a random field pointing along the $Y$-axis, and proved that this may lead to the model exhibiting residual ordering along the $X$-axis. We additionally refer to the references therein for a review of the works in the physics literature investigating this phenomenon in different models of statistical physics.

\smallskip
\emph{Other models of statistical physics:} The Imry--Ma phenomenon has also been studied in the context of random surfaces. This line of investigation was initiated by Bovier and K\"{u}lske~\cite{BK94, BK96} with later, qualitative and quantitative, contributions of Cotar, van Enter, K\"{u}lske and Orlandi~\cite{KO06, KO08, VK08, CK12, CK15}. Additional quantitative results are obtained in~\cite{DHP21} which studies ``random-field random surfaces'' of the form
\begin{equation} \label{mod.randomsurface}
    \P(d \phi) := \frac{1}{Z} \exp \left( - \sum_{x \sim y} V \left( \phi_x - \phi_y \right) + \lambda \sum_{x}  \eta_{x} \phi_x \right) \prod_x d \phi_x ,
\end{equation}
where in one case $\phi$ is a mapping defined on the lattice and valued in $\R$, the map $V : \R \to \R$ is a uniformly convex potential, and $d \phi_x$ denotes the Lebesgue measure and in another case $\phi$ is valued in $\Z$, $V(x)=x^2$ and $d \phi_x$ denotes the counting measure. We refer to~\cite{DHP21} for a more detailed review of this line of investigation.

A version of the rounding effect was also studied in models of directed polymers by Giacomin and Toninelli~\cite{G06S, GT06smoot}. They established that, while the non-disordered model may exhibit a first- (or higher-) order phase transition, this transition is at least of second order in the presence of a random disorder. Nevertheless, the mechanism taking place is of different nature than the one underlying the Imry--Ma phenomenon~\cite{G06S}.

\smallskip
\emph{Quantum systems:} The arguments developed in~\cite{AW89} were extended by Aizenman, Greenblatt and Lebowitz~\cite{AGL12} to quantum lattice systems. They established that, at all temperatures, the first-order phase transition of these systems is rounded by the addition of a random field in dimensions $d \leq 2$, and in every dimensions $d \leq 4$ for systems with continuous symmetry.

\section{Strategy of the proof} \label{Secstratproof}
In this section, we outline the proofs of Theorem~\ref{thm1prop2.31708main}, Theorem~\ref{prop:subcritical} and Theorem~\ref{thm.thm2}, which contain the main ideas developed in the article. To simplify the presentation of the arguments, we assume that $m = 1$ in the definition of the noised observable and of the random disorder, that the strength $\lambda$ of the random field is equal to $1$, and that the maps $(f_v)$ have range $0$ (i.e., $f_v$ depends only on the value of the spin $\sigma_v$).

\medskip

\subsection{Outline of the proof of Theorem~\ref{thm1prop2.31708main}} \label{subsecoutkinemain} The proof relies on a thermodynamic approach and requires the introduction of the finite-volume free energy of the system as follows: for each side length $L \geq 2$, each boundary condition $\tau \in \S^{\Zd \setminus \Lambda_L}$, inverse temperature $\beta >0$, and magnetic field $\eta : \Lambda_L \to \R$, we define the finite-volume free energy to be the suitably renormalized logarithm of the partition function
\begin{equation} \label{eq:16110501}
    \FE^{\tau}_{\Lambda_L}(\eta)  := -\frac{1}{\beta \left| \Lambda_L \right|} \ln Z_{\beta , \Lambda_L, \tau}^{\eta ,1}.
\end{equation}
The proofs then rely on the following standard observations:
\begin{itemize}
    \item By decomposing the random field according to $\eta = \left( \hat \eta_L , \eta_L^\perp \right)$ with $\hat \eta_L := \left| \Lambda_L \right|^{-1} \sum_{v \in \Lambda_L} \eta_v$ and $\eta^\perp_L := \eta - \hat \eta_L$, we see that the
    the observable $ \left| \Lambda_L \right|^{-1} \sum_{v \in \Lambda_L} \left\langle f_v \left( \sigma \right) \right\rangle_{\Lambda_L}^{\tau}$ can be characterized as the derivative of the free energy with respect to the averaged external field $\hat \eta_L$, i.e.,
    \begin{equation*}
        \frac{\partial}{\partial \hat \eta_L} \FE^{\tau}_{\Lambda_L}(\eta) = - \frac{1}{\left| \Lambda_L \right|} \sum_{v \in \Lambda_L} \left\langle f_v \left( \sigma \right) \right\rangle_{\Lambda_L}^{\tau}.
    \end{equation*}
    \item For each realization of $\eta_L^\perp$, the mapping $\hat \eta_L \mapsto  -\FE^{\tau}_{\Lambda_L}(\hat \eta_L , \eta^\perp_L)$ is convex, differentiable and $1$-Lipschitz. Moreover, for any pair of boundary conditions $\tau_1 , \tau_2 \in \S^{\Zd \setminus \Lambda_L}$, the finite-volume free energy satisfies the relation
    \begin{equation} \label{eq:07170111}
        \left| \FE^{\tau_1}_{\Lambda_L}(\eta) - \FE^{\tau_2}_{\Lambda_L}(\eta) \right| \leq \frac{C}{L}.
    \end{equation}
    \item The averaged external field $\hat \eta_L$ is a Gaussian random variable whose variance is equal to $\left| \Lambda_L \right|^{-1}$. In two dimensions, its fluctuations are of order~$L^{-1}$, which is the same order of magnitude as the right-hand side of~\eqref{eq:07170111}.
\end{itemize}

The key input of the argument is the following general fact: for each $1$-Lipschitz convex and differentiable function $g : \R \to \R$ and each $\delta > 0$, there exists a set $A_g \subseteq \R$ satisfying the two following properties:
\begin{enumerate}
    \item The Lebesgue measure of the set $\R \setminus A_g$ is finite and satisfies $\Leb \left( \R \setminus A_g \right) \leq C/\delta^2$.
    \item For each point $x \in A_g$, and each differentiable, $1$-Lipschitz and convex function $g_1 : \R \to \R$ satisfying $\sup_{t \in \R}\left| g(t) - g_1(t) \right| \leq 1$, one has
    \begin{equation*}
        \left| g'(x) - g_1'(x) \right| \leq \delta.
    \end{equation*}
\end{enumerate}
This observation is quantified through the notion of $\delta$-stability introduced in Section~\ref{section3.1}.

Applying this property to the free energies $\hat \eta_L \mapsto -\FE^{\tau}_{\Lambda_L}(\hat \eta_L , \eta^\perp_L)$, for a fixed boundary condition~$\tau \in \mathcal{S}^{\Zd \setminus \Lambda_L}$, using the inequality~\eqref{eq:07170111}, that the averaged field $\hat \eta_{\Lambda_L}$ is Gaussian and that its variance is of order $L^{-2}$, we obtain, for any $\delta > 0$,
\begin{equation} \label{eq:155625}
    \P \left(\sup_{\tau_1, \tau_2 \in \S^{\Zd \setminus \Lambda_L}} \frac{1}{|\Lambda_L|} \sum_{v \in \Lambda_L} \left( \left\langle f_v \left( \sigma \right) \right\rangle_{\Lambda_L}^{\tau_1}  -  \left\langle f_v \left( \sigma \right) \right\rangle_{\Lambda_L}^{\tau_2} \right) < \delta \right) \geq c_\delta,
\end{equation}
where the constant $c_\delta$ depends only on $\delta$ and satisfies $c_\delta \geq e^{-C/\delta^4}$.

The lower bound stated in~\eqref{eq:155625} is weaker than the statement of Theorem~\ref{thm1prop2.31708main}. The strategy to upgrade the inequality~\eqref{eq:155625} into the quantitative estimate~\eqref{eq:11191006} relies on the observation that the inequality~\eqref{eq:155625} is scale invariant in two dimensions (the constant $c_\delta$ does not depend on $L$). One can thus implement a hierarchical decomposition of the space and leverage on this property to improve the result. Specifically, we implement a Mandelbrot percolation argument which allows to cover (almost completely) the box $\Lambda_L$ by a collection $\Q$ of disjoint boxes (see Figure~\ref{partitionempty}), all of which satisfy the property
\begin{equation} \label{eq:160925}
    \sup_{\tau_1, \tau_2 \in \S^{\Zd \setminus \Lambda}} \frac{1}{|\Lambda_L|} \sum_{v \in \Lambda} \left( \left\langle f_v \left( \sigma \right) \right\rangle_{\Lambda}^{\tau_1}  -  \left\langle f_v \left( \sigma \right) \right\rangle_{\Lambda}^{\tau_2} \right) \leq \frac{C}{\sqrt[4]{\ln \ln L}}.
\end{equation}
Once this is achieved, an application of the domain subadditivity property for the left-hand side of~\eqref{eq:160925} (see Section~\ref{sec.domainMarkov}) yields the bound
\begin{equation} \label{eq:08.15}
     \P \left( \sup_{\tau_1, \tau_2 \in \S^{\Zd \setminus \Lambda_L}} \frac{1}{|\Lambda_L|} \sum_{v \in \Lambda_L} \left( \left\langle f_v \left( \sigma \right) \right\rangle_{\Lambda_L}^{\tau_1}  -  \left\langle f_v \left( \sigma \right) \right\rangle_{\Lambda_L}^{\tau_2} \right) \geq \frac{C}{\sqrt[4]{\ln \ln L}}  \right) \leq  \exp \left( - c \sqrt{\ln L} \right).
\end{equation}
The inequality~\eqref{eq:08.15} is slighlty weaker than Theorem~\ref{thm1prop2.31708main}, since the stochastic integrability (the right-hand side of~\eqref{eq:08.15}) can be improved. This is achieved by using a concentration argument implemented in Section~\ref{sectionconcentrationatg}.

We complete this outline with a remark regarding the rate of convergence obtained: it is related to the lower bound $c_\delta$ through the formula
\begin{equation*}
   \inf \left\{ \delta \in (0 , 1) \, : \, c_{\delta} := \frac{1}{\ln L} \right\}.
\end{equation*}
The result of Lemma~\ref{L.lemma2.5} gives the value $c_{\delta} := \exp \left( - \frac{C}{\delta^4} \right)$, which yields the rate $1/\sqrt[4]{\ln \ln L}$.

\medskip

\subsection{Outline of the proofs of Theorem~\ref{prop:subcritical} and Theorem~\ref{thm.thm2}.}
To simplify the presentation of the argument, we make the additional assumption that $h = 0$. The proofs of Theorem~\ref{prop:subcritical} and Theorem~\ref{thm.thm2} rely on a Mermin-Wagner type argument (see~\cite{MW}) which allows to use the continuous symmetry of the model to upgrade the inequality~\eqref{eq:07170111}: the upper bound we obtain is stated in Proposition~\ref{propMermW} and (a simplified version of it) reads, for any $i \in \{ 1 , \ldots, n\}$
\begin{equation} \label{eq:09490111}
    \E \left[ \FE^{\tau, 0}_{\Lambda_L}(\eta) -\FE^{\tau, 0}_{\Lambda_L}(-\eta)  ~ \Big\vert ~  \eta_{ \Lambda_{L/2}, i }\right] \leq \frac{C}{L^2},
\end{equation}
where the free energy is defined in~\eqref{eq:17053112} below, the expectation in the left-hand side of~\eqref{eq:09490111} is the conditional expectation with respect to the values of the $i$-th coordinate of the random field $\eta$ inside the box $\Lambda_{L/2}$ (see Section~\ref{sec.condexpect}).

The main features of the inequality~\eqref{eq:09490111} are the following:
\begin{itemize}
    \item The right-hand side of~\eqref{eq:09490111} decays like $C/L^2$, in contrast to the upper bound $C/L$ of~\eqref{eq:07170111} in the case of general spin systems.
    \item The derivative with respect to the averaged field $\hat \eta_{\Lambda_{L/2},i} := \left|  \Lambda_{L/2}\right|^{-1} \sum_{v \in \Lambda_{L/2}} \eta_{v,i} $ of the left-hand side of~\eqref{eq:09490111} is explicit and satisfies the equality
    \begin{equation} \label{eq:10310111}
         \frac{\partial}{\partial \hat \eta_{L/2,i}} \E \left[ \FE^{\tau, 0}_{\Lambda_L}(\eta) - \FE^{\tau, 0}_{\Lambda_L}(- \eta) ~ \Big\vert ~  \eta_{\Lambda_{L/2,i} }\right]  =  -\frac{1}{\left| \Lambda_L \right|} \sum_{v \in \Lambda_{L/2}} \E \left[  \left\langle \sigma_{v,i} \right\rangle_{\Lambda_L}^{\tau, 0}(\eta) + \left\langle \sigma_{v,i} \right\rangle_{\Lambda_L}^{\tau, 0}(- \eta) ~ \Big\vert ~  \eta_{\Lambda_{L/2},i }\right].
    \end{equation}
    In particular, using the $\eta \to -\eta$ invariance of the law of the random field, we see that the expectation of the right-hand side of~\eqref{eq:10310111} is equal to $-2\E \left[ \left| \Lambda_L \right|^{-1} \sum_{v \in \Lambda_{L/2}} \left\langle \sigma_{v,i} \right\rangle_{\Lambda_L}^{\tau, 0} \right]$. The expectation of the spatially and thermally averaged magnetization can thus be characterized as the expectation of the derivative with respect to the variable $\hat \eta_{\Lambda_{L/2},i}$ of the left-hand side of~\eqref{eq:09490111}.
\end{itemize}
In dimensions $d \leq 3$, the fluctuations of the averaged field $\hat \eta_{\Lambda_{L/2},i}$ are of order $L^{-d/2}$, and are thus larger than the right-hand side of~\eqref{eq:09490111}. Combining this observation with a variational principle (see Section~\ref{sec.varlemmaCSS}) yields the algebraic rate of convergence stated in Theorem~\ref{prop:subcritical}. In dimension $4$, the fluctuations of the averaged field $\hat \eta_{\Lambda_{L/2},i}$ are of the same order of magnitude as the right-hand side of~\eqref{eq:09490111}, and we implement a Mandelbrot percolation argument similar to the one presented in Section~\ref{subsecoutkinemain} to obtain the result.

We point out that there is a distinct difference between the two constructions. In the presence of a continuous symmetry, we do not rely on the criterion~\eqref{eq:160925} to define which box should belong to the partition $\mathcal{Q}$ but rather on the inequality, for a box $\Lambda \subseteq \Lambda_L$,
\begin{equation*}
    \left| \E \left[ \frac{1}{\left| \Lambda \right|} \sum_{v \in \Lambda} \left\langle  \sigma_v \right\rangle_{\Lambda_L}^{\tau, 0} ~ \Big\vert ~ \hat \eta_{\Lambda} \right] \right| \leq \frac{C}{\sqrt{\ln \ln L}},
\end{equation*}
where the conditional expectation is taken with respect to the averaged field $\hat \eta_\Lambda := \left|  \Lambda\right|^{-1} \sum_{v \in \Lambda} \eta_v$. This difference accounts for the slightly better rate of convergence obtained in Theorem~\ref{thm.thm2}: we obtain the rate $1/\sqrt{\ln \ln L}$ instead of the rate $1/\sqrt[4]{\ln \ln L}$ in Theorem~\ref{thm1prop2.31708main}.

\section{Notations, assumptions and preliminaries} \label{section2not}

\subsection{General notation}

For each real number $a \in \R$, we use the notation $\lfloor a \rfloor$ to refer to the integer part of $a$. A box is a subset of the form $v + \Lambda_L$ for $v \in \Zd$ and $L \in \N$. We call the vertex $v$ the center of the box $\Lambda := v + \Lambda_L$ and the integer $(2L+1)$ its side length. Given a box $\Lambda \subseteq \Zd$ of side length $L$ and a real number $\alpha > 0$, we denote by $\alpha \Lambda$ the box with the same center as $\Lambda$ and side length $\lfloor \alpha L \rfloor$, and define $\Lambda_{\alpha L} := \alpha \Lambda_L$. Given a set $\Lambda \subseteq \Zd$ and a vertex $v \in \Zd$, we denote by $\dist (v , \Lambda) := \min_{w \in \Lambda} |v - w|.$ For each set $\Lambda \subseteq \Zd$, we denote by $\indc_\Lambda$ the indicator function of $\Lambda$.

Given a function $g$ defined on either $\R$, a subset $\Lambda \subseteq \Zd$ or the set of configurations and valued in $\R^m$, we denote by $g_1 , \ldots, g_m$ its components. In particular, we denote by $f_{v , 1}, \ldots, f_{v , m}$ the components of the observable $f_v$, and, in the case of continuous spin systems studied in Section~\ref{SectionCSS}, we denote by $\sigma_{v, 1}, \ldots,\sigma_{v, n}$ the components of the spin $\sigma_v$.

For each bounded set $\Lambda \subseteq \Zd$ and each fixed boundary condition $\tau_0 \in \S^{\Zd \setminus \Lambda}$, we denote by
\begin{equation} \label{NotaFluc}
    \fluc_{\Lambda}(\eta) := \sup_{\tau_1 , \tau_2 \in \S^{\Zd \setminus \Lambda}}  \left| \frac{1}{|\Lambda|} \sum_{v \in \Lambda} \left( \left\langle f_v \left( \sigma \right) \right\rangle_{\Lambda}^{\tau_1}  -  \left\langle f_v \left( \sigma \right) \right\rangle_{\Lambda}^{\tau_2} \right) \right|,
\end{equation}
and
\begin{equation} \label{NotaFlucper}
    \fluc_{\Lambda}^{\tau_0}(\eta) := \sup_{\tau \in \S^{\Zd \setminus \Lambda}}  \left| \frac{1}{|\Lambda|} \sum_{v \in \Lambda} \left( \left\langle f_v \left( \sigma \right) \right\rangle_{\Lambda}^{\tau}  -  \left\langle f_v \left( \sigma \right) \right\rangle_{\Lambda}^{\tau_0} \right) \right|.
\end{equation}
Similarly, for each $i \in \{ 1 , \ldots, m\}$, we define
\begin{equation} \label{NotaFluci}
    \fluc_{\Lambda,i}(\eta) := \sup_{\tau_1 , \tau_2 \in \S^{\Zd \setminus \Lambda}} \left| \frac{1}{|\Lambda|}  \sum_{v \in \Lambda} \left( \left\langle f_{v,i} \left( \sigma \right) \right\rangle_{\Lambda}^{\tau_1}  -  \left\langle f_{v,i} \left( \sigma \right) \right\rangle_{\Lambda}^{\tau_2} \right) \right|,
\end{equation}
and
\begin{equation} \label{NotaFlucperi}
    \fluc_{\Lambda,i}^{\tau_0}(\eta) := \sup_{\tau \in \S^{\Zd \setminus \Lambda}}
    \left| \frac{1}{|\Lambda|}  \sum_{v \in \Lambda} \left( \left\langle f_{v,i} \left( \sigma \right) \right\rangle_{\Lambda}^{\tau}  -  \left\langle f_{v,i} \left( \sigma \right) \right\rangle_{\Lambda}^{\tau_0} \right) \right|.
\end{equation}
Let us note that these quantities only depend on the value of the field inside the set $\Lambda$, and that we have the inequalities
\begin{equation} \label{eq:11250211}
    \fluc_{\Lambda}(\eta) \leq  \sum_{i = 1}^m \fluc_{\Lambda,i}(\eta), \hspace{3mm} \fluc_{\Lambda}(\eta) \leq 2 \fluc_{\Lambda}^{\tau_0}(\eta) \hspace{3mm} \mbox{and} \hspace{3mm} \fluc_{\Lambda,i}(\eta) \leq 2 \fluc_{\Lambda,i}^{\tau_0}(\eta).
\end{equation}
Additionally, by the pointwise bound $\left| f_v(\sigma) \right| \leq 1$, the four quantities~\eqref{NotaFluc},~\eqref{NotaFlucper},~\eqref{NotaFluci} and~\eqref{NotaFlucperi} are bounded by $2$ for any realization of the random field $\eta$.

\subsection{Structure of the random field} \label{secstructRF}
Given an integer $i \in \{1 , \ldots, m \}$ and a vertex $v \in \Zd$, we denote by $\eta_i$ and $\eta_{v , i}$ the $i$-th component of $\eta$ and $\eta_v$ respectively.

\smallskip

Given a bounded set $\Lambda \subseteq \Zd$, we denote by $\eta_\Lambda = (\eta_{\Lambda , 1} , \ldots , \eta_{\Lambda , m})$ the restriction of the field $\eta$ to the set $\Lambda$. We use the decomposition $\eta_\Lambda := (\hat \eta_\Lambda , \eta^\perp_\Lambda)$ with $$\hat \eta_\Lambda := \left| \Lambda \right|^{-1} \sum_{v \in \Lambda} \eta_v \hspace{3mm} \mbox{and} \hspace{3mm}\eta^\perp_\Lambda := \eta_{\Lambda} - \hat \eta_\Lambda.$$

\smallskip

More generally, given a map $w : \Lambda \to \R^m$, we use the decomposition $\eta_\Lambda := (\hat \eta_{w,\Lambda} , \eta^\perp_{w,\Lambda})$ with, for each $i \in \{ 1 , \ldots , m\}$, $$\hat \eta_{w ,\Lambda, i} := \frac{1}{\sum_{v \in \Lambda} w_i(v)^2} \sum_{v \in \Lambda} w_i(v) \eta_{v,i} \hspace{3mm} \mbox{and} \hspace{3mm} \eta^\perp_{w,\Lambda,i} := \eta_{\Lambda,i} - \hat \eta_{w ,\Lambda, i} w_i.$$
We set $\hat \eta_{w ,\Lambda, i}=0$ if $w_i = 0$.

\smallskip

Since the field $\eta$ is assumed to be a standard $m$-dimensional Gaussian vector, all the random fields listed above are Gaussian. For each $i \in \{ 1 , \ldots, m \}$, the variances of $\hat \eta_{\Lambda, i}$ and $\hat \eta_{w, \Lambda , i}$ are equal $\left| \Lambda \right|^{-1}$ and $1/\left( \sum_{v \in \Lambda} w_i(v)^2\right)$ respectively. Moreover the fields $\hat \eta_\Lambda$ and $\eta^\perp_\Lambda$ are independent, and the fields $\hat \eta_{w,\Lambda}$ and $\eta^\perp_{w,\Lambda}$ are independent.

\smallskip

Given a function $F$ depending on the realization of $\eta$ in the set $\Lambda$ and an integer $i \in \{ 1 , \ldots, m \}$, we may abuse notation and write $F(\hat \eta_\Lambda , \eta^\perp_\Lambda)$, $F(\hat \eta_{\Lambda , i} , \left( \hat \eta_{\Lambda , j} \right)_{j \neq i} , \eta^\perp_\Lambda)$ or $F(\hat \eta_{w,\Lambda} , \eta^\perp_{w,\Lambda})$ instead of $F(\eta)$ when we want to emphasize the dependence of the map $F$ on a specific variable.

\smallskip

For $i \in \{1 , \ldots, m \}$, we denote by $\frac{\partial F}{\partial \hat \eta_{\Lambda,i}}(\eta)$ and $\frac{\partial F}{\partial \hat \eta_{w,\Lambda,i}}(\eta)$ the partial derivatives of the map $F$ with respect to the variables $\hat \eta_{\Lambda,i}$ and $\hat \eta_{w, \Lambda,i}$ respectively.

\subsection{Notation for conditional expectation} \label{sec.condexpect} Assume that the random field $\eta$ is defined on a probability space $(\Omega, \mathcal{F}, \mathcal{P})$, and that we are given two random variables $F$ and $X$ depending on the random field $\eta$, such that $\E \left[ |F| \right] < \infty$,  and a $\sigma$-algebra $\mathcal{F}_1 \subseteq \mathcal{F}$. We denote by $\E \left[ F \, | \, X\right]$ (resp. $\E \left[ F \, | \, \mathcal{F}_1\right]$) the conditional expectation of $F$ with respect to $X$ (resp. with respect to $\mathcal{F}_1$). Given an event $A$, we denote by $\P\left( A \, | \, X\right) := \E \left[ \indc_{A} \, | \, X \right]$ (resp. $\P\left( A \, | \, \mathcal{F}_1\right) := \E \left[ \indc_{A} \, | \, \mathcal{F}_1 \right]$) the conditional probability. The random variable $X$ will frequently be the fields $\hat \eta_\Lambda$, $\eta^\perp_\Lambda$, $\hat \eta_{w,\Lambda}, \eta^\perp_{w,\Lambda}$, the restriction of the field $\eta$ to some set $\Lambda$, or a combination of these options.

Since the random variable $\E \left[ F \,|\, X\right]$ depends only on the realization of $X$, we may write $\E \left[ F \, | \, X\right] (X)$ instead of $\E \left[ F \,| \,X\right] (\eta)$ when we wish to make the dependence on the random field explicit. Similarly, we may write $\P \left( A \,|\, X\right)(X)$ instead of $\P \left( A \,|\, X\right)(\eta)$.

Finally, in the conditional expectations, we may omit to display the dependency in the random field to simplify the notation (e.g., we will typically write $ \E [ \FE_{\Lambda}^{\tau} ~\vert ~ {\hat\eta_{\Lambda}}]$ instead of $ \E [ \FE_{\Lambda}^{\tau}(\eta) ~ \vert ~ {\hat\eta_{\Lambda}} ]$).

\subsection{The domain subadditivity property} \label{sec.domainMarkov}
In this section, we state a domain subadditivity property satisfied by the quantity $\fluc_{\Lambda}(\eta)$. The result is a direct consequence of the consistency property~\eqref{eq:14361411} and is used frequently in the proofs of Sections~\ref{sectionDSS} and~\ref{SectionCSS}.

\begin{proposition}[Domain subadditivity property] \label{propdommarkov}
Let $\beta > 0$, $\eta : \Zd \to \R^m$. Let $\Lambda_1' , \ldots, \Lambda_N'$ of be a collection of disjoint bounded subsets of $\Zd$ and define $\Lambda' := \cup_{j = 1}^N  \Lambda_j'$. Then we have
\begin{equation*}
     \fluc_{\Lambda'}(\eta) \leq \sum_{j = 1}^N \frac{\left| \Lambda_j'\right|}{\left| \Lambda' \right|} \fluc_{\Lambda_j'}(\eta),
\end{equation*}
as well as, for any integer $i \in \{ 1 , \ldots, m\}$,
\begin{equation*}
    \sup_{\tau \in \S^{\Zd \setminus \Lambda'}} \sum_{v \in \Lambda'} \left\langle f_{v,i} \left( \sigma \right) \right\rangle_{\Lambda'}^{\tau} \leq \sum_{j = 1}^N \sup_{\tau \in \S^{\Zd \setminus \Lambda_j'}} \sum_{v \in \Lambda_j'} \left\langle f_{v,i} \left( \sigma \right) \right\rangle_{\Lambda_j'}^{\tau}
\end{equation*}
and
\begin{equation*}
    \inf_{\tau \in \S^{\Zd \setminus \Lambda'}} \sum_{v \in \Lambda'} \left\langle f_{v,i} \left( \sigma \right) \right\rangle_{\Lambda'}^{\tau} \geq \sum_{j = 1}^N \inf_{\tau \in \S^{\Zd \setminus \Lambda_j'}} \sum_{v \in \Lambda_j'} \left\langle f_{v,i} \left( \sigma \right) \right\rangle_{\Lambda_j'}^{\tau}.
\end{equation*}
\end{proposition}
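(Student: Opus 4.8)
The plan is to deduce the domain subadditivity property directly from the consistency relation~\eqref{eq:14361411}. The key point is that if $\Lambda' = \cup_{j=1}^N \Lambda_j'$ with the $\Lambda_j'$ disjoint, then for any boundary condition $\tau$ on $\Zd\setminus\Lambda'$ and any configuration distributed according to $\mu^{\eta,\lambda}_{\beta,\Lambda',\tau}$, the thermal expectation of $f_{v,i}(\sigma)$ for $v\in\Lambda_j'$ can be computed by first conditioning on the spins outside $\Lambda_j'$; since the $\Lambda_j'$ are disjoint, this conditioning produces a (random) boundary condition $\rho$ on $\Zd\setminus\Lambda_j'$, and~\eqref{eq:14361411} gives
\begin{equation*}
    \langle f_{v,i}(\sigma) \rangle_{\Lambda'}^{\tau} = \Big\langle \langle f_{v,i}(\sigma) \rangle_{\Lambda_j'}^{\rho} \Big\rangle_{\Lambda'}^{\tau}.
\end{equation*}

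From here I would prove the three inequalities in turn. For the supremum inequality, fix $\tau\in\S^{\Zd\setminus\Lambda'}$. Summing the identity above over $v\in\Lambda_j'$ and then over $j$, and bounding the inner thermal expectation $\sum_{v\in\Lambda_j'}\langle f_{v,i}(\sigma)\rangle_{\Lambda_j'}^{\rho}$ pointwise (in the realization of $\rho$) by $\sup_{\tau'\in\S^{\Zd\setminus\Lambda_j'}}\sum_{v\in\Lambda_j'}\langle f_{v,i}(\sigma)\rangle_{\Lambda_j'}^{\tau'}$, the outer average of a constant is that constant, so
\begin{equation*}
    \sum_{v\in\Lambda'}\langle f_{v,i}(\sigma)\rangle_{\Lambda'}^{\tau} = \sum_{j=1}^N \Big\langle \sum_{v\in\Lambda_j'}\langle f_{v,i}(\sigma)\rangle_{\Lambda_j'}^{\rho}\Big\rangle_{\Lambda'}^{\tau} \le \sum_{j=1}^N \sup_{\tau'\in\S^{\Zd\setminus\Lambda_j'}}\sum_{v\in\Lambda_j'}\langle f_{v,i}(\sigma)\rangle_{\Lambda_j'}^{\tau'}.
\end{equation*}
Taking the supremum over $\tau$ on the left gives the claimed bound. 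The infimum inequality is identical with inequalities reversed. For the $\fluc$ inequality, I would apply the previous two bounds to the difference $\langle f_{v,i}(\sigma)\rangle_{\Lambda'}^{\tau_1}-\langle f_{v,i}(\sigma)\rangle_{\Lambda'}^{\tau_2}$ componentwise: writing $\frac{1}{|\Lambda'|}\sum_{v\in\Lambda'}(\cdots) = \sum_j \frac{|\Lambda_j'|}{|\Lambda'|}\cdot\frac{1}{|\Lambda_j'|}\sum_{v\in\Lambda_j'}(\cdots)$, the consistency identity with the two boundary conditions $\tau_1,\tau_2$ lets one bound the $j$-th term's absolute value by $\frac{|\Lambda_j'|}{|\Lambda'|}\fluc_{\Lambda_j',i}(\eta)$ up to summing over $i$ — more directly, one bounds $|\frac{1}{|\Lambda_j'|}\sum_{v\in\Lambda_j'}(\langle f_v\rangle_{\Lambda_j'}^{\rho_1}-\langle f_v\rangle_{\Lambda_j'}^{\rho_2})|\le \fluc_{\Lambda_j'}(\eta)$ pointwise in the conditioning, with the subtlety that $\rho_1$ and $\rho_2$ arise from conditioning $\mu_{\Lambda',\tau_1}$ and $\mu_{\Lambda',\tau_2}$ respectively, hence are coupled differently; since the bound $\fluc_{\Lambda_j'}(\eta)$ is uniform over all boundary conditions this causes no problem, and one applies the triangle inequality after taking the outer averages.

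The main obstacle — really the only subtle point — is the bookkeeping in the $\fluc$ inequality: one must be careful that the two conditionings (coming from $\tau_1$ and from $\tau_2$) land in different auxiliary expectations, so the clean "average of a constant" trick used for the sup/inf bounds does not literally apply to the difference. The fix is to handle the sum $\sum_{v\in\Lambda'}\langle f_{v,i}\rangle_{\Lambda'}^{\tau_1}$ and $\sum_{v\in\Lambda'}\langle f_{v,i}\rangle_{\Lambda'}^{\tau_2}$ separately via the sup/inf bounds already established, or equivalently to note that $\fluc_{\Lambda',i}(\eta) \le \sum_j \frac{|\Lambda_j'|}{|\Lambda'|}\fluc_{\Lambda_j',i}(\eta)$ follows from applying the sup bound to the $\tau_1$-term and the inf bound to the $\tau_2$-term and subtracting; then sum over $i\in\{1,\dots,m\}$ and use $\fluc_{\Lambda'}(\eta)\le\sum_i\fluc_{\Lambda',i}(\eta)$ together with $\fluc_{\Lambda_j',i}(\eta)\le\fluc_{\Lambda_j'}(\eta)$ — wait, this last inequality is false in general, so instead one keeps the vector-valued quantity throughout: the identity~\eqref{eq:14361411} applied to the $\R^m$-valued $g=f_v$ gives $\frac{1}{|\Lambda'|}\sum_{v\in\Lambda'}(\langle f_v\rangle^{\tau_1}_{\Lambda'}-\langle f_v\rangle^{\tau_2}_{\Lambda'}) = \sum_j \frac{|\Lambda_j'|}{|\Lambda'|}\langle \frac{1}{|\Lambda_j'|}\sum_{v\in\Lambda_j'}\langle f_v\rangle^{\rho_1}_{\Lambda_j'}\rangle^{\tau_1}_{\Lambda'} - (\text{same with }2)$, and then taking Euclidean norms, applying the triangle inequality over $j$, and bounding each norm by $\fluc_{\Lambda_j'}(\eta)$ (uniform over boundary conditions, hence valid after the outer average) yields the result.
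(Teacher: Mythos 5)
Your argument is correct and matches the paper's intent: the paper states Proposition~\ref{propdommarkov} without proof, as a direct consequence of the consistency relation~\eqref{eq:14361411}, and that is precisely the conditioning-on-each-block argument you carry out, with your final vector-valued treatment of the $\fluc$ bound (triangle inequality over the blocks, each term bounded by $\fluc_{\Lambda_j'}(\eta)$ uniformly in the random boundary conditions, hence also after the outer thermal averages) being sound. One small aside: your claim that $\fluc_{\Lambda_j',i}(\eta)\le\fluc_{\Lambda_j'}(\eta)$ is ``false in general'' is mistaken---it does hold, since a coordinate is dominated by the Euclidean norm---but the componentwise route would still cost a factor $m$ through $\fluc_{\Lambda'}(\eta)\le\sum_{i=1}^m\fluc_{\Lambda',i}(\eta)$, so switching to the vector-valued argument was the right call.
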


\subsection{Free energy: definition and basic properties} \label{Secfreeenergy}

The finite-volume free energy of the random system is defined below (and corresponds to the one introduced in~\eqref{eq:16110501}).

\begin{definition}[Finite-volume free energy] \label{def.freeenergy}
    Given a bounded domain $\Lambda \subseteq \Zd$, an inverse temperature $\beta > 0$, a boundary condition $\tau \in \S^{\Zd \setminus \Lambda}$, a realization of the random field $\eta$, we define the free energy by the formula
    \begin{equation*}
        \FE_\Lambda^{\tau} \left( \eta \right) := -\frac{1}{\beta \left| \Lambda \right|} \ln Z_{\beta , \Lambda, \tau}^{\eta , \lambda} = -\frac{1}{\beta \left| \Lambda \right|} \ln \int_{\S^{\Lambda}}  \exp \left( - \beta  H^{\eta , \lambda}_{\Lambda}(\sigma) \right) \prod_{v \in \Lambda} \kappa(d \sigma_v),
    \end{equation*}
where the integral is computed over the set of configurations satisfying $\sigma = \tau$ in $\Zd \setminus \Lambda$.
\end{definition}

We next collect without proof some basic properties of the finite-volume free energy.

\begin{proposition}[Properties of the free energy] \label{prop.basicpropfreeen}
For any bounded domain $\Lambda \subseteq \Zd$, any inverse temperature $\beta >0,$ and any boundary condition $\tau \in \S^{\Zd \setminus \Lambda}$, the map $\FE_\Lambda^{\tau}$ satisfies the properties:
\begin{itemize}
    \item \textnormal{Concavity and regularity:} the map $\eta \mapsto \FE_\Lambda^{\tau}(\eta )$ is concave and satisfies for any pair of fields $\eta , \eta'$,
    \begin{equation*}
        \left| \FE_{\Lambda}^{\tau} \left( \eta \right) - \FE_{\Lambda}^{\tau} \left( \eta' \right) \right| \leq \frac{\lambda}{\left| \Lambda \right|}\sum_{v \in \Lambda} \left| \eta_v - \eta'_v \right|.
    \end{equation*}
    \item \textnormal{Derivative:} the mapping $\eta \mapsto \FE^{\tau}_\Lambda \left( \eta  \right)$ is differentiable and satisfies, for any $i \in \{ 1 , \ldots , m\}$,
    \begin{equation} \label{eq:02111210}
        \frac{\partial \FE^{\tau}_{\Lambda}}{\partial \hat \eta_{\Lambda,i}} (\eta) =  -\frac{\lambda}{\left| \Lambda \right|} \sum_{v \in \Lambda} \left\langle f_{v,i} \left( \sigma \right) \right\rangle_{\Lambda}^{\tau}.
    \end{equation}
    More generally, for any map $w : \Lambda \mapsto \R^m$ which is not identically $0$,
    \begin{equation} \label{eq:17510211}
        \frac{\partial\FE^{\tau}_{\Lambda} }{\partial \hat \eta_{w , \Lambda , i}} (\eta) =  -\frac{\lambda}{\left| \Lambda \right|} \sum_{v \in \Lambda} w_i(v) \left\langle f_{v,i} \left( \sigma \right) \right\rangle_{\Lambda}^{\tau}.
    \end{equation}
    \item \textnormal{Finite energy:} for any boundary condition $\tau_1 \in \S^{\Zd \setminus \Lambda}$, and any realization of random field $\eta$,
    \begin{equation} \label{eq:12030211}
    \left| \FE^{\tau}_\Lambda \left( \eta \right) - \FE^{\tau_1}_\Lambda \left( \eta \right) \right| \leq C\frac{ \left| \partial \Lambda \right|}{\left| \Lambda \right|} + \frac{C \lambda}{\left| \Lambda \right|} \sum_{\substack{v \in \Lambda \\ \dist(v , \partial \Lambda) \leq R}} \left| \eta_v \right|.
    \end{equation}
\end{itemize}
\end{proposition}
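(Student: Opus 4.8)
The plan is to derive all three items directly from the definition $\FE_\Lambda^\tau(\eta)=-\frac{1}{\beta|\Lambda|}\ln Z_{\beta,\Lambda,\tau}^{\eta,\lambda}$, exploiting that the disorder $\eta$ enters the exponent \emph{linearly}, through $\beta\lambda\sum_{v\in\Lambda}\eta_v\cdot f_v(\sigma)$, while $H_\Lambda$ does not depend on $\eta$; since $\Lambda$ is finite and $H_\Lambda$ and the $f_v$ are bounded, $\eta\mapsto Z_{\beta,\Lambda,\tau}^{\eta,\lambda}$ is strictly positive and smooth and one may differentiate under the integral sign freely. For concavity I would compute $\partial^2_{\eta_{v,i}\,\eta_{u,j}}\ln Z_{\beta,\Lambda,\tau}^{\eta,\lambda}=(\beta\lambda)^2\bigl(\langle f_{v,i}(\sigma)f_{u,j}(\sigma)\rangle_\Lambda^\tau-\langle f_{v,i}(\sigma)\rangle_\Lambda^\tau\langle f_{u,j}(\sigma)\rangle_\Lambda^\tau\bigr)$, observe that the matrix on the right is $(\beta\lambda)^2$ times the covariance matrix of $(f_{v,i}(\sigma))_{v\in\Lambda,\,i\le m}$ under $\mu_{\beta,\Lambda,\tau}^{\eta,\lambda}$, hence positive semidefinite, and conclude that $\ln Z_{\beta,\Lambda,\tau}^{\eta,\lambda}$ is convex in $\eta$ and so $\FE_\Lambda^\tau$ is concave (equivalently, Hölder's inequality shows directly that $\ln Z_{\beta,\Lambda,\tau}^{\eta,\lambda}$ is a convex function of $\eta$). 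For the Lipschitz bound I would integrate the first derivative $\partial_{\eta_{v,i}}\FE_\Lambda^\tau(\eta)=-\frac{\lambda}{|\Lambda|}\langle f_{v,i}(\sigma)\rangle_\Lambda^\tau$ along the segment from $\eta'$ to $\eta$ and use, for each $v$, the Cauchy–Schwarz inequality in $\R^m$ together with $\bigl|\langle f_v(\sigma)\rangle_\Lambda^\tau\bigr|\le\langle|f_v(\sigma)|\rangle_\Lambda^\tau\le 1$, which gives $\bigl|\FE_\Lambda^\tau(\eta)-\FE_\Lambda^\tau(\eta')\bigr|\le\frac{\lambda}{|\Lambda|}\sum_{v\in\Lambda}|\eta_v-\eta'_v|$.

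For the derivative formulas \eqref{eq:02111210} and \eqref{eq:17510211} I would apply the chain rule to the linear change of variables $\eta_\Lambda\mapsto(\hat\eta_{w,\Lambda},\eta^\perp_{w,\Lambda})$: writing $\eta_{v,i}=\hat\eta_{w,\Lambda,i}\,w_i(v)+(\eta^\perp_{w,\Lambda})_{v,i}$ makes it plain that $\partial\eta_{v,i}/\partial\hat\eta_{w,\Lambda,i}=w_i(v)$ for every $v\in\Lambda$ (the remaining coordinates being held fixed), so that $\partial_{\hat\eta_{w,\Lambda,i}}\FE_\Lambda^\tau(\eta)=\sum_{v\in\Lambda}w_i(v)\,\partial_{\eta_{v,i}}\FE_\Lambda^\tau(\eta)=-\frac{\lambda}{|\Lambda|}\sum_{v\in\Lambda}w_i(v)\langle f_{v,i}(\sigma)\rangle_\Lambda^\tau$; specialising to $w\equiv 1$ gives \eqref{eq:02111210}.

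For the finite-energy bound \eqref{eq:12030211} I would compare the two partition functions, which are integrals over the \emph{same} measure space $(\S^\Lambda,\prod_{v\in\Lambda}\kappa)$ of $\exp(-\beta H_\Lambda^{\eta,\lambda}(\sigma^\tau))$ and $\exp(-\beta H_\Lambda^{\eta,\lambda}(\sigma^{\tau_1}))$, where $\sigma^\tau$ is the configuration equal to $\sigma$ on $\Lambda$ and to $\tau$ on $\Zd\setminus\Lambda$. Applying the elementary inequality $\bigl|\ln\int e^{g}-\ln\int e^{g_1}\bigr|\le\sup|g-g_1|$ with $g=-\beta H_\Lambda^{\eta,\lambda}(\sigma^\tau)$ and $g_1=-\beta H_\Lambda^{\eta,\lambda}(\sigma^{\tau_1})$, and dividing by $\beta|\Lambda|$, reduces the estimate to a pointwise bound on $\bigl|H_\Lambda^{\eta,\lambda}(\sigma^\tau)-H_\Lambda^{\eta,\lambda}(\sigma^{\tau_1})\bigr|$. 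The base contribution $|H_\Lambda(\sigma^\tau)-H_\Lambda(\sigma^{\tau_1})|$ is controlled by the bounded-boundary-effect hypothesis \eqref{def.cteCH}, and in the disorder term $-\lambda\sum_{v\in\Lambda}\eta_v\cdot f_v(\sigma)$ only those $v\in\Lambda$ with $B(v,R)\cap(\Zd\setminus\Lambda)\ne\emptyset$ — equivalently, with $\dist(v,\partial\Lambda)\le R$ — can be affected by the change $\tau\to\tau_1$, each contributing at most $\lambda|\eta_v|\,|f_v(\sigma^\tau)-f_v(\sigma^{\tau_1})|\le 2\lambda|\eta_v|$ by the boundedness of the observables. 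Summing and absorbing the harmless factors of $|\Lambda|^{-1}\le 1$ into the constant yields \eqref{eq:12030211}.

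I do not expect a genuine obstacle here: the proposition is bookkeeping around the identity $\FE_\Lambda^\tau=-\frac{1}{\beta|\Lambda|}\ln Z_{\beta,\Lambda,\tau}^{\eta,\lambda}$. The points that most deserve care are the justification of differentiation under the integral sign (immediate from finiteness of $\Lambda$ and boundedness of $H_\Lambda$ and the $f_v$), the observation that the linear dependence of the exponent on $\eta$ is what makes $\ln Z_{\beta,\Lambda,\tau}^{\eta,\lambda}$ a bona fide log-moment-generating function, hence convex, and, in the finite-energy step, the precise identification of the set of observables that a change of boundary condition can influence — namely those indexed by vertices within $\ell_\infty$-distance $R$ of $\partial\Lambda$.
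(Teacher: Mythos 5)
Your proposal is correct and follows exactly the route the paper intends: the paper states Proposition~\ref{prop.basicpropfreeen} without proof, remarking only that it is a direct consequence of Definition~\ref{def.freeenergy}, the bound $|f_v(\sigma)|\leq 1$, the range-$R$ property of the $(f_v)$, and the bounded boundary effect~\eqref{def.cteCH}, and these are precisely the ingredients you use (convexity of the log-partition function in the linearly entering field, differentiation under the integral with the chain rule for the decomposition $\eta_{v,i}=\hat\eta_{w,\Lambda,i}w_i(v)+(\eta^\perp_{w,\Lambda})_{v,i}$, and the elementary bound $\bigl|\ln\int e^{g}-\ln\int e^{g_1}\bigr|\leq\sup|g-g_1|$ for the boundary comparison). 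No gaps.
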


The proof of these results is a direct consequence of the formula for the free energy stated in Definition~\ref{def.freeenergy}, the assumption $|f_v(\sigma)| \leq 1$ on the noised observable, the fact that the map $f_v$ has range $R$ and the inequality~\eqref{def.cteCH}.

In the case of systems equipped with a continuous symmetry, we make the dependence in the parameter $h$ explicit, and refer to the free energy using the notation
\begin{equation} \label{eq:17053112}
    \FE_\Lambda^{\tau, h} \left( \eta \right) := -\frac{1}{\beta \left| \Lambda \right|} \ln Z_{\beta , \Lambda, \tau}^{\eta , \lambda,h} = -\frac{1}{\beta \left| \Lambda \right|} \ln \int_{\S^{\Lambda}}  \exp \left( - \beta  H^{\eta , \lambda,h}_{\Lambda}(\sigma) \right) \prod_{v \in \Lambda} \kappa(d \sigma_v).
\end{equation}
For any realization of the random field $\eta$, and any boundary condition $\tau \in \S^{\partial \Lambda},$ the mapping $h \mapsto \FE^{\tau,h}_\Lambda (\eta)$ is concave, $1$-Lipschitz, differentiable, and satisfies, for any $i \in \{1 , \ldots ,n \}$,
\begin{equation} \label{eq:14210811}
        \frac{\partial \FE^{\tau , h}_{\Lambda}(\eta)}{\partial h_{i}}  =  -\frac{1}{\left| \Lambda \right|} \sum_{v \in \Lambda}  \left\langle \sigma_{v,i} \right\rangle_{\Lambda}^{\tau, h} .
\end{equation}
As it will be used in the proofs below, we record that, since the identity~\eqref{eq:14210811} is valid for any deterministic field $\eta$ and any boundary condition $\tau \in \S^{\partial \Lambda}$, it implies the following result: for any random (measurable)  boundary condition $\eta \mapsto \tau (\eta) \in \S^{\partial \Lambda}$,
\begin{equation*}
    \frac{\partial }{\partial h_{i}} \E \left[ \FE^{\tau(\eta), h}_{\Lambda}(\eta)\right] = -\E \left[ \frac{1}{\left| \Lambda \right|} \sum_{v \in \Lambda} \left\langle \sigma_{v,i} \right\rangle_{\Lambda}^{\tau(\eta), h} \right].
\end{equation*}
Moreover, the continuous symmetry of the systems manifests in the following ways: using the $\eta \to -\eta$ invariance of the law of the random field, we see that, for any random boundary condition $\eta \mapsto \tau(\eta) \in \S^{\partial \Lambda}$,
\begin{equation} \label{eq:16560511}
    \forall v \in \Lambda, \hspace{5mm} \E \left[ \left\langle \sigma_v \right\rangle_{\Lambda}^{\tau(\eta) , h} \right] = -  \E \left[ \left\langle \sigma_v \right\rangle_{\Lambda}^{\tilde \tau(\eta) , -h} \right],
\end{equation}
where we used the notation $\tilde \tau : \eta \mapsto - \tau(-\eta) \in \S^{\partial \Lambda}$.
In the case of the periodic boundary condition, the expected value of the magnetization does not depend on the vertex $v$, i.e.,
\begin{equation} \label{eq:10240401}
    \forall v \in \Lambda, \hspace{5mm} \E \left[ \left\langle \sigma_v \right\rangle_{\Lambda}^{\per , h} \right] =  \E \left[ \left\langle \sigma_0 \right\rangle_{\Lambda}^{\per , h} \right],
\end{equation}
and when the magnetic field $h$ is equal to $0$, its value is equal to $0$,
\begin{equation} \label{eq:16570511}
    \forall v \in \Lambda, \hspace{5mm} \E \left[ \left\langle \sigma_v \right\rangle_{\Lambda}^{\per , 0} \right] = 0.
\end{equation}

\medskip

\subsection{Convention for constants} Throughout this article, the symbols $c$ and $C$ denote positive constants which may vary from line to line, with $C$ increasing larger than $1$ and $c$ decreasing smaller than $1$. These constants may depend only on the strength of the random field $\lambda$, the parameter $m$, the constant $C_H$ and the radius $R$. In the setup of spin system with continuous symmetry, they may depend on the strength of the random field $\lambda$, the dimension of the sphere $n$ and the map $\Psi$.

\section{Proofs for two-dimensional disordered spin systems} \label{sectionDSS}

In this section, we study the general spin systems presented in Section~\ref{sec:2d systems} and prove Theorem~\ref{thm1prop2.31708main}, Theorem~\ref{prop:1109}, Corollary~\ref{prop:1109cor} and Theorem~\ref{theoremuniqueness}.

In Section~\ref{section3.1}, we introduce the notion of $\delta$-stability for $\lambda$-Lipschitz convex function and quantify the Lebesgue and Gaussian measures of the $\delta$-stability set (see Proposition~\ref{Propdeltastab} and Corollary~\ref{coroGaussmea} below).

The next four sections are devoted to the proof of Theorem~\ref{thm1prop2.31708main} following the outline of Section~\ref{Secstratproof}: Section~\ref{section3.2LB} contains the proof of the estimate~\eqref{eq:155625}, in Section~\ref{sectionfirstmandelbrotperc}, we implement the Mandelbrot percolation argument and prove the inequality~\eqref{eq:08.15}, and in Section~\ref{sectionconcentrationatg} we complete the proof of Theorem~\ref{thm1prop2.31708main} (in the general case) by upgrading the stochastic integrability of~\eqref{eq:08.15}. Finally Section~\ref{eq:fluctuations around limiting value} is devoted to the proof of the inequality~\eqref{eq:fluctuations around limiting value} of Theorem~\ref{thm1prop2.31708main} (in the translation-invariant setup).

Three remaining sections (Sections~\ref{proofTHm2},~\ref{proofTHM3} and~\ref{proofofTheorem3}) are devoted to the proofs of Theorem~\ref{prop:1109}, Corollary~\ref{prop:1109cor} and Theorem~\ref{theoremuniqueness} respectively.

\subsection{A notion of $\delta$-stability for real-valued convex functions} \label{section3.1}

The following statement is a general result about real-valued convex functions; it asserts that if two $\lambda$-Lipschitz continuous, convex and differentiable functions are close (in the $L^{\infty}$-norm), then their derivatives cannot be too distant from each other on a set of large Lebesgue measure. We first introduce the following set.

\begin{definition}
Fix $\lambda > 0$. For each $\lambda$-Lipschitz, convex function $g : \R \to \R$, and each parameter $r > 0$, we define the set
\begin{multline*}
    N_{\lambda, r}(g) := \left\{ g_1 : \R \to \R \, : \, g_1~\mbox{is convex,}~\lambda\mbox{-Lipschitz continuous, differentiable} \right. \\ \left. \mbox{and satisfies}~ \sup_{t \in \R}\left| g(t) - g_1(t) \right| \leq r \right\}.
\end{multline*}
\end{definition}

We then define the $\delta$-stability set of the function $g$ as follows.

\begin{definition}[$\delta$-stability set]
For each triplet of parameters $ \lambda, \delta, r > 0$, and each function $g : \R \to \R$ convex, $\lambda$-Lipschitz and differentiable, we define the set
\begin{equation*}
    \mathrm{Stab}(\lambda, \delta, r , g) := \left\{ t \in \R \, : \, \exists g_1 \in N_{\lambda, r}(g) ~\mbox{such that}~ \left| g_1'(t) - g'(t) \right| > \delta \right\}.
\end{equation*}
\end{definition}

The next proposition estimates the Lebesgue measure of the set $\mathrm{Stab}(\lambda, \delta, r , g)$.

\begin{proposition}[$\delta$-stability for $\lambda$-Lipschitz convex functions] \label{Propdeltastab}

There exists a constant $C \geq 1$ such that, for each $\lambda > 0$, each function $g : \R \to \R$ convex, $\lambda$-Lipschitz and differentiable, and each pair of parameters $r , \delta > 0$,
\begin{equation} \label{eq:1147}
    \Leb \left( \mathrm{Stab}(\lambda, \delta, r , g) \right) \leq  \frac{C \lambda r}{\delta^2}.
\end{equation}
\end{proposition}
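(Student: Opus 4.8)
The plan is to reduce the statement to a one-dimensional geometric fact about convex functions: if a convex, $\lambda$-Lipschitz, differentiable function $g$ is approximated to within $r$ in $L^\infty$ by another convex, $\lambda$-Lipschitz function $g_1$, then at most ``few'' points $t$ can have $|g_1'(t)-g'(t)|>\delta$. The key monotonicity input is that the derivative of a convex function is nondecreasing, so $g'$ and $g_1'$ are both nondecreasing functions bounded in absolute value by $\lambda$; hence each ranges over an interval of length at most $2\lambda$. First I would fix a point $t_0$ at which $g_1'(t_0)-g'(t_0)>\delta$ (the case $<-\delta$ is symmetric, costing only a factor of $2$), and exploit convexity to ``propagate'' this derivative gap into an $L^\infty$ gap between $g$ and $g_1$ that grows linearly as one moves away from $t_0$ in the appropriate direction, until it exceeds $r$ — forcing a contradiction beyond a window of length on the order of $r/\delta$.

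More precisely, suppose $g_1'(t_0)-g'(t_0) = \delta_0 > \delta$. Since $g_1'$ is nondecreasing, $g_1'(t)\ge g_1'(t_0)\ge g'(t_0)+\delta_0$ for all $t\ge t_0$; meanwhile $g'(t)$ could increase, but it is bounded above by $\lambda$. The cleanest route is to track the ``crossing'': because $g'\le\lambda$ everywhere and $g'(t_0)\le \lambda - \delta_0$, integrating gives, for $s>0$,
\begin{equation*}
  \bigl(g_1(t_0+s)-g_1(t_0)\bigr) - \bigl(g(t_0+s)-g(t_0)\bigr) = \int_{t_0}^{t_0+s}\bigl(g_1'(t)-g'(t)\bigr)\,dt.
\end{equation*}
On the interval where $g'$ has not yet caught up to $g_1'(t_0)$, the integrand stays at least $\delta_0/2$ over a sub-interval of length at least $\delta_0/(2\lambda)$ (since $g'$ rises by at most $\delta_0/2$ only after advancing a length $\ge \delta_0/(2\lambda)$, using $g'$ nondecreasing and $g'\le\lambda$ — actually one only needs $g'$'s total increase is $\le 2\lambda$, so a $\delta_0/2$ increase uses up proportionally little of the ``budget''). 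Summing such contributions, or simply iterating, the accumulated discrepancy between $g_1-g$ and its value at $t_0$ reaches size $>2r$ after moving a distance $O(\lambda r/\delta_0^2)$; since $|g-g_1|\le r$ at both endpoints, this is impossible once that distance is admitted. Therefore every ``bad direction interval'' associated with a given $g_1$ has length $\le C\lambda r/\delta^2$.

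The remaining point is the supremum over all admissible $g_1$ in the definition of $\mathrm{Stab}$. Here I would argue that a single extremal comparison suffices: the set of $t$ for which \emph{some} $g_1\in N_{\lambda,r}(g)$ has $g_1'(t)>g'(t)+\delta$ is itself an interval (or contained in one), because the pointwise infimum/supremum of the functions in $N_{\lambda,r}(g)$ are again convex $\lambda$-Lipschitz functions within distance $r$ of $g$, and by monotonicity of derivatives the set of points where $g'$ can be exceeded by $\delta$ is a half-line-type set intersected with where it can be undershot by $\delta$; more carefully, one takes $\bar g := \sup\{g_1 : g_1\in N_{\lambda,r}(g)\}$ and $\underline g:=\inf\{\dots\}$, checks these lie in (the closure of) $N_{\lambda,r}(g)$ up to a routine regularization, and bounds $\mathrm{Stab}$ by the union of the bad set for $\bar g$ and the bad set for $\underline g$, each of length $\le C\lambda r/\delta^2$ by the previous paragraph. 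Summing gives the claimed bound (with $\rho$ in the statement playing the role of $r$).

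The main obstacle I anticipate is the bookkeeping in the propagation argument — precisely quantifying how the linear-in-distance growth of the $L^\infty$ discrepancy interacts with the fact that $g'$ is only \emph{bounded} (not Lipschitz), so that ``$g'$ catches up to $g_1'(t_0)$'' must be controlled purely through the total-variation budget $2\lambda$ of the nondecreasing function $g'$. A convenient device to sidestep the case analysis is to note that, on the bad interval, $\int (g_1'-g') $ over a window of length $\ell$ is at least $\delta\ell - 2\lambda\ell'$ where $\ell'$ is the part of the window on which $g'$ has already overtaken $g_1'(t_0)$ — and $\ell'$ is irrelevant to the lower half of the window — yielding the $r/\delta$ scale after optimizing; turning this into a clean $C\lambda r/\delta^2$ Lebesgue bound (note the extra $\delta^{-1}$ comes from allowing $\delta_0$ anywhere in $(\delta,2\lambda)$ and covering) is the step that needs care.
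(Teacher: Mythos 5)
Your propagation step for a fixed comparison function $g_1$ is the right mechanism (it is essentially what the paper does), but the way you handle the supremum over $g_1$ is a genuine gap. In the definition of $\mathrm{Stab}(\lambda,\delta,r,g)$ the witness $g_1$ is allowed to depend on the point $t$, and your proposed reduction to the two envelopes $\bar g := \sup\{g_1 : g_1\in N_{\lambda,r}(g)\}$ and $\underline g := \inf\{g_1 : g_1\in N_{\lambda,r}(g)\}$ cannot work: since $g+r$ and $g-r$ themselves belong to $N_{\lambda,r}(g)$ and every member is squeezed between them, one has $\bar g = g+r$ and $\underline g = g-r$, whose derivatives are exactly $g'$, so their ``bad sets'' are empty --- yet $\mathrm{Stab}$ is in general nonempty. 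The underlying problem is that pointwise ordering of convex functions gives no pointwise control of their derivatives. The structural claims you lean on are also false: $\mathrm{Stab}$, and the one-sided set $\{t : \exists g_1\in N_{\lambda,r}(g),\ g_1'(t) > g'(t)+\delta\}$, need not be an interval or a half-line. For instance, take $g$ a smoothed piecewise-linear function with many widely separated kinks at which the slope increases by about $\delta$; near each kink one can shift the kink by about $r/\delta$ while staying within $r$ of $g$, so $\mathrm{Stab}$ is a union of up to order $\lambda/\delta$ well-separated intervals (which is consistent with, and shows sharpness of, the claimed bound $C\lambda r/\delta^2$).

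The paper sidesteps the union over $g_1$ entirely by making the per-point conclusion a statement about $g'$ alone: if $t\in\mathrm{Stab}$ with some witness $g_1$, then combining $\sup_{s}|g(s)-g_1(s)|\le r$ with convexity of $g_1$ and of $g$ yields $g'\!\left(t+\tfrac{4r}{\delta}\right)\ge g'(t)+\tfrac{\delta}{2}$ in the case $g_1'(t)-g'(t)>\delta$, and the mirror statement in the other case, hence $g'\!\left(t+\tfrac{4r}{\delta}\right)\ge g'\!\left(t-\tfrac{4r}{\delta}\right)+\tfrac{\delta}{2}$ for every $t\in\mathrm{Stab}$, regardless of which $g_1$ witnesses it. Since $g'$ is nondecreasing and bounded in absolute value by $\lambda$, at most $\lfloor 4\lambda/\delta\rfloor$ points of $\mathrm{Stab}$ can be pairwise separated by more than $8r/\delta$, so $\mathrm{Stab}$ is covered by that many intervals of length $16r/\delta$, giving the bound. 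If you recast your propagation so that the derivative increment lands on $g'$ (as above), rather than trying to bound, for each fixed $g_1$, the measure of its own bad set and then take a union over the uncountable family $N_{\lambda,r}(g)$, your argument closes. A smaller point: even for a fixed $g_1$, your write-up only bounds the length of a single maximal bad interval; bounding the total measure still requires the $2\lambda$-budget packing argument, which you mention but do not actually carry out.
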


\begin{center}
\begin{figure}
\includegraphics[scale=0.8]{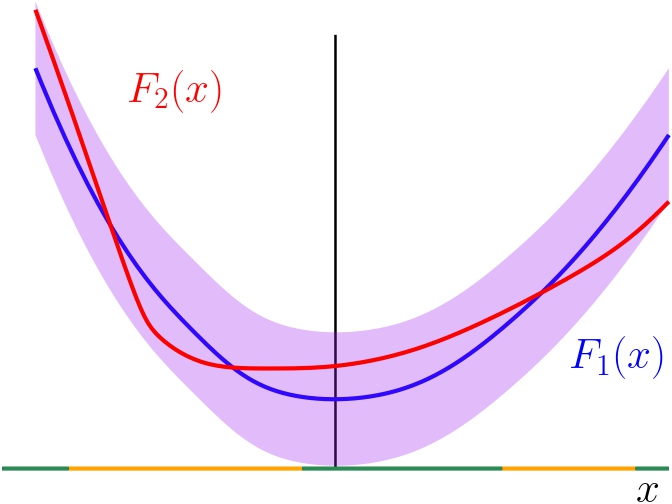}
\caption{The figure represents a $\lambda$-Lipshitz convex function $F_1$. The area in purple represents the surface where functions in the set $N_{\lambda,r}(F_1)$ must lie. An example of a function $F_2 \in N_{\lambda,r}(F_1)$ is drawn in red, and the set $\mathrm{Stab}(\lambda, \delta, r , g)$ is drawn in orange.}
\end{figure}
\end{center}

\begin{proof}
We fix three parameters $\lambda, \delta , r> 0$, a $\lambda$-Lipschitz, convex and differentiable function $g : \R \to \R$, and observe that if a point $t \in \R$ belongs to the set $\mathrm{Stab}(\lambda, \delta, r , g)$ then there exists a function $g_1 \in N_{\lambda, r}(g)$ (which may depend on the value of $t$), such that either:
\begin{enumerate}
    \item The inequality $g'_1(t) -  g'(t) > \delta$ holds;
    \item Or the inequality $ g'(t) -  g'_1(t) > \delta$ holds.
\end{enumerate}
Let us first assume that the inequality (1) is satisfied; we claim that it implies the estimate
\begin{equation} \label{eq:16331}
    g'\left(t + \frac{4r}{\delta} \right) \geq g'(t) + \frac{\delta}2.
\end{equation}
To prove~\eqref{eq:16331}, note that the assumption $\sup_{t \in \R} \left|g (t)- g_1(t)\right|  \leq r$ implies, for any $s\in \R$,
\begin{equation} \label{eq:17071}
g(s) - r \leq g_1(s) \leq g(s) + r.
\end{equation}
Using the inequality $ g_1'(t) -  g'(t) > \delta$ and the convexity of the map $g$, we see that, for any $s > t$,
\begin{equation}  \label{eq:17081}
g_1(s) \geq g_1(t) + g_1'(t) (s -t) >  g_1(t) + \left(g'(t) + \delta\right)  (s-t) >  g(t) - r + \left(g'(t) + \delta\right) (s-t).
\end{equation}
A combination of the estimates~\eqref{eq:17071} and~\eqref{eq:17081} yields
\begin{equation*}
\frac{g(s) - g(t)}{s-t} > g'(t) + \delta - \frac{2r}{s-t}.
\end{equation*}
Choosing the value $s = t + 4r/\delta$ in the previous inequality and using the convexity of $g$ shows
\begin{equation*}
    g'\left(t + \frac{4r}{\delta}\right) \geq \frac{g\left(t + \frac{4r}{\delta}\right) - g(t)}{4r/\delta} > g'(t) + \delta - \frac{\delta}{2} \geq g'(t) + \frac{\delta}{2}.
\end{equation*}
The proof of the claim~\eqref{eq:16331} is complete. In the case when the inequality (2) is satisfied, a similar argument yields the estimate
    \begin{equation} \label{eq:163312}
    g' \left(t - \frac{4r}{\delta}\right) \leq g'(t) - \frac{\delta}2.
\end{equation}
A combination of~\eqref{eq:16331} and~\eqref{eq:163312}, and the assumption that $g$ is convex (which implies that its derivative is increasing) shows that, for any point $t \in \mathrm{
Stab}(\lambda, \delta, r , g)$,
\begin{equation} \label{eq:10055}
    g' \left(t + \frac{4r}{\delta} \right) \geq g' \left(t - \frac{4r}{\delta}\right) + \frac{\delta}{2}.
\end{equation}
Using that the map $g$ is convex and $\lambda$-Lipschitz, we see that, for any triplet of real numbers $t_- , t , t_+ \in \R$ satisfying $t_- < t < t_+$,
\begin{equation} \label{eq:10056}
    -\lambda \leq g'(t_-) \leq  g'(t)  \leq g'(t_+) \leq \lambda.
\end{equation}
The estimates~\eqref{eq:10055} and~\eqref{eq:10056} imply that there cannot exist a family $t_1 , \ldots, t_{\lfloor \frac{4\lambda}{\delta} \rfloor +1}$ of $\left(\lfloor \frac{4 \lambda}{\delta} \rfloor + 1 \right)$-points satisfying the following properties:
\begin{enumerate}
\item For any pair of distinct integers $i,j \in \left\{1 , \ldots, \lfloor 4\lambda /\delta \rfloor +1 \right\}$, one has $|t_i - t_j| > \frac{8r}{\delta}$;
\item For any integer $i \in \left\{1 , \ldots, \lfloor 4\lambda/\delta  \rfloor +1 \right\}$, the point $t_i$ belongs to the set $\mathrm{Stab}(\lambda, \delta, r, g)$.
\end{enumerate}
This property implies that the set $\mathrm{Stab}(\lambda, \delta, r, g)$ is included in the union of (at most) $\lfloor \frac{4\lambda}{\delta} \rfloor$ intervals of length $16r/\delta$ which implies the upper bound
\begin{equation*}
    \Leb \left( \mathrm{Stab}(\lambda,\delta, r, g) \right) \leq  \frac{C \lambda r}{\delta^2}.
\end{equation*}
This is~\eqref{eq:1147}. The proof of Proposition~\ref{Propdeltastab} is complete.
\end{proof}

Proposition~\ref{Propdeltastab} implies a lower bound on the Gaussian measure of the set $\mathrm{Stab}(\lambda,  \delta, r, g)$.

\begin{corollary} \label{coroGaussmea}
There exists a constant $C \geq 1$ such that for each $\lambda > 0$, each $\lambda$-Lipschitz, convex and differentiable function $g : \R \to \R$, each pair of parameters $\delta ,r > 0$, and each variance $\sigma^2> 0$ such that $r \geq \sigma \delta^2/\lambda$,
\begin{equation*}
    \frac{1}{\sqrt{2 \pi \sigma^2}}\int_{\mathrm{Stab}(\lambda, \delta, r, g)} e^{-\frac{t^2}{2\sigma^2}} \, dt \leq  1 - e^{- C \frac{\lambda^2 r^2}{\sigma^2 \delta^4}}.
\end{equation*}
\end{corollary}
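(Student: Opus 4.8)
The plan is to...The plan is to deduce the corollary purely from the Lebesgue-measure estimate of Proposition~\ref{Propdeltastab}, combined with two elementary facts: that among sets of a prescribed Lebesgue measure a centred interval carries the largest Gaussian mass (the bathtub/rearrangement inequality), and a crude but explicit lower bound on the standard Gaussian tail. No further structural information about $g$ is needed beyond what Proposition~\ref{Propdeltastab} already provides.

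First I would extract, from the proof of Proposition~\ref{Propdeltastab}, the stronger geometric conclusion established there: $\mathrm{Stab}(\lambda,\delta,r,g)$ is contained in a union $A$ of at most $\lfloor 4\lambda/\delta\rfloor$ intervals, each of length $16r/\delta$, so that $\Leb(A)\le \ell$ with $\ell := C\lambda r/\delta^2$. Hence $\frac{1}{\sqrt{2\pi\sigma^2}}\int_{\mathrm{Stab}(\lambda,\delta,r,g)} e^{-t^2/(2\sigma^2)}\,dt \le \frac{1}{\sqrt{2\pi\sigma^2}}\int_{A} e^{-t^2/(2\sigma^2)}\,dt$, and by the rearrangement inequality the last quantity is at most $\frac{1}{\sqrt{2\pi\sigma^2}}\int_{-\ell/2}^{\ell/2} e^{-t^2/(2\sigma^2)}\,dt = \P\bigl(|Z|\le \tfrac{\ell}{2\sigma}\bigr)$, where $Z$ denotes a standard Gaussian variable. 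Writing $a:=\ell/(2\sigma)=C\lambda r/(2\sigma\delta^2)$, it then remains to show $\P(|Z|>a)\ge \exp\bigl(-C\lambda^2 r^2/(\sigma^2\delta^4)\bigr)$. For the tail bound I would use $\P(Z>a)\ge \int_a^{a+1}\tfrac{1}{\sqrt{2\pi}}e^{-t^2/2}\,dt\ge \tfrac{1}{\sqrt{2\pi}}e^{-(a+1)^2/2}$ together with the elementary inequality $(a+1)^2/2\le a^2+1$ (equivalent to $(a-1)^2\ge 0$), which yields $\P(|Z|>a)\ge 2c\,e^{-a^2}$ for a numerical constant $c>0$; substituting the value of $a$ turns this into $\P(|Z|>a)\ge 2c\exp\bigl(-C'\lambda^2 r^2/(\sigma^2\delta^4)\bigr)$.

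The only genuinely delicate point — and the reason the hypothesis $r\ge \sigma\delta^2/\lambda$ appears — is to absorb the prefactor $2c$ into the exponent so as to obtain the clean bound $1-\exp(-C\lambda^2 r^2/(\sigma^2\delta^4))$ rather than $1-2c\exp(\cdots)$. The hypothesis gives $\lambda^2 r^2/(\sigma^2\delta^4)\ge 1$, so for a sufficiently large numerical constant $C$ one has $2c\exp(-C'\lambda^2 r^2/(\sigma^2\delta^4))\ge \exp(-C\lambda^2 r^2/(\sigma^2\delta^4))$, using that $2c=\tfrac{2}{e\sqrt{2\pi}}<1$, and the claimed inequality follows. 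A minor secondary issue, the possible non-measurability of $\mathrm{Stab}(\lambda,\delta,r,g)$, causes no trouble: every inequality above only uses that this set sits inside the explicit finite union of intervals $A$, so the bound is valid for the outer integral as well.
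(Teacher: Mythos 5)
Your proposal is correct and follows essentially the same route as the paper: both pass from the Lebesgue-measure bound of Proposition~\ref{Propdeltastab} to the Gaussian bound via the fact that a centred interval maximizes Gaussian mass among sets of prescribed Lebesgue measure, and then use the hypothesis $r \geq \sigma\delta^2/\lambda$ (i.e.\ that the interval length is at least of order $\sigma$) to get the bound $1-e^{-C\lambda^2 r^2/(\sigma^2\delta^4)}$. The only difference is cosmetic: you spell out the Gaussian tail lower bound and the absorption of the constant prefactor, which the paper subsumes into its unproved inequality~\eqref{eq:1715}.
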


\begin{proof}
The proof is a consequence of the following inequality: there exists a constant~$C>0$ such that, for any $ \sigma^2 \in (0,1)$, any real number $\alpha \geq \sigma$,
   \begin{equation} \label{eq:1715}
       \sup_{A \subseteq \R,\, \Leb \left( A \right)  \leq \alpha}  \frac{1}{\sqrt{2 \pi \sigma^2}} \int_{A} e^{-\frac{t^2}{2 \sigma^2}} \, dt = \frac{1}{\sqrt{2 \pi \sigma^2}} \int_{-\frac \alpha 2}^{\frac \alpha2} e^{-\frac{t^2}{2 \sigma^2}} \, dt \leq  1 - e^{-  C\frac{ \alpha^2}{\sigma^2}}.
   \end{equation}
Corollary~\ref{coroGaussmea} is then obtained by combining the inequality~\eqref{eq:1715} with the value $\alpha = C \lambda r / \delta^2$ and the estimate~\eqref{eq:1147}.
\end{proof}

\subsection{A lower bound for the effect for the boundary condition on the averaged magnetization} \label{section3.2LB}

In this section, we apply the result of Proposition~\ref{Propdeltastab} to the free energy associated with a discrete spin system and obtain the inequality~\eqref{eq:155625}. As it will be useful in the rest of the proof of Theorem~\ref{thm1prop2.31708main}, we establish the result both in the case of boxes (see~\eqref{eq:1707bis}) and annuli (see~\eqref{eq:1707ter}). The lower bound is stated in the following lemma, and we recall the notations $\fluc_{\Lambda}(\eta)$, $\fluc_{\Lambda}^{ \tau_0}(\eta)$, $\fluc_{\Lambda,i}(\eta)$ and $\fluc_{\Lambda,i}^{\tau_0}(\eta)$ introduced in Section~\ref{NotaFluc}, as well as the notation~$R$ for the range of the observables $(f_v)$.

\begin{lemma} \label{L.lemma2.5}
Fix $d = 2$, $\beta >0$, $\lambda >0$, and $L \geq 2$. There exists a positive constant $C \geq 1$ depending on the parameters $\lambda$, $C_H$, $m$ and $R$ such that, for any $\delta >0$,
\begin{equation} \label{eq:1707bis}
        \P \left( \fluc_{\Lambda_L}(\eta) < \delta + \frac{C}{L} \right) > \exp \left( - \frac{C}{\delta^4} \right).
    \end{equation}
Additionally, for each vertex $v \in \Lambda_L$ and each integer $L' \leq \frac{L}2$ such that $v + \Lambda_{L'} \subseteq \Lambda_L$,
\begin{equation} \label{eq:1707ter}
        \P \left( \fluc_{\Lambda_L \setminus (v + \Lambda_{L'})}(\eta) < \delta + \frac{C}{L} \right) > \exp \left( - \frac{C}{\delta^4} \right).
    \end{equation}
\end{lemma}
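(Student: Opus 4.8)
The plan is to apply the $\delta$-stability machinery of Section~\ref{section3.1} to the finite-volume free energy, regarded as a function of the spatially averaged field. I give the argument for the box $\Lambda_L$ (the estimate~\eqref{eq:1707bis}); the annulus case~\eqref{eq:1707ter} is obtained in exactly the same way, the only geometric inputs used being that, in $d=2$, the region $U$ in question satisfies $|\partial U|/|U|\le C/L$ (true for $U=\Lambda_L$, and for $U=\Lambda_L\setminus(v+\Lambda_{L'})$ with $L'\le L/2$, where both the outer and inner boundaries have size $O(L)$ and $|U|\ge cL^2$) and that $\hat\eta_U$ is a centred Gaussian of variance $|U|^{-1}\asymp L^{-2}$. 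Following the reduction indicated in Section~\ref{subsecoutkinemain}, I first treat $m=1$; the general case follows by applying the one-dimensional statement to each coordinate of $\hat\eta_{\Lambda_L}$, using $\fluc_{\Lambda_L}\le\sum_i\fluc_{\Lambda_L,i}$ and $\fluc_{\Lambda_L}\le 2\fluc^{\tau_0}_{\Lambda_L}$ from~\eqref{eq:11250211}, the independence of the coordinates of the Gaussian field, and a conditioning argument over the coordinates. It thus suffices, for a fixed reference boundary condition $\tau_0$, to lower bound the probability that $\fluc^{\tau_0}_{\Lambda_L}(\eta)$ is small.

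Writing $\eta_{\Lambda_L}=(\hat\eta_{\Lambda_L},\eta^\perp_{\Lambda_L})$, set $g_0(t):=-\FE^{\tau_0}_{\Lambda_L}(t,\eta^\perp_{\Lambda_L})$ and $g^\tau(t):=-\FE^{\tau}_{\Lambda_L}(t,\eta^\perp_{\Lambda_L})$ for each $\tau$. By Proposition~\ref{prop.basicpropfreeen}, each of these functions is convex, $\lambda$-Lipschitz and differentiable, with $(g^\tau)'(t)=\frac{\lambda}{|\Lambda_L|}\sum_{v\in\Lambda_L}\langle f_v(\sigma)\rangle^\tau_{\Lambda_L}$ evaluated at the field $(t,\eta^\perp_{\Lambda_L})$, so that $\lambda\,\fluc^{\tau_0}_{\Lambda_L}(\eta)=\sup_\tau|(g^\tau)'(\hat\eta_{\Lambda_L})-g_0'(\hat\eta_{\Lambda_L})|$. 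The finite-energy estimate~\eqref{eq:12030211} gives, for every $\tau$ and every $t$,
\begin{equation*}
    |g^\tau(t)-g_0(t)|\le C\frac{|\partial\Lambda_L|}{|\Lambda_L|}+\frac{C\lambda}{|\Lambda_L|}\sum_{\substack{v\in\Lambda_L\\ \dist(v,\partial\Lambda_L)\le R}}\bigl|t+\eta^\perp_{\Lambda_L,v}\bigr|.
\end{equation*}
In $d=2$ one has $|\partial\Lambda_L|/|\Lambda_L|\le C/L$ and $|\{v\in\Lambda_L:\dist(v,\partial\Lambda_L)\le R\}|\le CL$, so on the event $\mathcal G_1:=\{\frac{1}{|\Lambda_L|}\sum_{v:\dist(v,\partial\Lambda_L)\le R}|\eta^\perp_{\Lambda_L,v}|\le C/L\}$ this bound is at most $r:=C_1(1+\lambda)/L$, uniformly for $|t|\le M$, provided $M$ is at most a constant (depending on $\lambda$) times $1$. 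The event $\mathcal G_1$ is $\eta^\perp_{\Lambda_L}$-measurable and, by a routine concentration estimate for a sum of $O(L)$ half-normal variables divided by $|\Lambda_L|\asymp L^2$, has probability bounded below by a positive constant (after enlarging $C_1$). This is the step where the $d=2$ geometry enters: the tube half-width $r$ is of order $1/L$, the same order as the standard deviation $\sigma:=|\Lambda_L|^{-1/2}$ of $\hat\eta_{\Lambda_L}$.

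Now I use the $\delta$-stability estimate. Inspection of the proof of Proposition~\ref{Propdeltastab} shows that its conclusion applies to any competitor function which is convex, $\lambda$-Lipschitz, differentiable, and within $r$ of $g_0$ merely on the interval $[t-4r/\delta,t+4r/\delta]$ around the evaluation point $t$; taking $M$ a bit larger than $4r/\delta$ (permissible, for $L$ larger than a threshold depending on $\lambda$ and $\delta$, since then $4r/\delta$ is small), it follows that on the event $\mathcal G_1\cap\{|\hat\eta_{\Lambda_L}|\le M-4r/\delta\}\cap\{\hat\eta_{\Lambda_L}\notin\mathrm{Stab}(\lambda,\delta,r,g_0)\}$ we get $|(g^\tau)'(\hat\eta_{\Lambda_L})-g_0'(\hat\eta_{\Lambda_L})|\le\delta$ for every $\tau$, hence $\fluc_{\Lambda_L}(\eta)\le 2\fluc^{\tau_0}_{\Lambda_L}(\eta)\le 2\delta/\lambda$. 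It remains to bound the probability from below. Conditionally on $\eta^\perp_{\Lambda_L}$, the function $g_0$ and the set $\mathrm{Stab}(\lambda,\delta,r,g_0)$ are fixed and $\hat\eta_{\Lambda_L}$ is a centred Gaussian of variance $\sigma^2=|\Lambda_L|^{-1}$; since $r\asymp\sigma$ we may arrange $r\ge\sigma\delta^2/\lambda$ (enlarging $C_1$, assuming $\delta\le 2$, the case $\delta>2$ being trivial as $\fluc_{\Lambda_L}\le 2$), and Corollary~\ref{coroGaussmea} yields
\begin{equation*}
    \P\bigl(\hat\eta_{\Lambda_L}\notin\mathrm{Stab}(\lambda,\delta,r,g_0)\;\big|\;\eta^\perp_{\Lambda_L}\bigr)\ \ge\ e^{-C\lambda^2 r^2/(\sigma^2\delta^4)}\ \ge\ e^{-C/\delta^4},
\end{equation*}
the last step using $r^2/\sigma^2\le C$ to absorb the $\lambda$-dependence into $C$. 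Combining this with the positive-constant lower bounds for $\P(\mathcal G_1)$ and (conditionally on $\eta^\perp_{\Lambda_L}$) for $\P(|\hat\eta_{\Lambda_L}|\le M-4r/\delta\mid\eta^\perp_{\Lambda_L})$, and taking expectation over $\eta^\perp_{\Lambda_L}$, gives $\P(\fluc_{\Lambda_L}(\eta)\le 2\delta/\lambda)\ge c\,e^{-C/\delta^4}\ge e^{-C/\delta^4}$; a rescaling of $\delta$ then yields~\eqref{eq:1707bis}, and for $L$ below the threshold one has $\delta+C/L\ge 2\ge\fluc_{\Lambda_L}$ after enlarging $C$, so the statement is trivial there. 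I expect the main obstacle to be precisely the windowing in this last paragraph: the competitor free energies $g^\tau$ are only $\sup$-close to $g_0$ on a bounded window (the difference grows linearly in the averaged field, because changing the boundary condition perturbs the field-dependent part of the Hamiltonian near $\partial\Lambda_L$), so one must either pass to the localized form of Proposition~\ref{Propdeltastab} as above or truncate the $g^\tau$ outside the window; and one must carefully track that the tube half-width $r\asymp 1/L$, the window size, and the Gaussian fluctuation scale $\sigma\asymp 1/L$ all match — which is exactly the borderline feature of two dimensions that makes the constant in~\eqref{eq:1707bis} independent of $L$.
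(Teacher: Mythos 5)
Your overall strategy is the paper's: regard the free energy as a convex, $\lambda$-Lipschitz function of an averaged field, feed the finite-energy bound~\eqref{eq:12030211} into the $\delta$-stability estimate (Proposition~\ref{Propdeltastab}) and the Gaussian bound (Corollary~\ref{coroGaussmea}), and use that in $d=2$ the fluctuation scale of the averaged field matches $1/L$. The difference is the choice of averaged variable. The paper averages only over the shrunken box, i.e.\ works with $\hat\eta_{w,\Lambda_L}$ for $w=\indc_{\Lambda_{L-R}}$ and the derivative formula~\eqref{eq:17510211}, so that the boundary-layer variables $\eta_v$, $v\in\Lambda_L\setminus\Lambda_{L-R}$ --- which are exactly what enters the radius $r$ from~\eqref{eq:12030211} --- are part of the conditioning field and independent of the variable being varied. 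Then every competitor free energy lies in $N_{\lambda,r}(g)$ \emph{globally}, Proposition~\ref{Propdeltastab} and Corollary~\ref{coroGaussmea} apply conditionally for every $\delta\le 2$, and the only price is replacing $\fluc$ by $\bar\fluc$, which is where the $C/L$ in the statement comes from. You instead use the full average $\hat\eta_{\Lambda_L}$, so $r$ grows linearly in the variable, and you compensate with a localized reading of Proposition~\ref{Propdeltastab} (which is indeed legitimate: its proof only uses closeness at $t$ and $t\pm 4r/\delta$) together with a restriction to a window of constant size $M$.

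The gap is in how the window is combined with the stability event, and in the range of $\delta$ covered. Conditionally on $\eta^\perp_{\Lambda_L}$, the events $\{\hat\eta_{\Lambda_L}\notin\mathrm{Stab}\}$ and $\{|\hat\eta_{\Lambda_L}|\le M-4r/\delta\}$ concern the same Gaussian variable, so their lower bounds cannot simply be multiplied; the available bound is $\P(A\cap B)\ge \P(A)-\P(B^{c})\ge e^{-C/\delta^4}-e^{-cM^2L^2}$, which is of order $e^{-C/\delta^4}$ only when $\delta\gtrsim L^{-1/2}$. This restriction is not cosmetic: $\Leb(\mathrm{Stab})\le C\lambda r/\delta^2\asymp 1/(L\delta^2)$ exceeds the $O(1)$ window as soon as $\delta\ll L^{-1/2}$, so in that regime the stability set can cover the entire region where your closeness bound is valid, and the good event may carry no mass there at all. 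Moreover your fallback for the excluded regime fails: the threshold you need is $L\gtrsim 1/\delta$ (so that $4r/\delta\le M/2$), and for large $L$ with $\delta\lesssim 1/L$ one has $\delta+C/L\ll 2$, so the statement is not trivial there, nor can it be recovered by monotonicity, since shrinking $\delta$ shrinks the event. Hence the lemma as stated (``for any $\delta>0$'') is not established for $\delta\lesssim L^{-1/2}$; this regime is irrelevant for the later Mandelbrot argument (where $\delta\asymp(\ln\ln L)^{-1/4}$), but it is part of the claim. The clean repair is the paper's device: average over $\Lambda_{L-R}$ only, so that $r$ is measurable with respect to the conditioning field, no window is needed, and the bound holds for all $\delta\le 2$, the case $\delta>2$ being trivial since $\fluc_{\Lambda_L}\le 2$.
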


\begin{remark}\label{Remark uniform beta}
Before giving the proof, we mention that, under the additional assumption that the Hamiltonians $(H_\Lambda)$ satisfy the upper bounds $\left| H_\Lambda \right| \leq C \left| \Lambda \right|$ (as is the case for the models presented in Section~\ref{sec:examples}), the free energy is a locally Lipschitz continuous function of the inverse temperature $\beta$, and thus, a small perturbation of the parameter $\beta$ only slightly modifies the free energy. Consequently, the argument given below (which relies on the notion of $\delta$-stability introduced in Section~\ref{section3.1}) can be extended so that the result of Lemma~\ref{L.lemma2.5} holds uniformly over the parameter $\beta$ when this quantity belongs to a small open set.
\end{remark}

\begin{proof}
 The argument relies on Proposition~\ref{Propdeltastab} and Corollary~\ref{coroGaussmea}. Using the upper bound $\fluc_{\Lambda_L}(\eta) \leq 2$, we may assume that $\delta \leq 2$ without loss of generality. First, by the inequalities~\eqref{eq:11250211}, we see that, to prove the inequality~\eqref{eq:1707bis}, it is sufficient to show, for any $i \in \{ 1 , \ldots, m \}$ and any fixed boundary condition $\tau_0 \in \S^{\Z^2 \setminus \Lambda_L}$,
 \begin{equation} \label{eq:11160211}
     \P \left( \fluc_{\Lambda_L,i}^{\tau_0}(\eta) < \delta + \frac{C}{L} \right) > \exp \left( - \frac{C}{\delta^4} \right).
 \end{equation}
 We now fix an integer $i \in \{1 , \ldots,m \}$ and prove~\eqref{eq:11160211}. We assume, without loss of generality, that $L \geq 2 R$.

We introduce the notation
\begin{equation}
    \bar \fluc_{\Lambda_L,i}^{\tau_0}(\eta) := \sup_{\tau \in \S^{\Z^2 \setminus \Lambda_L}} \left| \frac{1}{|\Lambda_L|}  \sum_{v \in \Lambda_{\left(L - R \right)} } \left\langle f_{v,i} \left( \sigma \right) \right\rangle_{\Lambda_L}^{\tau}  -  \left\langle f_{v,i} \left( \sigma \right) \right\rangle_{\Lambda_L}^{\tau_0} \right|,
\end{equation}
and make a few observations pertaining to this quantity. First, we have the volume estimates
\begin{equation} \label{eq:15141011}
\left| \Lambda_{\left(L - R \right)} \right| \geq c \left| \Lambda_{L} \right| \hspace{5mm} \mbox{and} \hspace{5mm} \left| \Lambda_L \setminus \Lambda_{\left(L - R \right)} \right| \leq C \left| \Lambda_L\right| / L.
\end{equation}
Additionally, if we decompose the field $\eta_{\Lambda_L}$ according to $\eta_{\Lambda_L} := \left( \hat \eta_{w,\Lambda_L}, \eta_{w , \Lambda_L }^\perp \right)$, where $w$ is the weight function given by $w := \indc_{\Lambda_{(L - R)}}$, then we have
\begin{equation} \label{eq:15261011}
     \frac{\partial\FE^{\tau_0}_{\Lambda_L} }{\partial \hat \eta_{w , \Lambda_L , i}} (\eta) =  -\frac{\lambda}{|\Lambda_L|}  \sum_{v \in \Lambda_{\left(L - R \right)} } \left\langle f_{v,i} \left( \sigma \right) \right\rangle_{\Lambda_L}^{\tau_0},
\end{equation}
and the random variable $\hat \eta_{w,\Lambda_L}$ is independent of the realization of the random field $\eta$ in the boundary layer $\Lambda_{L} \setminus \Lambda_{\left(L- R\right)}$. Moreover, using that the maps $(f_v)$ are bounded by $1$, we have
\begin{equation} \label{eq:15191011}
    \left| \fluc_{\Lambda_L,i}^{\tau_0}(\eta) - \bar \fluc_{\Lambda_L,i}^{\tau_0}(\eta) \right| \leq \frac{C}{L}.
\end{equation}
We then claim that the inequality~\eqref{eq:1707bis} is implied by the conditional inequality
    \begin{equation} \label{eq:17072}
        \P \left( \bar \fluc_{\Lambda_L,i}^{\tau_0}  < \delta ~ \Big\vert ~ \left(\hat \eta_{w ,\Lambda_L, j} \right)_{j \neq i}, \eta^\perp_{w ,\Lambda_L} \right) > \exp \left( - \frac{C}{\delta^4} \left(  1 + \frac{L}{\left| \Lambda_L\right|} \sum_{v \in \Lambda_{L} \setminus \Lambda_{\left(L - R \right)}} \left|\eta_v \right| \right)^2\right) \hspace{3mm} \P-\mbox{a.s.}
    \end{equation}
Indeed, taking the expectation in~\eqref{eq:17072} shows
\begin{align} \label{eq:15181011}
    \P \left( \bar \fluc_{\Lambda_L,i}^{\tau_0}(\eta)  < \delta \right) & = \E \left[\P \left( \bar \fluc_{\Lambda_L,i}^{\tau_0}  < \delta ~ \Big\vert ~ \left(\hat \eta_{w,\Lambda_L, j} \right)_{j \neq i}, \eta^\perp_{w,\Lambda_L} \right) \right] \\
    & >  \E \left[ \exp \left( - \frac{C}{\delta^4} \left(  1 + \frac{L}{\left| \Lambda_L\right|} \sum_{v \in \Lambda_{L} \setminus \Lambda_{\left(L - R \right)}} \left|\eta_v \right| \right)^2\right) \right] \notag \\
    & > \exp \left( - \frac{C}{\delta^4} \right). \notag
\end{align}
where we used that the random variables $\left(\eta_v \right)_{v \in \Lambda_{L} \setminus \Lambda_{\left(L - R \right)}}$ are Gaussian, independent, and the second volume estimate stated in~\eqref{eq:15141011}. Combining~\eqref{eq:15181011} with the pointwise bound~\eqref{eq:15191011} completes the proof of~\eqref{eq:1707bis}.

We now focus on the proof of~\eqref{eq:17072}.
To this end, let us fix a realization of the averaged fields $\left(\hat \eta_{w,\Lambda_L, j} \right)_{j \neq i}$ and of the orthogonal field $\eta^\perp_{w,\Lambda_L}$. We define
\begin{equation} \label{eq:15581011}
    g : \hat \eta_{w,\Lambda_L, i} \mapsto -\FE^{ \tau_0}_{\Lambda_L} \left( \hat \eta_{w,\Lambda_L,i}, \left(\hat \eta_{w,\Lambda_L, j} \right)_{j \neq i} , \eta^{\perp}_{w,\Lambda_L} \right) \hspace{3mm} \mbox{and} \hspace{3mm} r : = \frac{C \left| \partial \Lambda_L \right|}{\left| \Lambda_L \right|} + \frac{C \lambda}{\left| \Lambda_L\right|} \sum_{v \in \Lambda_{L} \setminus \Lambda_{\left(L - R \right)}} \left| \eta_v \right|,
\end{equation}
where $C$ is the constant appearing the right-hand side of~\eqref{eq:12030211}. We note that, by Proposition~\ref{prop.basicpropfreeen}, the function $g$ is convex, $\lambda$-Lipschitz, differentiable and its derivative is given by the formula~\eqref{eq:15261011}.

\smallskip

The estimate~\eqref{eq:12030211} can be rewritten as follows: for any boundary condition $\tau \in \S^{\Z^2 \setminus \Lambda_L}$, the map $\hat \eta_{w,\Lambda_L, i} \mapsto -\FE^{\tau}_{\Lambda_L} \left( \hat \eta_{w,\Lambda_L, i}, \left(\hat \eta_{w,\Lambda_L, j} \right)_{j \neq i} , \eta^{\perp}_{w,\Lambda_L} \right)$ belongs to the space $N_{\lambda, r}(g).$

\smallskip

The previous observation combined with the identity~\eqref{eq:17510211} yields the inclusion of sets
\begin{equation*}
    \left\{ \hat \eta_{\Lambda_L, w,i} \in \R \, : \, \bar \fluc_{\Lambda_L,i}^{\tau_0}( \hat \eta_{w,\Lambda_L, i}, \left(\hat \eta_{w,\Lambda_L, j} \right)_{j \neq i} , \eta^{\perp}_{w , \Lambda_L}  ) > \delta \right\} \subseteq \mathrm{Stab}(\lambda, \lambda \delta, r , g).
\end{equation*}
Applying Proposition~\ref{Propdeltastab} and the formula for the parameter $r$ stated in~\eqref{eq:15581011}, we deduce
\begin{align} \label{eq:1139}
    \lefteqn{\Leb \left(\left\{ \hat \eta_{w,\Lambda_L, i} \in \R \, : \, \bar \fluc_{\Lambda_L,i}^{\tau_0}( \hat \eta_{w,\Lambda_L, i}, \left(\hat \eta_{w,\Lambda_L, j} \right)_{j \neq i} , \eta^{\perp}_{w,\Lambda_L}  ) \geq \delta \right\} \right)} \qquad \qquad\qquad\qquad & \\
    & \leq \frac{C \left| \partial \Lambda_L \right|}{ \delta^2 \left| \Lambda_L \right|} + \frac{C}{ \delta^2 \left| \Lambda_L\right|} \sum_{v \in \Lambda_{L} \setminus \Lambda_{\left(L - R \right)}} \left| \eta_v \right| \notag \\
    & \leq  \frac{C }{ \delta^2  L} + \frac{C}{ \delta^2 \left| \Lambda_L\right|} \sum_{v \in \Lambda_{L} \setminus \Lambda_{\left(L - R \right)}} \left| \eta_v \right|, \notag
\end{align}
where we used the upper bound $\left| \partial \Lambda_L \right|/\left| \Lambda_L \right| \leq C/L $ in the second inequality.
Using that the random variable ${\hat\eta_{w,\Lambda_L,i}}$ is Gaussian of variance $\left| \Lambda_{\left(L - R\right) } \right|^{-1} \geq c L^{-2}$, that the random variables ${\hat\eta_{w,\Lambda_L,i}}$, $\left( \hat\eta_{w,\Lambda_L,j} \right)_{j \neq i}$ and $\eta^{\perp}_{w,\Lambda_L}$ are independent, and Corollary~\ref{coroGaussmea}, we obtain that the inequality~\eqref{eq:1139} implies the estimate~\eqref{eq:17072}.

The proof of the inequality~\eqref{eq:1707ter} only requires a notational modification of the previous argument, we thus omit the details.
\end{proof}

\subsection{Mandelbrot percolation argument} \label{sectionfirstmandelbrotperc}
In this section, we combine the result obtained in Lemma~\ref{L.lemma2.5} with a Mandelbrot percolation argument to obtain the quantitative estimate~\eqref{eq:08.15}. The result is stated in the following lemma.

\begin{lemma} \label{prop2.31708}
Fix $d = 2$, $\beta > 0$, $\lambda > 0$ and $L \geq 3$. There exist two positive constants $c \in (0,1)$ and $C\in (1 , \infty)$ such that
\begin{equation} \label{eq:18031}
     \P \left( \fluc_{\Lambda_L}(\eta) < \frac{C}{\sqrt[4]{\ln \ln L}}  \right) \geq  1 - \exp \left( - c \sqrt{\ln L} \right).
    \end{equation}
\end{lemma}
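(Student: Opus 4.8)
The plan is to bootstrap the single-scale lower bound~\eqref{eq:1707bis} of Lemma~\ref{L.lemma2.5} into a high-probability upper bound on $\fluc_{\Lambda_L}(\eta)$ via a Mandelbrot-type percolation scheme combined with the domain subadditivity property of Proposition~\ref{propdommarkov}. Fix a base scale, say a large integer $K$ (to be chosen as a power of $\ln\ln L$ at the end), and consider the $\ln L / \ln K$ levels obtained by iteratively subdividing $\Lambda_L$ into subboxes of side length scaled down by a factor of roughly $K$ at each level. A subbox $\Lambda$ of side length $\ell$ at some level is declared \emph{good} if $\fluc_{\Lambda}(\eta) < \delta + C/\ell$ for the target value $\delta$; by~\eqref{eq:1707bis} this happens with probability at least $\exp(-C/\delta^4) =: p$, and — crucially in two dimensions — this probability does not depend on the scale $\ell$. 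Moreover, since $\fluc_{\Lambda}(\eta)$ depends only on the random field inside $\Lambda$ (as noted after~\eqref{NotaFlucperi}), the goodness events of disjoint boxes are independent. I would therefore run a branching/percolation argument: each box that is not good is subdivided into $\sim K^2$ children one level down, and I track whether the process of "bad" boxes survives down to the finest scale. With survival probability $(1-p)$ at each node and $K^2$ independent children, for $K$ fixed the bad cluster is subcritical-to-supercritical depending on $K^2(1-p)$ versus $1$; the point is to choose the number of levels large enough that, with probability $1-\exp(-c\sqrt{\ln L})$, the collection $\Q$ of maximal good boxes covers all but a negligible fraction of $\Lambda_L$ (as in Figure~\ref{partitionempty}).

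Once such a covering $\Q = \{\Lambda_1', \ldots, \Lambda_N'\}$ of $\Lambda_L$ by disjoint good boxes, together with a small leftover set $\Lambda_L \setminus \bigcup_j \Lambda_j'$ of density at most some small $\epsilon$, is in hand, I invoke the domain subadditivity inequality
\[
    \fluc_{\Lambda_L}(\eta) \le \sum_{j=1}^N \frac{|\Lambda_j'|}{|\Lambda_L|}\, \fluc_{\Lambda_j'}(\eta) \;+\; \frac{|\Lambda_L \setminus \bigcup_j \Lambda_j'|}{|\Lambda_L|}\cdot 2,
\]
using that $\fluc \le 2$ always on the leftover piece and that each $\Lambda_j'$ is good, so $\fluc_{\Lambda_j'}(\eta) < \delta + C/\ell_{\min}$ where $\ell_{\min}$ is the finest scale. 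This yields $\fluc_{\Lambda_L}(\eta) \lesssim \delta + \epsilon + C/\ell_{\min}$ on the survival-failure event. To extract the rate $C/\sqrt[4]{\ln\ln L}$ I would, following the remark at the end of the proof outline in Section~\ref{subsecoutkinemain}, choose $\delta$ so that the single-scale success probability $p = \exp(-C/\delta^4)$ is of order $1/\ln L$ — i.e. $\delta \asymp (\ln\ln L)^{-1/4}$ — so that over the $\asymp \ln L$ scales available the expected number of bad survivors shrinks, and simultaneously tune $K$ and the number of levels so that the leftover density $\epsilon$ and the finest-scale correction $C/\ell_{\min}$ are both $O(\delta)$, while the failure probability of the percolation reaching the bottom is at most $\exp(-c\sqrt{\ln L})$.

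The main obstacle, and the part requiring the most care, is the bookkeeping of the Mandelbrot percolation: I must verify that with the chosen $K$ and $\delta \asymp (\ln\ln L)^{-1/4}$ (so $p \asymp 1/\ln L$, meaning each box is good only with \emph{small} probability), the good boxes nonetheless cover almost all of $\Lambda_L$ with the stated stretched-exponential confidence. The resolution is that the iteration has $\asymp \ln L / \ln K$ levels, so even a per-box survival factor $1-p$ close to $1$ compounds: along any fixed descending chain of boxes the probability of being bad at every level is $(1-p)^{\#\text{levels}} \approx \exp(-p \cdot \ln L/\ln K)$, and with $p \asymp 1/\ln L$ this is not itself small — so instead I must count that the \emph{total} uncovered volume is a sum of such chain contributions weighted by the branching, and show its expectation plus a concentration bound (Markov's inequality, or a second-moment/Chernoff estimate exploiting independence across disjoint boxes) gives both the density $\le \epsilon$ and the $\exp(-c\sqrt{\ln L})$ tail. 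Balancing $K$ against the number of levels — taking $\ln K \asymp \sqrt{\ln L}$, say, so that there are $\asymp \sqrt{\ln L}$ levels and $K$ is moderately large — is what produces the $\sqrt{\ln L}$ in the exponent; one then checks the leftover density is controlled because good boxes appear at a positive rate once $K^2 p$ (with the scale-free $p$) is taken into account, and iterating the "fraction covered at each level" estimate $\asymp p$ over $\asymp \sqrt{\ln L}$ levels drives the uncovered fraction below $\delta$. I would present this as: (i) define levels and goodness; (ii) estimate expected uncovered volume; (iii) concentrate via independence of disjoint boxes; (iv) optimize $K, \delta$; (v) conclude via subadditivity.
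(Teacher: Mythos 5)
Your high-level architecture --- hierarchical decomposition of $\Lambda_L$, the single-scale input of Lemma~\ref{L.lemma2.5}, domain subadditivity, and a Markov bound on the uncovered set --- is the same as the paper's, but two steps of your plan are genuine gaps rather than bookkeeping. The first is independence along a descending chain. Your survival estimate $(1-p)^{\#\mathrm{levels}}$ treats the events $\{\Lambda_l(v)\ \text{is bad}\}$, $l=0,\dots,l_{\max}$, as independent, but these boxes are nested, so the corresponding $\fluc$'s depend on overlapping portions of the field and the product bound is unjustified; your (correct) remark that disjoint boxes give independent goodness events does not apply to a chain. The paper resolves this with the annulus estimate~\eqref{eq:1707ter} of Lemma~\ref{L.lemma2.5}: the scale ratio $k$ is taken so small that $|\Lambda_{l+1}(v)|\le (\delta/4)\,|\Lambda_l(v)|$ (forcing $k\asymp\delta^{-1/2}$), whence domain subadditivity gives $\fluc_{\Lambda_l(v)}\le \fluc_{\Lambda_l(v)\setminus\Lambda_{l+1}(v)}+\delta/2$, so the event $\{v\ \text{is uncovered}\}$ is contained in the intersection over $l$ of the events $\{\fluc_{\Lambda_l(v)\setminus\Lambda_{l+1}(v)}\ge\delta/2\}$, and these \emph{are} independent because the annuli are disjoint. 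Some such decoupling device is indispensable and is absent from your plan.

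The second gap is that your parameter tuning does not close. You take $p=\exp(-C/\delta^4)\asymp 1/\ln L$ and $\ln K\asymp\sqrt{\ln L}$, hence only $\asymp\sqrt{\ln L}$ levels; even granting independence, the per-vertex uncovered probability would be $\approx\exp(-p\cdot\#\mathrm{levels})=\exp(-1/\sqrt{\ln L})=1-o(1)$. Since $\E[\text{uncovered fraction}]=\P(v\ \text{is uncovered})$ by linearity of expectation, no branching-weighted count, second-moment bound or Chernoff estimate can rescue this: with your choices the leftover set typically occupies almost all of $\Lambda_L$. The paper instead sets $\delta=C_0(\ln\ln L)^{-1/4}$ with $C_0$ \emph{large}, so the per-scale success probability is $(\ln L)^{-C/C_0^4}$, a small negative power of $\ln L$ (much larger than $1/\ln L$), and uses $k\asymp\delta^{-1/2}$ (which is anyway imposed by the volume-ratio condition above), giving $l_{\max}\asymp \ln L/\ln\ln\ln L$ levels down to scale $\sqrt{L}$; then $\bigl(1-(\ln L)^{-C/C_0^4}\bigr)^{l_{\max}}\le\exp(-c\sqrt{\ln L})$, Markov's inequality gives uncovered fraction at most $\delta$ with the same tail, and subadditivity yields $\fluc_{\Lambda_L}\le 3\delta$. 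In short, the heuristic ``choose $\delta$ so that $c_\delta=1/\ln L$'' only identifies the achievable rate; in the argument itself the per-scale success probability must be a small power of $\ln L$ and the number of levels nearly $\ln L$, not $\sqrt{\ln L}$, and the chain events must first be replaced by independent annulus events.
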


\begin{proof}
We set $\delta := (C_0 / \sqrt[4]{\ln \ln L}) \wedge 4 $, for some large constant $C_0 \geq 8$ whose value will be selected later in the proof. Using the estimate~\eqref{eq:1707ter} and the inequality $\sqrt{\ln \ln L} \leq \sqrt{L}$, for any $L \geq 3$, we assume that the constant $C_0$ is large enough so that: for any box $\Lambda \subseteq \Zd$ of side length $\ell \geq \sqrt{L}$, and any box $\Lambda' \subseteq \Lambda$ whose side length is smaller than $\ell /2$,
\begin{equation} \label{eq:16231011}
        \P \left( \fluc_{\Lambda \setminus \Lambda'}(\eta) < \frac{\delta}{2} \right) > \exp \left( - \frac{C}{\delta^4} \right).
\end{equation}
The strategy is to implement a Mandelbrot percolation argument. To this end, we define the following notion of good box: a box $\Lambda' \subseteq \Lambda_L$ is good if and only if
\begin{equation} \label{eq:10041}
    \fluc_{\Lambda'}(\eta) \leq \delta.
\end{equation}
We say that a box is bad if it is not good. Let us recall that the event~\eqref{eq:10041} only depends on the realization of the random field $\eta$ inside the box $\Lambda'$ (since the random variable $\fluc_{\Lambda'}$ only depends on the value of the field $\eta$ inside the box $\Lambda'$).
\begin{figure}
       \centering
       \includegraphics[scale = 0.1]{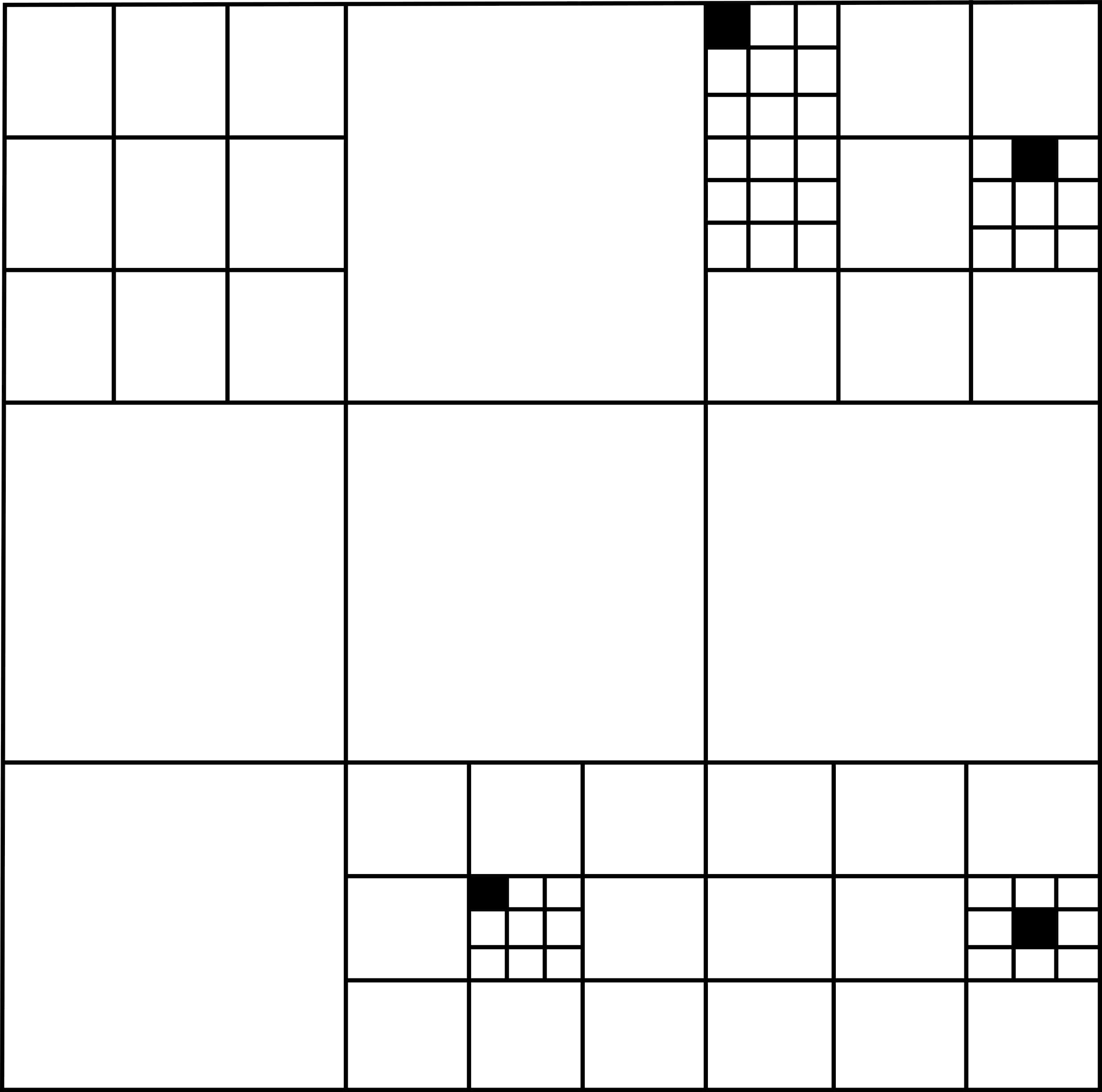}
       \caption{A realization of the Mandelbrot percolation with the values $k = 3$ and $l_{\max} = 3$. The bad cubes are drawn in black.}
        \label{partitionempty}
\end{figure}

Let us first introduce a few additional notations. Given an odd integer $k \geq 3$, we denote by $l_{\max}$ the largest integer which satisfies $k^{l_{\max}} \leq \sqrt{L}$, i.e., $l_{\max} = \lfloor \ln L / (2\ln k) \rfloor$. For each integer $l \in
\left\{ 0 , \ldots , l_{\max} \right\}$, we introduce the set of boxes
\begin{equation} \label{def.TL}
   \mathcal{T}_l := \left\{ \left( z  + \left[ - \frac{L}{k^{l}}  ,  \frac{L}{k^{l}} \right)^2 \right) \cap \Lambda_L \, : \, z \in \frac{2 L}{k^{l}} \Z^2 \cap [-L , L]^2 \right\}.
\end{equation}
We note that that, for each integer $l \in \{0 , \ldots, l_{\max} \}$, the collection of boxes $\mathcal{T}_l$ forms a partition of the box $\Lambda_L$. Additionally, two boxes of the collection $\bigcup_{l=0}^{l_{\max}} \mathcal{T}_l$ are either disjoint or included in one another. For each vertex $v \in \Lambda_L$, and each integer $l \in \{0 , \ldots, l_{\max} \},$ we denote by $\Lambda_l(v)$ the unique box of the set $\mathcal{T}_l$ containing the point $v$.

We select the integer $k$ to be the smallest odd integer larger or equal to $3$ such that the following properties are satisfied for any $L \geq 9$:
\begin{equation*}
    l_{\max} \geq 1 \hspace{3mm} \mbox{and} \hspace{3mm} \forall v \in \Lambda_L, \forall l \in \{ 0 , \ldots , l_{\max}-1 \}, \hspace{3mm} \left| \Lambda_{l+1} (v)\right| \leq \frac{\delta}{4} \left| \Lambda_l (v)\right|.
\end{equation*}
We remark that this integer always exists if the constant $C_0$ is chosen large enough, and that there exist two constants $C,c$ such that $c \delta^{-\frac 12} \leq k \leq C \delta^{-\frac 12} $.

\smallskip

We then construct recursively a (random) sequence of collection of good boxes $\Q_l \subseteq \mathcal{T}_l$, for $l \in \{ 0 , \ldots, l_{\max} \}$, according to the following algorithm:
\begin{itemize}
    \item Initiation: we set $\mathcal{Q}_0 = \left\{ \Lambda_L \right\}$ if $\Lambda_L$ is a good box and $\mathcal{Q}_0 = \emptyset$ otherwise;
    \item Induction step: we assume that the sets $\mathcal{Q}_0 , \ldots, \Q_{l-1}$ have been constructed, and wish to construct the collection $\Q_l$. We consider the set of boxes $\mathcal{T}_l$ to which we remove all the boxes which are included in a box of the collection $\bigcup_{i=0}^{l-1} \mathcal{Q}_i$, that is, we define the set
    \begin{equation*}
        \mathcal{T}'_l := \left\{ \Lambda' \in \mathcal{T}_l \, : \, \forall \Lambda'' \in \bigcup_{i=0}^{l-1} \mathcal{Q}_i , \, \Lambda' \not\subseteq \Lambda'' \right\}.
    \end{equation*}
    We define the set $\Q_l$ to be the set of boxes which are good and belong to $\mathcal{T}'_l$, i.e.,
    \begin{equation*}
        \mathcal{Q}_l := \left\{ \Lambda' \in \mathcal{T}_l' \, : \, \Lambda' ~\mbox{is good}  \right\}.
    \end{equation*}
\end{itemize}
We then define $\Q := \cup_{l = 0}^{l_{\max}} \Q_l$. Let us note that two boxes in the set $\Q$ are either equal or disjoint. The collection of boxes $\mathcal{Q}$ is not in general a partition of the box $\Lambda_L$, and there is a non-empty set of uncovered points which can be characterized by the following criterion:
\begin{equation} \label{eq:1449}
v \in \Lambda_L ~\mbox{is uncovered} \iff \forall l \in \left\{0 , \ldots, l_{\max} \right\}, \, \Lambda_l(v) ~\mbox{is a bad box}.
\end{equation}
We next show that the set of uncovered points is small. To this end, we show the following upper bound on the probability of a vertex $v \in \Lambda_L$ to be uncovered: there exists a constant $c \in (0,1)$ such that
\begin{equation} \label{1333}
    \P \left( v ~\mbox{is uncovered}\right) \leq \exp \left( - c \sqrt{\ln L}\right).
\end{equation}
To prove the inequality~\eqref{1333}, we fix a vertex $v \in \Lambda_L$, and rewrite the equivalence~\eqref{eq:1449} as follows:
\begin{equation} \label{eq:1350}
    \left\{  v ~\mbox{is uncovered} \right\} = \bigcap_{l=0}^{l_{\max}} \left\{ \fluc_{\Lambda_l(v)} (\eta) >  \delta \right\}.
\end{equation}
The strategy is then to prove that the $(l_{\max}+1)$ events in the right side of~\eqref{eq:1350} are well-approximated by independent events, and to use the independence in order to estimate the probability of their intersection. By the domain subadditivity property stated in Proposition~\ref{propdommarkov}, and the pointwise bound $\fluc_{\Lambda}  \leq 2$, we have, for any $v \in \Lambda_L$ and any $l \in \{ 0 , \ldots , l_{\max} -1\}$,
\begin{align} \label{eq:14051}
     \fluc_{\Lambda_l(v)}(\eta) & \leq \frac{\left| \Lambda_l(v) \setminus \Lambda_{l+1}(v) \right|}{\left| \Lambda_l(v)\right|} \fluc_{\Lambda_l(v) \setminus \Lambda_{l+1}(v)}(\eta)  + \frac{\left| \Lambda_{l+1}(v) \right|}{\left| \Lambda_l(v)\right|} \fluc_{ \Lambda_{l+1}(v)}(\eta) \\
     & \leq \fluc_{\Lambda_l(v) \setminus \Lambda_{l+1}(v)}(\eta) + \frac{2|\Lambda_{l+1}(v)|}{|\Lambda_l(v)|} \notag \\
     & \leq \fluc_{\Lambda_l(v) \setminus \Lambda_{l+1}(v)}(\eta) + \frac{\delta}{2 }, \notag
\end{align}
where we used in the last inequality that, by the definition of the integer $k$, the ratio of the volumes of the boxes $\Lambda_{l+1}(v)$ and $\Lambda_{l}(v)$ is smaller than $4/\delta$. The estimate~\eqref{eq:14051} implies the inclusion of events
\begin{equation} \label{eq:1413}
    \left\{  v ~\mbox{is uncovered} \right\} \subseteq \bigcap_{l=0}^{l_{\max}-1} \left\{ \fluc_{\Lambda_l(v) \setminus \Lambda_{l+1}(v)}(\eta) \geq \frac \delta{2} \right\}.
\end{equation}
Using that the annuli $\left(\Lambda_l(v) \setminus \Lambda_{l+1}(v) \right)_{l \in \{ 0 , \ldots, l_{\max}-1\}}$ are disjoint and that the random variables $\fluc_{\Lambda_l(v) \setminus \Lambda_{l+1}(v)}(\eta)$ depend only on the restriction of the random field to the annulus $\Lambda_l(v) \setminus \Lambda_{l+1}(v)$, we obtain that the events in the right side of~\eqref{eq:1413} are independent. We deduce that
\begin{equation*}
    \P \left(  v ~\mbox{is uncovered} \right) \leq \prod_{l = 0}^{l_{\max}-1} \P \left( \fluc_{\Lambda_l(v) \setminus \Lambda_{l+1}(v)}(\eta) \geq \frac{\delta}{2} \right).
\end{equation*}
We recall the definition of the parameter $\delta$ and of the integer $l_{\max}$ stated at the beginning of the proof. Using~\eqref{eq:16231011}, we obtain the following dichotomy: 
\begin{itemize}
\item If $C_0 / \sqrt[4]{\ln \ln L} < 4$ (i.e., if $L$ is sufficiently large  so that $\delta < 4$), then
\begin{equation*}
    \P \left(  v ~\mbox{is uncovered} \right)  \leq \left( 1 - \exp \left( - \frac{C}{\delta^4} \right) \right)^{l_{\max}}  \leq \left( 1 - \frac{1}{\left( \ln L \right)^{C / C_0^4}} \right)^{c \frac{\ln L}{\ln \ln \ln L}}.
\end{equation*}
\item If $C_0 / \sqrt[4]{\ln \ln L} \geq 4$, then $\delta = 4$ and thus all the boxes are good (as $\fluc_{\Lambda'}$ is smaller than $2$ almost-surely for any box $\Lambda'$ by assumption, the condition~\eqref{eq:10041} is always satisfied). This gives
\begin{equation*}
    \P \left(  v ~\mbox{is uncovered} \right)  = 0.
\end{equation*}
\end{itemize}
Choosing the constant $C_0$ large enough, e.g., larger than $\sqrt[4]{4 C}$, we obtain (in both cases)
\begin{equation*}
    \P \left(  v ~\mbox{is uncovered} \right) \leq \exp \left( - c \sqrt{\ln L} \right).
\end{equation*}
The proof of~\eqref{1333} is complete. We now use the inequality~\eqref{1333} to complete the proof of the estimate~\eqref{eq:18031}. By the domain subadditivity property for the quantity $\fluc_{\Lambda}$ and the pointwise bound $\fluc_{\Lambda}  \leq 2$, we have
\begin{align} \label{eq:17073}
    \fluc_{\Lambda_L}(\eta) & \leq \sum_{\Lambda \in \mathcal{Q}} \frac{\left| \Lambda \right|}{\left| \Lambda_L \right|} \fluc_{\Lambda}(\eta)+ 2\frac{\left| \Lambda_L \setminus \bigcup_{\Lambda \in \Q} \Lambda \right|}{|\Lambda_L|}  \\
    & \leq \sum_{\Lambda \in \mathcal{Q}} \frac{\left| \Lambda \right|}{\left| \Lambda_L \right|} \delta +2 \frac{\left| \Lambda_L \setminus \bigcup_{\Lambda \in \Q} \Lambda \right|}{|\Lambda_L|} \notag \\
    & \leq \delta  + 2 \frac{\left| \Lambda_L \setminus \bigcup_{\Lambda \in \Q} \Lambda \right|}{|\Lambda_L|}. \notag
\end{align}
By using~\eqref{1333} and Markov's inequality, we have
\begin{align} \label{eq:1304}
    \P \left( \frac{\left| \Lambda_L \setminus \bigcup_{\Lambda \in \Q} \Lambda \right|}{|\Lambda_L|} \geq \delta \right)  \leq \frac{\E \left[ \frac{1}{|\Lambda_L|} \sum_{v \in \Lambda_L} \indc_{\{ v \,\mathrm{is \, uncovered} \}} \right]}{\delta} & \leq \frac{\sum_{v \in \Lambda_L}\P \left( v ~\mbox{is uncovered} \right)}{|\Lambda_L| \delta} \\ & \leq \frac{\exp \left( - c \sqrt{\ln L} \right)}{\delta} \notag \\
    & \leq C \exp \left( - c \sqrt{\ln L} \right), \notag
\end{align}
by reducing the value of the constant $c$ in the last inequality. We have thus obtained
\begin{equation*}
    \P \left( \fluc_{\Lambda_L}(\eta) > 3 \delta \right) \leq C \exp \left( - c \sqrt{\ln L} \right).
\end{equation*}
The inequality~\eqref{eq:18031} can then be obtained by adjusting the values of the constants $C$ and $c$. The proof of Lemma~\ref{prop2.31708} is complete.
\end{proof}

\subsection{Upgrading the stochastic integrability} \label{sectionconcentrationatg}

This section is the final step of the proof of Theorem~\ref{thm1prop2.31708main} (in the general case). We use a concentration argument combined with the domain subadditivity property applied to the quantity $\fluc_\Lambda$ to upgrade the stochastic integrability obtained in Lemma~\ref{prop2.31708}.

\medskip

\begin{proof}[Proof of~\eqref{eq:11191006} of Theorem~\ref{thm1prop2.31708main}]
We split the box $\Lambda_L$ into (approximately) $N \simeq  L^{\frac 32}$ boxes of side length of order $L^{\frac14}$. We denote these boxes by $\tilde \Lambda_1 , \ldots, \tilde \Lambda_N$. By the domain subadditivity property for the quantity $\fluc_\Lambda$, we have the inequality
\begin{equation} \label{eq:1455}
    \fluc_{\Lambda_L}(\eta) \leq \frac{1}{N} \sum_{i = 1}^N  \fluc_{\tilde \Lambda_i}(\eta).
\end{equation}
Since the boxes $\tilde \Lambda_1 , \ldots, \tilde \Lambda_N$ are disjoint, the sum in the right side is a sum of independent random variables, to which we can apply a concentration argument. To implement this strategy, we fix an integer $i \in \{ 1 , \ldots  , N\}$, apply the inequality~\eqref{eq:18031} to the box $\tilde \Lambda_i$ (the result was proved for the boxes of the form $\Lambda_L$ for $L\geq 3$ but can be extended to any box of $\Zd$ by translation invariance of the random field), together with the bound $ \fluc_{\tilde \Lambda_i} \leq 2$, and the fact that that the side length of the box $\tilde \Lambda_i$ is larger than $cL^{\frac 14}$. We obtain, for any integer $i \in \{ 1 , \ldots , N\}$,
\begin{equation} \label{eq:16290211}
    \E \left[  \fluc_{\tilde \Lambda_i} (\eta) \right] \leq \frac{C}{\sqrt[4]{\ln \ln L^{1/4}}} \leq \frac{C}{\sqrt[4]{\ln \ln L}}.
\end{equation}
We use the inequalities~\eqref{eq:1455},~\eqref{eq:16290211} to obtain, for any $K \geq 0$,
\begin{equation} \label{eq:1607}
    \P \left[  \fluc_{\Lambda_L}(\eta) \geq \frac{C}{\sqrt[4]{\ln \ln L}}  + K \right] \leq \P \left[  \frac{1}{N} \sum_{i = 1}^N \left( \fluc_{\tilde \Lambda_i}(\eta)  - \E \left[ \fluc_{\tilde \Lambda_i}(\eta) \right]  \right)\geq  K \right] \notag.
\end{equation}
Using that the random variables $\fluc_{\tilde \Lambda_1}(\eta) , \ldots, \fluc_{\tilde \Lambda_N}(\eta)$ are i.i.d., non-negative, bounded by $2$ almost surely together with Hoeffding's inequality, we obtain, for any $K \geq 0$,
\begin{equation*}
    \P \left[  \fluc_{\Lambda_L}(\eta) \geq \frac{C}{\sqrt[4]{\ln \ln L}}  + K \right] \leq C e^{-c N K^2}.
\end{equation*}
Recalling that $N$ is comparable to $L^{\frac 32}$, and choosing $K = \frac{C}{\sqrt[4]{\ln \ln L}}$, we have obtained
\begin{equation} \label{eq:1629}
    \P \left[  \fluc_{\Lambda_L}(\eta) \geq \frac{ 2 C}{\sqrt[4]{\ln \ln L}} \right] \leq C \exp \left(-\frac{ c L^{3/2}}{\sqrt{\ln \ln L}} \right)  \leq C \exp \left(-c L \right). \notag
\end{equation}
The estimate~\eqref{eq:11191006} is then obtained by adjusting the values of the constants $C$ and $c$.
\end{proof}

\subsection{The translation-invariant setup} In this section, we prove the estimate~\eqref{eq:fluctuations around limiting value} pertaining to the translation-invariant setup and thus complete the proof of Theorem~\ref{thm1prop2.31708main}

\begin{proof}[Proof of~\eqref{eq:fluctuations around limiting value} of Theorem~\ref{thm1prop2.31708main}]

We assume in this proof that the model is translation invariant and satisfies the corresponding additional assumptions stated in Section~\ref{sec:general setup}. We fix an integer $i \in \{ 1 , \ldots , m \}$.

First, using the upper bound $|f_v| \leq 1$ and the estimate~\eqref{eq:11191006} of Theorem~\ref{thm1prop2.31708main}, we obtain, for any integer $L \geq 3$,
\begin{align} \label{14273112}
    \lefteqn{\E \left[ \sup_{\tau \in \S^{\Z^2 \setminus \Lambda_L}} \frac{1}{\left| \Lambda_L\right|}\sum_{v \in \Lambda_L} \left\langle f_{\mathbf 0 , i} \left( \mathcal{T}_v \sigma \right) \right\rangle_{\Lambda_L}^{\tau}  \right] - \E \left[ \inf_{\tau \in \S^{\Z^2 \setminus \Lambda_L}}  \frac{1}{\left| \Lambda_L\right|} \sum_{v \in \Lambda_L} \left\langle f_{\mathbf 0 , i} \left( \mathcal{T}_v \sigma \right) \right\rangle_{\Lambda_L}^{\tau} \right]} \qquad \qquad \qquad & \\  & \leq
    \E \left[ \sup_{\tau_1 , \tau_2 \in \S^{\Z^2 \setminus \Lambda_L}} \left|  \frac{1}{\left| \Lambda_L\right|} \sum_{v \in \Lambda_L} \left( \left\langle f_{\mathbf 0} \left( \mathcal{T}_v \sigma \right) \right\rangle_{\Lambda_L}^{\tau_1} -  \left\langle f_{\mathbf 0} \left( \mathcal{T}_v \sigma \right) \right\rangle_{\Lambda_L}^{\tau_2} \right) \right| \right] \notag
    \\ & \leq \frac{C}{\sqrt[4]{\ln \ln L}}. \notag
\end{align}
In the translation-invariant setup, we may combine the inequalities stated in Proposition~\ref{propdommarkov} with a subadditivity argument to obtain the following convergences
\begin{equation} \label{14503112}
    \begin{aligned}
     \E \left[ \sup_{\tau \in \S^{\Z^2 \setminus \Lambda_L}}  \frac{1}{\left| \Lambda_L\right|} \sum_{v \in \Lambda_L} \left\langle f_{\mathbf 0 , i}\left( \mathcal{T}_v \sigma \right) \right\rangle_{\Lambda_L}^{\tau} \right] \underset{L \to \infty}{\longrightarrow} \inf_{L \in \N} \, \E \left[  \sup_{\tau \in \S^{\Z^2 \setminus \Lambda_L}}  \frac{1}{\left| \Lambda_L \right|} \sum_{v \in \Lambda_L} \left\langle f_{\mathbf 0 , i} \left( \mathcal{T}_v \sigma \right) \right\rangle_{\Lambda_L}^{\tau} \right],\\ \vspace{3mm}
    \E \left[ \inf_{\tau \in \S^{\Z^2 \setminus \Lambda_L}}  \frac{1}{\left| \Lambda_L\right|} \sum_{v \in \Lambda_L} \left\langle f_{\mathbf 0 , i} \left( \mathcal{T}_v \sigma \right) \right\rangle_{\Lambda_L}^{\tau} \right] \underset{L \to \infty}{\longrightarrow} \sup_{L \in \N} \, \E \left[  \inf_{\tau \in \S^{\Z^2 \setminus \Lambda_L}} \frac{1}{\left| \Lambda_L \right|} \sum_{v \in \Lambda_L} \left\langle f_{\mathbf 0 , i} \left( \mathcal{T}_v \sigma \right) \right\rangle_{\Lambda_L}^{\tau} \right].
    \end{aligned}
\end{equation}
Using the inequality~\eqref{14273112}, we see that the two limits in the right-hand sides of~\eqref{14503112} are equal. In particular, we may define
\begin{equation*}
    \alpha_i := \inf_{L \in \N} \, \E \left[   \frac{1}{\left| \Lambda_L\right|} \sup_{\tau \in \S^{\Z^2 \setminus \Lambda_L}} \sum_{v \in \Lambda_L} \left\langle f_{\mathbf 0 , i} \left( \mathcal{T}_v \sigma \right) \right\rangle_{\Lambda_L}^{\tau} \right] =  \sup_{L \in \N} \, \E \left[  \inf_{\tau \in \S^{\Z^2 \setminus \Lambda_L}} \frac{1}{\left| \Lambda_L\right|} \sum_{v \in \Lambda_L} \left\langle f_{\mathbf 0 , i} \left( \mathcal{T}_v \sigma \right) \right\rangle_{\Lambda_L}^{\tau} \right].
\end{equation*}
Additionally, we see that, for any integer $L \in \N$,
\begin{equation} \label{eq:145531212}
    \E \left[  \inf_{\tau \in \S^{\Z^2 \setminus \Lambda_L}}  \frac{1}{\left| \Lambda_L\right|} \sum_{v \in \Lambda_L} \left\langle f_{\mathbf 0 , i} \left( \mathcal{T}_v \sigma \right) \right\rangle_{\Lambda_L}^{\tau} \right] \leq \alpha_i \leq  \E \left[  \sup_{\tau \in \S^{\Z^2 \setminus \Lambda_L}}  \frac{1}{\left| \Lambda_L\right|} \sum_{v \in \Lambda_L} \left\langle f_{\mathbf 0 , i} \left( \mathcal{T}_v \sigma \right) \right\rangle_{\Lambda_L}^{\tau} \right].
\end{equation}
A combination of~\eqref{14273112} and~\eqref{14503112} thus yields
\begin{equation*}
\begin{aligned}
     \left| \E \left[ \sup_{\tau \in \S^{\Z^2 \setminus \Lambda_L}} \frac{1}{\left| \Lambda_L \right|} \sum_{v \in \Lambda_L} \left\langle f_{\mathbf 0 , i}\left( \mathcal{T}_v \sigma \right) \right\rangle_{\Lambda_L}^{\tau} \right] - \alpha_i \right| \leq \frac{C}{\sqrt[4]{\ln \ln L}},\\ \vspace{3mm}
    \left| \E \left[ \inf_{\tau \in \S^{\Z^2 \setminus \Lambda_L}} \frac{1}{\left| \Lambda_L \right|} \sum_{v \in \Lambda_L} \left\langle f_{\mathbf 0 , i} \left( \mathcal{T}_v \sigma \right) \right\rangle_{\Lambda_L}^{\tau} \right]  - \alpha_i \right| \leq  \frac{C}{\sqrt[4]{\ln \ln L}}.
    \end{aligned}
\end{equation*}
Consequently, using the inequalities stated in Proposition~\ref{propdommarkov} together with the same concentration argument as the one developed in the proof of~\eqref{eq:11191006} of Theorem~\ref{thm1prop2.31708main}, we obtain the inequalities
\begin{equation} \label{eq:15573112}
    \begin{aligned}
     \P \left[ \sup_{\tau \in \S^{\Z^2 \setminus \Lambda_L}} \frac{1}{\left| \Lambda_L \right|} \sum_{v \in \Lambda_L} \left\langle f_{\mathbf 0 , i}\left( \mathcal{T}_v \sigma \right) \right\rangle_{\Lambda_L}^{\tau} - \alpha_i \geq \frac{C}{\sqrt[4]{\ln \ln L}}  \right] \leq \exp \left( - c L \right),\\ \vspace{3mm}
    \P \left[ \inf_{\tau \in \S^{\Z^2 \setminus \Lambda_L}} \frac{1}{\left| \Lambda_L \right|} \sum_{v \in \Lambda_L} \left\langle f_{\mathbf 0 , i}\left(\mathcal{T}_v \sigma \right) \right\rangle_{\Lambda_L}^{\tau} - \alpha_i \leq - \frac{C}{\sqrt[4]{\ln \ln L}}  \right] \leq \exp \left( - c L \right).
    \end{aligned}
\end{equation}
We then note that the following inclusion of events holds
\begin{multline} \label{eq:15563112}
    \left\{ \sup_{\tau \in \S^{\Z^2 \setminus \Lambda_L}} \left| \frac{1}{\left| \Lambda_L \right|}\sum_{v \in \Lambda_L} \left\langle f_{\mathbf 0 , i}\left( \mathcal{T}_v \sigma \right) \right\rangle_{\Lambda_L}^{\tau}  - \alpha_i  \right| \geq \frac{C}{\sqrt[4]{\ln \ln L}}  \right\} \\ \subseteq \left\{ \sup_{\tau \in \S^{\Z^2 \setminus \Lambda_L}} \frac{1}{\left| \Lambda_L \right|} \sum_{v \in \Lambda_L} \left\langle f_{\mathbf 0 , i}\left( \mathcal{T}_v \sigma \right) \right\rangle_{\Lambda_L}^{\tau} - \alpha_i \geq \frac{C}{\sqrt[4]{\ln \ln L}} \right\} \\ \bigcup  \left\{ \inf_{\tau \in \S^{\Z^2 \setminus \Lambda_L}} \frac{1}{\left| \Lambda_L \right|} \sum_{v \in \Lambda_L} \left\langle f_{\mathbf 0 , i}\left( \mathcal{T}_v \sigma \right) \right\rangle_{\Lambda_L}^{\tau} - \alpha_i \leq -\frac{C}{\sqrt[4]{\ln \ln L}} \right\}.
\end{multline}
A combination of~\eqref{eq:15573112},~\eqref{eq:15563112} and a union bound then yield
\begin{equation} \label{eq:16053112}
    \P \left( \sup_{\tau\in \S^{\Z^2 \setminus \Lambda_L}} \left| \alpha_i - \frac{1}{|\Lambda_L|} \sum_{v \in \Lambda_L} \left\langle f_{\mathbf 0 , i} \left( \mathcal{T}_v\sigma \right) \right\rangle_{\Lambda_L}^{\tau} \right|  > \frac{C}{\sqrt[4]{\ln \ln L}}  \right) \leq \exp \left( - c L \right).
\end{equation}
Since the inequality~\eqref{eq:16053112} is valid for any integer $i \in \{ 1 , \ldots, m\}$, we obtain the inequality~\eqref{eq:fluctuations around limiting value} of Theorem~\ref{thm1prop2.31708main} with the value $\alpha := \left( \alpha_1 , \ldots, \alpha_m \right)$.

\end{proof}

 \subsection{Proof of Theorem 2} \label{proofTHm2}

The objective of this section is to generalize Theorem~\ref{thm1prop2.31708main}. We prove that for any $L \geq 3$, any box $\Lambda \subseteq \Zd$ of side length $L$, and any deterministic weight function $w : \Lambda \to [-1,1]^m$, the quantity
\begin{equation*}
     \fluc_{w ,\Lambda}(\eta) := \sup_{\tau_1 , \tau_2 \in \S^{\Z^2 \setminus \Lambda}} \frac{1}{|\Lambda|} \left| \sum_{v \in \Lambda} w(v) \cdot \left( \left\langle f_v \left( \sigma \right) \right\rangle_{\Lambda}^{\tau_1}  -  \left\langle f_v \left( \sigma \right) \right\rangle_{\Lambda}^{\tau_2} \right) \right|
\end{equation*}
is smaller than $\sqrt[8]{\ln \ln L}$ with high probability. Let us note that the quantity  $\fluc_{w ,\Lambda}(\eta)$ depends only on the realization of the field $\eta$ inside the box $\Lambda$, satisfies the same domain subadditivity property as the one stated in Proposition~\ref{propdommarkov}, and that, by the assumption $\left| w(v)\right| \leq 2^{m/2}$ (which follows from $w(v) \in [-1,1]^m$), one has the bound $0 \leq\fluc_{w ,\Lambda}(\eta) \leq 2^{m/2 +1 }$ for any realization of the random field $\eta$.

The argument is similar to the one developed in the proof of Theorem~\ref{thm1prop2.31708main}, the main difference is that we rely on the identity~\eqref{eq:17510211} to obtain information on the observable $\fluc_{w ,\Lambda_L }(\eta)$ instead of~\eqref{eq:02111210}. The fact that the map $w$ can take small values must be taken into account in the argument and causes a slight deterioration of the rate of convergence: we obtain the quantitative rate $\sqrt[8]{\ln \ln L}$ instead of the rate $\sqrt[4]{\ln \ln L}$ obtained in Theorem~\ref{thm1prop2.31708main}.

\begin{proof}[Proof of Theorem~\ref{prop:1109}]
The strategy of the argument is similar to the proof of Theorem~\ref{thm1prop2.31708main}. We only present a detailed sketch of the argument pointing out the main differences with the proof of Theorem~\ref{thm1prop2.31708main}. The first step is to prove the estimate: for any box $\Lambda \subseteq \Lambda_L$ of side length larger than $\sqrt{L}$ such that $ \sum_{v \in \Lambda} \left|w(v)\right|^2 \geq \left| \Lambda \right|/\sqrt{\ln \ln L}$ and any $\delta >$,
    \begin{equation} \label{eq:1233}
        \P \left( \fluc_{w ,\Lambda}(\eta) < \delta \right) \geq  \exp \left( - \frac{C\sqrt{\ln \ln L}}{\delta^4} \right).
    \end{equation}
The proof is similar to the proof of Lemma~\ref{L.lemma2.5}; the main differences are that we need to decompose the random field $\eta$ according to the formula, for each $i \in \{ 1 , \ldots, m \}$,  $\eta := \left( \hat \eta_{w , \Lambda , i} , \left( \hat\eta_{w , \Lambda , j} \right)_{j \neq i} , \hat\eta_{w , \Lambda}^\perp \right)$ (following the notation introduced in Section~\ref{secstructRF}), use the identity~\eqref{eq:17510211} (instead of~\eqref{eq:02111210}), and use Corollary~\ref{coroGaussmea} with the variance $\sigma^2 := 1 / \sum_{v \in \Lambda} w_i(v)^2$ instead of  $\sigma^2 = 1/\left|\Lambda \right|$.

\smallskip

The second step of the argument corresponds to Section~\ref{sectionfirstmandelbrotperc}. We combine the inequality~\eqref{eq:1233} with a Mandelbrot percolation argument and prove the estimate
    \begin{equation} \label{eq:18033}
     \P \left(  \fluc_{w ,\Lambda_L}(\eta)  < \frac{C}{\sqrt[8]{\ln \ln L}}  \right) \geq  1 - \exp \left( - c \sqrt{\ln L} \right).
    \end{equation}
To this end, we set $\delta := (C_0 / \sqrt[8]{\ln \ln L}) \wedge 2^{m/2+1}$, for some large constant $C_0 \geq 8$, and define the following notion of good box: a box $\Lambda \subseteq \Lambda_L$ is said to be good if and only if
\begin{equation} \label{eq:10042}
    \fluc_{w ,\Lambda}(\eta) \leq \delta \hspace{3mm} \mbox{or} \hspace{3mm} \frac{1}{\left| \Lambda \right|}\sum_{v \in \Lambda} |w(v)|^2  \leq \frac{1}{4\sqrt{\ln \ln L}}.
\end{equation}
Let us note that, for any box $\Lambda \subseteq \Lambda_L$, the assumption $\left| f(\sigma) \right| \leq 1$ and the Cauchy-Schwarz inequality yield
\begin{align*}
\fluc_{w,\Lambda}(\eta)  & \leq \left( \frac{1}{\left| \Lambda \right|} \sum_{v \in \Lambda} |w(v)|^2 \right)^{\frac 12} \left(\sup_{\tau_1, \tau_2 \in \S^{\Z^2 \setminus \Lambda}}\frac{1}{|\Lambda|}\sum_{v \in \Lambda}  \left| \left\langle f_v \left(  \sigma \right) \right\rangle_{\Lambda}^{\tau_1} - \left\langle f_v\left( \sigma \right) \right\rangle_{\Lambda}^{\tau_2} \right|^2 \right)^\frac 12\\
& \leq 2\sqrt{ \frac{1}{\left| \Lambda \right|} \sum_{v \in \Lambda} |w(v)|^2 } \\
& \leq \frac{1}{\sqrt[4]{\ln \ln L}}.
\end{align*}
Using that $\fluc_{w,\Lambda}(\eta) \leq 2^{m/2+1}$, we obtain that
\begin{equation*}
\fluc_{w,\Lambda}(\eta) \leq \frac{1}{\sqrt[4]{\ln \ln L}} \wedge 2^{m/2+1} \leq \delta.
\end{equation*}
This implies that, if a box $\Lambda$ is good, then $\fluc_{w,\Lambda}(\eta) \leq \delta$ (as both options in~\eqref{eq:10042} lead to this inequality).

Following the argument presented in the proof of Lemma~\ref{prop2.31708}, we may construct a collection $\mathcal{Q}$ of good boxes such that the set of uncovered points is small. The only difference is that we need to select the integer $k$ so as to satisfy the (more restrictive) properties:
\begin{equation*}
    l_{\max} \geq 1 \hspace{3mm} \mbox{and} \hspace{3mm} \forall v \in \Lambda_L, \forall l \in \{ 0 , \ldots , l_{\max}-1 \}, \hspace{3mm} \left| \Lambda_{l+1} (v)\right| \leq \left( \frac{C_0}{8\sqrt{\ln \ln L}} \wedge 1 \right) \left| \Lambda_l (v)\right|.
\end{equation*} 
Proceeding this way, we obtain the inequality
\begin{equation} \label{eq:0957}
    \P \left( v ~\mbox{is uncovered}\right) \leq \exp \left( - c \sqrt{\ln L}\right).
\end{equation}
We now use the estimate~\eqref{eq:0957} to prove the inequality~\eqref{eq:18033}. Using the domain subadditivity property for the quantity $\fluc_{w ,\Lambda}$ and the upper bound $\fluc_{w ,\Lambda} \leq 2$, we have
\begin{align} \label{eq:17087}
    \fluc_{w ,\Lambda_L }(\eta) & \leq \sum_{\Lambda \in \mathcal{Q}} \frac{\left| \Lambda \right|}{\left| \Lambda_L \right|}  \fluc_{w ,\Lambda}(\eta) +  2\frac{\left| \Lambda_L \setminus \bigcup_{\Lambda \in \Q} \Lambda \right|}{|\Lambda_L|} \\
    & \leq  \delta + 2\frac{\left| \Lambda_L \setminus \bigcup_{\Lambda \in \Q} \Lambda \right|}{|\Lambda_L|}. \notag
\end{align}
We then estimate the second term in the right side of~\eqref{eq:17087} by combining Markov's inequality with the estimate~\eqref{eq:0957} as was done in the computation~\eqref{eq:1304}.
The concentration argument is essentially identical to the one presented in the proof of Theorem~\ref{thm1prop2.31708main}, we thus omit the details.
\end{proof}

We conclude this section by recording a stronger version of Theorem~\ref{prop:1109} where the weight are allowed to be (partially) random. The result is used in the proof of Theorem~\ref{theoremuniqueness}. Before stating the result, we introduce the following definition. We fix $\ep \in (0 , 1)$ and, for $v \in \Z^2$, let $\eta_{v}^\ep, \eta_{v}^{1-\ep}$ be $m$-dimensional Gaussian random vectors of mean zero and covariance matrix $\ep I_m$ and $(1-\ep) I_m$, respectively. We assume that the Gaussian vectors are independent, and set $\eta := \eta^\ep + \eta^{1-\ep}$. In this setup, $\eta$ satisfies the assumptions required by the random disorder. We also let $\mathcal{F}_{1-\ep}$ be a $\sigma$-algebra such that $\eta^{1-\ep}$ is $\mathcal{F}_{1-\ep}$-measurable and $\eta^{\ep}$ is $\mathcal{F}_{1-\ep}$-independent. Finally, we define the event
\begin{equation} \label{eq:17580308}
\mathcal{W}_{1 - \ep,L,R} = \left\{\sum_{v \in (x + \Lambda_{L_0}) \setminus (x + \Lambda_{(L_0-R)})} |\eta^{1-\ep}_v| \leq 16 R L_0 \, : \, \forall x \in \Z^2, L_0 \in [\sqrt{L} , L] ~\mbox{such that}~ x + \Lambda_{L_0} \subseteq \Lambda_{L} \right\}.
\end{equation}
This events allows to control the $L^1$-norm of the field $\eta^{1-\ep}$ in the boundary layer of every cube $\Lambda$ whose sidelength is at least $\sqrt{L}$ and is contained in the cube $\Lambda_L$. Whenever $\eta^{1-\ep} \in \mathcal{W}_{1 - \ep,L,R}$, the maximal contribution of this part of the field to the Hamiltonian restricted to a box of size $L_0$ cannot exceed a constant multiple of $L_0$, mimicking the boundary dependence of the deterministic portion of the Hamiltonian.

We additionally note that, since the number of boxes contained in the box $\Lambda_L$ with side length larger than $\sqrt{L}$ grows as a power of $L$, a union bound combined with a large deviation estimate implies the following lower bound on the probability of the event $\mathcal{W}_{1 - \ep,L,R}$
\begin{equation} \label{eq:17590308}
    \P \left( \mathcal{W}_{1 - \ep,L,R} \right) \geq 1 - C \exp (- c \sqrt{L}).
\end{equation}

The next proposition extends the result of Theorem~\ref{prop:1109} by allowing the weight function to be partially random.

\begin{proposition} \label{prop:7.8august}
    Fix $d = 2$, $\beta > 0$, $\lambda > 0$ and $L \geq 3$. Fix $\ep \in (0 , 1)$ and consider the two random disorders $\eta^{1 - \ep}$ and $\eta^{\ep}$ as defined above. There exist constants $C,c>0$ depending only on $\lambda$, $C_H$, $m$ and $R$ such that, for any integer $L \geq 3$, any realization $\eta^{1- \ep}$ in  $\mathcal{W}_{1 - \ep,L,R}$, and any $\mathcal{F}_{1-\ep}$-measurable  weight function $w:\Lambda_L\to[-1,1]^m$,
    \begin{multline*}
\P \left( \sup_{\tau_1, \tau_2 \in \S^{\Z^2 \setminus \Lambda_L}}  \left|\frac{1}{|\Lambda_L|}\sum_{v \in \Lambda_L} w(v) \cdot \left(\left\langle f_v \left( \sigma \right) \right\rangle_{\Lambda_L}^{\tau_1} - \left\langle f_v \left( \sigma \right) \right\rangle_{\Lambda_L}^{\tau_2}  \right) \right|  \leq \frac{C}{\sqrt{\ep}\sqrt[8]{\ln \ln L}} ~\bigg|~ \mathcal{F}_{1-\ep} \right) \\ \geq 1 -  \exp \left( - c L \right).
\end{multline*}
\end{proposition}

\begin{proof}
Fix a finite set $\Lambda \subseteq \Zd$. Using the decomposition $\eta := \eta^{1-\ep} + \eta^\ep$, we may write the disordered Hamiltonian $H^\eta_{\Lambda}$ as follows
\begin{equation*}
    H^\eta_{\Lambda}(\sigma) = H_{\Lambda}(\sigma) - \lambda \sum_{v \in \Lambda} \eta^{1 - \ep} \cdot f_v(\sigma) - \lambda \sum_{v \in \Lambda} \eta^{\ep} \cdot f_v(\sigma).
\end{equation*}
For any realization of the random field $\eta^{1 - \ep}$, let us define the Hamiltonian $\tilde{H}_\Lambda(\sigma) = H_\Lambda(\sigma) - \lambda \sum_{v \in \Lambda} \eta^{1 - \ep} \cdot f_v(\sigma)$. Setting $\tilde{\eta} = \eta^{\ep}/\sqrt{\ep}$, we thus have
\begin{equation} \label{eq:reptildeH}
H_\Lambda^{\eta}(\sigma) = \tilde{H}_\Lambda(\sigma) - \lambda \sum_{v \in \Lambda} \tilde{\eta}_v \cdot (\sqrt{\ep} f_v(\sigma)).
\end{equation}
Fix a sidelength $L \geq 3$. The definition of the event $\mathcal{W}_{1 - \ep,L,R}$ ensures that the following property holds: for each $\eta^{1-\ep} \in \mathcal{W}_{1 - \ep,L,R}$ and each box $\Lambda' \subseteq \Lambda_L$ of sidelength at least $\sqrt{L}$, we have that
\begin{equation} \label{eq:17480208}
|\tilde{H}_{\Lambda'}(\sigma) - \tilde{H}_{\Lambda'}(\sigma')| \leq \left(C_H + \lambda 16 R \right) |\partial \Lambda'|, \quad \text{for}~ \sigma,\sigma': \Z^2 \mapsto \mathcal{S} \text{ satisfying } \sigma_{\Lambda'} = \sigma'_{\Lambda'}.
\end{equation}

The representation~\eqref{eq:reptildeH} is the sum of a $\mathcal{F}_{1 - \ep}$-measurable Hamiltonian 
and a $\mathcal{F}_{1 - \ep}$-independent Gaussian random vector of mean zero and covariance matrix $I_m$. For any fixed realization $\eta^{1-\ep} \in \mathcal{W}_{1 - \ep,L,R}$, the Hamiltonian $\tilde H$ satisfies all the assumptions listed in Section~\ref{sec:2d systems} except one: the bound~\eqref{def.cteCH} does not hold for all finite subsets $\Lambda \subseteq \Zd$, but, by the inequality~\eqref{eq:17480208}, only holds when the subset $\Lambda$ is a box of sidelength at least $\sqrt{L}$ and contained in $\Lambda_L$. An inspection of the proof of Theorem~\ref{thm1prop2.31708main} (in particular, the arguments developed in Section~\ref{sectionfirstmandelbrotperc}) shows that this property is sufficient for the conclusion of Theorem~\ref{thm1prop2.31708main} to hold. A similar argument shows that Theorem~\ref{prop:1109} holds under the assumption~\eqref{eq:17480208}. We thus conclude that, for any $\eta^{1- \ep} \in \mathcal{W}_{1 - \ep,L,R}$,
\begin{multline*}
\P \left( \sup_{\tau_1, \tau_2 \in \S^{\Z^2\setminus \Lambda_L}}  \left|\frac{1}{|\Lambda_L|}\sum_{v \in \Lambda_L} w(v) \cdot \left(\left\langle  \sqrt{\ep} f_v \left( \sigma \right) \right\rangle_{\Lambda_L}^{\tau_1} - \left\langle \sqrt{\ep} f_v \left( \sigma \right) \right\rangle_{\Lambda_L}^{\tau_2}  \right) \right|  \leq \frac{C}{\sqrt[8]{\ln \ln L}} ~\bigg|~ \mathcal{F}_{1-\ep} \right) \\ \geq 1 -  \exp \left( - c L \right).
\end{multline*}
Dividing through by $\sqrt{\ep}$ in the probability completes the proof.
\end{proof}

\subsection{Proof of Corollary~\ref{prop:1109cor}} \label{proofTHM3}
As a corollary of Theorem~\ref{prop:1109}, we show that the absolute value of the expectation (with respect to the random field) of the thermal expectations $\left\langle f_v \left(  \sigma \right) \right\rangle_{\Lambda_L}^{\tau_0}$ and $\left\langle f_v \left(  \sigma \right) \right\rangle_{\Lambda_L}^{\tau_1}$ is quantitatively small for any pair of random boundary conditions $\eta \mapsto \tau_0(\eta), \tau_1(\eta)$.

 \begin{proof}[Proof of Corollary~\ref{prop:1109cor}]
We select a pair of random (measurable) boundary conditions $\eta \mapsto \tau_0 \left( \eta \right), \tau_1\left( \eta \right) \in \S^{\Z^2 \setminus \Lambda_L}$, and define the deterministic weight function $w$ according to the formula
\begin{equation} \label{eq:08540111}
    \forall v \in \Lambda_L, \forall i \in \{1 , \ldots, m \}, ~ w_{i}(v) := \mathrm{sign} \left(\E \left[ \left\langle f_{v,i} \left(  \sigma \right) \right\rangle_{\Lambda_L}^{\tau_0 \left( \eta \right)} - \left\langle f_{v,i} \left(  \sigma \right) \right\rangle_{\Lambda_L}^{\tau_1 \left( \eta \right)} \right] \right),
\end{equation}
where $\mathrm{sign}(x) = 1$ for $x \geq 0$ and $-1$ otherwise. Applying Theorem~\ref{prop:1109} yields
\begin{equation} \label{eq:09241606}
\E \left[  \frac{1}{\left| \Lambda_L \right|} \sum_{v \in \Lambda_L} w(v) \cdot \left(  \left\langle f_v \left( \sigma \right) \right\rangle_{\Lambda_L}^{\tau_0(\eta)} - \left\langle f_v \left( \sigma \right) \right\rangle_{\Lambda_L}^{\tau_1(\eta)} \right) \right]  \leq \E \left[  \fluc_{w ,\Lambda_L }(\eta) \right] \leq \frac{C}{\sqrt[8]{\ln \ln L}}.
\end{equation}
We can then estimate the left-hand side of~\eqref{eq:09241606}. We obtain
\begin{align*}
    \E \left[  \frac{1}{\left| \Lambda_L \right|} \sum_{v \in \Lambda_L} w(v) \cdot \left(   \left\langle f_v \left(  \sigma \right) \right\rangle_{\Lambda_L}^{\tau_0(\eta)} - \left\langle f_v \left(  \sigma \right) \right\rangle_{\Lambda_L}^{\tau_1(\eta)} \right) \right] & =  \frac{1}{\left| \Lambda_L \right|} \sum_{v \in \Lambda_L} w(v) \cdot \E \left[  \left\langle f_v \left(  \sigma \right) \right\rangle_{\Lambda_L}^{\tau_0(\eta)} - \left\langle f_v \left( \sigma \right) \right\rangle_{\Lambda_L}^{\tau_1(\eta)} \right] \\
    & \geq \frac{1}{\left| \Lambda_L \right|} \sum_{v \in \Lambda_L}\left|\E \left[ \left\langle f_v \left(  \sigma \right) \right\rangle_{\Lambda_L}^{\tau_0(\eta)} - \left\langle f_v \left( \sigma \right) \right\rangle_{\Lambda_L}^{\tau_1(\eta)}  \right] \right|.
\end{align*}
A combination of the two previous displays completes the proof of Corollary~\ref{prop:1109cor}.
\end{proof}

\subsection{Proof of Theorem~\ref{theoremuniqueness}} \label{proofofTheorem3}

This section contains the proof of Theorem~\ref{theoremuniqueness}. The argument relies on an application of Proposition~\ref{prop:7.8august} combined with Levy's zero-one law.

\begin{proof}[Proof of Theorem~\ref{theoremuniqueness}]
For every $v \in \Zd$ and $k \geq 1$, let $G_{v,k}$ be $m$-dimensional Gaussian random vectors of mean zero and covariance matrix $2^{-k} I_m$. The random vectors are assumed to be independent. The fields $\eta$ and $\eta^{(\ell)}$ are defined by 
\[\eta_v = \sum_{k=1}^\infty G_{v,k}, \quad \text{and} \quad \eta_v^{(\ell)} = \sum_{k=1}^\ell G_{v,k}.
\]
We note that the covariance matrix of $\eta$ is $I_m$, that $\eta - \eta^{(\ell)}$ is a normal random vector of mean zero and covariance matrix $2^{-\ell} I_m$ which is independent of $\eta^{(\ell)}$, and that $\eta$ is measurable with respect to the $\sigma$-algebra generated by $\{\eta^{(\ell)}\}_{\ell \geq 1}$. For $\ell \in \N \cup \{ \infty \}$, we denote by $\mathcal{F}_\ell$ the $\sigma$-algebra generated by the collection of random variables $(\eta^{(k)})_{k \leq \ell}. $

Let $\mu_1$ and $\mu_2$ be two translation-covariant Gibbs measures, and define the weight function $w_{\eta^{(\ell)}}: \Zd \to \R^m$ according to the formula
\begin{equation} \label{eq:15352507}
\forall v \in \Zd, \forall i \in \{1 , \ldots, m \}, ~ w_{\eta^{(\ell)}, i} (v) = \mathrm{sign} \left(\E \left[ \left\langle f_{v,i} \left(  \sigma \right) \right\rangle_{\mu_1^\eta} - \left\langle f_{v,i} \left(  \sigma \right) \right\rangle_{\mu_2^\eta} ~\big|~ \mathcal{F}_\ell \right] \, \right),
\end{equation}
where $\mathrm{sign}(x) = 1$ for $x \geq 0$ and $-1$ otherwise. 

Since we assume that $\mu_1$ and $\mu_2$ are translation-covariant, it is straightforward to see that $w_{\eta^{(\ell)}}$ is also translation-covariant --- that is,
\begin{equation} \label{eq:18012507}
   \mathcal{T}_v w_{\eta^{(\ell)}} =  w_{\mathcal{T}_v\eta^{(\ell)}}.
\end{equation}
For notational convenience, we introduce the event 
\begin{equation*}
    E_\ell := \left\{ \sup_{\tau_1, \tau_2 \in \S^{\Z^2 \setminus \Lambda_L}}  \left|\frac{1}{|\Lambda_L|}\sum_{v \in \Lambda_L} w_{\eta^{(\ell)}}(v) \cdot \left(\left\langle f_v \left( \sigma \right) \right\rangle_{\Lambda_L}^{\tau_1} - \left\langle f_v \left( \sigma \right) \right\rangle_{\Lambda_L}^{\tau_2} \right) \right|  \leq \frac{C}{2^{-\frac \ell 2}\sqrt[8]{\ln \ln L}} \right\}.
\end{equation*}
Applying Proposition~\ref{prop:7.8august} with the value $\ep := 2^{-\ell}$, recalling the definition of the event ~\eqref{eq:17580308} and the lower bound~\eqref{eq:17590308}, we deduce that
\begin{align} \label{eq:16552507}
    \P(E_\ell) & = \E[ \P ( E_\ell \mid \mathcal{F}_\ell ) ] \\
    & \geq \E[ \indc_{\mathcal{W}_{1 - \ep,L,R}} \P ( E_\ell \mid \mathcal{F}_\ell ) ] \notag \\
    & \geq \P(\mathcal{W}_{1 - \ep,L,R})  (1 -  \exp \left( - c L \right))
    \geq 1 - C \exp \left( - c \sqrt{L} \right).\notag
\end{align}
Using that the weight function $w_{\eta^{(\ell)}}$ and the observable $f_v$ are bounded by $1$, we obtain the upper bound
\begin{align} \label{eq:17162507}
    \E \left[ \sup_{\tau_1, \tau_2 \in \S^{\Z^2 \setminus \Lambda_L}}  \left|\frac{1}{|\Lambda_L|}\sum_{v \in \Lambda_L} w_{\eta^{(\ell)}}(v) \cdot \left(\left\langle f_v \left( \sigma \right) \right\rangle_{\Lambda_L}^{\tau_1} - \left\langle f_v \left( \sigma \right) \right\rangle_{\Lambda_L}^{\tau_2} \right) \right| \right] & \leq \frac{C}{2^{-\ell/2}\sqrt[8]{\ln \ln L}} \P \left[ E_\ell \right] + C (1 - \P \left[ E_\ell \right]) \notag \\
    & \leq \frac{C}{2^{-\ell/2} \sqrt[8]{\ln \ln L}} + C \exp (-c\sqrt{L}) \\
    & \leq \frac{C}{2^{-\ell/2}\sqrt[8]{\ln \ln L}}. \notag
\end{align}
The consistency relations~\eqref{eq:14042507} for the infinite-volume Gibbs measures imply, for any sidelength $L \geq 3$ and any realization of the random field, 
\begin{multline} \label{eq:17172507}
    \frac{1}{|\Lambda_L|}\sum_{v \in \Lambda_L} w_{\eta^{(\ell)}}(v) \cdot \left(\left\langle f_v \left( \sigma \right) \right\rangle_{\mu_1^\eta} - \left\langle f_v \left( \sigma \right) \right\rangle_{\mu_2^\eta} \right)  \\ \leq \sup_{\tau_1, \tau_2 \in \S^{\Z^2 \setminus \Lambda_L}}  \left|\frac{1}{|\Lambda_L|}\sum_{v \in \Lambda_L} w_{\eta^{(\ell)}}(v) \cdot \left(\left\langle f_v \left( \sigma \right) \right\rangle_{\Lambda_L}^{\tau_1} - \left\langle f_v \left( \sigma \right) \right\rangle_{\Lambda_L}^{\tau_2} \right) \right|,
\end{multline}
and therefore 
\begin{equation} \label{eq:10332607}
    \E \left[ \frac{1}{|\Lambda_L|}\sum_{v \in \Lambda_L} w_{\eta^{(\ell)}}(v) \cdot \left(\left\langle f_v \left( \sigma \right) \right\rangle_{\mu_1^\eta} - \left\langle f_v \left( \sigma \right) \right\rangle_{\mu_2^\eta} \right)  \right] \leq \frac{C}{2^{-\ell/2} \sqrt[8]{\ln \ln L}}.
\end{equation}
By the translation-covariance of $w_{\eta^{(\ell)}}$ and of the measures $\mu_1$ and $\mu_2$, we see that, $\P$-almost-surely,
\[
w_{\eta^{(\ell)}}(v) \cdot \left(\left\langle f_v \left( \sigma \right) \right\rangle_{\mu_1^\eta} - \left\langle f_v \left( \sigma \right) \right\rangle_{\mu_2^\eta} \right) = w_{\mathcal{T}_{-v} \eta^{(\ell)}}(0) \cdot  \left(\left\langle f_\mathbf{0} \left(\sigma \right) \right\rangle_{\mu_1^{\mathcal{T}_{-v} \eta}} - \left\langle f_\mathbf{0} \left( \sigma \right) \right\rangle_{\mu_2^{\mathcal{T}_{-v}\eta}} \right). 
\]
Since the field $\eta$ is made up of i.i.d. random variables, the expectation of the righthand side over $\eta$ is independent of $v$. Therefore, for any $L \geq 3$,
\begin{align} \label{eq:10322607}
\E \left[ w_{\eta^{(\ell)}}(\mathbf{0}) \cdot \left(\left\langle f_\mathbf{0} \left(\sigma \right) \right\rangle_{\mu_1^{\eta}} - \left\langle f_\mathbf{0} \left( \sigma \right) \right\rangle_{\mu_2^{\eta}} \right) \right] & =     \E \left[\frac{1}{|\Lambda_L|}\sum_{v \in \Lambda_L} w_{\eta^{(\ell)}}(v) \cdot \left(\left\langle f_v \left( \sigma \right) \right\rangle_{\mu_1^\eta} - \left\langle f_v \left( \sigma \right) \right\rangle_{\mu_2^\eta} \right) \right]  \\ & \leq \frac{C}{2^{-\ell/2} \sqrt[8]{\ln \ln L}}  . \notag
\end{align}
Taking the limit as $L$ goes to infinity, we deduce that, for any $\ell \geq 1$, 
\begin{equation*}
     \E \left[ w_{ \eta^{(\ell)}}(\mathbf{0}) \cdot \left(\left\langle f_\mathbf{0} \left(\sigma \right) \right\rangle_{\mu_1^{\eta}} - \left\langle f_\mathbf{0} \left( \sigma \right) \right\rangle_{\mu_2^{\eta}} \right) \right] = 0.
\end{equation*}
Since $\eta$ is measurable with respect to the $\sigma$-algebra $\mathcal{F}_\infty$, Levy's zero-one law implies that, almost surely on the set $\{ \left\langle f_{\mathbf{0},i} \left(  \sigma \right) \right\rangle_{\mu_1^\eta} - \left\langle f_{\mathbf{0},i} \left(  \sigma \right) \right\rangle_{\mu_2^\eta} \neq 0 \}$ (so that the sign function is continuous),
\begin{align*}
\lim_{\ell \to \infty}\mathrm{sign} \left(\E \left[ \left\langle f_{\mathbf{0},i} \left(  \sigma \right) \right\rangle_{\mu_1^\eta} - \left\langle f_{\mathbf{0},i} \left(  \sigma \right) \right\rangle_{\mu_2^\eta} ~\big|~ \mathcal{F}_\ell \right] \, \right) & = \mathrm{sign} \left( \mathbb{E} \left[ \left\langle f_{\mathbf{0},i} \left(  \sigma \right) \right\rangle_{\mu_1^\eta} - \left\langle f_{\mathbf{0},i} \left(  \sigma \right) \right\rangle_{\mu_2^\eta} ~\big|~ \mathcal{F}_\infty \right] \, \right) \\ & = \mathrm{sign}\left(\left\langle f_{\mathbf{0},i} \left(  \sigma \right) \right\rangle_{\mu_1^\eta} - \left\langle f_{\mathbf{0},i} \left(  \sigma \right) \right\rangle_{\mu_2^\eta} \right),
\end{align*}
as the thermal expectations of the noised observables $f_{\mathbf{0},i}$ are $\eta$-measurable. Since both $w$ and $f_\mathbf{0}$ are bounded by 1, the dominated convergence theorem implies that
\begin{equation} \label{eq:26071050}
\E \left[ \left| \left\langle f_\mathbf{0} \left(\sigma \right) \right\rangle_{\mu_1^{\eta}} - \left\langle f_\mathbf{0} \left( \sigma \right) \right\rangle_{\mu_2^{\eta}} \right| \right] = \lim_{\ell \to \infty }  \E \left[ w_{ \eta^{(\ell)}}(\mathbf{0}) \cdot \left(\left\langle f_\mathbf{0} \left(\sigma \right) \right\rangle_{\mu_1^{\eta}} - \left\langle f_\mathbf{0} \left( \sigma \right) \right\rangle_{\mu_2^{\eta}} \right) \right]  = 0 .
\end{equation}
This implies that the thermal expectation of the noised observable $f_\mathbf{0}(\sigma)$ is the same for any pair of translation-covariant infinite-volume Gibbs measures, and thus completes the proof. 
\end{proof}

\section{Proofs for spin systems with continuous symmetry} \label{SectionCSS}

In this section, we study the spin systems with continuous symmetry presented in Section~\ref{sec:continuous symmetry results} and prove Theorem~\ref{prop:subcritical} and Theorem~\ref{thm.thm2}. The section is organized as follows:
\begin{itemize}[leftmargin=*]
\item In Section~\ref{sec.varlemmaCSS}, we establish two variational lemmas on the set of bounded functions whose integral on any interval of the real line $\R$ is bounded.
\item In Section~\ref{secMW}, we implement a Mermin-Wagner type argument to prove an upper bound on the free energy (see Proposition~\ref{propMermW}).
\item Section~\ref{section4.3CSSsubcrit} is devoted to the proof of Theorem~\ref{prop:subcritical} in the subcritical dimensions $d = 1,2,3$. In Subsection~\ref{subsec4.3.11739}, we combine the Mermin-Wagner upper bound obtained in Proposition~\ref{propMermW} with Proposition~\ref{labelMLEmma} to prove the algebraic decay of the thermally and spatially averaged magnetization with fixed boundary condition stated in~\eqref{eq:TV0904Final}. In Subsection~\ref{subsec4.3.11739}, we build upon the results of Subsection~\ref{subsec4.3.11739} and prove the inequality~\eqref{eq:TV090404final}, thus completing the proof of Theorem~\ref{prop:subcritical}.
\item Section~\ref{criticdimesninosCSS} is devoted to the proof of Theorem~\ref{thm.thm2} following the outline of Section~\ref{Secstratproof} and is divided into three subsections. In Subsection~\ref{lemma.08538}, we combine the Mermin-Wagner upper bound of Proposition~\ref{propMermW} and Proposition~\ref{LemmavarprinCSS2} and establish that, given a box $\Lambda \subseteq \Lambda_L$, if the averaged field $\hat \eta_{\Lambda,i}$ is negative enough, then the thermally and spatially averaged magnetization must be small (see Lemma~\ref{lemma.08538}). In Subsection~\ref{subsec4.3.11739}, we combine the result of Proposition~\ref{labelMLEmma} with a Mandelbrot percolation argument and obtain the quantitative estimate stated in Lemma~\ref{lemmasecondmandelbrot} on the expectation of the spatially and thermally averaged magnetization with a fixed boundary condition. Finally in Subsection~\ref{subsecprooftheorem2}, we upgrade the result of Lemma~\ref{lemmasecondmandelbrot} to include a supremum over all the possible boundary conditions, and complete the proof of Theorem~\ref{thm.thm2}.
\end{itemize}

\subsection{Variational lemmas} \label{sec.varlemmaCSS}
In this section, we state and prove two variational lemmas on the set of measurable bounded functions defined on $\R$ and whose integral on every interval is bounded in absolute value by $1$, i.e.,
\begin{equation} \label{def.setF}
\mathcal{G} := \left\{ g : \R \to \R \, : \, g \mbox{ is measurable, bounded and, for any real interval} ~ I \subseteq \R,~ \left| \int_I g(t) dt \right| \leq 1 \right\}.
\end{equation}

The first result we asserts that the Gaussian expectation of any map $g \in \mathcal{G}$ is bounded by an explicit constant.

\begin{proposition} \label{labelMLEmma}
One has the inequality
\begin{equation} \label{eq:105720}
    \sup_{g \in \mathcal{G}} \left| \int_\R g(t) e^{-\frac{t^2}2} \, dt \right| \leq 2 \int_0^\infty t e^{-\frac{t^2}2} \, dt.
\end{equation}
\end{proposition}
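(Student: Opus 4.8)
The plan is to reduce the problem to a clean rearrangement/extremality statement: among all functions $g\in\mathcal{G}$, the supremum of $\left|\int_\R g(t)e^{-t^2/2}\,dt\right|$ is attained (in a limiting sense) by a function that alternates sign on consecutive intervals of length roughly matching the constraint, and the worst case is the ``most concentrated'' such alternation near the origin. Concretely, I would first observe that the constraint $\left|\int_I g\right|\le 1$ for \emph{every} interval $I$ is what limits how much ``mass with a consistent sign'' $g$ can place near $0$, where the Gaussian weight $e^{-t^2/2}$ is largest.

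First I would set up the extremal problem by symmetrization: since $e^{-t^2/2}$ is even and the class $\mathcal{G}$ is symmetric under $t\mapsto -t$ and under $g\mapsto -g$, it suffices to bound $\int_\R g(t)e^{-t^2/2}\,dt$ from above. Writing $g=g^+-g^-$ does not directly help because the interval constraint couples the two parts, so instead I would argue via an exchange/rearrangement lemma: given any $g\in\mathcal{G}$, one can push the ``positive contribution'' toward the region where $e^{-t^2/2}$ is large without decreasing $\int g\,e^{-t^2/2}$, while preserving membership in $\mathcal{G}$. The key structural fact is that the extremal $g$ should take values in $\{-c,c\}$ for the maximal allowed $c$ (here effectively $c\to\infty$ on vanishingly small intervals, i.e. a sum of signed Dirac-like bumps), with sign pattern $+,-,+,-,\dots$ as $|t|$ increases, and with the switch points chosen so that each interval between consecutive switches has integral exactly $\pm 1$ against Lebesgue measure — but since the bumps can be made arbitrarily narrow, the relevant constraint degenerates and the optimal configuration is a single positive bump of mass $1$ placed exactly at $t=0$... except that a single bump of mass $1$ at $0$ gives $\int g\,e^{-t^2/2} = 1 = \int_{-\infty}^\infty (\text{density})$, which is too crude. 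The correct reading of the bound $2\int_0^\infty t e^{-t^2/2}\,dt = 2$ on the right-hand side suggests instead integrating by parts: $\int_\R g(t)e^{-t^2/2}\,dt = \int_\R G(t)\, t e^{-t^2/2}\,dt$ where $G(t):=\int_0^t g(s)\,ds$ satisfies $|G(t)-G(s)|\le 1$ hence $|G|\le$ (something controlled), but more usefully $G$ has total oscillation on each interval bounded by... Actually the cleanest route: write $G(t) = \int_0^t g$, note $\|G\|_\infty$ need not be finite, but $g\in\mathcal{G}$ gives $|G(t) - G(t-1)|\le 1$ etc. Let me restructure.

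The approach I would actually carry out: integrate by parts to get
\begin{equation*}
\int_\R g(t) e^{-t^2/2}\,dt = \int_0^\infty \big(G(t) - G(-t)\big)\, t e^{-t^2/2}\,dt,
\end{equation*}
where $G(t):=\int_0^t g(s)\,ds$, using that $t e^{-t^2/2}$ is the negative derivative of $e^{-t^2/2}$ and that $G(0)=0$ with $e^{-t^2/2}\to 0$ at $\pm\infty$ killing boundary terms (one checks $G$ grows at most linearly so the boundary terms vanish). Now for each fixed $t>0$, the quantity $G(t)-G(-t) = \int_{-t}^{t} g(s)\,ds$ is the integral of $g$ over the interval $[-t,t]$, so by the defining constraint of $\mathcal{G}$ we have $\big|G(t)-G(-t)\big|\le 1$. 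Plugging this bound in gives
\begin{equation*}
\left|\int_\R g(t) e^{-t^2/2}\,dt\right| \le \int_0^\infty \big|G(t)-G(-t)\big|\, t e^{-t^2/2}\,dt \le \int_0^\infty t e^{-t^2/2}\,dt \cdot 1,
\end{equation*}
which is only half of the claimed bound — so I would additionally symmetrize: run the same integration by parts anchored so as to also capture the contribution giving the factor $2$, i.e. split $\int_\R = \int_0^\infty + \int_{-\infty}^0$ and integrate by parts on each half separately with anchor at $0$, producing $\int_0^\infty G(t)\,t e^{-t^2/2}\,dt - \int_0^\infty G(-t)\,t e^{-t^2/2}\,dt$ but then bounding $|G(t)|\le 1$ and $|G(-t)|\le 1$ \emph{individually} (each is $\int$ over an interval $[0,t]$ resp. $[-t,0]$), yielding the clean $2\int_0^\infty t e^{-t^2/2}\,dt$. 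The main obstacle is the justification of the boundary terms in the integration by parts — specifically confirming that $G(t)e^{-t^2/2}\to 0$ as $t\to\pm\infty$, which follows because $g$ being a member of $\mathcal{G}$ forces $|G(t)|\le \lceil |t|\rceil$ (summing the interval bounds over unit subintervals), a sub-Gaussian-dominated growth; once that is in hand the rest is the elementary bound $|\int_I g|\le 1$ applied with $I=[0,t]$.
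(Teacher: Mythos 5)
Your final argument --- integrating by parts against $e^{-t^2/2}$, writing $G(t)=\int_0^t g(s)\,ds$, observing $|G(t)|\le 1$ because $[0,t]$ (resp.\ $[t,0]$) is itself an interval, and then bounding by $\int_\R |t|e^{-t^2/2}\,dt$ --- is exactly the paper's proof, and your growth remark $|G(t)|\le\lceil|t|\rceil$ more than suffices to kill the boundary terms. One small correction: your intermediate pairing bound $|G(t)-G(-t)|\le 1$ already gives the \emph{stronger} estimate $\int_0^\infty t e^{-t^2/2}\,dt$, which implies the stated inequality, so it was not ``only half of the claimed bound'' in any problematic sense and the subsequent symmetrization fix was unnecessary (though harmless and correct).
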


\begin{proof}
We fix a function $g \in \mathcal{G}$, and let~$G(t):= \int_0^t g(s) ds$.
By the definition of the set $\mathcal{G}$, the map $G$ satisfies $\left| G(t) \right| \leq 1$ for any $t \in \R$, the identity $G(0) = 0$, and is Lipschitz continuous. By performing an integration by parts, we obtain
\begin{align*}
      \hspace{25mm} \left|\int_\R g(t) e^{-\frac{t^2}{2}} \, dt \right|  =  \left| \int_\R G(t) t e^{-\frac{t^2}{2}} \, dt \right|
    \leq  \int_\R | t| e^{-\frac{t^2}{2}} \, dt = 2 \int_0^\infty t e^{-\frac{t^2}{2}} \, dt. \hspace{25mm} \qedhere
\end{align*}
\end{proof}

The next proposition provides a lower bound on the Gaussian measure of the set $\left\{ g \leq \delta \right\}$, for any map $g \in \mathcal{G}$ satisfying $g \geq -1$, and any $\delta >0$.

\begin{proposition} \label{LemmavarprinCSS2}
There exists a constant $C > 0$ such that, for any $\delta \in (0,1]$,
\begin{equation} \label{eq:0951}
    \inf_{\substack{g \in \mathcal{G} \\ g \geq -1}} \int_\R \indc_{\left\{ g(t) \leq \delta\right\}} e^{-t^2/2} \, dt \geq  e^{-C/\delta^2}.
\end{equation}
\end{proposition}

\begin{proof}
We select a function $g \in \mathcal{G}$ and let $G_{0}, G: \R \to \R$ be the maps defined by the formulas
\begin{equation*}
    \forall t \in \R , \hspace{5mm} G_{0}(t):= \int_0^{t} \indc_{\left\{ g(s) \leq \delta\right\}} d s
\hspace{5mm} \mbox{and} \hspace{5mm}
G(t):= \int_{-t}^{t} \indc_{\left\{ g(s) \leq \delta\right\}} d s.
\end{equation*}
Note that the functions $G_{0}$ and $G$ are increasing, $1$ and $2$-Lipschitz continuous respectively, and satisfy the identity $G(t) = G_{0}(t) - G_{0}(-t)$. By performing an integration by parts and a change of variable, we see that
\begin{align} \label{eq:0946}
    \int_\R \indc_{\left\{ g(t) \leq \delta\right\}} e^{-\frac{t^2}{2}} \, dt = \int_\R G_{0}(t) t e^{-\frac{t^2}{2}} \, dt & =  \int_0^\infty \left( G_{0}(t) - G_{0}(-t) \right) t e^{-\frac{t^2}{2}} \, dt \\ & =  \int_0^\infty G(t) t e^{-\frac{t^2}{2}} \, dt. \notag
\end{align}
We next claim that the map $G$ satisfies the lower bound
\begin{equation} \label{keypropH}
    \forall t \geq 0, \, G \left( t\right) \geq \max \left( 0, \frac{2\delta t - 1}{1 + \delta} \right).
\end{equation}
To prove~\eqref{keypropH}, we use the assumption $g \geq -1$ and write
\begin{align*}
    g & \geq \delta \indc_{\{ g > \delta \}} - \indc_{\{ g \leq \delta \}}  \geq \delta \left( 1 - \indc_{\{ g \leq \delta \}} \right) - \indc_{\{ g \leq \delta \}} \geq \delta - (1 + \delta) \indc_{\{g \leq \delta \}}.
\end{align*}
Integrating this inequality over the interval $[-t ,t]$ and using the properties on the function $g$, we obtain
\begin{equation} \label{1918}
    1 \geq 2\delta t - (1 + \delta) G(t) \iff G(t) \geq \frac{2\delta t - 1}{1 + \delta}.
\end{equation}
We conclude the proof of~\eqref{keypropH} by using that $G$ is non-negative. A combination of~\eqref{eq:0946} and~\eqref{keypropH} implies the inequality
\begin{equation} \label{eq:0950}
    \inf_{\substack{g \in \mathcal{G} \\ g \geq -1}} \int_\R \indc_{\left\{ g(t) \leq \delta\right\}} e^{-\frac{t^2}{2}} \, dt \geq \int_{\frac{1}{2\delta}}^\infty  \frac{2\delta t - 1}{1 + \delta}  t e^{-\frac{t^2}{2}} \, dt \geq  e^{-C/\delta^2}.\qedhere
\end{equation}
\end{proof}

\subsection{A Mermin-Wagner upper bound for the free energy} \label{secMW}

In this section, we obtain an upper bound on the free energy of a spin system equipped with a continuous symmetry by implementing a Mermin-Wagner type argument. We recall that the spin space is assumed to be the sphere $\mathbb{S}^{n-1}$, for some $n \geq 2$, as well as the notation for conditional expectations and probabilities introduced in Section~\ref{sec.condexpect} (as they will be used frequently in the proofs below). Before stating the result, we introduce the following definition.

\begin{definition}[Free energy]
Let $\Lambda_0, \Lambda$ be two boxes of $\Zd$ such that $\Lambda \subseteq \Lambda_0.$ For any field $\eta : \Lambda_0 \to \R$, we denote by
\begin{equation} \label{def.tildeeta}
    \tilde \eta_{\Lambda_0 , \Lambda,v} := \left\{
    \begin{aligned}
    \eta_v &~\mbox{if}~ v \in \Lambda_0 \setminus \Lambda, \\
    -\eta_v &~\mbox{if}~ v \in \Lambda.
    \end{aligned}
    \right.
\end{equation}
We define the free energy, for any $\eta : \Lambda_0 \to \R$,
\begin{equation*}
    \tilde \FE_{\Lambda_0 ,\Lambda}^{\tau, h}(\eta)  := \FE_{\Lambda_0}^{\tau, h} \left(  \tilde \eta_{\Lambda_0, \Lambda}\right).
\end{equation*}
\end{definition}

The main result of this section is an upper bound on the difference of the free energies $\FE_{\Lambda_0}^{\tau,h}$ and $\tilde \FE_{\Lambda_0, \Lambda}^{\tau,h}$ conditionally on the values of the field in the box $\Lambda$ and outside the box $2\Lambda$.

\begin{proposition}[Mermin-Wagner upper bound for the energy] \label{propMermW}
Let $n\ge 2$, $d\in\{1,2,3,4\}$ and $i \in \{ 1 , \ldots , n\}$. Let $\beta > 0$ be the inverse temperature, $\lambda>0$ be the disorder strength and $h \in \R^n$ be the deterministic external field. Fix a box $\Lambda_0 \subseteq \Zd$ of side length $L$, and let $\tau \in \S^{\partial \Lambda_0}$ be a boundary condition. For any box $\Lambda$ of side length $\ell$ such that $2 \Lambda \subseteq \Lambda_0$, we have the estimate
\begin{equation} \label{eq:1602}
    \E \left[ \tilde \FE_{\Lambda_0 ,\Lambda}^{\tau,h}  - \FE_{\Lambda_0}^{\tau,h}  ~ \Big\vert ~ \eta_{\left(\Lambda_0 \setminus 2 \Lambda \right) \cup \Lambda , i}  \right] \leq C \frac{\ell^{d-2}}{L^d} + C \frac{\ell^d}{L^d} \left| h \right| \hspace{5mm} \P-\mbox{almost-surely},
\end{equation}
\end{proposition}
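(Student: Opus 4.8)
The plan is to implement the classical Mermin--Wagner spin-wave argument, adapted to the disordered setting and to the conditional expectation. We want to bound the free energy difference produced by flipping the sign of the field on the box $\Lambda$. The key device is a \emph{spin-wave deformation}: a slowly varying rotation $\phi_v \in SO(n)$ of the configuration, equal to the ``reflection'' producing $\tilde\eta$ deep inside $\Lambda$ (so that the rotated field term matches the one for $\tilde\eta$) and equal to the identity outside $2\Lambda$ (so that the boundary condition $\tau$ is untouched), interpolating smoothly in the annulus $2\Lambda \setminus \Lambda$.

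\textbf{Step 1: A change of variables in the partition function.}
Fix the component index $i$. The sign flip $\eta_v \mapsto -\eta_v$ on $\Lambda$ only changes the $i$-th coordinate of the field term via $-(\lambda\eta_{v}+h)\cdot\sigma_v$; equivalently, it is the same as replacing $\sigma_v$ by its reflection $r(\sigma_v)$ through the hyperplane $e_i^\perp$ for $v\in\Lambda$, up to the $h$-term. So the plan is: in the integral defining $Z^{\eta,\lambda,h}_{\beta,\Lambda_0,\tau}$, perform the measure-preserving change of variables $\sigma_v \mapsto \rho_v(\sigma_v)$, where $\rho_v$ is a rotation that equals the reflection $r$ for $v$ well inside $\Lambda$, equals the identity for $v\notin 2\Lambda$, and interpolates. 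Because $\kappa$ is the uniform (rotation-invariant) measure on $\mathbb{S}^{n-1}$ and $\rho_v\in O(n)$, the product measure $\prod_v\kappa(d\sigma_v)$ is invariant. After this substitution, $\tilde{\FE}_{\Lambda_0,\Lambda}^{\tau,h}$ is expressed as $-\tfrac{1}{\beta|\Lambda_0|}\ln$ of an integral of $\exp(-\beta(H_{\Lambda_0}(\rho\sigma) - \sum(\lambda\tilde\eta_v+h)\cdot\rho_v\sigma_v))$. The difference from $\FE_{\Lambda_0}^{\tau,h}$ is then controlled, via Jensen's inequality applied to the Gibbs measure $\mu^{\eta,\lambda,h}_{\beta,\Lambda_0,\tau}$, by the thermal expectation of the energy cost $\Delta E(\sigma,\rho) := [H_{\Lambda_0}(\rho\sigma)-H_{\Lambda_0}(\sigma)] - \sum_v[(\lambda\tilde\eta_v+h)\cdot\rho_v\sigma_v - (\lambda\eta_v+h)\cdot\sigma_v]$.

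\textbf{Step 2: Estimating the deformation cost.}
This is the heart of the matter and the main obstacle. The interaction cost $H_{\Lambda_0}(\rho\sigma)-H_{\Lambda_0}(\sigma) = \sum_{u\sim v}[\Psi(\rho_u\sigma_u,\rho_v\sigma_v)-\Psi(\sigma_u,\sigma_v)]$ is localized on edges where $\rho_u\neq\rho_v$, i.e.\ the annulus $2\Lambda\setminus\Lambda$; using that $\Psi$ is $C^2$ and rotation-invariant (so $\Psi(\sigma_u,\sigma_v)=\Psi(\rho_u\sigma_u,\rho_u\sigma_v)$), a second-order Taylor expansion gives a bound $C\|\rho_u\rho_v^{-1}-\mathrm{Id}\|^2 \le C|\nabla\phi|^2$ per edge, where $\phi$ is the interpolating angle. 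Choosing $\phi$ to be the discrete harmonic-type interpolant between $\Lambda$ and $(2\Lambda)^c$ gives $|\nabla\phi(v)|\le C/\ell$ on the annulus, and summing over the $\asymp \ell^d$ edges of the annulus yields $\sum_{u\sim v}C|\nabla\phi|^2 \le C\ell^d\cdot\ell^{-2} = C\ell^{d-2}$. Dividing by $|\Lambda_0|\asymp L^d$ produces the first term $C\ell^{d-2}/L^d$. The field term: for the $-\lambda\eta_v\cdot\sigma_v$ part, the point of rotating by the \emph{reflection} on the interior of $\Lambda$ is precisely that $-\lambda\eta_v\cdot\rho_v\sigma_v$ (with $\rho_v=r$) equals $-\lambda\tilde\eta_v\cdot\sigma_v$ exactly inside $\Lambda$, so no cost there; in the annulus $\rho_v$ is a partial rotation and one must check that this partial rotation, combined with the sign of $\eta_v$ which is \emph{not} flipped in the annulus, still cancels or is absorbed — here one uses either a symmetrization over $\eta\mapsto-\eta$ in the annulus or a more careful choice of $\rho_v$ so that the $\eta$-dependence in the annulus cost is odd and vanishes in expectation (this is where we condition only on $\eta$ outside $2\Lambda$ and inside $\Lambda$, leaving the annulus field free to be averaged). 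The $h$-term is not odd and contributes $\sum_{v\in 2\Lambda}|h|\cdot\|\rho_v\sigma_v-\sigma_v\|\le 2\ell^d|h|$ after dividing by $|\Lambda_0|$: this gives the second term $2\ell^d|h|/L^d$.

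\textbf{Step 3: Assembling the bound.}
Combining Steps 1--2: $\tilde\FE_{\Lambda_0,\Lambda}^{\tau,h}(\eta) - \FE_{\Lambda_0}^{\tau,h}(\eta) \le \tfrac{1}{|\Lambda_0|}\langle \Delta E\rangle$, and taking the conditional expectation over $\eta_{2\Lambda\setminus\Lambda}$ (keeping $\eta_{(\Lambda_0\setminus 2\Lambda)\cup\Lambda,i}$ fixed) kills the odd-in-$\eta$ annulus field contribution, leaving $C\ell^{d-2}/L^d + 2\ell^d|h|/L^d$, which is exactly \eqref{eq:1602}. The subtle points to be careful about: (i) the change of variables must be done coordinate-by-coordinate or via a genuine $SO(n)$-valued interpolation — the cleanest is to pick a one-parameter rotation subgroup $R(\theta)$ fixing $e_i^\perp\cap e_j^\perp$ (for a suitable $j$) so that $R(\pi)$ acts as the reflection $r$ on the relevant directions, and set $\rho_v=R(\pi\phi_v)$ with $\phi_v$ going from $1$ on $\Lambda$ to $0$ outside $2\Lambda$; then $-\lambda\eta_v\cdot R(\pi\phi_v)\sigma_v$ expands with a term linear in $\eta$ (times $\sin(\pi\phi_v)$-type factors supported in the annulus) which is odd, and a term that after using $R(\pi)=r$ on $\Lambda$ reproduces $\tilde\eta$ exactly; (ii) one must verify the Taylor remainder for $\Psi$ is genuinely $O(|\nabla\phi|^2)$ uniformly, using the stated $C^2$ regularity and compactness of $\mathbb{S}^{n-1}\times\mathbb{S}^{n-1}$; (iii) the dimensional restriction $d\le 4$ enters only through $\ell^{d-2}/L^d$ being the right order — the estimate itself holds for all $d$, but is useful only when the annulus gradient energy $\ell^{d-2}$ is small compared to the field fluctuations. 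I expect Step 2, specifically getting the $\eta$-linear annulus term to be exactly odd (so that it vanishes under the stated conditioning rather than merely being bounded), to be the main technical obstacle, since it dictates the precise form of the rotation interpolation and the precise conditioning in the statement.
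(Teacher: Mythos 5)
Your overall spin-wave strategy is the right one, but there is a genuine gap at the heart of Step 2. For a \emph{single} deformation $\sigma_v\mapsto\rho_v\sigma_v$ the per-edge interaction cost is \emph{not} $O(|\nabla\phi|^2)$: writing $\Psi(\rho_u\sigma_u,\rho_v\sigma_v)=\Psi(\sigma_u,r_{\theta_v-\theta_u}\sigma_v)$ and Taylor expanding, there is a first-order term proportional to $\theta_v-\theta_u\sim 1/\ell$ which does not vanish for a general rotation-invariant $C^2$ potential $\Psi$ (nor does it vanish under the thermal average you obtain from Jensen, since the disorder and the boundary condition break any symmetry that would kill it). Summing $|\nabla\phi|$ over the $\asymp\ell^{d}$ edges of the annulus gives $\ell^{d-1}$, not $\ell^{d-2}$, i.e.\ only the trivial surface-order bound. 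The paper's proof removes the first-order term by using the \emph{pair} of opposite deformations $r_{+\theta_v}$ and $r_{-\theta_v}$ simultaneously: the inequality $\Psi(\sigma_1,r_\theta\sigma_2)+\Psi(\sigma_1,r_{-\theta}\sigma_2)-2\Psi(\sigma_1,\sigma_2)\le C\theta^2$ (eq.~\eqref{eq:1537}) is what produces the quadratic cost, and the two rotated systems are then recombined at the level of partition functions via Cauchy--Schwarz, yielding $\tfrac12\bigl(\FE(R\eta)+\FE(\tilde R\eta)\bigr)\le\FE(\eta)+C\ell^{d-2}/L^d+C\ell^d|h|/L^d$. Without this symmetrization (or an equivalent cancellation mechanism) your argument cannot reach the stated $\ell^{d-2}$.

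A second problem is your treatment of the random-field term. You keep $\eta$ fixed and rotate only the spins, producing an annulus term that is linear in $\eta_v$ but multiplied by thermal averages; its conditional expectation over $\eta_{2\Lambda\setminus\Lambda}$ does \emph{not} vanish, because the Gibbs measure itself depends on the annulus field (Gaussian integration by parts gives a genuinely nonzero correction). Moreover, inside $\Lambda$ your reflection gives $\eta_v\cdot r\sigma_v=(r\eta_v)\cdot\sigma_v$, and $r\eta_v$ flips only the $i$-th component, not the full vector, so it does not reproduce $\tilde\eta_v=-\eta_v$ pointwise. The paper avoids both issues by rotating the field \emph{together} with the spins, so that $\sum_v(R\eta)_v\cdot(R\sigma)_v=\sum_v\eta_v\cdot\sigma_v$ exactly (zero cost from the disorder term), and then uses the rotational invariance of the Gaussian law, conditionally on $\eta_{(\Lambda_0\setminus2\Lambda)\cup\Lambda,i}$, to identify $\E[\FE(R\,\cdot\,)\mid\cdot\,]=\E[\FE(\tilde R\,\cdot\,)\mid\cdot\,]=\E[\tilde\FE_{\Lambda_0,\Lambda}\mid\cdot\,]$ as in \eqref{eq:133920}; this distributional identification is exactly why the statement conditions only on the $i$-th component of the field inside $\Lambda$ and on nothing in the annulus. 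Your proposal gestures at the right conditioning but relies on pointwise identities and oddness claims that do not hold, so the argument as written does not close.
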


\begin{proof}
Let us fix two boxes $\Lambda_0, \Lambda$ satisfying $2\Lambda \subseteq \Lambda_0$, of side lengths $L$ and $\ell$ respectively, and a boundary condition $\tau \in \S^{\partial \Lambda_0}$. We recall the notations $e_1 , \ldots, e_n$ for the canonical basis of $\R^n$ and $\Lambda_0^+ := \Lambda_0 \cup \partial \Lambda_0$. All the configurations $\sigma \in \S^{\Lambda_0^+}$ in this proof are implicitly assumed to satisfy $\sigma_{\partial \Lambda_0} = \tau$.

Let us consider a smooth map $r : \R \to O(n)$, where $O(n)$ denotes the orthogonal group of $\mathbb{R}^{n}$, satisfying $r_0 = r_{2 \pi}= I_n$, for any pair $\theta_1 , \theta_2 \in \R$, $r_{\theta_1} \circ r_{\theta_2} = r_{\theta_1 + \theta_2}$, and such that $r_\pi(e_i) = - e_i$. For each vertex $v \in \Zd$, we denote by
\begin{equation} \label{def.thetaMW}
    \theta_v := \left\{
    \begin{aligned}
    0 \hspace{20mm}& \hspace{3mm} \mbox{if}~ v \in \Zd \setminus 2 \Lambda, \\
    \pi  \left(\frac{2 \dist \left( v , \partial \left(2\Lambda\right) \right)}{\ell} \wedge  1 \right) & \hspace{3mm}\mbox{if}~ v \in 2 \Lambda  .
    \end{aligned} \right.
\end{equation}
This definition implies that, for any $v \in \Lambda$, $\theta_v = \pi$.
We then define two rotations $R, \tilde R$ on the space of configurations by the formulas, for any $\sigma \in \S^{\Lambda_0^+}$ and any vertex $v \in \Zd$,
\begin{equation*}
    \left(R\sigma \right)_v = r_{\theta_v}\sigma_v \hspace{3mm} \mbox{and} \hspace{3mm}
    \left(\tilde R\sigma \right)_v = r_{-\theta_v}\sigma_v.
\end{equation*}
We extend the domain of the rotations $R, \tilde R$ to the set of fields, and write, for any realization of the random field $\eta$ and any vertex $v \in \Lambda_0$,
\begin{equation*}
    \left(R\eta \right)_v = r_{\theta_v}\eta_v \hspace{3mm} \mbox{and} \hspace{3mm}
    \left( \tilde R\eta \right)_v = r_{-\theta_v}\eta_v.
\end{equation*}
For any realization of the field $\eta$ and any configuration $\sigma \in \S^{\Lambda_0^+}$, we have the identities
\begin{equation} \label{eq:1534}
\sum_{v \in \Lambda_0} (R\eta)_v  \cdot (R\sigma)_v = \sum_{v \in \Lambda_0}  \eta_v  \cdot \sigma_v \hspace{3mm} \mbox{and} \hspace{3mm} \sum_{v \in \Lambda_0}  (\tilde R\eta)_v  \cdot (\tilde R\sigma)_v = \sum_{v \in \Lambda_0}  \eta_v  \cdot \sigma_v.
\end{equation}
Additionally, since the two rotations $R$ and $\tilde R$ are equal to the identity outside the box $2 \Lambda$, we have
\begin{equation} \label{eq:01200501}
    \left| \sum_{v \in \Lambda_0} h \cdot (R\sigma)_v - \sum_{v \in \Lambda_0} h \cdot \sigma_v\right| + \left| \sum_{v \in \Lambda_0} h \cdot (\tilde R\sigma)_v - \sum_{v \in \Lambda_0} h \cdot \sigma_v\right| \leq C \ell^d |h|.
\end{equation}
We next prove the inequality, for any configuration $\sigma \in \mathcal{S}^{\Lambda_0^+}$,
\begin{equation} \label{eq:1440}
    \sum_{\substack{ v , w \in \Lambda_0^+ \\ v\sim w}} \Psi \left( r_{\theta_v} \sigma_v , r_{\theta_w} \sigma_w \right) + \sum_{\substack{ v , w \in \Lambda_0^+ \\ v\sim w}}  \Psi \left( r_{-\theta_v} \sigma_v , r_{-\theta_w} \sigma_w \right) \leq 2 \sum_{\substack{ v , w \in \Lambda_0^+ \\ v\sim w}} \Psi \left( \sigma_{v} , \sigma_{w} \right) +C  \ell^{d-2},
\end{equation}
for some constant $C$ depending only on the map $\Psi$ and the rotation $r$. To prove the inequality~\eqref{eq:1440}, we first use that the map $\Psi$ is invariant under the rotations $r_{\theta_v}$ and $r_{-\theta_v}$, and the properties of the map $r$. We obtain
\begin{multline} \label{eq:155924}
     \sum_{\substack{ v , w \in \Lambda_0^+ \\ v\sim w}} \Psi \left( r_{\theta_v} \sigma_v , r_{\theta_w} \sigma_w \right) + \sum_{\substack{ v , w \in \Lambda_0^+ \\ v\sim w}}  \Psi \left( r_{-\theta_v} \sigma_v , r_{-\theta_w} \sigma_w \right)  \\ = \sum_{\substack{ v , w \in \Lambda_0^+ \\ v\sim w}} \Psi \left(  \sigma_v , r_{\theta_w - \theta_v} \sigma_w \right) +  \sum_{\substack{ v , w \in \Lambda_0^+ \\ v\sim w}} \Psi \left( \sigma_v , r_{ \theta_v -\theta_w } \sigma_w \right).
\end{multline}
Using that the map $\Psi$ is assumed to be twice continuously differentiable and bounded, that the map $r$ is smooth and that the state space $\mathbb{S}^{n-1}$ is compact, we can perform a Taylor expansion and obtain that there exists a constant $C$, depending on the maps $\Psi$ and $r$ such that, for each $\theta \in \R$, and each pair of spins $\sigma_1, \sigma_2 \in \mathbb{S}^{n-1}$,
\begin{equation} \label{eq:1537}
    \left| \Psi \left(  \sigma_1 , r_{\theta} \sigma_2 \right) +  \Psi \left( \sigma_1 , r_{ -\theta } \sigma_2 \right) - 2 \Psi \left(  \sigma_1 , \sigma_2 \right) \right| \leq C \theta^2.
\end{equation}
Applying the inequality~\eqref{eq:1537} with the values $\theta = \theta_v - \theta_w$, $\sigma_1 = \sigma_v$, $\sigma_2 = \sigma_w$,
and summing over all the pairs of neighboring vertices $v , w$ in $\Lambda_0^+$ yields
\begin{equation} \label{eq:154724}
    \sum_{\substack{ v , w \in \Lambda_0^+ \\ v\sim w}} \Psi \left(  \sigma_v , r_{\theta_w - \theta_v} \sigma_w \right) +  \sum_{\substack{ v , w \in \Lambda_0^+ \\ v\sim w}} \Psi \left( \sigma_v , r_{ \theta_v -\theta_w } \sigma_w \right) \leq 2 \sum_{\substack{ v , w \in \Lambda_0^+ \\ v\sim w}} \Psi \left(  \sigma_v , \sigma_w \right) + C \sum_{\substack{ v , w \in \Lambda_0^+ \\ v\sim w}} \left| \theta_v - \theta_w \right|^2.
\end{equation}
By the definition of the map $\theta$ stated in~\eqref{def.thetaMW}, we have, for any pair of neighboring vertices $v , w \in \Lambda_0^+$,
\begin{equation} \label{eq:154624}
    \left\{
    \begin{aligned}
    \left| \theta_v - \theta_w \right| &= 0 & \hspace{3mm} \mbox{if}~ v,w \in \Lambda \cup \left( \Lambda_0 \setminus 2 \Lambda \right),\\
    \left| \theta_v - \theta_w \right| &\leq \frac{C}{\ell} & \hspace{3mm} \mbox{if}~ \{ v, w \} \cap \left( 2\Lambda \setminus \Lambda \right) \neq \emptyset.
    \end{aligned}
    \right.
\end{equation}
Combining the estimates~\eqref{eq:154724},~\eqref{eq:154624}, and using that the volume of the annulus $ \left( 2\Lambda \setminus \Lambda \right)$ is of order $\ell^d$, we obtain
\begin{equation} \label{eq:155824}
    \sum_{\substack{ v , w \in \Lambda_0^+ \\ v\sim w}} \Psi \left(  \sigma_v , r_{\theta_w - \theta_v} \sigma_w \right) +  \sum_{\substack{ v , w \in \Lambda_0^+ \\ v\sim w}} \Psi \left( \sigma_v , r_{ \theta_v -\theta_w } \sigma_w \right) \leq 2 \sum_{\substack{ v , w \in \Lambda_0^+ \\ v\sim w}} \Psi \left(  \sigma_v , \sigma_w \right) + C\ell^{d-2}.
\end{equation}
Combining the identity~\eqref{eq:155924} and  the inequality~\eqref{eq:155824} completes the proof of~\eqref{eq:1440}. Combining the estimates~\eqref{eq:1534},~\eqref{eq:01200501} and~\eqref{eq:1440} with the definition of the noised Hamiltonian~\eqref{eq125924}, we have obtained the inequality: for any realization of the random field $\eta$ and any configuration $\sigma \in \S^{\Lambda_0^+}$,
\begin{equation} \label{eq:1557}
      H^{R\eta , h }_{\Lambda_0} \left( R\sigma \right) + H^{\tilde R\eta ,h }_{\Lambda_0} \left( \tilde R\sigma \right) \leq 2H^{\eta ,h}_{\Lambda_0} \left( \sigma \right) + C \ell^{d-2} + C \ell^d |h|.
\end{equation}
We now use the inequality~\eqref{eq:1557} to prove the estimate~\eqref{eq:1602}.
By the rotational invariance of the measure $\kappa$ and the Cauchy-Schwarz inequality, we obtain, for any realization of the field $\eta$,
\begin{align*}
    \lefteqn{\FE_{\Lambda_0}^{\tau,h} \left( R \eta  \right) +  \FE_{\Lambda_0}^{\tau,h} \left( \tilde R \eta \right)} \qquad & \\ & = -\frac{1}{\beta\left|\Lambda_0 \right|}\ln  \left[ \int_{\S^{\Lambda_0}} \exp \left( - \beta H^{R\eta , h }_{\Lambda_0} (\sigma) \right) \prod_{ v \in \Lambda_0} \kappa \left( d \sigma_v\right) \int_{\S^{\Lambda_0}} \exp \left( - \beta H^{\tilde R \eta,h }_{\Lambda_0}(\sigma) \right) \prod_{ v \in \Lambda_0} \kappa \left( d \sigma_v\right) \right] \\
    & = -\frac{1}{\beta\left|\Lambda_0 \right|}\ln  \left[ \int_{\S^{\Lambda_0}} \exp \left( - \beta H^{R\eta,h}_{\Lambda_0}(R \sigma) \right) \prod_{ v \in \Lambda_0} \kappa \left( d \sigma_v\right) \int_{\S^{\Lambda_0}} \exp \left( - \beta H^{\tilde R\eta,h}_{\Lambda_0}(\tilde R \sigma) \right) \prod_{ v \in \Lambda_0} \kappa \left( d \sigma_v\right) \right] \\
    & \leq -\frac{2}{\beta\left|\Lambda_0 \right|}\ln \int_{\S^{\Lambda_0}} \exp \left( - \beta \frac{ H^{ R\eta,h}_{\Lambda_0}(R\sigma) + H^{ \tilde R\eta,h}_{\Lambda_0}(\tilde R\sigma) }{2} \right) \prod_{ v \in \Lambda_0} \kappa \left( d \sigma_v\right).
\end{align*}
We then use the estimate~\eqref{eq:1557} and obtain
\begin{align} \label{eq:14120411}
   \frac{\FE_{\Lambda_0}^{\tau,h} \left( R \eta \right) +  \FE_{\Lambda_0}^{\tau,h} \left( \tilde R \eta  \right)}{2} & \leq -\frac{1}{\beta\left|\Lambda_0 \right|}\ln \int_{\S^{\Lambda_0}} \exp \left( - \beta H^{\eta, h}_{\Lambda_0}(\sigma) - \beta C \ell^{d-2} - \beta C \ell^d |h| \right) \prod_{ v \in \Lambda_0} \kappa \left( d \sigma\right) \notag \\
   & \leq \FE_{\Lambda_0}^{\tau, h} \left( \eta \right) + \frac{C \ell^{d-2}}{L^d} +  \frac{C\ell^d}{L^d} |h|. 
\end{align}
Moreover, by the definition of the rotations $R$ and $\tilde R$, we have $R\eta = \tilde R \eta$ in the box $\Lambda$, $(R \eta)_i = (\tilde R\eta)_i = -\eta_i$ along the $i$-th component of the field inside the box $\Lambda$, and $R \eta = \tilde R\eta = \eta$ inside the annulus $\left( \Lambda_0 \setminus 2 \Lambda \right)$. Using the rotational invariance of the law of the field $\eta$, we obtain the identities
\begin{equation} \label{eq:133920}
     \E \left[ \FE_{\Lambda_0}^{\tau,h} \left( R \, \cdot \right) ~ \Big\vert ~ \eta_{ \left(\Lambda_0 \setminus 2 \Lambda \right) \cup \Lambda,i }  \right] = \E \left[ \FE_{\Lambda_0}^{\tau,h} \left( \tilde R \, \cdot  \right) ~ \Big\vert ~ \eta_{\left( \Lambda_0 \setminus 2 \Lambda \right) \cup \Lambda,i }  \right] = \E \left[ \tilde \FE_{\Lambda_0, \Lambda}^{\tau,h} ~ \Big\vert ~ \eta_{\left( \Lambda_0 \setminus 2 \Lambda \right) \cup \Lambda,i }  \right].
\end{equation}
Taking the conditional expectation with respect to the field $\eta_{\left(\Lambda_0 \setminus 2 \Lambda \right) \cup \Lambda,i}$ in the inequality~\eqref{eq:14120411} and using the identity~\eqref{eq:133920} completes the proof of~\eqref{eq:1602}.
\end{proof}

\subsection{Proof of Theorem~\ref{prop:subcritical}} \label{section4.3CSSsubcrit}

In this section, we obtain an algebraic rate of convergence for the expectation of the spatially and thermally averaged magnetization in the subcritical dimensions $d \in \{ 1, 2 , 3\}$. We prove the following more refined version of Theorem~\ref{prop:subcritical}, which takes into account the dependence in the external magnetic field $h$. Theorem~\ref{prop:subcritical} is a direct consequence upon taking $|h| \leq L^{-2}$ and $|h| \leq L^{-1}$.

\begin{theorem} \label{prop:subcritical2}
Let $d \in \{ 1,2,3 \}$, $L \ge 2$ be an integer, $\lambda>0$ and $\beta > 0$ and $h \in \R^n$ be a magnetic field satisfying $|h|\leq 1$. Let $\tau \in \S^{\partial \Lambda_{2L}}$ be a boundary condition (which may be the free or periodic boundary condition) and set $\ell := |h|^{-\frac 12} \wedge L$. There exists a constant $C > 0$ depending on $\lambda$, $n$ and $\Psi$ such that,
\begin{equation} \label{eq:TV0904}
    \left| \E \left[\frac{1}{\left| \Lambda_\ell \right|}  \sum_{v \in  \Lambda_\ell } \left\langle \sigma_v \right\rangle^{\tau , h}_{\Lambda_{2L}} \right] \right|  \leq C \ell^{\frac d2 - 2}.
\end{equation}
Additionally, for any magnetic field $h \in \R^n$,
\begin{equation} \label{eq:TV090404}
    \E \left[ \sup_{\tau \in \S^{\partial \Lambda_L}} \left|\frac{1}{\left| \Lambda_L \right|}  \sum_{v \in  \Lambda_L} \left\langle \sigma_v \right\rangle^{\tau , h}_{\Lambda_L} \right| \right]  \leq C \left( |h| \vee L^{-1}  \right)^{\frac{4-d}{2(8-d)}},
\end{equation}
where the free and periodic boundary conditions are included in the supremum.
\end{theorem}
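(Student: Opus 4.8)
The plan is to prove Theorem~\ref{prop:subcritical2} (which immediately gives Theorem~\ref{prop:subcritical}) by combining the Mermin--Wagner upper bound of Proposition~\ref{propMermW} with the variational lemmas of Section~\ref{sec.varlemmaCSS}, following the outline of Section~\ref{Secstratproof}. For the first bound~\eqref{eq:TV0904}, fix the box $\Lambda_0 = \Lambda_{2L}$, the boundary condition $\tau \in \S^{\partial\Lambda_{2L}}$, and the sub-box $\Lambda = \Lambda_\ell$ with $\ell = |h|^{-1/2}\wedge L$ (so that $\ell^d|h| \le \ell^{d-2}$ and $2\Lambda \subseteq \Lambda_0$). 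I would work with one component $i \in \{1,\dots,n\}$ at a time. Conditioning on $\eta^\perp := \eta^\perp_{\Lambda_\ell}$ along component $i$ and on all other components, consider the function $g(t) := \frac{\partial}{\partial\hat\eta_{\Lambda_\ell,i}}\big(\tilde\FE^{\tau,h}_{\Lambda_0,\Lambda_\ell} - \FE^{\tau,h}_{\Lambda_0}\big)$ evaluated at $\hat\eta_{\Lambda_\ell,i} = t$, which by~\eqref{eq:02111210} and the chain rule equals $-\frac{1}{|\Lambda_0|}\sum_{v\in\Lambda_\ell}\big(\langle\sigma_{v,i}\rangle^{\tau,h}_{\Lambda_0}(\tilde\eta) + \langle\sigma_{v,i}\rangle^{\tau,h}_{\Lambda_0}(\eta)\big)$ up to sign; the point is that $g$ is a bounded function whose integral over any interval is controlled because $\tilde\FE - \FE$ is bounded (in the conditional sense) by $C\ell^{d-2}/L^d$ thanks to Proposition~\ref{propMermW}, so a rescaled version of $g$ lies in the class $\mathcal{G}$ of~\eqref{def.setF}. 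Then Proposition~\ref{labelMLEmma} applied after the Gaussian change of variable $t = \hat\eta_{\Lambda_\ell,i}$ (which has variance $|\Lambda_\ell|^{-1}$, hence fluctuations of order $\ell^{-d/2}$ which dominate the $\ell^{d-2}/L^d$ size of the free-energy gap when $d \le 3$) bounds the Gaussian average of $g$; using the $\eta \to -\eta$ symmetry of the random field to identify $\E[\langle\sigma_{v,i}\rangle(\tilde\eta)]$ with $\E[\langle\sigma_{v,i}\rangle(\eta)]$ after taking the outer expectation, this yields $|\E[\frac{1}{|\Lambda_\ell|}\sum_{v\in\Lambda_\ell}\langle\sigma_{v,i}\rangle^{\tau,h}_{\Lambda_{2L}}]| \le C\ell^{d/2-2}$. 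Summing over the $n$ components gives~\eqref{eq:TV0904}.

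For the second bound~\eqref{eq:TV090404}, which must be uniform over boundary conditions on $\partial\Lambda_L$, I would follow the two-dimensional template: first establish a \emph{high-probability} (rather than in-expectation) statement at an intermediate scale, then use domain subadditivity and a concentration argument to pass to the supremum over $\tau$. Concretely, pick an intermediate scale $m \sim L^{\theta}$ with $\theta \in (0,1)$ to be optimized; partition $\Lambda_L$ into $\sim (L/m)^d$ boxes of side $\sim m$. For each such box $\Lambda$, the variational Lemma~\ref{LemmavarprinCSS2} (applied to the function $g(t)$ above which additionally satisfies $g \ge -C$ pointwise by the spin bound $|\langle\sigma_v\rangle| \le 1$) shows that, conditionally on $\eta^\perp$, the averaged field $\hat\eta_{\Lambda,i}$ lands with probability at least $e^{-C/\delta^2}$ in a region where $\frac{1}{|\Lambda|}\sum_{v\in\Lambda}\langle\sigma_{v,i}\rangle^{\tau,h}_{\Lambda}$ is $\le \delta$ in absolute value — but here, unlike the two-dimensional argument, I do \emph{not} iterate a Mandelbrot percolation since the fluctuations strictly dominate the error term; a single scale suffices and $\delta$ can be taken to be an actual power of $L$ (roughly $(|h|\vee L^{-2})^{(4-d)/(2(8-d))}$ after balancing). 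The exponent $\frac{4-d}{2(8-d)}$ should emerge from optimizing: the free-energy gap at scale $m$ is $\sim m^{d-2}$, the field fluctuation is $\sim m^{-d/2}$, the probabilistic cost of the variational step is polynomial in $L$, and the loss from the supremum over boundary conditions (via~\eqref{eq:12030211}, whose error is $|\partial\Lambda|/|\Lambda| \sim m^{-1}$ plus boundary-field terms) must be balanced against $m/L$ — the four competing quantities $m^{d-2}/L^2 \cdot L^{?}$, $m^{-d/2}$, $m/L$, $(m/L)^d$ pin down $\theta$ and hence the exponent.

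The main obstacle I anticipate is twofold. First, one must carefully track the \emph{conditional} nature of the Mermin--Wagner bound~\eqref{eq:1602}: it controls $\E[\tilde\FE - \FE \mid \eta_{(\Lambda_0\setminus 2\Lambda)\cup\Lambda,i}]$, so the function $g$ whose integral one wants to bound is an average of finite-volume quantities over the remaining randomness, and one needs Proposition~\ref{labelMLEmma}/Lemma~\ref{LemmavarprinCSS2} to apply to this conditional object — this requires checking that the key structural facts ($g$ bounded, $g \ge -1$ up to rescaling, and $\int_I g$ controlled) survive conditioning, which they do since they are pointwise/convexity facts, but the bookkeeping with the external field $h$ (the extra $\ell^d|h|/L^d$ term) and the choice $\ell = |h|^{-1/2}\wedge L$ needs care to ensure the $h$-contribution never dominates. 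Second, for~\eqref{eq:TV090404} the passage from a fixed to an arbitrary boundary condition is delicate because the error in~\eqref{eq:12030211} involves $\lambda\sum_{v: \dist(v,\partial\Lambda)\le R}|\eta_v|$, which here ($R=0$) is a boundary-layer sum of Gaussians of size $\sim m^{(d-1)/2}/m^d = m^{-(d+1)/2}$ in the normalized free energy — this must be incorporated into the balance of exponents, and the concentration argument (Hoeffding over the $\sim(L/m)^d$ disjoint boxes, as in the proof of~\eqref{eq:11191006}) must be set up so that the resulting stochastic integrability is good enough to survive taking the supremum over the (uncountable, but Lipschitz-parametrized) set of boundary conditions after a net argument or a direct monotone-limit argument. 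I expect the exponent optimization to be the most error-prone bookkeeping step, while the conceptual core — Mermin--Wagner gap $\times$ variational lemma $\times$ Gaussian fluctuation — is exactly the strategy already laid out in Section~\ref{Secstratproof}.
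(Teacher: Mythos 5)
Your treatment of the fixed-boundary-condition estimate~\eqref{eq:TV0904} is essentially the paper's proof: the Mermin--Wagner bound of Proposition~\ref{propMermW}, the identification of the spatially averaged magnetization as the $\hat\eta_{\Lambda_\ell,i}$-derivative of the conditional expectation of $\tilde\FE-\FE$, the rescaling of that derivative into the class $\mathcal{G}$, Proposition~\ref{labelMLEmma} applied with the Gaussian of variance $\left|\Lambda_\ell\right|^{-1}$, and the $\eta\to-\eta$ symmetry; your remark that one must work with the conditional object (rather than pointwise in $\eta^\perp$) is exactly the right correction, since~\eqref{eq:1602} only controls a conditional expectation, and the paper implements precisely that by conditioning on $\hat\eta_{\Lambda_\ell,i}$.

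The route you propose for~\eqref{eq:TV090404}, however, has a genuine gap. First, Lemma~\ref{LemmavarprinCSS2} does not yield a high-probability statement at a single scale: it only guarantees that the good set has Gaussian measure at least $e^{-C/\delta^2}$, which for $\delta$ a power of $L$ is stretched-exponentially small; it is a lower bound of the same kind that forces the multi-scale Mandelbrot iteration in $d=4$, and in any case it is proved for a function $g$ built from the free energy with a fixed boundary condition, so it gives nothing uniform over $\tau$. Second, the partition-plus-Hoeffding scheme is circular for an in-expectation bound on the supremum over $\tau$: Proposition~\ref{propdommarkov} together with independence of disjoint boxes only shows that the expectation of the scale-$L$ supremum is bounded by the expectation of the identical scale-$m$ supremum, i.e.\ the same problem at a smaller scale, while the concentration step merely improves tails, which~\eqref{eq:TV090404} does not require. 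What is missing is the mechanism the paper actually uses to beat the supremum over boundary conditions: for each $\eta$ and each component $i$, pick an extremal measurable boundary condition $\tau_i(\eta,h)$; compare $\E\bigl[\FE^{\tau_i(\eta,h),h'}_{\Lambda_L}\bigr]$ to the periodic free energy at cost $C/L$ via~\eqref{eq:12030211}; control the $h$-dependence of the periodic free energy by integrating the fixed-boundary estimate~\eqref{eq:TV0904} in the field, using translation invariance on the torus~\eqref{eq:18100411} (this is~\eqref{eq:1352199}); and finally exploit convexity of $h'\mapsto-\E\bigl[\FE^{\tau_i(\eta,h),h'}_{\Lambda_L}\bigr]$ through a finite-difference quotient with increment $\left(|h|\vee L^{-2}\right)^{2/(8-d)}$ in the $i$-th coordinate, which is exactly where the exponent $\frac{4-d}{2(8-d)}$ arises (the infimum side is then recovered from the symmetry~\eqref{eq:16560511}). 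Your exponent-balancing heuristic cannot substitute for this, because the ``probabilistic cost of the variational step'' is not polynomial in $L$, and no step of your scheme converts the fixed-$\tau$ estimate into one valid for the $\eta$-dependent extremal boundary condition.
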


The proof of Theorem~\ref{prop:subcritical2} is decomposed into two subsections: in Subsection~\ref{subsec4.3.11739}, we establish the inequality~\eqref{eq:TV0904}, and in Subsection~\ref{subsec4.3.21739}, we prove the upper bound~\eqref{eq:TV090404}.

\subsubsection{Algebraic decay of the magnetization with fixed boundary condition} \label{subsec4.3.11739}

In this section, we combine Proposition~\ref{propMermW} with Proposition~\ref{labelMLEmma} to obtain the algebraic decay of the magnetization with a fixed boundary condition stated in~\eqref{eq:TV0904}.

\begin{proof}[Proof of Theorem~\ref{prop:subcritical2}: estimate~\eqref{eq:TV0904}]
Let us fix an integer $L \geq 2$, a boundary condition $\tau \in \S^{\partial \Lambda_{2L}}$, an external magnetic field $h \in \R^n$ such that $|h| \leq 1$ and an integer $i \in \{ 1 , \ldots, n\}$. We introduce the notation $\ell := |h|^{-\frac 12} \wedge L$. Applying Proposition~\ref{propMermW} with $\Lambda_0 = \Lambda_{2L}$ and $\Lambda = \Lambda_\ell$, we have the inequality
\begin{equation} \label{eq:1701}
    \E \left[ \tilde \FE_{\Lambda_{2L}, \Lambda_\ell}^{\tau,h}  - \FE_{\Lambda_{2L}}^{\tau,h}  ~ \Big\vert ~ {\hat\eta_{\Lambda_\ell,i}} \right] \leq  C \frac{\ell^{d-2}}{L^d}  \hspace{5mm} \mathbb{P}-\mbox{almost-surely}.
\end{equation}
Let us note that the conditional expectation depends only on the realization of the averaged field ${\hat\eta_{\Lambda}}$; it can thus be seen as a function defined on $\R$ and valued in $\R$ (see Section~\ref{sec.condexpect}). By the definition of the free energy $\tilde \FE_{\Lambda_{2L}, \Lambda_\ell}^{\tau,h} $ and the $\eta \to -\eta$ invariance of the law of the random field, we have the identity
\begin{equation} \label{eq:0851}
     \E \left[ \tilde \FE_{\Lambda_{2L}, \Lambda_\ell}^{\tau,h}  ~ \Big\vert ~ {\hat\eta_{\Lambda_\ell,i}} \right] \left(\hat\eta_{\Lambda_\ell,i}\right) =  \E \left[ \FE_{\Lambda_{2L}}^{\tau,h}  ~ \Big\vert ~ {\hat\eta_{\Lambda_\ell,i}} \right] \left(- \hat\eta_{\Lambda_\ell,i}\right) .
\end{equation}
To ease the notation, let us define the map $G : \R \to \R$ by the formula
\begin{equation*}
    G(\hat\eta_{\Lambda_\ell,i}) := \E \left[ \FE_{\Lambda_{2L}}^{\tau,h}  ~ \Big\vert ~ {\hat\eta_{\Lambda_\ell,i}} \right] \left(-\hat\eta_{\Lambda_\ell,i}\right) - \E \left[ \FE_{\Lambda_{2L}}^{\tau,h}  ~ \Big\vert ~ {\hat\eta_{\Lambda_\ell,i}} \right] \left(\hat\eta_{\Lambda_\ell,i}\right).
\end{equation*}
We note that, by Proposition~\ref{prop.basicpropfreeen} and the Gaussianity of the field, the derivative of the map $G$ is explicit and we have
\begin{equation} \label{eq:1907}
    G'(\hat\eta_{\Lambda_\ell,i}) = \lambda \E \left[ \frac{1}{\left| \Lambda_{2L}\right|} \sum_{v \in \Lambda_\ell} \left\langle \sigma_{v,i} \right\rangle^{\tau , h}_{\Lambda_{2L}}  ~ \Big\vert ~ {\hat\eta_{\Lambda_\ell,i}} \right]\left(-\hat\eta_{\Lambda_\ell,i}\right) + \lambda \E \left[ \frac{1}{\left| \Lambda_{2L} \right|} \sum_{v \in \Lambda_\ell} \left\langle  \sigma_{v,i} \right\rangle^{\tau , h}_{\Lambda_{2L}}  ~ \Big\vert ~ {\hat\eta_{\Lambda_\ell,i}} \right] \left(\hat\eta_{\Lambda_\ell,i}\right).
\end{equation}
The strategy is to apply Proposition~\ref{labelMLEmma} with the map $g: \R \to \R$ defined by the formula
\begin{equation} \label{eq:defogg}
     g(\hat\eta_{\Lambda_\ell,i}) :=\frac{L^d}{2C \ell^{d-2} \left| \Lambda_\ell \right|^{1/ 2}} G'\left(\frac{\hat\eta_{\Lambda_\ell,i}}{\left| \Lambda_\ell \right|^{1/ 2}}\right),
\end{equation}
where $C$ is the constant appearing in the right side of~\eqref{eq:1701}.
Let us first verify that the map $g$ belongs to the set $\mathcal{G}$ introduced in~\eqref{def.setF}. We fix an interval $I = [t_0 , t_1] \subseteq \R$. By the inequality~\eqref{eq:1701}, we have
\begin{equation*}
    \left| \int_I g(t) \, dt \right| = \frac{ L^d}{2C \ell^{d-2}} \left| G\left( \frac{t_1}{\left| \Lambda_\ell  \right|^{1/ 2}} \right) - G\left( \frac{t_0}{\left| \Lambda_\ell \right|^{1/ 2}} \right)  \right| \leq 1.
\end{equation*}
Consequently, the map $g$ belongs to the set $\mathcal{G}$. We can thus apply Proposition~\ref{labelMLEmma} and obtain
\begin{equation} \label{eq:1916}
    \left| \int_\R g(t) e^{-\frac{t^2}{2}} \, dt \right| \leq C,
\end{equation}
for some constant $C > 0$. Using the definition of $g$ stated in~\eqref{eq:defogg} and performing the change of variable $t \to \left| \Lambda_\ell  \right|^{\frac 12} t$, we obtain the inequality
\begin{equation*}
    \left| \int_\R G'(t) e^{-\frac{\left|\Lambda_\ell \right| t^2}{2}} \, dt \right| \leq C \frac{\ell^{d-2}}{L^d}.
\end{equation*}
Using that the random variable ${\hat\eta_{\Lambda_\ell, i}}$ is Gaussian of variance $\left| \Lambda_\ell \right|^{-1}$ and the identity~\eqref{eq:1907}, we obtain the equality
\begin{equation*}
    \sqrt{\frac{ \left| \Lambda_\ell \right|}{2\pi}} \int_\R G'(t) e^{- \frac{\left| \Lambda_\ell \right| t^2}{2}} \, dt = 2 \lambda \E \left[ \frac{1}{\left| \Lambda_{2L} \right|} \sum_{v \in \Lambda_{\ell}} \left\langle  \sigma_{v,i} \right\rangle^{\tau , h}_{\Lambda_{2L}} \right].
\end{equation*}
Combining the two previous displays shows
\begin{equation} \label{eq:16460411}
   \left| \E \left[ \frac{1}{\left| \Lambda_\ell \right|} \sum_{v \in \Lambda_\ell} \left\langle  \sigma_{v,i} \right\rangle^{\tau , h}_{\Lambda_{2L}}  \right] \right| \leq C \ell^{\frac d2 - 2}.
\end{equation}
Since the inequality~\eqref{eq:16460411} holds for any $i \in \{ 1 , \ldots, n \}$, it implies the inequality~\eqref{eq:TV0904}.
\end{proof}

\subsubsection{Algebraic decay of the magnetization uniform over the boundary conditions} \label{subsec4.3.21739}

In this section, we use the results established in Subsection~\ref{subsec4.3.11739} to obtain an algebraic decay for the magnetization which 
holds uniformly over the boundary condition.

\begin{proof}[Proof of Theorem~\ref{prop:subcritical2}: estimate~\eqref{eq:TV090404}]
Fix a side length $L \geq 2$. We consider the system with periodic boundary condition and note that, for any vertex $v \in \Lambda_{L}$ and any $h \in \R^n$,
\begin{equation} \label{eq:18100411}
    \E \left[ \left\langle \sigma_v \right\rangle^{\per , h}_{\Lambda_{L}} \right] = \E \left[ \left\langle \sigma_0 \right\rangle^{\per , h}_{\Lambda_{L}} \right].
\end{equation}
Let us now fix $h \in \R^n$ such that $\left| h \right| \leq 1$ and set $\ell := |h|^{-\frac 12} \wedge \frac{L}{2}$. Applying~\eqref{eq:TV0904} with the boxes $\Lambda_{L}$ and $\Lambda_{\ell}$, and using~\eqref{eq:18100411}, we obtain
\begin{equation*}
    \left| \E \left[ \left\langle \sigma_0 \right\rangle^{\per , h}_{\Lambda_{L}} \right] \right| = \left| \E \left[ \frac{1}{\left| \Lambda_{\ell} \right|}\sum_{v \in \Lambda_{\ell}}\left\langle \sigma_v \right\rangle^{\per , h}_{\Lambda_{L}} \right] \right| \leq C \ell^{\frac d2 - 2} \leq C (|h|\vee L^{-2})^{1 - \frac d4}.
\end{equation*}
Using the identity~\eqref{eq:14210811}, we obtain, for any $h \in \R^n$ such that $|h| \leq 1$,
\begin{align} \label{eq:1352199}
    \left| \E \left[ \FE^{\per , h}_{\Lambda_L}(\eta) - \FE^{\per , 0}_{\Lambda_L}(\eta)\right] \right| &  \leq \sum_{i=1}^n \int_0^1 \left| \E \left[ \frac{1}{\left| \Lambda_L \right|}\sum_{v \in \Lambda_L}\left\langle \sigma_{v,i} \right\rangle^{\per , t h}_{\Lambda_{L}} \right] \right| |h_i| \, dt  \\
    & \leq \sum_{i=1}^n \int_0^1 \left| \E \left[ \left\langle \sigma_{0,i} \right\rangle^{\per , t h}_{\Lambda_{L}} \right] \right| |h_i| \, dt \notag \\
    & \leq  C \left| h \right| \left( \left| h\right| \vee L^{- 2} \right)^{1 - \frac d4} \notag \\ \notag
    & \leq C  \left( \left| h\right| \vee L^{-2} \right)^{2- \frac d4},
\end{align}
where we used $ \left| h\right|\leq  \left| h\right| \vee L^{-2}$ in the last inequality.
Let us then fix an integer $i \in \{ 1 , \ldots, n \}$. For each realization of the random field $\eta$ and each $h \in \R^n$, we let $\tau_i(\eta , h) \in \S^{\partial \Lambda_L}$ be a boundary condition satisfying
\begin{equation} \label{eq:12010511}
    \frac{1}{\left| \Lambda_L \right|}  \sum_{v \in \Lambda_L} \left\langle  \sigma_{v,i} \right\rangle^{\tau_i(\eta , h) , h}_{\Lambda_L}  = \sup_{\tau \in \S^{\partial \Lambda_L}}  \frac{1}{\left| \Lambda_L \right|} \sum_{v \in \Lambda_L} \left\langle  \sigma_{v,i} \right\rangle^{\tau , h}_{\Lambda_L}.
\end{equation}
Note that, by the inequality~\eqref{eq:12030211} of Proposition~\ref{prop.basicpropfreeen} (with $R = 0$), we have, for any $h,h' \in \R^n$,
\begin{equation} \label{eq:1353199}
    \left| \E \left[ \FE^{\tau_i(\eta , h) , h'}_{\Lambda_L}(\eta) \right] - \E \left[ \FE^{\per , h'}_{\Lambda_L}(\eta) \right] \right| \leq \frac{C}{L}.
\end{equation}
A combination of the inequalities~\eqref{eq:1352199} and~\eqref{eq:1353199} yields, for any $h, h' \in \R^n$ satisfying $ |h'| \leq 1$,
\begin{equation} \label{eq:13561999}
    \left| \E \left[ \FE^{\tau_i(\eta , h) , h'}_{\Lambda_L}(\eta) \right] - \E \left[ \FE^{\tau_i(\eta, h) , 0}_{\Lambda_L}(\eta) \right] \right| \leq C \left( \left| h' \right| \vee L^{-2} \right)^{2 - \frac d4} + \frac{C}{L}. \\
\end{equation}
Let us fix $h = (h_1 , \ldots, h_n) \in \R^n$ such that $\left| h \right| \leq 1$, set $\alpha :=  2/(8-d)$ and denote by
\begin{equation*}
\tilde h :=
\left(h_1 , \ldots, h_{i-1}, h_i + \left( |h| \vee L^{-2} \right)^\alpha , h_{i+1} , \ldots, h_n\right).
\end{equation*}
We note that we have $|\tilde h| \leq 2 \left(\left| h \right| \vee L^{-2} \right)^\alpha \leq 2$. We next introduce the function $$G: h' \mapsto -\E \left[ \FE^{\tau_i(\eta , h) ,h'}_{\Lambda_L}(\eta) \right].$$ Observe that the map $G$ is convex and that its derivative with respect to the $i-$th variable satisfies
\begin{equation*}
    \frac{\partial G}{\partial h_i'}(h') = \E \left[ \frac{1}{\left| \Lambda_L \right|} \sum_{v \in \Lambda_L} \left\langle  \sigma_{v,i} \right\rangle^{\tau_i(\eta , h) , h'}_{\Lambda_L} \right],
\end{equation*}
and that, by~\eqref{eq:13561999} and the inequalities $\left| \tilde h \right| \leq 2 \left(\left| h \right| \vee L^{-2} \right)^\alpha$ and $L \geq \left( |h| \vee L^{-2} \right)^{-1/2}$,
\begin{align*}
    \left| G(\tilde h) - G(h) \right| & \leq C  \left( \left| \tilde h\right| \vee L^{-2} \right)^{2 - \frac d4} + C \left( \left| h\right| \vee L^{-2} \right)^{2 - \frac d4} + \frac{C}{L} \\
    & \leq C \left( \left| h\right| \vee L^{-2} \right)^{\alpha \left(2 - \frac d4\right) } +  C \left( \left| h\right| \vee L^{-2} \right)^{\frac 12}.
\end{align*}
Combining the previous observations with~\eqref{eq:12010511} and~\eqref{eq:13561999} and using the value $\alpha = 2/(8 - d)$, we obtain
\begin{align} \label{eq:16180511}
     \E \left[ \sup_{\tau \in \S^{\partial \Lambda_L}}  \frac{1}{\left| \Lambda_L \right|} \sum_{v \in \Lambda_L} \left\langle  \sigma_{v,i} \right\rangle^{\tau , h}_{\Lambda_L} \right]  = \E \left[ \frac{1}{\left| \Lambda_L \right|} \sum_{v \in \Lambda_L} \left\langle  \sigma_{v,i} \right\rangle^{\tau_i(\eta , h) , h}_{\Lambda_L} \right]  & = \frac{\partial G}{\partial h_i'}(h)  \\
     & \leq  \frac{G(\tilde h) - G(h)}{\left(|h|\vee L^{-2} \right)^\alpha} \notag \\
    & \leq C \left( \left| h\right| \vee L^{- 2} \right)^{\frac{4 - d}{2(8-d)}}. \notag
\end{align}
We next upgrade~\eqref{eq:16180511} by obtaining stronger concentration properties. To this end, we implement an argument similar to the one presented in Section~\ref{sectionconcentrationatg} and partition the box $\Lambda_L$ into $\tilde \Lambda_1, \ldots, \tilde \Lambda_N$ boxes of sidelength $\sqrt{L}$, with $N \simeq L^{d/2}$. Using the Hoeffding concentration inequality and the inequality~\eqref{eq:16180511} for boxes of side length $\sqrt{L}$, we have the upper bound
\begin{equation*}
    \mathbb{P} \left[ \frac{1}{N} \sum_{k = 1}^N \sup_{\tau \in \S^{\partial \tilde \Lambda_k}}  \frac{1}{\left| \tilde \Lambda_k \right|} \sum_{v \in \tilde  \Lambda_k} \left\langle  \sigma_{v,i} \right\rangle^{\tau , h}_{\tilde  \Lambda_k} \geq  C \left( \left| h\right| \vee L^{- 1} \right)^{\frac{4 - d}{2(8-d)}} + L^{-\frac{4 - d}{2(8-d)}} \right] \leq \exp \left( - c L^{\frac{d}{2} - \frac{4 - d}{(8-d)}} \right).
\end{equation*}
Using that the exponent $d/2 - (4 - d)/(8-d)$ is always strictly positive and the inequality
\begin{equation*}
    \sup_{\tau \in \S^{\partial \Lambda_L}}  \frac{1}{\left| \Lambda_L \right|} \sum_{v \in \Lambda_L} \left\langle  \sigma_{v,i} \right\rangle^{\tau , h}_{\Lambda_L} \leq \frac{1}{N} \sum_{k = 1}^N \sup_{\tau \in \S^{\partial \tilde \Lambda_k}}  \frac{1}{\left| \tilde \Lambda_k \right|} \sum_{v \in \tilde  \Lambda_k} \left\langle  \sigma_{v,i} \right\rangle^{\tau , h}_{\tilde  \Lambda_k},
\end{equation*}
we obtain the (weaker) bound
\begin{equation} \label{eq:1618051111}
    \mathbb{P} \left[  \sup_{\tau \in \S^{\partial \Lambda_L}}  \frac{1}{\left| \Lambda_L \right|} \sum_{v \in \Lambda_L} \left\langle  \sigma_{v,i} \right\rangle^{\tau , h}_{\Lambda_L}  \geq  C \left( \left| h\right| \vee L^{- 1} \right)^{\frac{4 - d}{2(8-d)}} \right] \leq C \left( \left| h\right| \vee L^{- 1} \right)^{\frac{4 - d}{2(8-d)}}.
\end{equation}
To complete the argument, let us consider the random boundary condition $\tau_{i,-}(\eta , h)$ defined so as to satisfy
\begin{equation} \label{def.tautildeii}
    \frac{1}{\left| \Lambda_L \right|}  \sum_{v \in \Lambda_L} \left\langle  \sigma_{v,i} \right\rangle^{\tau_{i,-}(\eta , h) , h}_{\Lambda_L}  = \inf_{\tau \in \S^{\partial \Lambda_L}}  \frac{1}{\left| \Lambda_L \right|} \sum_{v \in \Lambda_L} \left\langle  \sigma_{v,i} \right\rangle^{\tau , h}_{\Lambda_L},
\end{equation}
and define $\tilde \tau_{i,-}(\eta , h) := -  \tau_{i,-}(-\eta , h) $.
Using a similar computation as the one performed in~\eqref{eq:16180511}, we obtain
\begin{equation} \label{eq:16060511}
    \E \left[ \frac{1}{\left| \Lambda_L \right|} \sum_{v \in \Lambda_L} \left\langle  \sigma_{v,i} \right\rangle^{ \tilde \tau_{i,-}(\eta , h) , - h}_{\Lambda_L} \right]  \leq C \left( \left| h\right| \vee L^{- 2} \right)^{\frac{4- d}{2(8-d)}}.
\end{equation}
Combining~\eqref{def.tautildeii},~\eqref{eq:16060511} with the identity~\eqref{eq:16560511} yields
\begin{equation} \label{eq:16120511}
    \E \left[ \inf_{\tau \in \S^{\partial \Lambda_L}}  \frac{1}{\left| \Lambda_L \right|} \sum_{v \in \Lambda_L} \left\langle  \sigma_{v,i} \right\rangle^{\tau , h}_{\Lambda_L} \right] = - \E \left[ \frac{1}{\left| \Lambda_L \right|} \sum_{v \in \Lambda_L} \left\langle  \sigma_{v,i} \right\rangle^{ \tilde \tau_i(\eta , h) , - h}_{\Lambda_L} \right]  \geq -  C \left( \left| h\right| \vee L^{- 2} \right)^{\frac{4 - d}{2(8-d)}}.
\end{equation}
Using the same partition of the box $\Lambda_L$ as for the supremum above, we obtain the concentration inequality
\begin{equation} \label{eq:1612051111}
    \mathbb{P} \left[  \inf_{\tau \in \S^{\partial \Lambda_L}}  \frac{1}{\left| \Lambda_L \right|} \sum_{v \in \Lambda_L} \left\langle  \sigma_{v,i} \right\rangle^{\tau , h}_{\Lambda_L}  \leq  - C \left( \left| h\right| \vee L^{- 1} \right)^{\frac{4 - d}{2(8-d)}} \right] \leq C \left( \left| h\right| \vee L^{- 1} \right)^{\frac{4 - d}{2(8-d)}}.
\end{equation}
Combining~\eqref{eq:1618051111} and~\eqref{eq:1612051111} implies
\begin{equation} \label{eq:16200511}
    \E \left[ \sup_{\tau \in \S^{\partial \Lambda_L}}  \frac{1}{\left| \Lambda_L \right|} \left| \sum_{v \in \Lambda_L} \left\langle  \sigma_{v,i} \right\rangle^{\tau , h}_{\Lambda_L} \right| \right] \leq C \left( \left| h\right| \vee L^{- 1} \right)^{\frac{4 - d}{2(8-d)}}.
\end{equation}
Using that the inequality~\eqref{eq:16200511} holds for any integer $i \in \{ 1 , \ldots , n\}$ completes the proof of the estimate~\eqref{eq:TV090404}.
\end{proof}

\subsection{Proof of Theorem~\ref{thm.thm2}} \label{criticdimesninosCSS}

The objective of this section is to prove Theorem~\ref{thm.thm2} following the outline presented at the beginning of Section~\ref{SectionCSS}.

\subsubsection{A lower bound on the conditional expectation of the spatially-averaged magnetization}

In the first step of the proof, we show that, if the averaged value of the field $\eta$ in a box~$\Lambda$ is negative enough, then the thermally and spatially averaged magnetization of the continuous spin system with periodic boundary conditions in the box $\Lambda$ must be small. The argument relies on a combination of the variational lemma stated in Proposition~\ref{LemmavarprinCSS2} and of the Mermin-Wagner upper bound for the free energy (Proposition~\ref{propMermW}). Before stating the result, we introduce a notation for the quantile of the normal distribution which will be used in the statement and proof of Lemma~\ref{lemma.08538}.

\begin{definition}[Quantile of the normal distribution]
    For each $\delta > 0$, we define the $\exp \left( - \frac{1}{\delta^2} \right)$-quantile of the normal distribution by the formula
\begin{equation} \label{def.tdelta}
    t_\delta := \min \left\{ t \in \R \, : \, \frac{1}{\sqrt{2 \pi}}\int_t^\infty e^{-\frac{s^2}{2}} \, ds  \leq 1- \exp \left( - \frac{1}{\delta^2} \right) \right\},
\end{equation}
Let us note that there exist two constants $c , C \in (0, \infty)$ such that for any $\delta \in (0, 1/2]$,
$$-C \delta^{-1} \leq t_\delta \leq -c \delta^{-1}.$$
\end{definition}

\begin{lemma} \label{lemma.08538}
Let $d = 4$. Fix $\beta >0$, $\lambda >0$ and $i \in \{ 1 , \ldots , n\}$ and a box $\Lambda_0 \subseteq \Zd$. Let $\tau \in \S^{\partial \Lambda_0}$ be a boundary condition (which may be the free or periodic boundary conditions). For any box $\Lambda$ of side length $\ell$ such that $2 \Lambda \subseteq \Lambda_0$, any $\delta > 0$, $h \in \R^n$ satisfying $\left| h \right| \leq 1/\ell^2$, we have the estimate
\begin{equation*}
        \E \left[ \frac{1}{\left| \Lambda \right|} \sum_{v \in \Lambda}  \left\langle  \sigma_{v,i} \right\rangle^{\tau , h}_{\Lambda_0} ~ \Big\vert ~ {\hat\eta_{\Lambda,i}} ,  \eta_{\left( \Lambda_0 \setminus 2\Lambda\right),i} \right] \leq C \delta \hspace{3mm} \mbox{on the event}~\left\{ \hat \eta_{\Lambda , i} \leq \ell^{-2} t_\delta \right\}.
\end{equation*}
\end{lemma}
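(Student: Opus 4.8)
The plan is to combine the Mermin--Wagner upper bound of Proposition~\ref{propMermW} with the variational estimate of Lemma~\ref{LemmavarprinCSS2}, in close analogy with the proof of the estimate~\eqref{eq:TV0904} in Section~\ref{subsec4.3.11739}, but keeping track of conditional expectations in order to produce a statement that holds on the event $\{\hat\eta_{\Lambda,i}\le \ell^{-2}t_\delta\}$ rather than merely in expectation. First I would fix $d=4$, the box $\Lambda_0$, a sub-box $\Lambda$ of side length $\ell$ with $2\Lambda\subseteq\Lambda_0$, an index $i$, and an external field $h$ with $|h|\le\ell^{-2}$. Applying Proposition~\ref{propMermW} with these boxes gives, since $d-2=2$ and $\ell^d|h|\le\ell^{d-2}$,
\begin{equation*}
    \E\left[\tilde\FE_{\Lambda_0,\Lambda}^{\tau,h}-\FE_{\Lambda_0}^{\tau,h}~\Big\vert~\eta_{(\Lambda_0\setminus2\Lambda)\cup\Lambda,i}\right]\le \frac{C\ell^{2}}{L^d}\qquad\P\text{-a.s.},
\end{equation*}
where $L$ is the side length of $\Lambda_0$. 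Conditioning further on $\hat\eta_{\Lambda,i}$ together with $\eta_{(\Lambda_0\setminus2\Lambda),i}$ and using the tower property, the same bound holds for the coarser conditioning. By the $\eta\to-\eta$ invariance of the law of the field (applied only to the $i$-th component inside $\Lambda$, leaving the conditioning data fixed), $\E[\tilde\FE_{\Lambda_0,\Lambda}^{\tau,h}\mid\hat\eta_{\Lambda,i},\eta_{(\Lambda_0\setminus2\Lambda),i}]$ evaluated at $\hat\eta_{\Lambda,i}$ equals $\E[\FE_{\Lambda_0}^{\tau,h}\mid\hat\eta_{\Lambda,i},\eta_{(\Lambda_0\setminus2\Lambda),i}]$ evaluated at $-\hat\eta_{\Lambda,i}$, exactly as in~\eqref{eq:0851}. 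Introducing, for fixed $\eta_{(\Lambda_0\setminus2\Lambda),i}$, the function $G(s):=\E[\FE_{\Lambda_0}^{\tau,h}\mid\hat\eta_{\Lambda,i}=-s,\eta_{(\Lambda_0\setminus2\Lambda),i}]-\E[\FE_{\Lambda_0}^{\tau,h}\mid\hat\eta_{\Lambda,i}=s,\eta_{(\Lambda_0\setminus2\Lambda),i}]$, the Mermin--Wagner bound says $G(t_1)-G(t_0)\le C\ell^2/L^d$ for all $t_0<t_1$, i.e. $G$ has a one-sided increment control, while by Proposition~\ref{prop.basicpropfreeen} (the derivative formula~\eqref{eq:02111210}) $G'(s)=\lambda\E[\frac{1}{|\Lambda_{0}|}\sum_{v\in\Lambda}\langle\sigma_{v,i}\rangle^{\tau,h}_{\Lambda_0}\mid\hat\eta_{\Lambda,i}=-s,\cdots]+\lambda\E[\cdots\mid\hat\eta_{\Lambda,i}=s,\cdots]$, a sum of two nonnegative-up-to-sign terms; the key point for the present lemma is that one of these two terms is exactly what we want to bound, and it is nonnegative in the relevant regime.

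The second step is the variational input. I would rescale: set $g(s):=\frac{L^d}{2C\ell^{2}|\Lambda|^{1/2}}\,G'\!\left(s/|\Lambda|^{1/2}\right)$, so that the increment bound on $G$ translates into $|\int_I g|\le 1$ for every interval $I$, i.e. $g\in\mathcal G$, and additionally $g\ge -1$ after possibly absorbing constants (here one uses that $\langle\sigma_{v,i}\rangle\ge-1$ pointwise, which gives each of the two pieces of $G'$ a lower bound $-\lambda|\Lambda|/|\Lambda_0|$, hence a one-sided lower bound on $g$ up to a harmless constant factor — I would check the normalization so that $g\ge-1$ holds). Now Lemma~\ref{LemmavarprinCSS2} applies: $\int_\R\indc_{\{g(t)\le\delta\}}e^{-t^2/2}\,dt\ge e^{-C/\delta^2}$. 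The measure of $\{g\le\delta\}$ being at least $e^{-C/\delta^2}$ in Gaussian weight forces, by a quantile comparison, the set $\{g>\delta\}$ to miss the far-left tail: more precisely one concludes that $g(t)\le C\delta$ for some $t$ at every level below $t_\delta$ (the $\exp(-1/\delta^2)$-quantile), using monotonicity properties. The cleanest route is to observe that $G'$ is itself monotone in a suitable sense, or rather to argue directly: since $G$ is nonnegative (it is a difference that Mermin--Wagner controls from above, and the $\eta\to-\eta$ symmetry makes its expectation comparable, but more simply $G'\ge$ the single desired term which is nonnegative, so $G$ is nondecreasing on $(-\infty,?)$ ...), and since $\int_{\{g\le\delta\}}e^{-t^2/2}$ is large, the complement cannot contain $(-\infty,t_\delta/|\Lambda|^{1/2}\cdot|\Lambda|^{1/2}]$. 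Undoing the rescaling, $\hat\eta_{\Lambda,i}\le\ell^{-2}t_\delta$ (using $|\Lambda|\asymp\ell^d=\ell^4$, so $|\Lambda|^{1/2}\asymp\ell^2$) lands in the region where $g\le C\delta$, which after unwinding the definitions and using $|\Lambda_0|\ge|\Lambda|$ gives $\E[\frac1{|\Lambda|}\sum_{v\in\Lambda}\langle\sigma_{v,i}\rangle^{\tau,h}_{\Lambda_0}\mid\hat\eta_{\Lambda,i},\eta_{(\Lambda_0\setminus2\Lambda),i}]\le C\delta$ on $\{\hat\eta_{\Lambda,i}\le\ell^{-2}t_\delta\}$, which is the claim.

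The main obstacle I anticipate is the bookkeeping needed to pass from the \emph{expectation} statement one gets naively to the \emph{pointwise-on-an-event} statement, i.e. extracting from "$\{g\le\delta\}$ has large Gaussian measure" the conclusion "$g\le C\delta$ throughout the left tail $\{t\le t_\delta\}$." This requires a genuine monotonicity or unimodality input on $G'$ — one cannot conclude a pointwise tail bound from an average bound for an arbitrary function. The resolution should come from the structure of $G$: the magnetization $\E[\frac1{|\Lambda|}\sum_{v\in\Lambda}\langle\sigma_{v,i}\rangle\mid\hat\eta_{\Lambda,i}=s,\cdots]$ is, by the FKG-type or Gaussian-integration-by-parts derivative identity, an increasing function of the conditioning field $s$ (more disorder pushing in direction $-e_i$ decreases the $i$-th magnetization), so $s\mapsto G'(s)$, hence $g$, is monotone; once monotonicity is in hand, "large Gaussian measure of $\{g\le\delta\}$" immediately gives "$g\le\delta$ (or $C\delta$) on a left half-line extending at least to the $\exp(-1/\delta^2)$-quantile $t_\delta$," which is exactly the formulation of the lemma. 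I would therefore spend the bulk of the write-up establishing this monotonicity carefully (or, alternatively, checking that the argument of Lemma~\ref{LemmavarprinCSS2} can be localized to give the conditional bound directly, which may be how the authors in fact proceed), and then the rescaling and the translation of constants are routine.
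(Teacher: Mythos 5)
Your overall route is the paper's: the conditional Mermin--Wagner bound of Proposition~\ref{propMermW}, the $\eta\to-\eta$ symmetry, the derivative identity for the conditional free energy, a rescaling so that Lemma~\ref{LemmavarprinCSS2} applies, and then a monotonicity argument converting a Gaussian-measure bound into the pointwise statement on $\{\hat\eta_{\Lambda,i}\le\ell^{-2}t_\delta\}$; you also correctly single out the monotonicity input as the crux. Two steps as you state them would fail, though. First, $s\mapsto G'(s)$ (hence $g$) is \emph{not} monotone: by the analogue of~\eqref{eq:1353} it equals $\lambda$ times the conditional magnetization evaluated at $s$ \emph{plus} the same quantity evaluated at $-s$, i.e.\ an increasing function plus its reflection. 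The monotone object is the single map $s\mapsto\E\big[\frac1{|\Lambda|}\sum_{v\in\Lambda}\langle\sigma_{v,i}\rangle^{\tau,h}_{\Lambda_0}\,\big\vert\,\hat\eta_{\Lambda,i}=s,\ \eta_{(\Lambda_0\setminus2\Lambda),i}\big]$. Hence you cannot pass directly from ``$\{g\le\delta\}$ has Gaussian measure $\ge e^{-C/\delta^2}$'' to a left-tail bound, and your fallback of controlling the reflected term by $\langle\sigma_{v,i}\rangle\ge-1$ only yields a bound of the form $C\delta+C$, which is useless. The missing step (the paper's~\eqref{eq:1359}--\eqref{eq:15408}) is: on $\{g\le\delta\}$ at least one of the two reflected terms is $\le C\delta$, so a union bound together with the invariance of the law of $\hat\eta_{\Lambda,i}$ under $\hat\eta_{\Lambda,i}\to-\hat\eta_{\Lambda,i}$ shows that the event $\{$the single conditional magnetization $\le C\delta\}$ itself has conditional probability at least $\tfrac12 e^{-1/\delta^2}$; only then do monotonicity and the definition~\eqref{def.tdelta} of the quantile give the conclusion on the event $\{\hat\eta_{\Lambda,i}\le\ell^{-2}t_\delta\}$.

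Second, the monotonicity itself does not come from an FKG-type inequality (none is available for these $O(n)$-type systems) nor from Gaussian integration by parts. It comes from convexity: $\hat\eta_{\Lambda,i}\mapsto-\E\big[\FE^{\tau,h}_{\Lambda_0}\,\big\vert\,\hat\eta_{\Lambda,i},\eta_{(\Lambda_0\setminus2\Lambda),i}\big]$ is convex, being a conditional average over the remaining field components of the convex (minus concave) maps of Proposition~\ref{prop.basicpropfreeen}, and its derivative in $\hat\eta_{\Lambda,i}$ is exactly $\frac{\lambda}{|\Lambda_0|}\sum_{v\in\Lambda}$ of the conditional magnetization; derivatives of convex functions are increasing. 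With these two repairs (the symmetrization/union-bound step and the convexity source of monotonicity), your argument coincides with the paper's proof, the remaining items (the $(1\vee\lambda)$ normalization ensuring $g\ge-1$, and $|\Lambda|^{1/2}\asymp\ell^2$ in $d=4$) being routine as you indicate.
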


\begin{proof}
We denote by $L$ the side length of the box $\Lambda_0$. By Proposition~\ref{propMermW} and using the assumption $\left| h\right| \leq 1/\ell^2$, we have the inequality
\begin{equation} \label{eq:1316}
    \E \left[ \tilde \FE_{\Lambda_0 ,\Lambda}^{\tau,h} - \FE_{\Lambda_0}^{\tau,h} ~ \Big\vert ~ {\hat\eta_{\Lambda,i}} ,  \eta_{\left( \Lambda_0 \setminus 2\Lambda\right),i}  \right] \leq  \frac{C \ell^2}{L^4} \hspace{5mm} \mathbb{P}-\mbox{almost surely}.
\end{equation}
As in~\eqref{eq:0851}, we have the identity
\begin{equation*}
    \E \left[ \tilde \FE_{\Lambda_0 ,\Lambda}^{\tau,h} ~ \Big\vert ~ {\hat\eta_{\Lambda,i}} ,  \eta_{\left( \Lambda_0 \setminus 2\Lambda\right),i}  \right] ({\hat\eta_{\Lambda,i}} ,  \eta_{\left( \Lambda_0 \setminus 2\Lambda\right),i} )  = \E \left[  \FE_{\Lambda_0}^{\tau,h} ~ \Big\vert ~ {\hat\eta_{\Lambda,i}} ,  \eta_{\left( \Lambda_0 \setminus 2\Lambda\right),i}  \right] (- {\hat\eta_{\Lambda,i}} ,  \eta_{\left( \Lambda_0 \setminus 2\Lambda\right),i} ) .
\end{equation*}
We first claim that there exists a constant $C > 0$ such that, for every $\delta >0$,
\begin{equation} \label{eq:10471}
\P \left[ \E \left[ \frac{1}{\left| \Lambda \right|} \sum_{v \in \Lambda} \left\langle  \sigma_{v,i} \right\rangle^{\tau , h}_{\Lambda_0}  ~ \Big\vert ~ {\hat\eta_{\Lambda,i}} ,  \eta_{\left( \Lambda_0 \setminus 2\Lambda\right),i}  \right]  \leq C \delta  ~ \Big\vert ~  \eta_{\left( \Lambda_0 \setminus 2\Lambda\right),i}\right]
\geq \exp \left( - \frac{1}{\delta^2} \right) \hspace{5mm} \mathbb{P}-\mbox{almost surely}.
\end{equation}
To prove the inequality~\eqref{eq:10471}, we introduce the following map
\begin{multline*}
G_{\Lambda} : (\hat\eta_{ \Lambda,i} , \eta_{\left( \Lambda_0 \setminus 2\Lambda\right),i}) \mapsto \E \left[  \FE_{\Lambda_0}^{\tau,h} ~ \Big\vert ~ {\hat\eta_{\Lambda,i}} ,  \eta_{\left( \Lambda_0 \setminus 2\Lambda\right),i} \right] (-\hat\eta_{ \Lambda,i} ,\eta_{\left( \Lambda_0 \setminus 2\Lambda\right),i}) \\ -   \E \left[  \FE_{\Lambda_0}^{\tau,h} ~ \Big\vert ~ {\hat\eta_{\Lambda,i}} ,  \eta_{\left( \Lambda_0 \setminus 2\Lambda\right),i} \right](\hat\eta_{ \Lambda,i} , \eta_{\left( \Lambda_0 \setminus 2\Lambda\right),i}).
\end{multline*}
Let us note that the map $G_{\Lambda}$ satisfies the identity
\begin{align} \label{eq:1353}
    \frac{\partial G_{\Lambda}}{\partial \hat \eta_{\Lambda , i}}(\hat \eta_{\Lambda , i} , \eta_{\left( \Lambda_0 \setminus 2\Lambda\right),i}) & = \lambda \E \left[ \frac{1}{\left| \Lambda_0\right|} \sum_{v \in \Lambda} \left\langle  \sigma_{v,i} \right\rangle^{\tau , h}_{\Lambda_0}  ~ \Big\vert ~ {\hat\eta_{\Lambda,i}} ,  \eta_{\left( \Lambda_0 \setminus 2\Lambda\right),i} \right] (\hat \eta_{\Lambda, i} , \eta_{\left( \Lambda_0 \setminus 2\Lambda\right),i}) \\ & \quad + \lambda \E \left[ \frac{1}{\left| \Lambda_0\right|} \sum_{v \in \Lambda} \left\langle  \sigma_{v,i} \right\rangle^{\tau , h}_{\Lambda_0}  ~ \Big\vert ~ {\hat\eta_{\Lambda,i}} ,  \eta_{\left( \Lambda_0 \setminus 2\Lambda\right),i} \right] (-\hat \eta_{\Lambda , i} , \eta_{\left( \Lambda_0 \setminus 2\Lambda\right),i}). \notag
\end{align}
We next show the following inequality: for every $\delta >0$,
\begin{equation} \label{eq:10472}
\P\left( \frac{\partial G_{\Lambda}}{\partial \hat \eta_{\Lambda , i}} \leq \frac{C\ell^4}{L^4} \delta  ~ \Big\vert ~  \eta_{\left(\Lambda_0 \setminus 2\Lambda\right) , i} \right) \geq  \exp \left( - \frac{1}{\delta^2} \right) \hspace{3mm} \mathbb{P}-\mbox{almost-surely}.
\end{equation}
To prove the estimate~\eqref{eq:10472}, we fix a realization of the field $ \eta_{\left(\Lambda_0 \setminus 2\Lambda\right) , i}$, and apply Proposition~\ref{LemmavarprinCSS2} with the choice of function
\begin{equation*}
     g : \hat \eta_{\Lambda,i} \mapsto \frac{L^4}{2C(1 \vee \lambda)\ell^4} \frac{ \partial G_{\Lambda}}{\partial \hat \eta_{\Lambda,i}} \left( \frac{\hat \eta_{\Lambda,i}}{\ell^2}, \eta_{\left(\Lambda_0 \setminus 2\Lambda \right), i }\right),
\end{equation*}
where the constant $C$ is the one which appears in the right side of~\eqref{eq:1316}. We first verify that the map $g$ belongs to the set $\mathcal{G}$ (defined in~\eqref{def.setF}). The result is a consequence of the following computation: by~\eqref{eq:1316}, we have, for any interval $I = [t_0 , t_1] \subseteq \R$,
\begin{equation*}
    \left| \int_{I} g(t) \, dt  \right|= \frac{L^4}{2C(1 \vee \lambda)\ell^2} \left| G_\Lambda \left( \frac{t_1}{\ell^2} , \eta_{\left(\Lambda_0 \setminus 2\Lambda\right) , i} \right) - G_\Lambda \left( \frac{t_0}{\ell^2}, \eta_{\left(\Lambda_0 \setminus 2\Lambda\right) , i} \right) \right| \leq 1.
\end{equation*}
The fact that the map $g$ is larger than $-1$ is obtained as consequence of the assumption that the spin space is the sphere $\mathbb{S}^{n-1}$ (and thus the norm of a spin is always equal to $1$).

Applying Proposition~\ref{LemmavarprinCSS2} yields, for any $\delta \in (0 , 1]$,
\begin{equation*}
    \int_\R \indc_{\left\{ g(t) \leq \delta\right\}} e^{-t^2/2} \, dt \geq  e^{-\frac{C}{\delta^2}}.
\end{equation*}
Rescaling the previous inequality, using that the averaged field $\hat\eta_{\Lambda,i}$ is Gaussian of variance~$\ell^{-4}$, and that it is independent of the field $\eta_{\left( \Lambda_0 \setminus 2 \Lambda \right) , i}$ gives the estimate~\eqref{eq:10472}. We then reformulate the inequality~\eqref{eq:10472}: using the formula~\eqref{eq:1353} and a union bound, we obtain that there exists a constant $C > 0$ such that, $\P-$almost surely,
\begin{multline} \label{eq:1359}
    \P \left[ \E \left[ \frac{1}{\ell^4} \sum_{v \in \Lambda}  \left\langle  \sigma_{v,i} \right\rangle^{\tau , h}_{\Lambda_0} ~ \Big\vert ~ {\hat\eta_{\Lambda,i}} ,  \eta_{\left(\Lambda_0 \setminus 2\Lambda\right) , i}   \right]  \leq C\delta  ~ \Big\vert ~  \eta_{\left(\Lambda_0 \setminus 2\Lambda\right) , i}   \right] \\ + \P \left[ \E \left[ \frac{1}{\ell^4} \sum_{v \in \Lambda}  \left\langle  \sigma_{v,i} \right\rangle^{\tau , h}_{\Lambda_0} ~ \Big\vert ~ {\hat\eta_{\Lambda,i}} ,  \eta_{\left(\Lambda_0 \setminus 2\Lambda\right) , i}   \right](-\hat \eta_{\Lambda,i}, \eta_{\left(\Lambda_0 \setminus 2\Lambda\right) , i}) \leq C\delta  ~ \Big\vert ~ \eta_{\left(\Lambda_0 \setminus 2\Lambda\right) , i}  \right] \geq  \exp \left( - \frac{1}{\delta^2} \right).
\end{multline}
Since the law of the random variable $\hat \eta_{\Lambda,i}$ is invariant under the involution $\hat \eta_{\Lambda,i} \to -\hat \eta_{\Lambda,i}$, the two terms in the left side of~\eqref{eq:1359} are equal. We thus obtain, for any $\delta > 0$,
\begin{equation} \label{eq:15408}
    \P \left[ \E \left[ \frac{1}{\left| \Lambda \right|} \sum_{v \in \Lambda}  \left\langle  \sigma_{v,i} \right\rangle^{\tau , h}_{\Lambda_0} ~ \Big\vert ~ \hat\eta_{\Lambda,i} ,  \eta_{\left(\Lambda_0 \setminus 2\Lambda\right) , i}   \right]  \leq C\delta ~ \Big\vert ~  \eta_{\left(\Lambda_0 \setminus 2\Lambda\right) ,i}   \right] \geq \frac 12 \exp \left( -\frac{1}{\delta^2} \right) ~\P-\mbox{a.s.}
\end{equation}
Since the estimate~\eqref{eq:15408} is valid for any $\delta > 0$, it implies the inequality~\eqref{eq:10471} by increasing the value of the constant $C$ if necessary. We then observe that, for each fixed realization of the field $\eta_{\left(\Lambda_0 \setminus 2\Lambda\right),i}$, the map
$$\hat \eta_{\Lambda,i} \mapsto -\E \left[  \FE_{\Lambda_0}^{\tau,h} ~ \Big\vert ~ {\hat\eta_{\Lambda,i}} ,  \eta_{\left(\Lambda_0 \setminus 2\Lambda\right),i}  \right](\hat \eta_{\Lambda ,i}, \eta_{\left(\Lambda_0 \setminus 2\Lambda\right) , i})$$
is convex and that its derivative is the function $$\hat \eta_{\Lambda , i} \mapsto \E \left[ \frac{1}{\left| \Lambda_0\right|} \sum_{v \in \Lambda}  \left\langle  \sigma_{v,i} \right\rangle^{\tau , h}_{\Lambda_0} ~ \Big\vert ~ \hat\eta_{\Lambda,i} ,  \eta_{\left(\Lambda_0 \setminus 2\Lambda\right) , i}   \right] (\hat\eta_{\Lambda,i} , \eta_{\left(\Lambda_0 \setminus 2\Lambda\right),i}).$$ Since the derivative of a convex function is increasing, we obtain that the map
$$ \hat \eta_{\Lambda , i}\mapsto \E \left[ \frac{1}{\left| \Lambda  \right|} \sum_{v \in \Lambda}  \left\langle  \sigma_{v,i} \right\rangle^{\tau , h}_{\Lambda_0} ~ \Big\vert ~ {\hat\eta_{\Lambda,i}} ,  \eta_{\left(\Lambda_0 \setminus 2\Lambda \right), i}   \right](\hat\eta_{\Lambda,i} , \eta_{\left(\Lambda_0 \setminus 2\Lambda\right),i})$$ is increasing. Combining this observation with the inequality~\eqref{eq:10471}, the definition of the quantile $t_\delta$ stated in~\eqref{def.tdelta}, and the fact that the random variable $\hat \eta_{\Lambda,i}$ is Gaussian of variance $ \ell^{-4}$, we obtain, for any $\delta > 0$,
\begin{equation} \label{eq:1720}
    \E \left[ \frac{1}{\left| \Lambda \right|} \sum_{v \in \Lambda}  \left\langle  \sigma_{v,i} \right\rangle^{\tau , h}_{\Lambda_0} ~ \Big\vert ~ \hat\eta_{\Lambda,i} ,  \eta_{\left(\Lambda_0 \setminus 2\Lambda\right), i} \right]\left(\hat\eta_{\Lambda,i} , \eta_{\left(\Lambda_0 \setminus 2\Lambda\right),i}\right)  \leq C \delta \hspace{3mm} \mbox{on the event} ~\left\{ \hat\eta_{\Lambda , i} \leq \ell^{-2} t_{\delta} \right\}.
\end{equation}
The proof of Lemma~\ref{lemma.08538} is complete.
\end{proof}

\subsubsection{Mandelbrot percolation argument}

In this section, we combine the result of Lemma~\ref{lemma.08538} with a Mandelbrot percolation argument to obtain a quantitative rate of convergence on the expected value (with respect to the random field) of the spatially and thermally averaged magnetization with a fixed boundary condition.

\begin{lemma} \label{lemmasecondmandelbrot}
Let $d = 4$. Fix $\beta >0$, $\lambda >0$, a box $\Lambda_0 \subseteq \Zd$, an integer $L\geq 3$ such that $\Lambda_{2L} \subseteq \Lambda_0$, and an external magnetic field $h \in \R^n$ satisfying $|h| \leq L^{-2}$. Let $\tau \in \S^{\partial \Lambda_{0}}$ be a boundary condition (which may be the free and periodic boundary conditions). There exists a constant $C > 0$ depending only on $\lambda$, $n$ and~$\Psi$ such that
\begin{equation} \label{eq:17370611}
    \left| \E \left[ \frac{1}{\left|\Lambda_{L} \right|} \sum_{v \in \Lambda_{L}}\left\langle  \sigma_v \right\rangle^{\tau , h}_{\Lambda_0} \right] \right| \leq \frac{C}{\sqrt{\ln \ln L}}.
\end{equation}
\end{lemma}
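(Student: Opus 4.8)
The plan is to mirror the Mandelbrot percolation scheme of Lemma~\ref{prop2.31708}, but with the notion of ``good box'' replaced by the condition on the sign of the locally averaged field supplied by Lemma~\ref{lemma.08538}. Fix an integer $i\in\{1,\dots,n\}$; it suffices to bound $\bigl|\E[|\Lambda_L|^{-1}\sum_{v\in\Lambda_L}\langle\sigma_{v,i}\rangle^{\tau,h}_{\Lambda_0}]\bigr|$. Set $\delta:=C_0/\sqrt{\ln\ln L}$ for a large constant $C_0$, and declare a box $\Lambda\subseteq\Lambda_L$ with side length $\ell$ to be \emph{good} if $\hat\eta_{\Lambda,i}\le \ell^{-2}t_\delta$, where $t_\delta$ is the quantile from~\eqref{def.tdelta}. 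By the Gaussianity of $\hat\eta_{\Lambda,i}$ (variance $|\Lambda|^{-1}\asymp\ell^{-d}=\ell^{-4}$, so $\ell^{2}\hat\eta_{\Lambda,i}$ is standard normal in $d=4$) and the definition of $t_\delta$, a box is good with probability $\exp(-1/\delta^2)=L^{-C/C_0^2}$, and, crucially, this probability does not depend on $\ell$ — the scale-invariance in $d=4$ that makes the hierarchical argument work, exactly as the scale-invariance of~\eqref{eq:16231011} did in Section~\ref{sectionfirstmandelbrotperc}. I would run the same recursive construction on the dyadic-type partitions $\mathcal{T}_l$ (with $l_{\max}\asymp \ln L/\ln\ln L$ levels of ratio $k\asymp\delta^{-1/2}$), peeling off good boxes at each scale. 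Using that the events $\{\Lambda_l(v)\setminus\Lambda_{l+1}(v)\text{ contains a ``bad annulus''}\}$ are independent across $l$ because $\hat\eta$ restricted to disjoint annuli is independent, one gets exactly as in~\eqref{1333} that $\P(v\text{ uncovered})\le\exp(-c\sqrt{\ln L})$, hence $\E\bigl[|\Lambda_L\setminus\bigcup_{\Lambda\in\mathcal{Q}}\Lambda|\bigr]/|\Lambda_L|\le\exp(-c\sqrt{\ln L})$.

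Next I would convert ``good box'' into an estimate on the averaged magnetization. For each good box $\Lambda\in\mathcal Q$, Lemma~\ref{lemma.08538} (applied with $\Lambda_0$ the ambient box, noting $|h|\le L^{-2}\le \ell^{-2}$) gives $\E\bigl[|\Lambda|^{-1}\sum_{v\in\Lambda}\langle\sigma_{v,i}\rangle^{\tau,h}_{\Lambda_0}\,\big|\,\hat\eta_{\Lambda,i},\eta_{(\Lambda_0\setminus2\Lambda),i}\bigr]\le C\delta$ on the event $\{\hat\eta_{\Lambda,i}\le\ell^{-2}t_\delta\}$, i.e.\ precisely on the event that $\Lambda$ is good. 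The subtlety here is that the boxes in $\mathcal Q$ are themselves random (they depend on $\eta$), so I cannot naively take expectations term by term. The clean way around this is the following: decompose $|\Lambda_L|^{-1}\sum_{v\in\Lambda_L}\langle\sigma_{v,i}\rangle^{\tau,h}_{\Lambda_0} = \sum_{\Lambda\in\mathcal Q}\frac{|\Lambda|}{|\Lambda_L|}\Bigl(|\Lambda|^{-1}\sum_{v\in\Lambda}\langle\sigma_{v,i}\rangle^{\tau,h}_{\Lambda_0}\Bigr)+\text{(uncovered part)}$, bound the uncovered part by $2|\Lambda_L\setminus\bigcup\Lambda|/|\Lambda_L|$ in absolute value, and for the main sum observe that the assignment ``$\Lambda$ is in the partition $\mathcal Q$'' is a measurable function of $\eta$; conditioning on the configuration of $\mathcal Q$ and then on $(\hat\eta_{\Lambda,i},\eta_{(\Lambda_0\setminus2\Lambda),i})$ for the relevant $\Lambda$ reduces matters to the Lemma~\ref{lemma.08538} bound, since $\{\Lambda\in\mathcal Q\}\subseteq\{\Lambda\text{ good}\}$. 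Taking expectations yields $\E\bigl[|\Lambda_L|^{-1}\sum_{v\in\Lambda_L}\langle\sigma_{v,i}\rangle^{\tau,h}_{\Lambda_0}\bigr]\le C\delta + 2\exp(-c\sqrt{\ln L})\le C/\sqrt{\ln\ln L}$.

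For the lower bound (i.e.\ to control the left tail and obtain the absolute value in~\eqref{eq:17370611}), I would repeat the argument with the opposite sign: using the $\eta\to-\eta$ invariance of the law of the field together with~\eqref{eq:16560511}, the statement of Lemma~\ref{lemma.08538} applied to $-h$ (still admissible since $|{-h}|\le L^{-2}$) and to $\tilde\tau$ gives, on the reflected good-box event, that $\E[|\Lambda|^{-1}\sum_{v\in\Lambda}\langle\sigma_{v,i}\rangle^{\tau,h}_{\Lambda_0}]\ge -C\delta$; running the same percolation on the events $\{\hat\eta_{\Lambda,i}\ge -\ell^{-2}t_\delta\}$ (equally likely by symmetry) and combining the two bounds gives $|\E[\cdots]|\le C/\sqrt{\ln\ln L}$ for each $i$, hence~\eqref{eq:17370611}.

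I expect the main obstacle to be the bookkeeping in the second paragraph: making rigorous the passage from ``each individual good box satisfies the conditional bound'' to ``the expectation of the full spatial average is bounded'', given that the partition $\mathcal Q$ is $\eta$-measurable and the conditioning in Lemma~\ref{lemma.08538} is with respect to $(\hat\eta_{\Lambda,i},\eta_{(\Lambda_0\setminus2\Lambda),i})$ rather than the sigma-algebra generated by $\mathcal Q$. One must check that these conditionings are compatible — concretely, that on the event $\{\Lambda\in\mathcal Q\}$ one may still insert the Lemma~\ref{lemma.08538} bound — which should follow because $\{\Lambda\in\mathcal Q\}$ is determined by the field outside $\Lambda$ together with the ``goodness'' of $\Lambda$, and goodness is $\hat\eta_{\Lambda,i}$-measurable; a careful tower-property argument, or alternatively summing over the (finitely many) possible partitions $\mathcal Q$ deterministically, resolves it. Everything else — the scale-invariance of the good-box probability, the independence across annular scales, the Markov/subadditivity bound on the uncovered set — is a direct transcription of Section~\ref{sectionfirstmandelbrotperc} with $\fluc$ replaced by the sign-of-averaged-field event.
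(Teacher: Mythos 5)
Your overall architecture is the same as the paper's: define a box to be good when $\hat\eta_{\Lambda,i}\le \ell^{-2}t_\delta$, exploit the $d=4$ scale-invariance of this event, run the Mandelbrot percolation of Section~\ref{sectionfirstmandelbrotperc}, feed each good box into Lemma~\ref{lemma.08538}, and use the $\eta\to-\eta$ symmetry~\eqref{eq:16560511} to get the absolute value. However, the step you yourself flag as the main obstacle is a genuine gap, and your proposed fixes do not close it. The event $\{\Lambda'\in\Q\}$ is the intersection of $\{\Lambda'\mbox{ good}\}$ with the badness of all its ancestors $\Lambda_j'$, and each ancestor condition involves the \emph{box} average $\hat\eta_{\Lambda_j',i}$, which depends on the field in the ring $2\Lambda'\setminus\Lambda'$. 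That ring lies outside the sigma-algebra $\sigma\bigl(\hat\eta_{\Lambda',i},\,\eta_{(\Lambda_0\setminus2\Lambda'),i}\bigr)$ with respect to which Lemma~\ref{lemma.08538} conditions (and it must stay unconditioned there, since the Mermin--Wagner rotation lives in that ring). Hence $\indc_{\{\Lambda'\in\Q\}}$ is \emph{not} measurable with respect to that sigma-algebra, and neither a tower-property manipulation nor summing deterministically over the possible partitions lets you pull the indicator through the conditional expectation: for a fixed partition $q$, the event $\{\Q=q\}$ still depends on the field in $2\Lambda'\setminus\Lambda'$. The paper's proof resolves this by replacing $\{\Lambda'\in\Q\}$ with a proxy event $E_{\Lambda'}$ in which every ancestor condition $\{\hat\eta_{\Lambda_j',i}>\ell_j^{-2}t_\delta\}$ is rewritten in terms of the annular averages $\hat\upeta_j'$ over $\Lambda_j'\setminus2\Lambda'$, which \emph{are} functions of $\eta_{(\Lambda_0\setminus2\Lambda'),i}$; the symmetric difference is then controlled by Gaussian anti-concentration for $\hat\upeta_j'$ and a tail bound for $\hat\eta_{2\Lambda',i}$ (estimates~\eqref{eq:1508}--\eqref{eq:1509}), producing the extra $C/(\ln L)^{2}$ error in~\eqref{eq:1457}. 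Some such approximation-plus-error-budget argument is unavoidable, and it is absent from your plan.

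A secondary consequence of the same issue: your bound on uncovered points and your parameter choices are imported from Section~\ref{sectionfirstmandelbrotperc} as if the scale-$l$ events were exactly independent, but here the criterion is on box averages, so the events at consecutive scales are correlated. The paper decouples them by writing $\hat\eta_{\Lambda_l(v),i}=\hat\upeta_l+\hat\eta_{\Lambda_{l+1}(v),i}/k_l$ and paying a union bound over the unlikely events $\{\hat\eta_{\Lambda_l(v),i}\ge k_l\delta\ell_l^{-2}\}$ (inclusion~\eqref{eq:1813}); making both this and the $E_{\Lambda'}$-approximation errors negligible forces a much larger branching ratio, $k=\lfloor\exp(\sqrt[4]{\ln L})\rfloor$ with $l_{\max}\asymp(\ln L)^{3/4}$, rather than your $k\asymp\delta^{-1/2}$: with your choice the tail terms of the form $l_{\max}\exp(-ck^2\delta^2)$ and $\exp(-ck^8/(\ln L)^6)$ are not small. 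So while your outline correctly identifies the mechanism, the two decoupling/approximation steps that make it rigorous are missing and are where the real work of the paper's proof lies.
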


\begin{proof}[Proof of Lemma~\ref{lemmasecondmandelbrot}]
First, let us note that, by the identity~\eqref{eq:16560511}, it is sufficient, in order to prove~\eqref{eq:17370611}, to prove, for any integer $i \in \{ 1 , \ldots , n \}$,
\begin{equation} \label{eq:17380611}
    \E \left[ \frac{1}{\left|\Lambda_{L} \right|} \sum_{v \in \Lambda_{L}}\left\langle  \sigma_{v,i} \right\rangle^{\tau , h}_{\Lambda_0} \right] \leq \frac{C}{\sqrt{\ln \ln L}}.
\end{equation}
Additionally, it is sufficient to prove the inequality~\eqref{eq:17380611} when $L$ is large enough.

We now fix an integer $i \in \{ 1 , \ldots , n \}$ and prove the inequality~\eqref{eq:17380611}. To this end, we set $\delta = (C_0 /(\ln \ln L)^\frac 12) \wedge (1/2)$ for some large constant $C_0$ whose value is decided at the end of the proof. The strategy is to implement a Mandelbrot percolation argument in the box $\Lambda_{L}$ with the following definition of good box:
\begin{center}
    a box $\Lambda \subseteq \Lambda_{L}$ of side length $\ell$ is good if $\hat \eta_{\Lambda,i} \leq \ell^{-2} t_{\delta}$.
\end{center}
We let $k := 2 \lfloor \exp \left( \sqrt[4]{\ln L} \right) \rfloor + 1$, assume that $L$ is large enough so that $k \geq 5$, and denote by $l_{\max}$ the largest integer which satisfies $k^{l_{\max}} \leq \sqrt{L}$ (and select $L$ sufficiently large so that $l_{\max} \geq 1$). We introduce the set of boxes
\begin{equation*}
   \mathcal{T}_l := \left\{ \left( z  + \left[ - \frac{L}{k^{l}}  ,  \frac{L}{k^{l}} \right)^4 \right) \cap \Lambda_L \, : \, z \in \frac{2 L}{k^{l}} \Z^4 \cap [-L , L]^4 \right\}.
\end{equation*}
and implement the Mandelbrot percolation argument developed in the second step of the proof of Lemma~\ref{prop2.31708}. We obtain a collection $\Q \subseteq \cup_{l=0}^{l_{\max}} \mathcal{T}_l$ of good boxes. We need to prove the following two properties pertaining to the collection $\mathcal{Q}$. First, we show that the set of uncovered points is typically small: we prove the inequality, for any vertex $v \in \Lambda_{L}$,
    \begin{equation} \label{eq:14522}
        \P \left( v ~\mbox{is not covered} \right) \leq  \exp \left( -c \sqrt{\ln L} \right).
    \end{equation}
Second, we prove that the expected value of the spatially and thermally averaged magnetization on a box of the collection $\Q$ is small: we show the estimate, for each box $\Lambda \in \cup_{l=0}^{l_{\max}} \mathcal{T}_l$,
    \begin{equation} \label{eq:1457}
        \E \left[  \indc_{\{ \Lambda \in \Q \}}   \frac{1}{\left| \Lambda \right|}\sum_{v \in \Lambda} \left\langle  \sigma_{v,i} \right\rangle^{\tau , h}_{\Lambda_0} \right] \leq C \delta  \E \left[  \indc_{\{ \Lambda \in \Q \}} \right] +  \frac{C}{(\ln L)^{2}}.
\end{equation}
We first focus on the proof of the inequality~\eqref{eq:14522}. To this end, we fix a vertex $v \in \Lambda_{L}$, let $\Lambda_0(v),  \ldots, \Lambda_{l_{\max}}(v)$ be the boxes of the collections $\mathcal{T}_0 , \ldots , \mathcal{T}_{l_{\max}}$ containing the vertex $v$, and denote their side length by $\ell_0 , \ldots , \ell_{\max}$ respectively. For any $l \in \{ 0 , \ldots, l_{\max} -1\}$, we denote by $k_l := \ell_l /\ell_{l+1}$ the ratio between the two side length $ \ell_l$ and $\ell_{l+1}$ and note that there exist constants $c , C$ such that $c k \leq k_l \leq C k$ as soon as $L$ is large enough. We denote by
\begin{equation*}
    \hat \upeta_{l}  :=  \frac{1}{\left|\Lambda_l(v)\right|} \sum_{u \in \Lambda_l(v) \setminus \Lambda_{l+1}(v)} \eta_{u,i}.
\end{equation*}
Note that the random variables $\hat\eta_{\Lambda_l(v),i}$ and $\hat \upeta_{l}$ are typically close to each other: the law of the random variable $\hat \upeta_{l} - \hat\eta_{\Lambda_l(v),i}$ is Gaussian and its variance is equal to $1/(k_l^4 \ell_{l}^4).$
We also note that the random variable $\hat \upeta_{l}$ is independent of the restriction field $\eta$ to the box $\Lambda_{l+1}(v)$.
\smallskip

We have the identity of events
\begin{equation} \label{eq:1606}
    \left\{ v~ \mbox{is not covered} \right\} = \bigcap_{l = 0}^{l_{\max}}   \left\{ \hat\eta_{\Lambda_l(v),i} > \ell_l^{-2} t_{\delta} \right\}.
\end{equation}
We then show that the $(l_{\max}+1)$ events in the right side of~\eqref{eq:1606} are well-approximated by independent events, and use the independence to estimate the probability of their intersection. To this end, we use the identity $\hat \upeta_{l} +  \frac{\hat\eta_{\Lambda_{l+1}(v),i}}{k_l^4} = {\hat\eta_{\Lambda_l(v),i}}$, and note that the following inclusion holds
\begin{multline} \label{eq:1813}
    \bigcap_{l = 0}^{l_{\max}}   \left\{ \hat\eta_{\Lambda_l(v),i} \geq \ell_l^{-2} t_{\delta} \right\} \\ \subseteq \left( \bigcap_{l = 0}^{l_{\max} -1}   \left\{ \hat \upeta_{l} \geq \ell_l^{-2} \left( t_{\delta} - \delta \right) \right\} \bigcap \left\{ \hat \eta_{\Lambda_{l_{\max}}(v),i} \geq \ell_{l_{\max}}^{-2} t_{\delta} \right\} \right) \bigcup \left( \bigcup_{l=0}^{l_{\max-1}} \left\{ \hat\eta_{\Lambda_{l+1}(v),i} \geq k_l^4 \delta \ell_l^{-2} \right\} \right).
\end{multline}
Using that the random variables $\left( \hat \upeta_{l} \right)_{1 \leq l \leq l_{\max}-1}$ are independent and a union bound, we obtain
\begin{equation} \label{eq:1822}
    \P \left( \bigcap_{l = 0}^{l_{\max}}   \left\{ {\hat\eta_{\Lambda_l(v),i}} \geq \ell_j^{-2} t_{\delta} \right\} \right) \leq \underbrace{\prod_{l = 0}^{l_{\max}-1}  \P \left(  \hat \upeta_{l} \geq \ell_l^{-2} \left(t_{\delta } - \delta \right) \right)}_{\eqref{eq:1822}-(i)} +  \underbrace{\sum_{l=0}^{l_{\max}-1} \P \left( {\hat\eta_{\Lambda_{l+1}(v),i}} \geq k_l^4 \delta \ell_l^{-2}\right)}_{\eqref{eq:1822}-(ii)}.
\end{equation}
We estimate the terms~\eqref{eq:1822}-(i) and ~\eqref{eq:1822}-(ii) separately. For the term~\eqref{eq:1822}-(i), we note that the quantile $t_\delta$ satisfies the inequality $-c/\delta \geq t_\delta \geq -C/\delta$. An explicit computation shows that there exist two constants $c , C \in (0 , \infty)$ such that
\begin{equation*}
\P \left(  \hat \upeta_{l} \geq \ell_l^{-2} \left(t_{\delta} - \delta \right) \right) \leq 1 - c \exp \left( - \frac{C}{\delta^2} \left( 1 + \delta^2 \right)^2 \right) \leq 1 - c \exp \left( - \frac{C}{\delta^2} \right) .
\end{equation*}
We recall that we have set $k = 2 \lfloor \exp \left( \sqrt[4]{\ln L} \right) \rfloor +1$, $\delta = (C_0 /(\ln \ln L)^\frac 12) \wedge (1/2)$, and $l_{\max} := \lfloor \ln L / (2\ln k)\rfloor \simeq \left(\ln L\right)^{3/4}$. Consequently, if the constant $C_0$ is chosen large enough,
\begin{equation} \label{eq:095324}
     \prod_{l = 0}^{l_{\max}-1}  \P \left(  \hat \upeta_{l} \geq \ell_l^{-2} \left( t_{\delta} - \delta\right) \right) \leq \left( 1 - c \exp \left( - \frac{C}{\delta^2} \right) \right)^{l_{\max}} \leq \exp \left( -c  \sqrt{\ln L} \right).
\end{equation}
We now estimate the term~\eqref{eq:1822}-(ii). The lower bound $k_l \geq c k$ and an explicit computation give, for each $l \in \{ 0 , \ldots , l_{\max} \}$,
    \begin{equation*}
    \P \left( {\hat\eta_{\Lambda_{l+1}(v),i}} \geq k_l^4 \ell_l^{-2} \delta \right) \leq \P \left( {\hat\eta_{\Lambda_{l+1}(v),i}} \geq k_l^2 \ell_{l+1}^{-2} \delta \right) \leq  \exp \left( - c k^4 \delta^2 \right),
\end{equation*}
and thus
\begin{equation} \label{eq:095424}
        \sum_{l=0}^{l_{\max}-1} \P \left( {\hat\eta_{\Lambda_{l+1}(v),i}} \geq k_l^4 \ell_l^{-2} \delta \right) \leq l_{\max} \exp \left( - c k_l^4 \delta^2 \right) \leq  \exp \left( - c \sqrt{\ln L} \right).
\end{equation}
A combination of~\eqref{eq:1822},~\eqref{eq:095324} and~\eqref{eq:095424} implies~\eqref{eq:14522}.

\medskip

We now focus on the proof of the inequality~\eqref{eq:1457}. We fix an integer $l \in \{ 1 , \ldots, l_{\max} \}$, consider a box $\Lambda' \in \mathcal{T}_l$ and denote its side length by $\ell_l$. We denote by $\Lambda_0' , \ldots, \Lambda_{l-1}'$ the family of boxes which contain the box $\Lambda'$ and belong to the sets $\mathcal{T}_0 , \ldots, \mathcal{T}_{l-1}$ respectively. We denote the side length of these boxes by $\ell_0 , \ldots, \ell_{l-1}$. By construction of the collection $\Q$, we have the identity
\begin{equation*}
    \left\{ \Lambda' \in \Q \right\}  = \left\{ \hat \eta_{\Lambda',i} \leq \ell_l^{-2} t_{\delta} \right\} \bigcap \bigcap_{j=0}^{l-1} \left\{ \hat \eta_{\Lambda_j',i} > \ell_j^{-2} t_{\delta} \right\}.
\end{equation*}
Our objective is to prove that this event is well-approximated by an event which belongs to the $\sigma$-algebra generated by the random variables $\hat \eta_{\Lambda',i}$ and $\eta_{\Lambda_0 \setminus 2\Lambda',i}$. To this end, let us define, for any integer $j \in \{ 0 , \ldots , l -1\}$,
\begin{equation*}
    \hat \upeta_{j}'  :=  \frac{1}{\left|\Lambda_j'\right|} \sum_{u \in \Lambda_j' \setminus 2\Lambda'} \eta_{u,i}.
\end{equation*}
Let us note that the random variable $\hat \upeta_{j}'$ depends only on the realization of the field outside the box $2 \Lambda'$, and that it satisfies the identity
\begin{equation*}
    \hat \eta_{\Lambda_j',i} = \hat \upeta_{j}' + \frac{|2\Lambda'|}{|\Lambda_j'|} \hat \eta_{2\Lambda',i}.
\end{equation*}
We additionally note that, by the definitions of the cube $\Lambda_j'$, the ratio between the side length of the boxes $\Lambda_j'$ and $2 \Lambda' $ is at least of order $k$: there exists a constant $c$ such that $\ell_j \geq c k \ell_l.$

As a consequence of the previous definitions and observations, we have the inclusion
\begin{equation} \label{eq:2051}
\left\{ \hat \eta_{\Lambda_j',i} >  \ell_j^{-2} t_{\delta} \right\} \Delta \left\{ \hat \upeta_{j}' > \ell_j^{-2} t_{\delta} \right\} \subseteq \underbrace{\left\{ \left| \hat \upeta_{j}' - \frac{t_{\delta}}{\ell_j^{2}} \right| \leq \frac{1}{\ell_j^{2}\left( \ln L \right)^3} \right\}}_{\eqref{eq:2051}-(i)}\cup  \underbrace{\left\{ \left| \hat \eta_{2\Lambda',i}\right| \geq \frac{c k^2}{\ell_l^{2}\left( \ln L \right)^3} \right\}}_{\eqref{eq:2051}-(ii)},
\end{equation}
where the symbol $\Delta$ denotes the symmetric difference between the events $\left\{ \hat \eta_{\Lambda_j',i} >  \ell_j^{-2} t_{\delta} \right\}$ and $\left\{ \hat \upeta_{j}' > \ell_j^{-2} t_{\delta} \right\}$. We estimate the probabilities of the two events in the right side of~\eqref{eq:2051}. For the event~\eqref{eq:2051}-(i), we note that the random variable $\hat \upeta_{j}'$ is Gaussian and that its variance satisfies
\begin{equation*}
    \var \hat \upeta_{j}' = \frac{\left| \Lambda_j' \setminus 2\Lambda' \right|}{\left| \Lambda_j' \right|^2} \geq \frac 1{2 \left| \Lambda_j' \right|} =  \frac{1}{2 \ell_j^4},
\end{equation*}
where we have used the inequality $\left| \Lambda_j' \setminus 2\Lambda' \right| \geq \left| \Lambda_j' \right|/2$, which is a consequence of the definition of the box $\Lambda_j'$ together with the assumption $k \geq 5$. We obtain
\begin{equation} \label{eq:1508}
    \P \left(  \left| \hat \upeta_{j}' -  \frac{t_{\delta}}{\ell_j^{2}} \right| \leq \frac{1 }{\ell_j^{2}\left( \ln L \right)^3} \right) \leq \P \left(  \left| \hat \upeta_{j}' \right| \leq \frac{1}{\ell_j^{2}\left( \ln L \right)^3} \right) \leq  \frac{C}{\left( \ln L \right)^3}.
\end{equation}
For the event~\eqref{eq:2051}-(ii), we use that the random variable $\hat \eta_{2\Lambda',i}$ is Gaussian and that its variance is comparable to $\ell_l^{-4}$ to write
\begin{equation} \label{eq:1509}
    \P \left( \left| \hat \eta_{2\Lambda',i}\right| \geq \frac{ck^{2}}{\ell_l^{2}\left( \ln L \right)^3} \right) \leq C \exp \left(  - \frac{ck^{4}}{\left( \ln L \right)^6} \right) \leq \frac{C}{\left( \ln L \right)^3}.
\end{equation}
This result implies
\begin{multline} \label{eq:12582}
    \left\{ \Lambda' \in \Q \right\} \Delta \left(\left\{ \hat \eta_{\Lambda',i} \leq \ell_l^{-2} t_{\delta} \right\} \bigcap \bigcap_{j=0}^{l-1} \left\{  \hat \upeta'_j > \ell_j^{-2} t_{\delta} \right\} \right) \\ \subseteq \left\{ \left| \hat \eta_{2\Lambda',i} \right|\geq  \frac{c k^{2}}{\ell_l^{2}\left( \ln L \right)^3} \right\} \bigcup \bigcup_{j=0}^{l-1} \left\{ \left| \hat \upeta'_j - \ell_j^{-2} t_{\delta} \right| \leq \frac{1}{\ell_j^{2}\left( \ln L \right)^3} \right\}.
\end{multline}
Let us introduce the notation
\begin{equation*}
    E_{\Lambda'} := \left\{ \hat \eta_{\Lambda',i} \leq \ell_l^{-2} t_{\delta} \right\} \bigcap \bigcap_{j=0}^{l-1} \left\{ \hat \upeta_j' > \ell_j^{-2} t_{\delta} \right\}.
\end{equation*}
A consequence of the inclusion~\eqref{eq:12582} is the inequality of indicator functions
\begin{equation} \label{eq:1507}
        \left|\indc_{\left\{\Lambda'\in \Q \right\}} - \indc_{E_{\Lambda'}} \right| \leq \indc_{\left\{ \left| \hat \eta_{2\Lambda',i} \right| \geq  \frac{c k^{2} }{\ell_l^{2}\left( \ln L \right)^3} \right\}} + \sum_{j= 0}^{l-1} \indc_{\left\{ \left| \hat \upeta_{j}' - \ell_j^{-2} t_{\delta} \right| \leq \frac{1}{\ell_j^{2}\left( \ln L \right)^3} \right\}}.
\end{equation}
We note that the event $E_{\Lambda'}$ is measurable with respect to the $\sigma$-algebra generated by the random variables $\hat \eta_{\Lambda',i}$ and $\eta_{\Lambda_0 \setminus 2\Lambda',i}$. Using Lemma~\ref{lemma.08538} and the fact that the event $E_{\Lambda'}$ is contained in the event $\left\{ \hat \eta_{\Lambda',i} \leq \ell_l^{-2} t_{\delta } \right\}$, we see that
\begin{align} \label{eq:1506}
    \E \left[\indc_{E_{\Lambda'}} \frac{1}{|\Lambda'|} \sum_{v \in \Lambda'} \left\langle  \sigma_{v,i} \right\rangle^{\tau , h}_{\Lambda_0} \right] & = \E \left[\E \left[\indc_{E_{\Lambda'}} \frac{1}{|\Lambda'|} \sum_{v \in \Lambda'} \left\langle  \sigma_{v,i} \right\rangle^{\tau , h}_{\Lambda_0} ~ \Big\vert ~ {\hat\eta_{\Lambda',i}} ,  \eta_{\Lambda_0 \setminus 2\Lambda',i}  \right]\right] \\ \notag & = \E \left[ \indc_{E_{\Lambda'}} \E \left[ \frac{1}{|\Lambda'|} \sum_{v \in \Lambda'} \left\langle  \sigma_{v,i} \right\rangle^{\tau , h}_{\Lambda_0}  ~ \Big\vert ~ {\hat\eta_{\Lambda',i}} ,  \eta_{\Lambda_0 \setminus 2\Lambda',i} \right] \right] \\ \notag & \leq C \delta \E \left[ \indc_{E_{\Lambda'}} \right].
\end{align}
We can now conclude the proof of the inequality~\eqref{eq:1457}. Applying the estimates~\eqref{eq:1508},~\eqref{eq:1509},~\eqref{eq:1507}, the computation~\eqref{eq:1506}, and the upper bound $l \leq C \ln L$, we obtain
\begin{align*}
    \lefteqn{\E \left[  \indc_{\{ \Lambda' \in \Q \}}   \frac{1}{\left| \Lambda' \right|}\sum_{v \in \Lambda'} \left\langle  \sigma_{v,i} \right\rangle^{\tau , h}_{\Lambda_0}  \right]} \qquad & \\ & \leq \E \left[  \indc_{E_{\Lambda'}}  \frac{1}{\left| \Lambda' \right|}\sum_{v \in \Lambda'} \left\langle  \sigma_{v,i} \right\rangle^{\tau , h}_{\Lambda_0}  \right] + \E \left[ \indc_{\left\{ \left|\hat \eta_{2\Lambda',i}\right| \geq \frac{ck^{2}}{\ell_l^{2} \left( \ln L \right)^3} \right\}} + \sum_{j= 0}^{l-1} \indc_{\left\{ \left| \hat \upeta_{j}' - \frac{t_{\delta}}{\ell_j^{2}} \right| \leq \frac{1}{\ell_j^{2}\left( \ln L \right)^3} \right\}} \right] \\
    & \leq C \delta \E \left[  \indc_{E_{\Lambda'}} \right] +  \E \left[ \indc_{\left\{ \left|\hat \eta_{2\Lambda',i}\right| \geq  \frac{c k^{2}}{ \ell_l^{2}\left( \ln L \right)^3} \right\}} + \sum_{j= 0}^{l-1} \indc_{\left\{ \left| \hat \upeta_{j}' - \frac{t_{\delta}}{\ell_j^{2}} \right| \leq \frac{1}{\ell_j^{2}\left( \ln L \right)^3} \right\}} \right] \\
    & \leq C \delta \E \left[  \indc_{\{ \Lambda' \in \Q \}} \right] + 2 \E \left[ \indc_{\left\{ \left|\hat\eta_{2\Lambda',i } \right| \geq  \frac{c k^2 }{\ell_l^{2}\left( \ln L \right)^3} \right\}} + \sum_{j= 0}^{l-1} \indc_{\left\{ \left| \hat \upeta_{j}' -  \frac{t_{\delta}}{\ell_j^{2}} \right| \leq \frac{1}{\ell_j^{2}\left( \ln L \right)^3} \right\}} \right] \\
    & \leq C \delta \E \left[  \indc_{\{ \Lambda' \in \Q \}} \right] + \frac{C l}{(\ln L)^3} \\
    & \leq  C \delta \E \left[  \indc_{\{ \Lambda' \in \Q \}} \right] + \frac{C}{(\ln L)^{2}}.
\end{align*}
The proof of~\eqref{eq:1457} is complete.

\medskip

We now use the two properties~\eqref{eq:14522} and~\eqref{eq:1457} of the collection $\Q$ of good boxes to complete the proof of Lemma~\ref{lemmasecondmandelbrot}. We write
$\sum_{\Lambda' \subseteq \Lambda_L}$ to refer to the sum $\sum_{l = 0}^{l_{\max}} \sum_{\Lambda' \in \mathcal{T}_l}$.
We decompose the expectation and write
\begin{align} \label{eq:1451}
    \lefteqn{\E \left[ \frac{1}{\left| \Lambda_L\right|} \sum_{v \in \Lambda_L} \left\langle  \sigma_{v,i} \right\rangle^{\tau , h}_{\Lambda_0} \right]} \qquad & \\ & = \E \left[ \sum_{\Lambda' \subseteq \Lambda_L} \frac{|\Lambda'|}{\left| \Lambda_L\right|}\indc_{\{ \Lambda' \in \Q \}}   \frac{1}{\left| \Lambda' \right|}\sum_{v \in \Lambda'} \left\langle  \sigma_{v,i} \right\rangle^{\tau , h}_{\Lambda_0}  \right] + \E \left[ \frac{1}{\left| \Lambda_L\right|} \sum_{v \in \Lambda_L} \indc_{\{ v \mathrm{ \, is\, uncovered}\}}  \left\langle  \sigma_{v,i} \right\rangle^{\tau , h}_{\Lambda_0} \right] \notag
    \\ & = \underbrace{\sum_{\Lambda' \subseteq \Lambda_L} \frac{|\Lambda'|}{\left| \Lambda_L\right|} \E \left[  \indc_{\{ \Lambda' \in \Q \}}   \frac{1}{\left| \Lambda' \right|}\sum_{v \in \Lambda'}\left\langle  \sigma_{v,i} \right\rangle^{\tau , h}_{\Lambda_0} \right]}_{\eqref{eq:1451}-(i)} +  \underbrace{\frac{1}{\left| \Lambda_L\right|} \sum_{v \in \Lambda_L}  \E \left[ \indc_{\{ v \mathrm{\, is\, uncovered}\}}  \left\langle  \sigma_{v,i} \right\rangle^{\tau , h}_{\Lambda_0} \right].}_{\eqref{eq:1451}-(ii)} \notag
\end{align}
We then estimate the two terms in the right side separately. We begin with the term~\eqref{eq:1451}-(i), use the inequality~\eqref{eq:1457} and the observations
\begin{equation*}
    \sum_{\Lambda' \subseteq \Lambda_L} |\Lambda'| \indc_{\{ \Lambda' \in \Q \}} \leq \left| \Lambda_L\right| ~ \mbox{and} ~ \sum_{\Lambda' \subseteq \Lambda_L} |\Lambda'| = \sum_{l=0}^{l_{\max}} \sum_{\Lambda' \in \mathcal{T}_l} |\Lambda'| = (l_{\max} +1 ) \left| \Lambda_L\right| \leq C \ln L \left| \Lambda_L\right|.
\end{equation*}
We obtain
\begin{align*}
    \sum_{\Lambda' \subseteq \Lambda_L} \frac{|\Lambda'|}{\left| \Lambda_L\right|} \E \left[  \indc_{\{ \Lambda' \in \Q \}}   \frac{1}{\left| \Lambda' \right|}\sum_{v \in \Lambda'} \left\langle  \sigma_{v,i} \right\rangle^{\tau , h}_{\Lambda_0}  \right] & \leq C \delta \sum_{\Lambda' \subseteq \Lambda_L} \frac{|\Lambda'|}{\left| \Lambda_L\right|} \E \left[  \indc_{\{ \Lambda' \in \Q\}} \right] + C \sum_{\Lambda' \subseteq \Lambda_L} \frac{|\Lambda'|}{\left| \Lambda_L\right|} \frac{1}{\left( \ln L \right)^2} \\
    & \leq \frac{C\delta}{\left| \Lambda_L\right|} \E \left[ \sum_{\Lambda' \subseteq \Lambda_L} |\Lambda'| \indc_{\{ \Lambda' \in \Q \}}   \right]  + \frac{C \ln L}{\left( \ln L \right)^2} \\
    & \leq C \delta.
\end{align*}
There only remains to treat the term~\eqref{eq:1451}-(ii). We use to the estimate~\eqref{eq:14522} and write
\begin{align*}
     \left| \frac{1}{\left| \Lambda_L\right|} \sum_{v \in \Lambda_L}  \E \left[ \indc_{\{ v \mathrm{\, is \, uncovered}\}} \left\langle  \sigma_{v,i} \right\rangle^{\tau , h}_{\Lambda_0}   \right] \right| & \leq \frac{1}{\left| \Lambda_L\right|} \sum_{v \in \Lambda_L} \P \left[ v ~\mbox{is uncovered} \right] \\ &
     \leq \exp \left( - c \sqrt{\ln L} \right) \\ &
     \leq \frac{C}{\sqrt{\ln \ln L}}.
\end{align*}
A combination of the two previous displays with the identity~\eqref{eq:1451} completes the proof of Lemma~\ref{lemmasecondmandelbrot}.
\end{proof}

\subsubsection{Proof of Theorem~\ref{thm.thm2}} \label{subsecprooftheorem2}

In this section, we combine the result of Lemma~\ref{lemmasecondmandelbrot} (applied with the periodic boundary condition) with an argument similar to the one developed in Subsection~\ref{subsec4.3.21739} to complete the proof of Theorem~\ref{thm.thm2}.

\begin{proof}[Proof of Theorem~\ref{thm.thm2}]
Fix a side length $L \geq 10$, $h \in \R^n$ such that $\left| h \right| \leq 1/10$, and set $\ell := (L/2) \wedge |h|^{-1/2}$. We consider the system with periodic boundary condition and apply Lemma~\ref{lemmasecondmandelbrot} with the boxes $\Lambda_0 := \Lambda_L$ and $\Lambda_L = \Lambda_\ell$. We obtain
\begin{equation*}
    \left| \E \left[ \left\langle \sigma_0 \right\rangle^{\per , h}_{\Lambda_{L}} \right] \right| = \left| \E \left[ \frac{1}{\left| \Lambda_\ell \right|}\sum_{v \in \Lambda_\ell}\left\langle \sigma_v \right\rangle^{\per , h}_{\Lambda_{L}} \right] \right| \leq C \frac1{\sqrt{\ln \ln \ell}} \leq  \frac C{\sqrt{\ln \ln  \left( L \wedge |h|^{-1} \right)}},
\end{equation*}
where we used the inequality $\ln \ell  \geq c \ln \left(L \wedge |h|^{-1}\right)$. Integrating over $h$ as it was done in~\eqref{eq:1352199}, we deduce that
\begin{equation}  \label{eq:13521999}
    \left| \E \left[ \FE^{\per , h}_{\Lambda_L}(\eta) - \FE^{\per , 0}_{\Lambda_L} (\eta)\right] \right|  \leq \frac{C |h|}{\sqrt{\ln \ln   \left( L \wedge |h|^{-1} \right)}}.
\end{equation}
Let us then fix an integer $i \in \{ 1 , \ldots, n \}$. For each realization of the random field $\eta$ and each $h \in \R^n$, we let $\tau_i(\eta , h) \in \S^{\partial \Lambda_L}$ be a boundary condition satisfying
\begin{equation} \label{eq:120105111}
    \frac{1}{\left| \Lambda_L \right|}  \sum_{v \in \Lambda_L} \left\langle  \sigma_{v,i} \right\rangle^{\tau_i(\eta , h) , h}_{\Lambda_L}  = \sup_{\tau \in \S^{\partial \Lambda_L}}  \frac{1}{\left| \Lambda_L \right|} \sum_{v \in \Lambda_L} \left\langle  \sigma_{v,i} \right\rangle^{\tau , h}_{\Lambda_L}.
\end{equation}
Using the inequality~\eqref{eq:12030211} of Proposition~\ref{prop.basicpropfreeen}, we have, for any $h, h' \in \R^n$,
\begin{equation} \label{eq:13531999}
    \left| \E \left[ \FE^{\tau_i(\eta , h) , h'}_{\Lambda_L}(\eta) \right] - \E \left[ \FE^{\per , h'}_{\Lambda_L}(\eta) \right] \right| \leq \frac{C}{L}.
\end{equation}
A combination of the inequalities~\eqref{eq:13521999} and~\eqref{eq:13531999} yields, for any $h, h' \in \R^n$ satisfying $ |h'| \leq 1/10$,
\begin{equation} \label{eq:135619999}
    \left| \E \left[ \FE^{\tau_i(\eta , h) , h'}_{\Lambda_L}(\eta) \right] - \E \left[ \FE^{\tau_i(\eta, h) , 0}_{\Lambda_L} (\eta) \right] \right| \leq\frac{C |h'|}{\sqrt{\ln \ln   \left( L \wedge |h'|^{-1} \right)}} + \frac{C}{L}. \\
\end{equation}
We then fix $|h| \leq 1/10$, and denote by
\begin{equation*}
\tilde h :=
\left(h_1 , \ldots, h_{i-1}, h_i + \sqrt{ |h| \vee L^{-1}} , h_{i+1} , \ldots, h_n\right).
\end{equation*}
We note that $|\tilde h| \leq 2 \sqrt{ |h| \vee L^{-1}} \leq 1$, and that, if $|h|$ and $1/L$ are sufficiently small (e.g., smaller than $1/400$), then $|\tilde h|$ is smaller than $1/10$ and we may apply the bound~\eqref{eq:135619999}.

We introduce the function $G: h' \mapsto -\E \left[ \FE^{\tau_i(\eta , h) ,h'}_{\Lambda_L}(\eta) \right]$, observe that the map $G$ is convex, and that its derivative with respect to the $i-$th variable satisfies
\begin{equation*}
    \frac{\partial G}{\partial h_i'}(h') = \E \left[ \frac{1}{\left| \Lambda_L \right|} \sum_{v \in \Lambda_L} \left\langle  \sigma_{v,i} \right\rangle^{\tau_i(\eta , h) , h'}_{\Lambda_L} \right].
\end{equation*}
Additionally, by~\eqref{eq:135619999} and the definition of $\tilde h$ (assuming that $|h|$ and $1/L$ are sufficiently small), we have
\begin{equation*}
    \left| G(\tilde h) - G(h) \right| \leq \frac{C \sqrt{L^{-1} \vee |h| }}{\sqrt{\ln \ln  \left( L \wedge |h|^{-1} \right)}} + \frac{C}{L} \leq  \frac{C \sqrt{L^{-1} \vee |h|}}{\sqrt{\ln \ln \left( L \wedge |h|^{-1} \right)}},
\end{equation*}
which then yields
\begin{align} \label{eq:16180511}
    \E \left[ \frac{1}{\left| \Lambda_L \right|} \sum_{v \in \Lambda_L} \left\langle  \sigma_{v,i} \right\rangle^{\tau_i(\eta , h) , h}_{\Lambda_L} \right]
     & \leq  \frac{G(\tilde h) - G(h)}{\sqrt{|h|\vee L^{-1}}} \notag \\
    & \leq \frac{C}{\sqrt{\ln \ln \left(  L \wedge |h|^{-1} \right)}}. \notag
\end{align}
Using the definition of the random boundary condition $\tau_i\left( \eta , h \right)$, we obtain
\begin{equation*}
    \E \left[ \sup_{\tau \in \S^{\partial \Lambda_L}}\frac{1}{\left| \Lambda_L \right|} \sum_{v \in \Lambda_L} \left\langle  \sigma_{v,i} \right\rangle^{\tau , h}_{\Lambda_L} \right] \leq \frac{C}{\sqrt{\ln \ln \left( L \wedge |h|^{-1}  \right)}}.
\end{equation*}
The proof of Theorem~\ref{thm.thm2} can be completed by using the same arguments as the one presented in the proof of the estimate~\eqref{eq:TV090404} of Theorem~\ref{prop:subcritical} in Section~\ref{subsec4.3.21739}, we thus omit the details.
\end{proof}

\section{Discussion and open problems} \label{sectioncommentsandopenproblems}
This work initiates the study of quantitative versions of the Aizenman--Wehr~\cite{AW1989, AW89} result on the Imry--Ma rounding phenomenon~\cite{IM75}. In this section we discuss some of the remaining problems:

\medskip

{\bf Uniqueness conjecture:} As discussed in Section~\ref{sec:2d systems}, we believe that a stronger qualitative statement than the one provided by Aizenman--Wehr~\cite{AW1989, AW89} is valid. Namely, that in two dimensions the thermal averages $\left\langle f_v \left( \sigma \right) \right\rangle_{\Lambda_L}^{\tau}$ cannot be significantly altered by changing $\tau$ in the sense of Conjecture~\ref{conj:uniqueness}. We again point out that Conjecture~\ref{conj:uniqueness} would imply as a special case the well-known belief that the two-dimensional Edwards--Anderson spin glass model has a unique ground-state pair. It would be very interesting to make additional progress in this direction.

It is also possible that analogous uniqueness statements hold in dimensions $d=3,4$ for the spin systems with continuous symmetry discussed in Section~\ref{sec:continuous symmetry results} (the case $d=2$ is covered by Conjecture~\ref{conj:uniqueness}).

\medskip

{\bf Quantitative decay rate and possible phase transitions:} It is very natural to seek the optimal rates in our quantitative results. We first discuss the general two-dimensional disordered spin systems of Section~\ref{sec:2d systems}. When the base Hamiltonian has finite range, exponential decay of correlations and uniqueness of the infinite-volume Gibbs measure follow in all dimensions in the high-temperature regime ($\beta\ll 1$), and for certain systems also in the strong disorder regime ($\lambda\gg1)$, from the disagreement percolation methods of van den Berg--Maes~\cite{van1994disagreement} and their adaptations by Gielis--Maes~\cite{gielis1995uniqueness} (suitable versions of Dobrushin's condition~\cite{D68} should also be applicable). The main interest is thus in the low-temperature (or even zero temperature) regime. We then believe that the correct rate in two dimensions should be much faster than the inverse power of log-log rate obtained in Theorem~\ref{thm1prop2.31708main}. Without assuming translation invariance, the decay cannot hold at a faster than power-law rate (e.g., as the noised observables $(f_v)$ can be identically zero at all but one vertex), but it is possible that such a rate indeed holds in general. In the translation-invariant setup, it is even possible that exponential decay holds (noting that for a faster than power-law decay one needs to perform the spatial average in~\eqref{eq:fluctuations around limiting value} over a smaller domain, say $\Lambda_{L/2}$, to avoid boundary effects), as proved for the nearest-neighbor ferromagnetic random-field Ising model ~\cite{ding20192exponential, AHP20} (though exponential decay is still open for the non-nearest-neighbor version for which only a power-law upper bound is known~\cite{AP19}).

We proceed to discuss the spin systems with continuous symmetry of Section~\ref{sec:continuous symmetry results}. Exponential decay in the high-temperature regime ($\beta\ll 1$) again follows in all dimensions~\cite{van1994disagreement, gielis1995uniqueness}. This is also expected in the strong disorder regime ($\lambda\gg1$); see~\cite{feldman2000nonexistence} for the zero-temperature case. The interest is thus in the low-temperature and weak disorder regime. One might expect exponential decay to hold in the sub-critical dimensions $d = 2,3$, and arguments have been given both in favor~\cite{3R, 180R} and, in the three dimensional case, against~\cite{83R, 99R} this possibility in the physics literature. Power-law decay would imply a transition of the Berezinskii–Kosterlitz–Thouless type~\cite{berezinskii1971destruction, berezinskii1972destruction, kosterlitz1972long, KT, FrSp} as the temperature or disorder strength varies and would thus be of great interest.

\medskip

{\bf Higher-order continuous symmetries:} The form of continuous-symmetry that the spin $O(n)$ model enjoys is that its formal Hamiltonian
\begin{equation*}
H(\sigma):=\sum_{v\sim w} |\sigma_v - \sigma_w|^2,
\end{equation*}
where $\sigma:\Zd\to\mathbb{S}^{n-1}$, satisfies $H(R\sigma) = H(\sigma)$ for the operation $R$ which rotates all spins in $\sigma$ by the same angle. One can also envision spin systems enjoying higher-order symmetries in the sense that we now explain for the $n=2$ case (analogous symmetries may exist for $n>2$ but we have not explored this possibility). We embed the circle $\mathbb{S}^1$ in $\mathbb{C}$ and write $\sigma_v = e^{i\theta_v}$ (where the angle $\theta$ is defined modulo $2\pi$). For a polynomial $P:\R^d\to\R$ and configuration $\sigma:\Zd\to\mathbb{S}^1$ define the `polynomial rotation' $R_P(\sigma)$ by $R_P(\sigma)_v = e^{i (\theta_v + P(v))}$. A Hamiltonian $H$ on configurations $\sigma:\Zd\to\mathbb{S}^{1}$ is then said to enjoy a continuous symmetry of order $k$ if $H(R_P(\sigma)) = H(\sigma)$ for all polynomials $P:\R^d\to\R$ of degree at most $k$. In particular, the case $k=0$ corresponds to the usual notion of continuous symmetry. As an example of a Hamiltonian enjoying continuous symmetries of order $1$, we propose
\begin{equation}\label{eq:Laplacian Hamiltonian}
H(\sigma):=\sum_{v} \cos((\Delta \theta)_v)
\end{equation}
where $\Delta$ is the discrete Laplacian operator: $(\Delta\theta)_v := \sum_{w\colon w\sim v} (\theta_w - \theta_v)$. Similarly, an example of a Hamiltonian enjoying continuous symmetry of order $k=2\ell -1$ is obtained by replacing $\Delta$ with $\Delta^\ell$ (the composition of $\Delta$ with itself $\ell$ times) in~\eqref{eq:Laplacian Hamiltonian} and an example enjoying continuous symmetries of order $k=2\ell$ is furnished by
\begin{equation}
H(\sigma):=\sum_{v\sim w} \cos((\Delta^\ell \theta)_v - (\Delta^\ell \theta)_w).
\end{equation}
We do not know if these spin models have received attention in the literature.

Higher-order symmetries reduce the surface tension of finite-range spin systems with smooth energy. Specifically, we believe that an analog of Proposition~\ref{propMermW} holds for a spin system having a smooth, finite-range Hamiltonian enjoying a higher-order symmetry of order $k$ with the factor $\ell^{d-2}$ replaced by $\ell^{d-2k-2}$. To prove this fact, one may follow the steps of Proposition~\ref{propMermW} with the following modifications:
\begin{itemize}[leftmargin=*]
    \item References to $\Psi$ should be replaced by corresponding references to the Hamiltonian.
    \item The spin wave in equation~\eqref{def.thetaMW} needs to be replaced by a function $\theta$ satisfying $\theta\equiv 0$ on $\Zd\setminus 2\Lambda$, $\theta\equiv \pi$ in $\Lambda$ and all discrete derivatives of order $k+1$ of $\theta$ are uniformly bounded by $C_{k+1}\ell^{-(k+1)}$. Such a function may be obtained by choosing a smooth function $\psi:\R^d\to\R$ satisfying that $\psi\equiv 0$ outside the box $B(2)$, $\psi\equiv 1$ on the box $B(1)$ (where $B(r):=\{x\in\R^d\colon \|x\|_\infty\le r\}$) and then setting $\theta(v):=\pi\psi(v / \ell)$ for $v\in\Zd$.
    \item Instead of the expression~\eqref{eq:155924} one notes that a discrete Taylor expansion may be performed to write $\theta_w = \theta_v + P_{k,v}(w-v)+\varepsilon_{k,v,w}$ where $P_{k,v}$ is a polynomial of degree at most $k$ and then the higher-order symmetry of the Hamiltonian allows to replace the expression $\theta_w - \theta_v$ on the right-hand side of~\eqref{eq:155924} by $\varepsilon_{k,v,w}$ which is of order at most $C_{k+1}\ell^{-(k+1)}$ by our assumptions on $\theta$.
\end{itemize}
The reduced surface tension allows to push the Imry--Ma phenomenon to higher dimensions. Specifically, spin systems with a finite-range smooth Hamiltonian enjoying a higher-order symmetry of order $k$ will lose their low-temperature ordered phase upon introduction of a random field of arbitrarily weak intensity, of the form in our theorems, in all dimensions $d\le 4(k+1)$. Moreover, the strategy used in this paper to obtain a quantitative decay rate can be followed to yield that (at $h=0$)
\begin{equation}
\left| \E \left[\frac{1}{\left| \Lambda_L \right|}  \sum_{v \in  \Lambda_L } \left\langle \sigma_v \right\rangle^{\tau , 0}_{\Lambda_{2L}} \right] \right|\le C L^{-2(k+1) + d/2}
\end{equation}
in dimensions $d<4(k+1)$. It is further possible that the strategy used in the proof of Theorem~\ref{thm.thm2} can be adapted to yield a bound in dimension $d=4(k+1)$.

\medskip

{\bf Comparison with the results of Aizenman--Wehr:} The seminal result of Aizenman--Wehr~\cite{AW1989, AW89} establishes rigorously the rounding of the first-order phase transitions of low-dimensional spin systems upon the addition of a quenched disorder. Our work presents a quantified version of the Aizenman--Wehr theorem, but applies in somewhat different generality than the original result. While we expect that the techniques developed in this work may be extended to a more general setup, closer to that of~\cite{AW1989, AW89}, we have not pursued this direction. In this section, we elaborate on the various assumptions made:
\begin{itemize}[leftmargin=*]
\item {\it Translation-invariance of the systems:} While the proof of Aizenman and Wehr requires to work in a translation-invariant setup, the techniques developed in this article apply to spin systems which do not satisfy this assumption.
\item {\it Distribution of the disorder:} The result of Aizenman--Wehr applies to a wide class of disorder distributions while our result is presented only for the Gaussian case.
\item {\it Range of the interaction:} The results of~\cite{AW1989, AW89} apply also to disordered systems with long-range interactions (in which case one-dimensional systems are also of interest) as long as these decay at a sufficiently fast rate. Our results for general two-dimensional disordered spin systems (Section~\ref{sec:2d systems}) allow the \emph{base system} to have arbitrary interactions as long as the bounded boundary effect condition~\eqref{def.cteCH} holds. However, we have opted to restrict to finite-range dependencies in the noised observables $(f_v)_{v \in \Zd}$ (Section~\ref{sec:general setup}).

\item {\it Uniformity of the results in the temperature and external field:} The results of Aizenman--Wehr apply not only for a fixed value of the temperature and external magnetic field (the latter is incorporated into the models there) but also uniformly when these parameters are themselves allowed to depend on the disorder $\eta$. This uniformity shows that there cannot be deviations from the proven behavior at \emph{random} critical points. In comparison, our results are stated only for a fixed value of the temperature and, in the general two-dimensional setup, without an external magnetic field (though one can be included in the base Hamiltonian). Still, a uniform version of our results may be obtained with minor modifications of the proof, as indicated in Remark~\ref{Remark uniform beta}.

\item {\it Systems with continuous symmetry:} For systems with continuous symmetries, Aizenman and Wehr allow the spins to take values in a subset of $\R^n$ which is invariant under the action of a closed \emph{connected} subgroup $G$ of $O(n)$, and require the base Hamiltonian to take the form $\sum_{x,y} J_{x-y} \Psi_{x-y}(\sigma_x,\sigma_y)$ where each $\Psi_z$ is required to be bounded by $1$, to be invariant under the same subgroup, to satisfy certain differentiability properties and the coupling constants $J_z$ are required to satisfy $\sum_{z\colon \left|z\right|\le L} J_z|z|^2\le C L^{(4-d)/2}$. The noised observables are taken to be the spins themselves and the conclusion is then that in all dimensions $d\le 4$, a suitable spatially and thermally averaged magnetization value is invariant under all elements of $G$ for all Gibbs measures of the model, almost surely. It is further mentioned that the technique should generalize to a suitable class of many-body interactions. Our results for spin systems with continuous symmetry (Section~\ref{sec:continuous symmetry results}) are presented, again for simplicity, only for the case that the spins take values in the sphere $\mathbb{S}^{n-1}$ in the setup of nearest-neighbor interactions which are invariant under all rotations in $O(n)$, and when the noised observables are fixed to be the spins themselves.
\end{itemize}

\subsubsection*{Acknowledgements} We are grateful to Michael Aizenman, Ronald Fisch, David Huse, Charles M. Newman, Thomas Spencer and Daniel L. Stein for encouragement and helpful conversations on the topics of this work. We also thank Antonio Auffinger, Wei-Kuo Chen, Izabella Stuhl and Yuri Suhov for the opportunity to present these results in online talks and for useful discussions. We would like to thank an anonymous referee for many helpful comments which helped us improve the presentation of the article.
The research of the authors was supported in part by Israel Science Foundation grants 861/15  and  1971/19  and  by  the  European  Research  Council starting  grant 678520 (LocalOrder).

\small
\bibliographystyle{abbrv}
\bibliography{references}

\newcommand{\noop}[1]{} \def\cprime{$'$}
\begin{thebibliography}{10}

\bibitem{affonso2023phase}
L.~Affonso, R.~Bissacot, and J.~Maia.
\newblock Phase transition for long-range random field ising model in higher
  dimensions.
\newblock {\em arXiv preprint arXiv:2307.14150}, 2023.

\bibitem{3R}
A.~Aharony and E.~Pytte.
\newblock Infinite susceptibility phase in random uniaxial anisotropy magnets.
\newblock {\em Physical Review Letters}, 45(19):1583, 1980.

\bibitem{AGL12}
M.~Aizenman, R.~L. Greenblatt, and J.~L. Lebowitz.
\newblock Proof of rounding by quenched disorder of first order transitions in
  low-dimensional quantum systems.
\newblock {\em J. Math. Phys.}, 53(2):023301, 20, 2012.

\bibitem{AHP20}
M.~Aizenman, M.~Harel, and R.~Peled.
\newblock Exponential decay of correlations in the {$2D$} random field {I}sing
  model.
\newblock {\em J. Stat. Phys.}, 180(1-6):304--331, 2020.

\bibitem{AP19}
M.~Aizenman and R.~Peled.
\newblock A power-law upper bound on the correlations in the {$2d$} random
  field {I}sing model.
\newblock {\em Comm. Math. Phys.}, 372(3):865--892, 2019.

\bibitem{AW1989}
M.~Aizenman and J.~Wehr.
\newblock Rounding of first-order phase transitions in systems with quenched
  disorder.
\newblock {\em Physical {R}eview {L}etters}, 62(21):2503, 1989.

\bibitem{AW89}
M.~Aizenman and J.~Wehr.
\newblock Rounding effects of quenched randomness on first-order phase
  transitions.
\newblock {\em Comm. Math. Phys.}, 130(3):489--528, 1990.

\bibitem{ADN10}
L.-P. Arguin, M.~Damron, C.~M. Newman, and D.~L. Stein.
\newblock Uniqueness of ground states for short-range spin glasses in the
  half-plane.
\newblock {\em Comm. Math. Phys.}, 300(3):641--657, 2010.

\bibitem{berezinskii1971destruction}
V.~Berezinskii.
\newblock Destruction of long-range order in one-dimensional and
  two-dimensional systems having a continuous symmetry group {I}. classical
  systems.
\newblock {\em Sov. Phys. JETP}, 32(3):493--500, 1971.

\bibitem{berezinskii1972destruction}
V.~Berezinskii.
\newblock Destruction of long-range order in one-dimensional and
  two-dimensional systems possessing a continuous symmetry group. {II}. quantum
  systems.
\newblock {\em Sov. Phys. JETP}, 34(3):610--616, 1972.

\bibitem{B85}
A.~Berretti.
\newblock Some properties of random {I}sing models.
\newblock {\em J. Statist. Phys.}, 38(3-4):483--496, 1985.

\bibitem{Bo06}
A.~Bovier.
\newblock {\em Statistical mechanics of disordered systems}, volume~18 of {\em
  Cambridge Series in Statistical and Probabilistic Mathematics}.
\newblock Cambridge University Press, Cambridge, 2006.
\newblock A mathematical perspective.

\bibitem{BK94}
A.~Bovier and C.~K\"{u}lske.
\newblock A rigorous renormalization group method for interfaces in random
  media.
\newblock {\em Rev. Math. Phys.}, 6(3):413--496, 1994.

\bibitem{BK96}
A.~Bovier and C.~K{\"u}lske.
\newblock There are no nice interfaces in (2+1)-dimensional {SOS} models in
  random media.
\newblock {\em J. {S}tat. {P}hys.}, 83(3-4):751--759, 1996.

\bibitem{BS102}
A.~Bowditch and R.~Sun.
\newblock The two-dimensional continuum random field {I}sing model.
\newblock {\em Ann. Probab.}, 50(2):419--454, 2022.

\bibitem{BK87L}
J.~Bricmont and A.~Kupiainen.
\newblock Lower critical dimension for the random-field {I}sing model.
\newblock {\em Physical Review Letters}, 59(16):1829, 1987.

\bibitem{BK88H}
J.~Bricmont and A.~Kupiainen.
\newblock The hierarchical random field {I}sing model.
\newblock {\em J. Stat. Phys.}, 51(5-6):1021--1032, 1988.

\bibitem{BK88}
J.~Bricmont and A.~Kupiainen.
\newblock Phase transition in the {$3d$} random field {I}sing model.
\newblock {\em Comm. Math. Phys.}, 116(4):539--572, 1988.

\bibitem{CGN161}
F.~Camia, C.~Garban, and C.~M. Newman.
\newblock Planar {I}sing magnetization field {I}. {U}niqueness of the critical
  scaling limit.
\newblock {\em Ann. Probab.}, 43(2):528--571, 2015.

\bibitem{CGN162}
F.~Camia, C.~Garban, and C.~M. Newman.
\newblock Planar {I}sing magnetization field {II}. {P}roperties of the critical
  and near-critical scaling limits.
\newblock {\em Ann. Inst. Henri Poincar\'{e} Probab. Stat.}, 52(1):146--161,
  2016.

\bibitem{CJN18}
F.~Camia, J.~Jiang, and C.~M. Newman.
\newblock A note on exponential decay in the random field {I}sing model.
\newblock {\em J. Stat. Phys.}, 173(2):268--284, 2018.

\bibitem{chatterjee2015absence}
S.~Chatterjee.
\newblock Absence of replica symmetry breaking in the random field {I}sing
  model.
\newblock {\em Comm. Math. Phys.}, 337(1):93--102, 2015.

\bibitem{C18}
S.~Chatterjee.
\newblock On the decay of correlations in the random field {I}sing model.
\newblock {\em Comm. Math. Phys.}, 362(1):253--267, 2018.

\bibitem{chatterjee2023features}
S.~Chatterjee.
\newblock Features of a spin glass in the random field {I}sing model.
\newblock {\em arXiv preprint arXiv:2307.07634}, 2023.

\bibitem{CJK18}
C.~Cotar, B.~Jahnel, and C.~K\"{u}lske.
\newblock Extremal decomposition for random {G}ibbs measures: from general
  metastates to metastates on extremal random {G}ibbs measures.
\newblock {\em Electron. Commun. Probab.}, 23:Paper No. 95, 12, 2018.

\bibitem{CK12}
C.~Cotar and C.~K{\"u}lske.
\newblock Existence of random gradient states.
\newblock {\em Ann. Appl. Probab.}, 22(4):1650--1692, 2012.

\bibitem{CK15}
C.~Cotar and C.~K{\"u}lske.
\newblock Uniqueness of gradient {G}ibbs measures with disorder.
\newblock {\em Probab. Theory Related Fields}, 162(3-4):587--635, 2015.

\bibitem{C11}
N.~Crawford.
\newblock On random field induced ordering in the classical {$XY$} model.
\newblock {\em J. Stat. Phys.}, 142(1):11--42, 2011.

\bibitem{crawford2013random}
N.~Crawford.
\newblock Random field induced order in low dimension.
\newblock {\em EPL (Europhysics Letters)}, 102(3):36003, 2013.

\bibitem{crawford2014random}
N.~Crawford.
\newblock Random field induced order in low dimension {I}.
\newblock {\em Comm. Math. Phys.}, 328(1):203--249, 2014.

\bibitem{DHP21}
P.~Dario, M.~Harel, and R.~Peled.
\newblock Random-field random surfaces.
\newblock {\em Probab. Theory Related Fields}, 186(1-2):91--158, 2023.

\bibitem{ding2022long}
J.~Ding, Y.~Liu, and A.~Xia.
\newblock Long range order for three-dimensional random field {I}sing model
  throughout the entire low temperature regime.
\newblock {\em arXiv preprint arXiv:2209.13998}, 2022.

\bibitem{ding2023correlation}
J.~Ding and M.~Wirth.
\newblock Correlation length of the two-dimensional random field {I}sing model
  via greedy lattice animal.
\newblock {\em Duke Math. J.}, 1(1):1--31, 2023.

\bibitem{ding20192exponential}
J.~Ding and J.~Xia.
\newblock Exponential decay of correlations in the two-dimensional random field
  {I}sing model at zero temperatures.
\newblock {\em arXiv preprint arXiv:1905.05651}, 2019.

\bibitem{ding2019exponential}
J.~Ding and J.~Xia.
\newblock Exponential decay of correlations in the two-dimensional random field
  {I}sing model.
\newblock {\em Invent. Math.}, 224(3):999--1045, 2021.

\bibitem{DingZhuang}
J.~Ding and Z.~Zhuang.
\newblock Long range order for random field {I}sing and {P}otts models.
\newblock {\em Comm. Pure Appl. Math.}, 2023.

\bibitem{D68}
P.~Dobruschin.
\newblock The description of a random field by means of conditional
  probabilities and conditions of its regularity.
\newblock {\em Theory of Probability \& Its Applications}, 13(2):197--224,
  1968.

\bibitem{feldman2000nonexistence}
D.~Feldman.
\newblock Nonexistence of quasi-long-range order in strongly disordered vortex
  glasses: A rigorous proof.
\newblock {\em Physical Review B}, 62(9):5364, 2000.

\bibitem{83R}
D.~Feldman.
\newblock Quasi-long range order in glass states of impure liquid crystals,
  magnets, and superconductors.
\newblock {\em International Journal of Modern Physics B}, 15(22):2945--2976,
  2001.

\bibitem{FI84}
J.~Fr\"{o}hlich and J.~Z. Imbrie.
\newblock Improved perturbation expansion for disordered systems: beating
  {G}riffiths singularities.
\newblock {\em Comm. Math. Phys.}, 96(2):145--180, 1984.

\bibitem{FrSp}
J.~Fr\"{o}hlich and T.~Spencer.
\newblock Kosterlitz-{T}houless transition in the two-dimensional plane rotator
  and {C}oulomb gas.
\newblock {\em Physical Review Letters}, 46(15):1006--1009, 1981.

\bibitem{georgii2011gibbs}
H.-O. Georgii.
\newblock {\em Gibbs measures and phase transitions}.
\newblock Walter de Gruyter (de Gruyter Studies in Mathematics, Vol. 9),
  Berlin–New York, 1988.

\bibitem{GT06smoot}
G.~Giacomin and F.~Toninelli.
\newblock Smoothing of depinning transitions for directed polymers with
  quenched disorder.
\newblock {\em Physical Review Letters}, 96(7):070602, 2006.

\bibitem{G06S}
G.~Giacomin and F.~L. Toninelli.
\newblock Smoothing effect of quenched disorder on polymer depinning
  transitions.
\newblock {\em Comm. Math. Phys.}, 266(1):1--16, 2006.

\bibitem{gielis1995uniqueness}
G.~Gielis and C.~Maes.
\newblock The uniqueness regime of {G}ibbs fields with unbounded disorder.
\newblock {\em J. Stat. Phys.}, 81(3-4):829--835, 1995.

\bibitem{99R}
M.~J. Gingras and D.~A. Huse.
\newblock Topological defects in the random-field {XY} model and the pinned
  vortex lattice to vortex glass transition in type-{II} superconductors.
\newblock {\em Physical Review B}, 53(22):15193, 1996.

\bibitem{I85}
J.~Z. Imbrie.
\newblock The ground state of the three-dimensional random-field {I}sing model.
\newblock {\em Comm. Math. Phys.}, 98(2):145--176, 1985.

\bibitem{IM75}
Y.~Imry and S.-K. Ma.
\newblock Random-field instability of the ordered state of continuous symmetry.
\newblock {\em Physical Review Letters}, 35(21):1399, 1975.

\bibitem{kosterlitz1972long}
J.~M. Kosterlitz and D.~Thouless.
\newblock Long range order and metastability in two dimensional solids and
  superfluids.(application of dislocation theory).
\newblock {\em Journal of Physics C: Solid State Physics}, 5(11):L124, 1972.

\bibitem{KT}
J.~M. Kosterlitz and D.~J. Thouless.
\newblock Ordering, metastability and phase transitions in two-dimensional
  systems.
\newblock {\em Journal of Physics C: Solid State Physics}, 6(7):1181, 1973.

\bibitem{KO06}
C.~K\"{u}lske and E.~Orlandi.
\newblock A simple fluctuation lower bound for a disordered massless random
  continuous spin model in {$D=2$}.
\newblock {\em Electron. Comm. Probab.}, 11:200--205, 2006.

\bibitem{KO08}
C.~K\"{u}lske and E.~Orlandi.
\newblock Continuous interfaces with disorder: even strong pinning is too weak
  in two dimensions.
\newblock {\em Stochastic Process. Appl.}, 118(11):1973--1981, 2008.

\bibitem{MW}
N.~D. Mermin and H.~Wagner.
\newblock Absence of ferromagnetism or antiferromagnetism in one-or
  two-dimensional isotropic {H}eisenberg models.
\newblock {\em Physical Review Letters}, 17(22):1133, 1966.

\bibitem{N98T}
T.~Nattermann.
\newblock Theory of the random field {I}sing model.
\newblock In {\em Spin glasses and random fields}, pages 277--298. World
  Scientific, 1998.

\bibitem{NS00}
C.~Newman and D.~Stein.
\newblock Nature of ground state incongruence in two-dimensional spin glasses.
\newblock {\em Physical Review Letters}, 84(17):3966, 2000.

\bibitem{NS96Spa}
C.~M. Newman and D.~Stein.
\newblock Spatial inhomogeneity and thermodynamic chaos.
\newblock {\em Physical Review Letters}, 76(25):4821, 1996.

\bibitem{NS01}
C.~M. Newman and D.~L. Stein.
\newblock Are there incongruent ground states in 2{D} {E}dwards-{A}nderson spin
  glasses?
\newblock volume 224, pages 205--218. 2001.
\newblock Dedicated to Joel L. Lebowitz.

\bibitem{stein2013spin}
D.~L. Stein and C.~M. Newman.
\newblock {\em Spin glasses and complexity}, volume~4.
\newblock Princeton University Press, 2013.

\bibitem{T03B}
M.~Talagrand.
\newblock {\em Mean field models for spin glasses. {V}olume {I}}, volume~54 of
  {\em Ergebnisse der Mathematik und ihrer Grenzgebiete. 3. Folge. A Series of
  Modern Surveys in Mathematics [Results in Mathematics and Related Areas. 3rd
  Series. A Series of Modern Surveys in Mathematics]}.
\newblock Springer-Verlag, Berlin, 2011.
\newblock Basic examples.

\bibitem{T03B2}
M.~Talagrand.
\newblock {\em Mean field models for spin glasses. {V}olume {II}}, volume~55 of
  {\em Ergebnisse der Mathematik und ihrer Grenzgebiete. 3. Folge. A Series of
  Modern Surveys in Mathematics [Results in Mathematics and Related Areas. 3rd
  Series. A Series of Modern Surveys in Mathematics]}.
\newblock Springer, Heidelberg, 2011.
\newblock Advanced replica-symmetry and low temperature.

\bibitem{180R}
M.~Tissier and G.~Tarjus.
\newblock Unified picture of ferromagnetism, quasi-long-range order, and
  criticality in random-field models.
\newblock {\em Physical Review Letters}, 96(8):087202, 2006.

\bibitem{van1994disagreement}
J.~Van Den~Berg and C.~Maes.
\newblock Disagreement percolation in the study of markov fields.
\newblock {\em Ann. Probab.}, 22(2):749--763, 1994.

\bibitem{VK08}
A.~C. van Enter and C.~K{\"u}lske.
\newblock Nonexistence of random gradient {G}ibbs measures in continuous
  interface models in {$d=2$}.
\newblock {\em Ann. Appl. Probab.}, 18(1):109--119, 2008.

\bibitem{DKP95}
H.~Von~Dreifus, A.~Klein, and J.~F. Perez.
\newblock Taming {G}riffiths' singularities: Infinite differentiability of
  quenched correlation functions.
\newblock {\em Comm. {M}ath. {P}hys.}, 170(1):21--39, 1995.

\end{thebibliography}

\end{document}